\documentclass[letterpaper,11pt]{article}
\usepackage[utf8]{inputenc}

\pdfoutput=1

\usepackage{graphicx}
\usepackage{amsmath, amsthm, amssymb, thm-restate}
\usepackage{algorithmicx}
\usepackage[table,xcdraw]{xcolor}
\usepackage[ruled,vlined,linesnumbered]{algorithm2e}
\usepackage{setspace}
\usepackage{mathtools}
\usepackage[style=alphabetic,natbib=true,maxbibnames=9, maxcitenames=9]{biblatex}

\usepackage{comment}
\usepackage[most]{tcolorbox} 
\usepackage{xfrac}
\usepackage{hyperref}
\usepackage{multirow}
\usepackage{caption}
\usepackage{bm}
\usepackage{newfloat}
\usepackage{enumitem}
\usepackage{dblfloatfix} 
\usepackage{wrapfig}
\usepackage{csquotes}

\usepackage{fancyhdr}

\usepackage{tikz}
\usetikzlibrary{calc}
\usepackage{pgfmath} 

\parskip=5pt

\usepackage[margin=1in]{geometry}

\allowdisplaybreaks

\definecolor{mygreen}{RGB}{10,150,110}
\definecolor{myred}{RGB}{150,10,20}

\hypersetup{
     colorlinks=true,
     citecolor= mygreen,
     linkcolor= myred
}

\renewcommand{\epsilon}{\varepsilon}

\DeclareMathOperator{\E}{\ensuremath{\normalfont \textbf{E}}}

\newcommand{\hiddencomment}[1]{}

\newcommand{\mc}[1]{\ensuremath{\mathcal{#1}}}

\newcommand{\yes}[0]{\ensuremath{\mathsf{YES}}}
\newcommand{\no}[0]{\ensuremath{\mathsf{NO}}}
\newcommand{\yesdist}[0]{\ensuremath{\mathcal{D}_{\yes}}}
\newcommand{\nodist}[0]{\ensuremath{\mathcal{D}_{\no}}}

\newcommand{\ER}[0]{Erdős–Rényi}
\newcommand{\Var}[0]{\text{Var}}
\newcommand{\wt}[1]{\ensuremath{\widetilde{#1}}}

\newcolumntype{D}[1]{>{\displaystyle}#1}
\newcolumntype{S}[1]{>{\scriptstyle}#1}

\newcommand{\Binom}{\textnormal{Binom}}
\DeclarePairedDelimiter\card{\lvert}{\rvert}


\usepackage[noabbrev,nameinlink]{cleveref}
\crefname{lemma}{Lemma}{Lemmas}
\crefname{theorem}{Theorem}{Theorems}
\crefname{property}{Property}{Properties}
\crefname{claim}{Claim}{Claims}
\crefname{result}{Result}{Results}
\crefname{definition}{Definition}{Definitions}
\crefname{observation}{Observation}{Observations}
\crefname{proposition}{Proposition}{Propositions}
\crefname{assumption}{Assumption}{Assumptions}
\crefname{line}{Line}{Lines}
\crefname{figure}{Figure}{Figures}
\creflabelformat{property}{(#1)#2#3}
\crefname{equation}{}{}
\crefname{section}{Section}{Sections}
\crefname{appendix}{Appendix}{Appendices}
\crefname{algCounter}{Algorithm}{Algorithms}
\Crefname{algCounter}{Algorithm}{Algorithms}

\newtheorem{theorem}{Theorem}

\newtheorem{lemma}{Lemma}[section]
\newtheorem{proposition}[lemma]{Proposition}
\newtheorem{corollary}[lemma]{Corollary}

\newtheorem{definition}[lemma]{Definition}
\newtheorem{claim}[lemma]{Claim}

\newtheorem{observation}[lemma]{Observation}
\newtheorem{remark}{Remark}
\newtheorem*{remark*}{Remark}

\definecolor{mylightgray}{RGB}{230,230,230}


\algnewcommand{\IIf}[2]{\textbf{if} #1 \textbf{then} #2}
\algnewcommand{\EndIIf}{\unskip\ \algorithmicend\ \algorithmicif}

\newenvironment{graytbox}{
\par\addvspace{0.1cm}
\begin{tcolorbox}[width=\textwidth,
                  boxsep=5pt,
                  left=1pt,
                  right=1pt,
                  top=2pt,
                  bottom=2pt,
                  boxrule=0pt,
                  arc=0pt,
                  colback=mylightgray,
                  colframe=black,
                  ]
}{
\end{tcolorbox}
}

\newenvironment{whitetbox}{
\par\addvspace{0.1cm}
\begin{tcolorbox}[width=\textwidth,
                  boxsep=5pt,
                  left=1pt,
                  right=1pt,
                  top=2pt,
                  bottom=2pt,
                  boxrule=1pt,
                  arc=0pt,
                  colframe=black,
                  colback=white
                  ]
}{
\end{tcolorbox}
}

\newcounter{algCounter}

\let\oldlemma\lemma
\renewcommand{\lemma}{%
  \renewcommand{\emph}[1]{\textbf{##1}}
  \oldlemma}

\makeatletter
\renewcommand{\paragraph}{%
  \@startsection{paragraph}{4}%
  {\z@}{10pt}{-1em}%
  {\normalfont\normalsize\bfseries}%
}
\makeatother

\makeatletter
\patchcmd{\@algocf@start}
  {-1.5em}
  {0pt}
  {}{}
\makeatother

\title{Tight Pair Query Lower Bounds for\\ Matching and Earth Mover's Distance \thanks{
Amir Azarmehr and Soheil Behnezhad are supported in part by NSF CAREER Award CCF-2442812 and a Google Research Award. Mohammad Roghani is supported by NSF CAREER Award CCF-2337901. Aviad Rubinstein is supported by the David and Lucille Packard Fellowship and NSF CAREER Award CCF-2337901.}}

  \author{
   Amir Azarmehr\\{\em Northeastern University} \and
   Soheil Behnezhad\\{\em Northeastern University} \and
   Mohammad Roghani\\{\em Stanford University} \and
   Aviad Rubinstein\\{\em Stanford University}}

\date{}

\addbibresource{references.bib}

\begin{document}

\maketitle

\thispagestyle{empty}

\begin{abstract}
{
\setlength{\parskip}{0.2cm}
    How many adjacency matrix queries (also known as pair queries) are required to estimate the size of a \textbf{maximum matching} in an $n$-vertex graph $G$? We study this fundamental question in this paper.

    On the upper bound side, an algorithm of Bhattacharya, Kiss, and Saranurak [FOCS'23] gives an estimate that is within $\epsilon n$ of the right bound with $n^{2-\Omega_\epsilon(1)}$ queries, which is subquadratic in $n$ (and thus sublinear in the matrix size) for any fixed $\epsilon > 0$. On the lower bound side, while there has been a lot of progress in the {\em adjacency list} model, no non-trivial lower bound has been established for algorithms with adjacency matrix query access. In particular, the only known lower bound is a folklore bound of $\Omega(n)$, leaving a huge gap.

    In this paper, we present the first superlinear in $n$ lower bound for this problem. In fact, we close the gap mentioned above entirely by showing that the algorithm of [BKS'23] is optimal. Formally, we prove that for any fixed $\delta > 0$, there is a fixed $\epsilon > 0$ such that an estimate that is within $\epsilon n$ of the true bound requires $\Omega(n^{2-\delta})$ adjacency matrix queries.
    
    Our lower bound also has strong implications for estimating the earth mover's distance between distributions. For this problem, Beretta and Rubinstein [STOC'24] gave an $n^{2-\Omega_\epsilon(1)}$ time algorithm that obtains an additive $\epsilon$-approximation and works for any distance function. Whether this can be improved generally, or even for metric spaces, had remained open. Our lower bound rules out the possibility of any improvements over this bound, even under the strong assumption that the underlying distances are in a (1, 2)-metric.
    }

\end{abstract}

{
\clearpage
\hypersetup{hidelinks}
\vspace{1cm}
\renewcommand{\baselinestretch}{0.1}
\setcounter{tocdepth}{2}
\tableofcontents{}
\thispagestyle{empty}
\clearpage
}

\setcounter{page}{1}

\newpage

\section{Introduction}

With the exponential increase in the size of datasets, algorithms that run in sublinear time---that is, algorithms that examine only a small portion of their input---have become increasingly important. To allow such algorithms, one has to provide some form of a query access to the input. For graph problems, a natural way to formalize this is through {\em adjacency matrix} queries---also known as the {\em dense graph model}. In this model, an algorithm can (adaptively) query whether there exists an edge between any pair of vertices of its choice. The goal is to solve a graph problem with as few queries as possible. Our focus is particularly on a fundamental question:

\begin{quote}
    How many adjacency matrix queries are required to estimate the size of a \textbf{maximum matching} in an $n$-vertex graph $G$?
\end{quote}

Recall that a maximum matching is the largest set of edges in the graph such that no two edges share a vertex. This question sits at the center of sublinear time graph algorithms, as the complexity of various different graph problems relate directly to matchings \cite{ChenICALP20,behnezhad2021,ChenMetric-Arxiv22, chen2023query, mahabadi2025sublinearmetricsteinerforest, TSP-icalp24, steiner-tree-itcs}. Sublinear time matching algorithms have also found numerous applications in other areas. For example, sublinear time matching algorithms---particularly in the adjacency matrix model---have led to a series of recent breakthroughs in the area of dynamic graph algorithms both for maintaining the size as well as the edges of the matching \cite{Behnezhad-SODA23,bhattacharyaKSW-SODA23,BhattacharyaKS-FOCS23,AzarmehrBR-SODA24, BehnezhadG-FOCS24,assadi2025improved}.

Prior to our work, the landscape for this problem exhibited a significant gap between upper and lower bounds. We briefly overview the state-of-the-art before stating our contribution.

\paragraph{The state of the art:} On the algorithmic side, \citet{BhattacharyaKS-FOCS23} showed that an estimate that is within $\epsilon n$ of the maximum matching size can be found in $n^{2-\Omega_\epsilon(1)}$ time. For any fixed $\epsilon > 0$, this is sub-quadratic in $n$ and thus sublinear in the whole input (i.e., the $n \times n$ size adjacency matrix). If in addition to the additive error of $\epsilon n$, we allow a multiplicative error of $1/2$, then an algorithm of \citet{behnezhad2021} runs in $\widetilde{O}(n/\epsilon^3)$ time. There are also several other algorithms in the literature that achieve different trade-offs between running time and approximation ratio \cite{KapralovSODA20, BehnezhadRRS-SODA23, ChenICALP20, BhattacharyaKS-STOC23, mahabadi20250}.

On the flip side, there is a simple and folklore lower bound of $\Omega(n)$ for any multiplicative-additive approximation. This follows from taking the input graph to be either a random perfect matching or the empty graph. Distinguishing the two graphs requires finding one of the $\Theta(n)$ adjacent pairs out of $\Theta(n^2)$ pairs, which clearly requires $\Omega(n)$ queries even with randomization. While there has been a lot of progress on lower bounds in an alternative {\em adjacency list} query access model \cite{Behnezhad-RRS-STOC23,BehnezhadRR-FOCS23,behnezhad2024approximating}---which we overview extensively soon---none of this progress translates to adjacency matrix queries and the folklore $\Omega(n)$ lower bound remains the only known lower bound in this model.

\paragraph{Our contribution:} This state of affairs leaves a huge gap between lower and upper bounds for additive $\epsilon n$ errors. While the lower bound is only linear (i.e., $\Omega(n)$) the best upper bound is barely subquadratic (i.e., $n^{2-\Omega_\epsilon(1)}$ \cite{BhattacharyaKS-FOCS23}).  Our work particularly focuses on this gap and closes it completely. Namely, we show that the algorithm of \cite{BhattacharyaKS-FOCS23} is (essentially) optimal:

\begin{graytbox}
    \begin{restatable}{theorem}{mainTheorem}
    \label{thm:main-intro}
        For every $\delta > 0$ there exists $\epsilon > 0$ (i.e., $\epsilon$ is only a function of $\delta$) such that any algorithm (possibly randomized) that estimates (with probability at least 2/3) the size of maximum matching up to an additive error of $\epsilon n$ must make at least $\Omega(n^{2-\delta})$ queries to the adjacency matrix of the graph.
    \end{restatable}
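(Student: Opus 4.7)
The plan is to prove the theorem via Yao's minimax principle. It suffices to exhibit two distributions $\yesdist$ and $\nodist$ over $n$-vertex graphs such that (a) there is a constant $\mu$ for which the maximum matching sizes satisfy $\nu(G) \geq \mu n$ with high probability for $G \sim \yesdist$ and $\nu(G) \leq (\mu - 2\epsilon)n$ with high probability for $G \sim \nodist$, and (b) no deterministic algorithm making $q = o(n^{2-\delta})$ pair queries can distinguish $\yesdist$ from $\nodist$ with advantage exceeding $1/3$. Combining these with Yao's principle yields the stated lower bound against randomized algorithms.

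For the construction, a natural starting point is a bipartite skeleton $V = L \cup R$ with $|L|=|R|=n/2$, in which both distributions share a random ``background'' where each potential edge is sampled independently at some rate $p = p(\delta)$. The YES/NO difference is then encoded by a hidden planted feature: in $\yesdist$, additional edges complete an almost-perfect matching, while in $\nodist$ the additional edges instead enforce a vertex cover of size at most $(1/2 - 2\epsilon)n$, capping the matching by König's theorem. Because a single planted feature of this kind is statistically detectable in only $\Ot(n)$ queries, to reach a lower bound of $n^{2-\delta}$ one must amplify the construction, embedding many disjoint, mutually obscuring copies in the spirit of Ruzsa–Szemerédi-type constructions so that no individual copy carries enough signal on its own.

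Verifying (a) should be essentially standard: for $\yesdist$, near-perfect matching results for dense random bipartite graphs (Hall-type expansion on the exposed background, together with the planted boosting edges) yield $\nu \geq (\mu - o(1))n$, while for $\nodist$ the planted cover caps the matching directly. The main obstacle lies in (b), the indistinguishability step. I would set up a coupling between $\yesdist$ and $\nodist$ and argue inductively over the transcript that, conditioned on the first $t$ responses agreeing, the coupled distributions agree on the $(t+1)$st query's response with probability $1 - o(n^{-(2-\delta)})$; a union bound then yields total variation distance $o(1)$. The real difficulty is \emph{adaptivity}: an adversary may concentrate queries on vertices whose earlier responses have shifted the posterior away from uniformity, so one must argue that the conditional distribution over the planted structure remains high-entropy throughout the entire transcript---morally, that each query's response carries only $o(n^{-(2-\delta)})$ bits of information about the planted bits that drive the matching gap. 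Controlling this posterior evolution against an adaptive adversary, while keeping the matching gap $\Omega(\epsilon n)$ robust, is where I expect the bulk of the technical work to go.
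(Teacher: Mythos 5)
Your high-level outline (Yao's minimax, two distributions $\yesdist$/$\nodist$ with an $\Omega(\epsilon n)$ matching gap, indistinguishability via a transcript-level coupling) is the same scaffolding the paper uses, and your observation that a single planted feature is detectable in $\widetilde{O}(n)$ queries is also correct. But the proposal stops precisely at the point where the paper does its actual work, and the one sentence you do commit to---that each response should carry only $o(n^{-(2-\delta)})$ total-variation weight so a union bound over $q$ queries closes the argument---is the wrong invariant to aim for. In the adjacency matrix model the overwhelming majority of queried pairs are non-edges, and in any sparse construction where the planted structure determines (even slightly) which pairs are edges, a non-edge response shifts the posterior on vertex labels by roughly $\Theta(1/n)$, not $o(n^{\delta-2})$. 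After $\Theta(n^{2-\delta})$ such queries the accumulated shift is $\Theta(n^{1-\delta})$, which is nowhere near $o(1)$, so the naive per-query union bound fails for exactly the construction style you sketch. This is the obstacle the paper explicitly identifies (``non-edges reveal information'') and is the reason no non-trivial adjacency-matrix lower bound existed prior to this work even though adjacency-list lower bounds were known.

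The missing idea is what the paper calls \emph{pseudo edges}. Rather than bounding the information leaked by each non-edge, the construction is rigged so that a large fraction of non-edge responses leak \emph{exactly zero} information about the labels: every vertex pair carries a fixed number $\rho n$ of ``ground'' multi-edges, each ground edge becomes a pseudo edge independently with probability $1/n$ regardless of labels, and real edges are a label-dependent sub-selection of pseudo edges. A query that returns ``no pseudo edge'' is then independent of the hidden labels (\Cref{obs:non-pseudo-same-prob}), so it can be discarded from the transcript entirely; only the $\widetilde{O}(n^{1-\delta+\sigma_L})$ pseudo/real-edge hits need to be controlled, and those are handled by a recursive, level-by-level argument (shallow subgraphs, spoiler/spoiled vertices, mixer vertices) rather than a single union bound. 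Your proposed Ruzsa--Szemer\'edi-style amplification and planted vertex cover also differ from the paper's construction, which instead uses a recursive hierarchy of gadgets in the style of the adjacency-list lower bounds of Behnezhad--Roghani--Rubinstein so that the YES/NO difference sits at the innermost level and is provably invisible to an adaptive adversary. Without something playing the role of pseudo edges, your plan as written does not get past the non-edge problem.
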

\end{graytbox}

\paragraph{Connection to the adjacency list model:} Let us now discuss the {\em adjacency list} query access model for which we have a much better understanding of lower bounds for estimating the size of the maximum matching. In this model, instead of specifying two vertices, each query of the algorithm specifies a vertex $v$ and a number $i$; the response is then the ID of the $i$-th neighbor of $v$ (or $\perp$ if $v$ has less than $i$ neighbors).

Starting with \cite{Behnezhad-RRS-STOC23}, a sequence of recent results by \citet{Behnezhad-RRS-STOC23,BehnezhadRR-FOCS23,behnezhad2024approximating} established strong lower bounds for estimating the maximum matching size in the adjacency list model. Most relevant to the present paper is \cite{behnezhad2024approximating} which proves an equivalent of \cref{thm:main-intro} in the adjacency list model. 

Unfortunately this progress in the adjacency list model does not lead to any non-trivial (i.e., super-linear in $n$) lower bounds in the adjacency matrix model.  Specifically, {\em proving adjacency matrix lower bounds appears to be much more difficult than for adjacency list} for two reasons:
\begin{enumerate}
    \item[$(i)$] \textbf{The ability to query vertex-induced subgraphs:} An algorithm in the adjacency matrix model can select a small subset $U \subseteq V$ and learn the entire induced subgraph $G[U]$ with just $O(|U|^2)$ queries. Not only does this ability prevent a straightforward extension of adjacency list lower bounds to adjacency matrix ones, but in fact has led to formal separations between the two models for some problems. For example, a $(\Delta+1)$ vertex coloring in dense graphs with $\Delta = \Theta(n)$ can be found with $\widetilde{O}(n^2/\Delta) = \widetilde{O}(n)$ adjacency matrix queries but requires $\Omega(n \Delta) = \Omega(n^2)$ adjacency list queries \cite{AssadiCK19}. There are also many property testing problems that can be solved with constant adjacency matrix queries by selecting a constant size $U$ at random and querying all the pairs in $U$, see e.g. \cite{AlonFKS99,GoldreichT01}.
    \item[$(ii)$] \textbf{Discovering non-edges:} Another key difference between the two models is that each query in the adjacency list model discovers an edge, whereas adjacency matrix queries reveal both edges and non-edges in the graph. Therefore, any analysis would have to show that non-edges do not reveal too much information about the hard instance. Indeed, as we will discuss extensively in \cref{sec:techniques}, this is the main obstacle that we overcome in proving \cref{thm:main-intro}.
\end{enumerate}

\paragraph{Other implications of our result:}
There are many natural scenarios for computing matchings where the access to the underlying graph is through adjacency matrix queries. Here we briefly mention some of these scenarios and the implications of our \cref{thm:main-intro}:
\begin{itemize}[leftmargin=10pt]
    \item \textbf{Estimating the earth mover's distance:} The earth mover's distance (EMD) is the most natural measure of similarity between two distributions defined over the elements of a distance space (often a metric). For a formal definition see \cref{def:EMD}. \citet*{BerettaR24} showed that if the distributions are defined over $n$ elements, then there is an algorithm obtains an additive $\epsilon$-approximation of EMD with $n^{2-\Omega_\epsilon(1)}$ queries to the distance metric, i.e., with truly subquadratic queries.

    The algorithm of \cite{BerettaR24} does not make use of the common metric assumption and works for arbitrary distance. Whether this barely subquadratic time algorithm can be improved to say $O_\epsilon(n^{1.9})$ remained open even in the case of metric spaces (see also the related paper of \citet{andoni2023sub}). Our lower bound strongly rules out this possibility, even in the case of $(1, 2)$-metrics; see \cref{thm:EMD} in \cref{sec:EMD}. In order to obtain this result, it is important that our lower bound works in the adjacency matrix model (as opposed to adjacency lists) as it naturally captures pairwise distance queries.
    \item \textbf{Dynamic algorithms:} In a recent line of work \cite{Behnezhad-SODA23,bhattacharyaKSW-SODA23,BhattacharyaKS-FOCS23,BehnezhadG-FOCS24,assadi2025improved, AzarmehrBR-SODA24} sublinear time algorithms for maximum matching have been utilized to obtain significant improvements for the maximum matching problem in the dynamic setting. The role that sublinear time algorithms play in the works of \cite{Behnezhad-SODA23,bhattacharyaKSW-SODA23,BhattacharyaKS-FOCS23} is very different from that of \cite{BehnezhadG-FOCS24,assadi2025improved}. But, curiously, all of these results rely on adjacency matrix queries as opposed to adjacency matrix queries. Our lower bound is therefore of interest to the dynamic community. Concretely, our \cref{thm:main-intro} implies that an update-time of barely sublinear in $n$ (i.e., the bound achieved by \cite{BhattacharyaKS-FOCS23}) is best one can hope for with a black-box application sublinear time algorithms.
\end{itemize}

\section{Overview of Techniques}\label{sec:techniques}

In this section, we provide a high-level overview of the lower bound presented in \cref{thm:main-intro}. Before diving into the new techniques and ideas introduced in this paper, we first briefly review the previous constructions for adjacency list lower bounds and their key concepts. Then, we identify the challenges that arise when working with the adjacency matrix, particularly when the algorithm can query pairs corresponding to ``non-edges", and explain how we address these challenges.

\subsection{Existing Constructions: Ideas and Barriers}

We first focus on the work of \cite{Behnezhad-RRS-STOC23} and their core construction\footnote{We disregard the dummy vertices in their construction, as their primary role is to congest the results of adjacency list queries when the core construction is sparse.}. Their input distribution consists of two types of graphs: \yesdist{} and \nodist{}. There is a significant gap between the sizes of the maximum matchings in graphs drawn from these distributions, with $\yesdist{}$ having a larger matching. The key idea of the paper is to demonstrate that the algorithm cannot distinguish {\em good matching} edges in \yesdist{}. To achieve this, they prove that the queried subgraph within the core forms a tree, as the core is sufficiently sparse and the algorithm makes at most $O(n^{1.2})$ queries. Finally, they show that the queried trees from both distributions are nearly identical using a coupling argument, preventing the algorithm from differentiating between the two types, which implies that no algorithm can achieve $2/3$-approximation in $o(n^{1.2})$ time. The later work of \cite{BehnezhadRR-FOCS23} follows a similar outline but introduces a modification in the core that results in a lower bound with a different trade-off between the approximation ratio and running time. However, this lower bound still heavily relies on the fact that the queried subgraph within the core remains acyclic due to their choice of the core's degree and the imposed bound on the algorithm's running time.

\vspace{-0.5em}
\paragraph{Non-edges reveal information:} Let us zoom out and examine why these approaches are insufficient for proving a lower bound in the adjacency matrix model. In this model, the algorithm can focus all its queries within the core, as it has the power to choose which pairs of vertices to query. Furthermore, while the core is sufficiently sparse and most of the queries result in non-edges, these non-edges still provide information about the construction. For example, consider the case where we are given a graph that contains only a perfect matching (which is hidden from the algorithm). Before any queries are made, if the algorithm randomly queries a pair of vertices $(u,v)$, the probability that the pair forms an edge is exactly $1/(n-1)$. However, if the result is a non-edge, the probabilities of other pairs shift. For instance, the probability that a pair $(u,w)$ is an edge, conditioned on $(u,v)$ being a non-edge, becomes $1/(n - 2)$. While this may at first appear insignificant, it has to be noted that the algorithm is given nearly quadratically many queries, and so will learn about a huge number of non-edges. Therefore, we must carefully account for the non-edges when proving lower bounds, especially when coupling the queried subgraph in the two distributions. In particular, we need to include non-edges in the coupling argument. Notably, the state-of-the-art result for proving lower bounds in the adjacency list model \cite{behnezhad2024approximating} discovers at most $o(n)$  {\em edges} relevant edges and its arguments substantially relied on this limitation.

\subsection{Introducing Pseudo Edges}

As discussed earlier, non-edges reveal information about the construction. However, a key observation is that the amount of information revealed by non-edges depends on the density of the graph. If the construction is sparse, we expect most of the queries to be non-edges. Intuitively, this means the information revealed by these non-edges is very limited. To provide more intuition, consider the following simple example. Suppose we have three subsets of vertices, $V_1$, $V_2$, and $V_3$, where there exists a perfect matching between $V_1$ and $V_2$, a 2-regular graph between $V_2$ and $V_3$, and no edges between $V_1$ and $V_3$. Now, suppose the algorithm queries a pair $(u,v)$ and it turns out to be a non-edge. In this case, it is slightly more likely that the pair belongs to $V_1 \times V_3$ compared to other possible pairs, but the difference in probability is bounded by $O(1/n)$.

To formalize this observation, we introduce the concept of \emph{pseudo edges}. The idea is to mark some of the non-edges as pseudo edges such that all other non-edges reveal no information about the construction. In the example above, suppose that for each pair in \( V_i \times V_j \) for \( i \neq j \), we randomly choose a \(\log n\)-regular graph and mark those pairs as \textit{pseudo edges}. Note that we do not place actual edges between these pairs; they remain non-edges but are marked for the sake of analysis. Additionally, suppose the actual edges are chosen as a subgraph of the pseudo edges.  Now, if a query returns a non-edge that is not marked as a pseudo edge, it does not reveal any information about which \( V_i \) the endpoints belong to. This is because the distribution of non-edges that are not pseudo edges is identical across any two pairs of subsets. Therefore, we can ignore such queries, as they do not provide useful information, and instead focus on the pseudo edges, which are significantly fewer in number.

\subsection{The Challenge with Pseudo Edges}

Before describing how we actually use pseudo edges, we start with perhaps the ``obvious'' way of using pseudo edges and argue why these methods do not quite work. These examples are meant to illustrate why our final construction has to be somewhat involved and rather counter-intuitive.

\vspace{-0.5em}
\paragraph{Attempt 1: regular pseudo edges and regular real edges.} This idea is exactly the same as what we discussed when introducing pseudo edges. However, although non-edges that are not marked as pseudo edges are independent of vertex labels, we face another challenge that makes proving a lower bound difficult. In this approach, we must ensure that the regular graph of real edges is a subgraph of the regular graph of pseudo edges. Achieving this requires a global view of the pseudo edge graph to determine the real edges. However, this introduces correlations between the real edges, making it extremely challenging to show any lower bound.

 \vspace{-0.5em}
\paragraph{Attempt 2: \ER{} pseudo edges and regular real edges.} Instead of using a regular graph for pseudo edges, we now consider an Erdős–Rényi random graph, which is more natural when working with the adjacency matrix. Take the example with vertex sets \(V_1\), \(V_2\), and \(V_3\). For each pair in \(V_i \times V_j\) with \(i \neq j\), we mark the pair as a pseudo edge with probability \(\log n / n\), mimicking an \ER{} graph with an expected degree of \(\log n\). Additionally, suppose that real edges are selected only from among the marked pseudo edges.  As before, if a query results in a non-edge that is not marked as a pseudo edge, it provides no information about which \(V_i\) the endpoints belong to. This is because the \ER{} graph of non-pseudo edges is identical across all pairs of subsets.  

However, while non-pseudo edges remain independent of vertex labels, a new challenge arises that complicates proving a lower bound. Let \( F \) be the set of queried pairs that were identified as non-edges and were not marked as pseudo edges. Let \( L \) be an indicator random variable representing whether a given vertex \( v \) has a specific label. We know that it holds $\Pr[L \mid F] = \Pr[L]$. This is because pseudo edges are independent of vertex labels and form an \ER{} graph between different labels, regardless if they are real edges.  Now, suppose \( H \) is a subset of the actual edges that have been queried and found. Our entire argument relies on the assumption that non-pseudo edges provide no information about labels, meaning the algorithm should ignore them. This requires showing that  $\Pr[L \mid F, H] = \Pr[L \mid H]$. However, note that \( H \) and \( F \) are not necessarily independent. Essentially, our goal is to select a regular subgraph from an \ER{} graph. This process involves first examining all the edges of the \ER{} graph and then choosing a subset of them. Consequently, the decision of whether a given pair forms an edge is not independent of the decisions for other pairs.

\vspace{-0.5em}
\paragraph{Attempt 3: \ER{} pseudo edges and \ER{} real edges.} With the intuition from the previous attempt in mind, we need to develop a construction that ensures the decision of whether a given pair forms a real edge is independent of the decisions for other pairs. To achieve this goal, we choose an \ER{} graph for both pseudo edges and real edges. More specifically, consider the example with vertex sets \( V_1 \), \( V_2 \), and \( V_3 \). For each pair of vertices, we mark the pair as a pseudo edge with probability \(\log n / n\). Furthermore, each pair between \( V_1 \) and \( V_2 \) is marked as a real edge with probability \( 1 / \log n \), each pair between \( V_2 \) and \( V_3 \) is marked as a real edge with probability \( 2 / \log n \), and each pair between \( V_1 \) and \( V_3 \) is marked as a real edge with probability \( 2 / \log n \). Finally, an edge is included in the final construction if and only if it is marked as both a pseudo edge and a real edge. As a result, the expected degree will match our intended values (1 for \( V_1 \), 3 for \( V_2 \), and 2 for \( V_3 \)), consistent with the regular real-edge construction attempt. Furthermore, we achieve the desirable independence property we were seeking.

However, another challenge arises with this construction. Consider two vertices, \( v \) and \( u \), with different labels but the same expected degree. Suppose that \( u \) and \( v \) each appear in a gadget containing \( n \) other vertices. In the first gadget, \( v \) forms real edges with other vertices with probability \( 2/n \), while \( u \) does so with probability \( 1/n \). In the second gadget, \( v \)'s probability remains \( 2/n \), whereas \( u \)'s increases to \( 3/n \). Thus, both \( v \) and \( u \) have an expected degree of 4. But the variance of the degree for \( v \) and \( u \) differs. More specifically, we have $\Var(\deg(v))= 4-8/n$ and  $\Var(\deg(u))= 4-10/n$. The algorithm can use this variance to learn about the structure of the input and determine whether the instance is from \yesdist{} or \nodist{}.

\subsection{Our Actual Construction via {\em Parallel} Pseudo Edges}

To address the issue with variance, we use the following approach to bound the total variation distance between different degree distributions. Fix a pair of vertices \( u \) and \( v \). Let \( \rho \) denote the probability that this pair can form a real edge according to the gadgets used in the construction. We add \( \rho n \) parallel edges between \( (u, v) \) in the graph, which we refer to as {\em ground edges}. For each ground edge, we independently flip a coin with probability \( 1/n \) to determine whether it becomes a real edge. Note that for this fixed pair, the expected number of real edges between them is exactly \( \rho \), aligning with our intended construction.  If \( u \) and \( v \) have the same total expected degree, then the degree distribution is identical for all vertices. More formally, their degrees follow the same number of Bernoulli random variables (since their expected degrees are equal), each with probability \( 1/n \).

It is important to emphasize that our actual multigraph construction is significantly more intricate than the simplified version presented here, as it must incorporate pseudo edges and satisfy several additional properties. However, to convey the core idea while avoiding unnecessary technical details, we have chosen to present a simplified version. In the actual construction, pseudo edges are a subset of ground edges, and real edges are a subset of pseudo edges (See \Cref{fig:overview-ground}).

\begin{figure}[h]
    \centering
    \newcommand{\edgeprob}{0.55} 
\newcommand{\edgeprobsub}{0.2} 
\newcommand{\middlebonus}{0.1} 
\newcommand{\myseed}{139}
\newcommand{\mybend}{7}

\scalebox{0.75}{
\begin{tikzpicture}
    \foreach \i in {0,1,2,3,4,5} {
        \node[circle, fill=black, inner sep=2pt] (v\i) at ({60*\i}:2) {};
    }
    
    \foreach \i in {0,1,2,3,4,5} {
        \foreach \j in {0,1,2,3,4,5} {
            \ifnum\i<\j
                \draw (v\i) -- (v\j);
                
                \draw[bend left=\mybend] (v\i) to (v\j);
                
                \draw[bend right=\mybend] (v\i) to (v\j);
            \fi
        }
    }
    
    \foreach \i in {0,1,2,3,4,5} {
        \node[circle, fill=black, inner sep=2pt] (w\i) at ([shift={(5,0)}] {60*\i}:2) {};
    }

    \foreach \i in {0,1,2,3,4,5} {
        \foreach \j in {0,1,2,3,4,5} {
            \ifnum\i<\j
                \pgfmathsetseed{\i*\myseed+\j}
                \pgfmathsetmacro{\rand}{rnd}
                
                \ifdim \rand pt < \edgeprob pt
                    \draw (w\i) -- (w\j);
                \fi
                
                \pgfmathsetmacro{\rand}{rnd}
                \ifdim \rand pt < \edgeprob pt
                    \draw[bend left=\mybend] (w\i) to (w\j);
                \fi
                
                \pgfmathsetmacro{\rand}{rnd}
                \ifdim \rand pt < \edgeprob pt
                    \draw[bend right=\mybend] (w\i) to (w\j);
                \fi
            \fi
        }
    }

    \foreach \i in {0,1,2,3,4,5} {
        \node[circle, fill=black, inner sep=2pt] (u\i) at ([shift={(10,0)}] {60*\i}:2) {};
    }
    
    \foreach \i in {0,1,2,3,4,5} {
        \foreach \j in {0,1,2,3,4,5} {
            \ifnum\i<\j
                \pgfmathsetseed{\i*\myseed+\j}
                \pgfmathsetmacro{\rand}{rnd}
                \pgfmathsetmacro{\probability}{\edgeprobsub + \middlebonus}
                \ifdim \rand pt < \probability pt
                    \draw (u\i) -- (u\j);
                \fi
                
                \pgfmathsetmacro{\rand}{rnd}
                \ifdim \rand pt < \edgeprobsub pt
                    \draw[bend left=\mybend] (u\i) to (u\j);
                \fi
                
                \pgfmathsetmacro{\rand}{rnd}
                \ifdim \rand pt < \edgeprobsub pt
                    \draw[bend right=\mybend] (u\i) to (u\j);
                \fi
            \fi
        }
    }
    
    \node at (0,-2.5) {Ground edges};
    \node at (5,-2.5) {Pseudo edges};
    \node at (10,-2.5) {Real edges};

\end{tikzpicture}
}
    \caption{An illustration of ground, pseudo, and real edges. As shown in the figure, pseudo edges form a subset of ground edges, while real edges are a subset of pseudo edges.}
    \label{fig:overview-ground}
\end{figure}
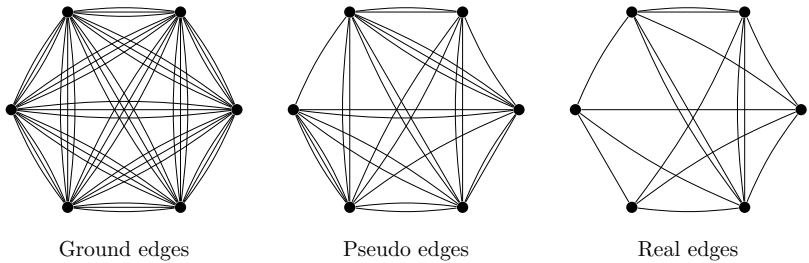

\vspace{-0.5em}
\paragraph{Algorithm cannot distinguish between multigraph and simple graph.} One caveat of using this multigraph approach is that there may be multiple real edges between a pair of vertices. Since the input to our problem is a simple graph, we need to show that, with high probability, the algorithm cannot identify pairs with multiple real edges.  Suppose the algorithm runs in \( O(n^{2-\delta}) \) time. One key observation is that our construction is sparse, with vertex degrees of roughly \( n^\sigma \) (where \(2 \sigma < \delta \)), which implies $\rho = n^{\sigma-1}$. Since there are \( n^2 \) vertex pairs in the graph and at most \( \rho n\) ground edges per pair, the total number of ground edges is at most \( \rho n^3 \). Each of these ground edges is realized as a real edge with probability \( 1/n \), implying that the total number of real edges is at most \( O(\rho n^2) = O(n^{1+\sigma}\)). We condition on the high probability event that this bound holds.  Let \( x = O(n^{1+\sigma} )\) denote the number of real edges in the graph. Using a birthday paradox argument, we can show that the number of vertex pairs with more than one real edge is approximately \( x^2/n^2 = O(n^{2\sigma})\). Since the algorithm makes at most \( O(n^{2-\delta}) \) queries and \(2 \sigma < \delta \), it cannot, with probability $1-o(1)$, query any pair that has multiple real edges. Therefore, with high probability, the algorithm cannot distinguish whether the input graph originates from this multigraph construction or from a simple graph.

\vspace{-0.5em}
\paragraph{Modifications to the previous techniques to account for pseudo edges.} Finally, it is important to highlight additional technical challenges that arise due to the introduction of pseudo edges in the construction. Since non-edges marked as pseudo edges are not independent of the labeling, we need to adjust the coupling argument to properly account for them.  Furthermore, since we aim to demonstrate that the algorithm's advantage in detecting cycles diminishes as it progresses deeper into the construction, we must show that deeper gadgets have lower density compared to higher-level gadgets even when we have pseudo edges. To achieve this, we assign different probabilities for marking pseudo edges to belong to different levels, using smaller probabilities for deeper levels. This adjustment helps reinforce the argument that the algorithm's advantage decreases as it explores deeper into the structure.

\section{Preliminaries}\label{sec:preliminaries}

\begin{proposition}[\cite{erdHos1964random, erdHos1966existence, friezepittel}]\label{prop:matching-ERgraph}
    Let $G$ be a bipartite \ER{} graph such that each part of the graph contains $n$ vertices. If the probability of the existence of each edge is at least $(\log^2 n)/n$, then there exists a perfect matching in $G$ with high probability.
\end{proposition}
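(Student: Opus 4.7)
The plan is to apply Hall's marriage theorem, which states that the bipartite graph $G$ (with sides $L, R$ of size $n$) has a perfect matching iff for every $S \subseteq L$, $|N_G(S)| \ge |S|$. Equivalently, $G$ lacks a perfect matching iff there exist $S \subseteq L$ and $T \subseteq R$ with $|S| = k$, $|T| = n - k + 1$, and $E(S, T) = \emptyset$, or the symmetric situation with $L$ and $R$ swapped. By monotonicity of the property ``having a perfect matching'' under edge addition, it suffices to prove the claim for $p = (\log^2 n)/n$ exactly.

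First, I would set up the union bound
\[
\Pr[\text{no perfect matching}] \;\le\; 2 \sum_{k=1}^{n} \binom{n}{k}\binom{n}{n-k+1}\,(1-p)^{k(n-k+1)},
\]
where the factor $2$ handles the $L \leftrightarrow R$ symmetry, and bound each summand using $(1-p)^{k(n-k+1)} \le \exp(-p\,k(n-k+1))$. By the additional symmetry $k \leftrightarrow n-k+1$, it suffices to focus on $k \le (n+1)/2$.

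Next I would split the analysis by $k$. The $k=1$ term corresponds to some vertex in $R$ being isolated and is bounded by $n^2(1-p)^n \le n^2 e^{-\log^2 n} = n^{2 - \log n} = o(1)$. For $2 \le k \le n/2$, I would bound $\binom{n}{k}\binom{n}{n-k+1} = \binom{n}{k}\binom{n}{k-1} \le \binom{n}{k}^2 \le (en/k)^{2k}$ and use $k(n-k+1) \ge kn/2$; taking logarithms, the $k$-th summand is at most $\exp\!\bigl(2k\log(en/k) - p\,kn/2\bigr) \le \exp\!\bigl(2k\log(en) - k\log^2 n / 2\bigr) \le \exp(-k\log^2 n / 4)$ for $n$ sufficiently large. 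Summing over $k$ then yields a geometrically decaying tail of total mass $o(1)$.

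The main obstacle is ensuring the exponent from $(1-p)^{k(n-k+1)}$ strictly dominates the binomial coefficients uniformly in $k$. The assumed probability $p \ge (\log^2 n)/n$ is exactly calibrated for this: the exponent scales as $\Omega(k\log^2 n)$, which overwhelms the $O(k\log n)$ growth of $\log \binom{n}{k}^2$ and produces the required geometric decay in $k$. The boundary case $k = 1$ must be treated separately, but the bound $(1-p)^n \le n^{-\log n}$ is comfortably strong enough there.
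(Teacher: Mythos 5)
The paper does not prove this proposition; it is cited as a classical fact from Erd\H{o}s--R\'enyi and Frieze--Pittel, so there is no ``paper's own proof'' to compare against. Your blind proof is a correct, self-contained first-moment argument via Hall's theorem, and it is the standard elementary route to this fact at the (comfortably above-threshold) density $p\ge (\log^2 n)/n$. A few points confirming the details. The reduction to the exact value $p=(\log^2 n)/n$ via monotone coupling is legitimate: one can realize each edge $e$ with probability $p_e\ge p$ by drawing $U_e\sim\unifzeroone$ and including $e$ when $U_e\le p_e$; the subgraph obtained by thresholding the same $U_e$ at $p$ is a bipartite $G(n,n,p)$ contained in $G$, and perfect matchings are monotone under edge addition. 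The Hall reformulation (no perfect matching iff some $S\subseteq L$, $T\subseteq R$ with $|S|=k$, $|T|=n-k+1$, $E(S,T)=\emptyset$) is exact, and the factor $2$ for the $L\leftrightarrow R$ swap is harmless (even redundant, since checking Hall's condition on one side suffices in a balanced bipartite graph). The bookkeeping checks out: for $k=1$ the relevant term is $\binom{n}{1}\binom{n}{n}(1-p)^n = n(1-p)^n$, and your crude bound $n^2(1-p)^n\le e^{2\log n-\log^2 n}=o(1)$ dominates it; for $2\le k\le n/2$, using $\binom{n}{k}\binom{n}{k-1}\le(en/k)^{2k}$ and $k(n-k+1)\ge kn/2$ gives a summand at most $\exp\bigl(2k\log(en)-\tfrac12 k\log^2 n\bigr)\le \exp(-\tfrac14 k\log^2 n)$ for large $n$, and the resulting geometric series is $o(1)$. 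So the argument is sound; the only difference from the paper is that the paper delegates to the cited literature (which establishes the much sharper threshold $p\sim(\log n)/n$), while your argument is a short direct proof tuned to the comfortably larger $\log^2 n/n$ bound the paper actually needs.
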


\begin{proposition}[Yao's Lemma. Theorem 3 of \cite{Yao77}] \label{prop:yao}
Let $P$ be a problem over input domain $X$.
For an input distribution $\mc{D}$ over $X$, let $B_{\mc{D}, \lambda}$ be the set of deterministic algorithms that solve $P$ with probability $1 - \lambda$. Let $F_{1, \lambda}(P)$ be the distributional complexity of $P$ with parameter $\lambda$:
$$
F_{1, \lambda}(P) := \sup_{\mc{D}} \inf_{a \in B_{\mc{D}, \lambda}} c(a, \mc{D}),
$$
where $c(a, D)$ denotes the average cost of $a$ on $\mc{D}$.
Let $R_\lambda$ be the set of randomized algorithms that solve $P$ with probability $1 - \lambda$ for any input $x$. Let $F_{2, \lambda}(P)$ be the randomized complexity of $P$ with parameter $\lambda$:
$$
F_{2, \lambda}(P) := \inf_{A \in R_{\lambda}} \max_{x \in X} c(A, x),
$$
where $c(A, x)$ is the average cost of $A$ for input $x$.
Then, for $0 \leq \lambda \leq \frac{1}{2}$, it holds
$$
F_{2,\lambda}(P) \geq \frac{1}{2} F_{1, 2\lambda}(P).
$$    
\end{proposition}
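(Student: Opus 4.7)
The plan is to prove this via the standard two-move argument: first swap the order of ``max over inputs'' and ``expectation over algorithm randomness'' using a distributional lower bound, then convert a randomized algorithm correct on every input into a distribution over deterministic algorithms that are correct on the distribution, losing a factor of two in the failure probability via Markov's inequality.

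First I would fix an arbitrary randomized algorithm $A \in R_\lambda$ and an arbitrary input distribution $\mathcal{D}$. The crucial viewpoint is that $A$ is equivalent to a distribution $\mu_A$ over deterministic algorithms: sampling $a \sim \mu_A$ and then running $a$ on $x$ is the same process as running $A$ on $x$. The first easy inequality I would record is
\[
\max_{x \in X} c(A, x) \;\geq\; \mathbb{E}_{x \sim \mathcal{D}} c(A, x) \;=\; \mathbb{E}_{a \sim \mu_A}\bigl[c(a, \mathcal{D})\bigr],
\]
where the final equality is just Fubini, since $c(A,x) = \mathbb{E}_{a \sim \mu_A} c(a, x)$ by definition of the randomized cost.

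Next I would handle correctness. Let $\mathrm{err}(a, \mathcal{D}) := \Pr_{x \sim \mathcal{D}}[a \text{ fails on } x]$. Since $A \in R_\lambda$, for \emph{every} fixed $x$ we have $\Pr_{a \sim \mu_A}[a \text{ fails on } x] \le \lambda$; averaging over $x \sim \mathcal{D}$ and swapping expectations yields $\mathbb{E}_{a \sim \mu_A}[\mathrm{err}(a, \mathcal{D})] \le \lambda$. Markov's inequality then gives
\[
\Pr_{a \sim \mu_A}\!\left[\mathrm{err}(a, \mathcal{D}) \le 2\lambda\right] \;\geq\; \tfrac{1}{2},
\]
i.e., at least half the mass of $\mu_A$ sits on deterministic algorithms that belong to $B_{\mathcal{D}, 2\lambda}$. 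Combining this with the previous bound and lower-bounding the expectation by restricting to this ``good'' half of the support,
\[
\mathbb{E}_{a \sim \mu_A}\bigl[c(a, \mathcal{D})\bigr] \;\geq\; \tfrac{1}{2} \inf_{a \in B_{\mathcal{D}, 2\lambda}} c(a, \mathcal{D}).
\]
Chaining everything gives $\max_x c(A,x) \geq \tfrac{1}{2} \inf_{a \in B_{\mathcal{D}, 2\lambda}} c(a, \mathcal{D})$; taking the infimum over $A \in R_\lambda$ on the left and the supremum over $\mathcal{D}$ on the right (the right side is independent of $A$, and the left side does not depend on $\mathcal{D}$) yields $F_{2,\lambda}(P) \geq \tfrac{1}{2} F_{1, 2\lambda}(P)$.

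The only place that requires any real care is the Markov step: one has to be comfortable that ``correct with probability $1-\lambda$ on every input'' is a \emph{stronger} condition than ``correct with probability $1-\lambda$ on every distribution,'' which is exactly what lets us average over $x \sim \mathcal{D}$ and then reinterpret as averaging over algorithms $a \sim \mu_A$. Everything else is bookkeeping about swapping $\inf$/$\sup$ and using that the cost restricted to the good half of the support is at least the infimum over $B_{\mathcal{D}, 2\lambda}$; no topological or compactness assumptions on $X$ or the algorithm class enter because we only use Markov and linearity of expectation.
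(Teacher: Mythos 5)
The paper does not prove this statement; it is cited directly as Theorem~3 of Yao~(1977) and used as a black box. Your proof is correct and is the standard argument: the identification of a randomized algorithm with a distribution $\mu_A$ over deterministic ones, the Fubini swap $\max_x c(A,x) \geq \mathbb{E}_{x\sim\mathcal{D}} c(A,x) = \mathbb{E}_{a\sim\mu_A} c(a,\mathcal{D})$, the Markov step showing $\mu_A$ places at least half its mass on $B_{\mathcal{D},2\lambda}$, and the nonnegativity of cost to drop the bad half of the support. The chaining of $\inf$ over $A$ and $\sup$ over $\mathcal{D}$ at the end is also handled correctly (the two quantifiers decouple because each side depends on only one of them). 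The only implicit hypothesis worth flagging is $c(a,\mathcal{D})\geq 0$, which is needed to discard the half of $\mu_A$ outside $B_{\mathcal{D},2\lambda}$, and the degenerate case $\lambda=0$ where Markov is vacuous but the conclusion holds trivially; neither affects correctness here. Since the paper offers no proof to compare against, there is no divergence to report: you have supplied a complete, correct derivation of the cited lemma.
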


\begin{proposition}[Chernoff Bound]\label{prop:chernoff}
    Consider independent Bernoulli random variables $X_1, X_2, \dots, X_n$, and define their sum as  $X = \sum_{i=1}^{n} X_i$. For any $k > 0$, the probability that $X$ deviates from its expectation can be bounded by 
    \[
        \Pr\big(|X - \mathbb{E}[X]| \geq k\big) \leq 2 \exp \left(-\frac{k^2}{3\mathbb{E}[X]}\right).
    \]
\end{proposition}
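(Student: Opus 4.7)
The statement is the standard additive Chernoff bound, so the plan is to derive both tails via the exponential moment (Markov) method and then take a union bound. Writing $\mu := \mathbb{E}[X]$, the overall strategy is: for any $t > 0$, apply Markov's inequality to $e^{tX}$ (respectively $e^{-tX}$) to convert a tail probability into a bound on the moment generating function (MGF) of $X$, then use independence to factor that MGF as $\prod_i \mathbb{E}[e^{\pm t X_i}]$, and finally optimize the free parameter $t$.

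For the upper tail, I would proceed as follows. By Markov,
\[
    \Pr(X \geq \mu + k) \;=\; \Pr\!\bigl(e^{tX} \geq e^{t(\mu+k)}\bigr) \;\leq\; e^{-t(\mu+k)} \prod_{i=1}^{n} \mathbb{E}[e^{tX_i}].
\]
For each Bernoulli $X_i$ with mean $p_i$, use the identity $\mathbb{E}[e^{tX_i}] = 1 + p_i(e^t - 1) \leq \exp\!\bigl(p_i(e^t - 1)\bigr)$, which gives $\prod_i \mathbb{E}[e^{tX_i}] \leq e^{\mu(e^t - 1)}$. Plugging in $t = \ln(1 + k/\mu)$ reduces the bound to the usual multiplicative Chernoff form $\exp\!\bigl(-\mu\, h(k/\mu)\bigr)$ where $h(\delta) := (1+\delta)\ln(1+\delta) - \delta$. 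To finish, invoke the elementary inequality $h(\delta) \geq \delta^2/(2 + 2\delta/3) \geq \delta^2/3$ (valid for $\delta \geq 0$; the last step uses a slightly loose constant to keep a single clean denominator), yielding $\Pr(X \geq \mu + k) \leq \exp(-k^2/(3\mu))$.

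The lower tail is analogous: apply Markov to $e^{-tX}$ with $t > 0$, use $\mathbb{E}[e^{-tX_i}] \leq \exp(p_i(e^{-t}-1))$, and optimize $t = -\ln(1 - k/\mu)$ (for $k \leq \mu$; for $k > \mu$ the event is empty). This yields $\Pr(X \leq \mu - k) \leq \exp(-k^2/(2\mu)) \leq \exp(-k^2/(3\mu))$. Combining the two tails by a union bound gives the factor of $2$ in the statement. The one nontrivial step — really the only place where any thought is required — is the numerical inequality $h(\delta) \geq \delta^2/3$; everything else is mechanical once the MGF is factored. Since this is a classical bound, an alternative proof route would be simply to cite a standard reference (e.g., Mitzenmacher–Upfal or Dubhashi–Panconesi) rather than reprove it, as is commonly done.
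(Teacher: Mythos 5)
The paper does not prove this proposition; it cites it as a standard black box, which you rightly note is common practice. Your MGF/Markov outline is the textbook route, and your derivation up through $\Pr(X \geq \mu+k) \leq \exp(-\mu\,h(k/\mu))$ and the Bernstein-type bound $h(\delta) \geq \delta^2/(2+2\delta/3)$ is fine. The gap is in the final numerical step: the inequality $\frac{\delta^2}{2+2\delta/3} \geq \frac{\delta^2}{3}$ is equivalent to $2 + 2\delta/3 \leq 3$, i.e.\ $\delta \leq 3/2$, and therefore the chain $h(\delta) \geq \delta^2/3$ fails beyond that range. Concretely, $h(3) = 4\ln 4 - 3 \approx 2.55 < 3 = 3^2/3$. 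So your argument only establishes the upper-tail claim for $k \leq \tfrac{3}{2}\mathbb{E}[X]$, not "for any $k > 0$" as the statement demands.

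This is not a local slip you can patch, because the proposition as written is actually false for $k$ much larger than $\mu := \mathbb{E}[X]$. Take $\mu = 1$ with all $p_i$ small (so $X$ is nearly $\mathrm{Poisson}(1)$) and $k = 10$: then $\Pr(X \geq 11) \approx 9\times 10^{-9}$, while the claimed bound is $2\exp(-100/3) \approx 7\times 10^{-15}$. The true upper tail decays like $\exp\bigl(-\Theta(k\ln(k/\mu))\bigr)$, not $\exp(-\Theta(k^2/\mu))$, once $k \gg \mu$. The textbook form you have in mind (e.g.\ Mitzenmacher--Upfal, Corollary 4.6) carries the hypothesis $0 < \delta < 1$, i.e.\ $k < \mu$, which is essentially what your derivation actually proves. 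If the statement is meant to hold for all $k>0$, it should be replaced by the Bernstein form you already obtained en route, $\Pr(|X-\mu|\geq k) \leq 2\exp\!\left(-\tfrac{k^2}{2\mu+2k/3}\right)$; otherwise the range of $k$ must be restricted.
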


We also provide a list of parameters used in the proof in the \Cref{sec:tableofparameters}.

\section{Reduction to Earth Mover's Distance}\label{sec:EMD}

In this section, we reduce the estimation of maximum matching size to the estimation of Earth Mover's Distance (EMD, a.k.a.\ Optimal Transport, Wasserstein-1 Distance or Kantorovich–Rubinstein Distance), and obtain the same query lower bound for $\Omega(\epsilon)$-additive approximation.
This lower bound implies that the algorithm of \citet*{BerettaR24} is tight up to the $O_{\epsilon}(1)$ factor in the power, i.e.\ for any $\epsilon > 0$ they present a $n^{2-\Omega_\epsilon(1)}$-time algorithm that approximates EMD up to an $\epsilon$-additive error.

\newcommand{\EMD}{\textnormal{EMD}}

\begin{definition}[EMD]\label{def:EMD}
    Given two distributions $p$ and $q$ over a metric space $(\mc{M}, d)$, their EMD is defined as
    $$
    \EMD(p, q) = \min\left\{ \E_{x,y \sim \varphi}[d(x, y)] \mid \textnormal{$\varphi$ is a coupling of $p$ and $q$}\right\}.
    $$
\end{definition}

For this section, we normalize the distances and assume they lie in $[0, 1]$ for all pairs.
The lower bound is formalized as follows.

\begin{theorem}\label{thm:EMD}
    Let $p$ and $q$ be discrete distributions of support size $n$ on metric $(\mc{M}, d)$, such that $d: \mc{M}^2 \to [0, 1]$.
    For every $\delta > 0$ there exists $\epsilon > 0$ (i.e., $\epsilon$ is only a function of $\delta$),
    such that any algorithm with query access to $d$,
    that computes an $\epsilon$-additive approximation of $\EMD(p, q)$, requires at least $\Omega(n^{2-\delta})$ queries.
\end{theorem}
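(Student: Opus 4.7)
The plan is to reduce the additive matching estimation problem of \cref{thm:main-intro} to EMD estimation via a rescaled $(1,2)$-metric. Assuming (as I discuss below) that the hard distribution of \cref{thm:main-intro} can be taken on bipartite graphs $G = (L \cup R, E)$ with $|L| = |R| = n$, I would construct the EMD instance $(\mc{M}, d, p, q)$ as follows: let $\mc{M} = L \cup R$, let $p$ be uniform on $L$ and $q$ uniform on $R$, and define $d(u,v) = 1/2$ whenever $(u,v) \in E$ and $d(u,v) = 1$ otherwise. All pairwise distances lie in $[1/2, 1] \subseteq [0,1]$, and the triangle inequality holds trivially because any distance is at most $1$ while any two-hop sum is at least $1$.

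Next, I would compute $\EMD(p,q)$ exactly in terms of $\mu(G)$. A coupling $\varphi$ of $p$ and $q$ is a joint distribution on $L \times R$ with marginals $1/n$ on each side, so
\begin{equation*}
    \E_{\varphi}[d(x,y)] \;=\; \sum_{(u,v)\in L\times R} \varphi(u,v)\,d(u,v) \;=\; 1 \;-\; \tfrac{1}{2}\sum_{(u,v)\in E}\varphi(u,v).
\end{equation*}
Minimizing this is equivalent to maximizing $\sum_{(u,v)\in E}\varphi(u,v)$ subject to the marginal constraints, which, after scaling by $n$, is exactly the bipartite fractional matching LP with unit vertex capacities. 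By total unimodularity (equivalently, König's theorem) its optimum equals $\mu(G)$, giving
\begin{equation*}
    \EMD(p,q) \;=\; 1 - \frac{\mu(G)}{2n}.
\end{equation*}

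From this identity the lower bound follows immediately: an $\epsilon$-additive approximation of $\EMD(p,q)$ yields a $2\epsilon n$-additive approximation of $\mu(G)$, and every distance query $d(u,v)$ is computable from a single adjacency-matrix query ``is $(u,v) \in E$?''. Hence a $Q$-query EMD approximation algorithm gives a $Q$-query matching estimator with error $2\epsilon n$, and invoking \cref{thm:main-intro} with error parameter $2\epsilon$ (choosing the $\epsilon$ promised by that theorem for the given $\delta$, and taking the EMD error to be $\epsilon/2$) produces $Q = \Omega(n^{2-\delta})$. Finally, Yao's minimax lemma (\cref{prop:yao}) lets us pass from the distributional hardness of the matching problem to a worst-case hardness statement for EMD over a fixed instance.

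The main obstacle is ensuring the hard distribution behind \cref{thm:main-intro} is supported on (or can be transformed into) bipartite graphs with equal side sizes. The layered gadget construction sketched in \cref{sec:techniques}, where vertices are partitioned into levels $V_1, V_2, \ldots$ with edges only between adjacent levels, is naturally bipartite by taking even-indexed levels on one side and odd-indexed on the other; balancing the two sides can be done by padding with isolated vertices without affecting the matching size or the query complexity. If this is not immediate from the actual construction, one can instead pass through the bipartite double cover (replacing each $v$ with copies $v_L, v_R$ and each edge $(u,v)$ with $(u_L, v_R)$ and $(v_L, u_R)$), which exactly doubles both $n$ and $\mu(G)$ and preserves the query model up to a factor of $O(1)$, still yielding the claimed $\Omega(n^{2-\delta})$ lower bound after adjusting $\delta$.
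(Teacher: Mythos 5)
Your reduction is exactly the paper's: the same $(1,2)$-metric (distance $1/2$ on edges, $1$ on non-edges), the same uniform distributions $p,q$ on the two sides, and the same identity $\EMD(p,q) = 1 - \mu(G)/(2n)$, derived via the bipartite perfect fractional matching LP. The bipartiteness issue you flag is resolved directly in the paper (\cref{obs:bipartiteness} and \cref{obs:bipartiteness-ell} show the hard distribution is bipartite with equal side sizes), so no workaround is needed; note, however, that your bipartite double-cover fallback is incorrect as stated, since the double cover does not preserve $\mu$ up to a factor of two for non-bipartite graphs (the double cover of $K_3$ is $C_6$, with $\mu(C_6) = 3 \ne 2\mu(K_3) = 2$), and also that no extra Yao step is needed since \cref{thm:main-intro} is already a worst-case randomized lower bound.
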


\begin{remark}
    The underlying metric in our lower bound is a $(1, 2)$-metric.
\end{remark}

Note that the lower bound on the number of queries to $d$ also provides a lower bound on the time complexity. The theorem follows directly from \cref{thm:main-intro} and the reduction below.

\begin{claim}
    Assume there exists an algorithm $\mc{A}$ that, given two distributions of support size $n$ on a normalized metric space, computes an $\epsilon/2$-additive approximation to $\EMD$ using $Q$ queries.
    Then, there exists an algorithm $\mc{A}'$ that, given a bipartite graph $G$, computes a $\epsilon n$-additive approximation to the maximum matching size using the same number of queries to the adjacency matrix.
\end{claim}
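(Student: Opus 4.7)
The plan is to embed the bipartite graph into a $(1,2)$-metric EMD instance whose optimum value encodes $\nu(G)$, the maximum matching size. Concretely, given $G = (L \cup R, E)$ with $|L| = |R| = n$, I would let $p$ be uniform on $L$ and $q$ be uniform on $R$, take the metric space to be $\mc{M} := L \cup R$, set $d(u,u) = 0$, and for $u \neq v$ set $d(u, v) = 1/2$ if $u$ and $v$ lie on the same side or $\{u,v\} \in E$, and $d(u, v) = 1$ otherwise. Every positive distance lies in $\{1/2, 1\}$, so the triangle inequality holds automatically (whenever both summands on the right are positive, their sum is at least $1$), and we obtain a normalized $(1,2)$-metric, matching the remark following \cref{thm:EMD}.

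Since $L \cap R = \emptyset$, any coupling $\varphi$ of $p$ and $q$ is supported on $L \times R$, where the per-unit cost is $1/2$ on edges and $1$ on non-edges. Writing $f$ for the total mass that $\varphi$ places on $E$, the expected cost is $f/2 + (1-f) = 1 - f/2$, so $\EMD(p, q) = 1 - f^{\star}/2$, where $f^{\star}$ is the maximum mass on $E$ over couplings with marginals $p$ and $q$. The key step, and the one I would state most carefully, is the identity $f^{\star} = \nu(G)/n$. This is a standard consequence of König's theorem: after rescaling by $n$, one maximizes $\sum_{(x,y)\in E} g(x,y)$ over doubly stochastic matrices $g$, and by Birkhoff--von Neumann the optimum is attained at a permutation matrix $\pi$. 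Since any matching of $G$ extends to a perfect matching of $K_{n,n}$ by pairing up the unmatched vertices arbitrarily, and since the edges of any such $\pi$ that lie in $E$ always form a matching of $G$, the best $\pi$ meets $E$ in exactly $\nu(G)$ entries. Combining gives the closed form $\EMD(p, q) = 1 - \nu(G)/(2n)$.

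Once we have this identity, the reduction is mechanical. Given the $\epsilon/2$-additive estimate $\hat{E}$ returned by $\mc{A}$, the quantity $\hat{\nu} := 2n(1 - \hat{E})$ satisfies $|\hat{\nu} - \nu(G)| = 2n \cdot |\hat{E} - \EMD(p, q)| \leq \epsilon n$, matching the desired additive error exactly. It remains to simulate the distance oracle: for each query $d(u, v)$ issued by $\mc{A}$, the algorithm $\mc{A}'$ answers $0$ when $u = v$, and $1/2$ without issuing any query when $u$ and $v$ lie on the same side (which is determined purely by the labels $\mc{A}'$ chose). Otherwise, $u \in L$ and $v \in R$, and a single adjacency-matrix query on $\{u,v\}$ selects between the two possible values $1/2$ (edge) and $1$ (non-edge). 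Thus $\mc{A}'$ makes at most $Q$ adjacency-matrix queries, which completes the reduction.
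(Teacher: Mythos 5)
Your proof is correct and takes essentially the same route as the paper: reduce to a minimum-weight perfect bipartite matching instance, invoke integrality of the bipartite matching polytope (you via Birkhoff--von Neumann, the paper by directly noting that bipartite fractional and integral minimum-weight perfect matchings coincide), derive $\EMD(p,q) = 1 - \mu(G)/(2n)$, and apply the same affine postprocessing. The minor differences are that you explicitly set the same-side distances to $1/2$ and check the triangle inequality (the paper only observes those distances are irrelevant to the objective), and your attribution to K\"onig's theorem is a naming slip since the fact you actually use is Birkhoff--von Neumann.
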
 
\begin{proof}
    We define algorithm $\mc{A}'$.
    Let $A$ and $B$ be the two parts of $G$, and let $\mc{M} = A \cup B$.
    Assume, without loss of generality, that $\card{A} = \card{B} = n$.
    Let $p$ and $q$ be uniform distributions on $A$ and $B$ respectively, i.e.\ $p(u) = 1/n$ for $u \in A$ and $q(u) = 1/n$ for $u \in B$.
    Finally, let 
    $$
    d(u, v) = \begin{cases}
        1/2, & \text{if $(u, v) \in E(G)$, and} \\
        1, & \text{otherwise.}
    \end{cases}
    $$
    Note that other distances are not relevant to $\EMD(p, q)$,
    and any distance query can be answered using an adjacency matrix query between the same pair.
    Algorithm $\mc{A'}$ simply reports $2n - 2n \mc{A}(\mc{M}, d)$.

    To prove $\mc{A'}$ is computing an $\epsilon n$-additive approximation of $\mu(G)$, it suffices to show 
    $$
    \EMD(p, q) = \frac{2n - \mu(G)}{2n}.
    $$
    First, observe that a coupling $\varphi$ of $p$ and $q$, when scaled up by a factor of $n$, is a perfect fractional matching of $A$ and $B$.
    That is, after weighting the edges according to $d$, the minimum-weight perfect fractional matching has weight $n \cdot \EMD(p, q)$.
    Since the graph is bipartite, this is the same as the minimum-weight perfect (integral) matching.
    It remains to show that the minimum-weight perfect matching has weight $n - \mu(G)/2$.
    Any perfect matching that includes $x$ edges of $G$, must include $n - x$ edges outside $G$, and thus has weight
    $$
    x \cdot \frac{1}{2} + (n - x) \cdot 1 = n - \frac{x}{2}.
    $$
    Finally, note that the largest possible value for $x$ is $\mu(G)$.
    Therefore, the minimum weight perfect matching has weight $n - \mu(G)/2$, and
    \begin{equation*}
        \EMD(p, q) = \frac{n - \mu(G)/2}{n} = \frac{2n - \mu(G)}{2n}. \qedhere
    \end{equation*}
\end{proof}

\section{The Construction}

We define our construction in several steps. First, we introduce a recursive procedure to generate the gadgets used in the construction, as described in \Cref{sec:two-input,sec:recursive-struc,sec:level-1,sec:level-ell}. Next, in \Cref{sec:multigraph-cons}, we demonstrate how to construct a multigraph using these gadgets. We then establish key properties of the multigraph in \Cref{sec:properties-multi}. Following this, in \Cref{sec:final-simple-graph}, we explain how to derive the final simple graph from the constructed multigraph.

\subsection{Two Input Distributions}\label{sec:two-input}

To prove the lower bound, we construct a distribution of graphs and show that any \emph{deterministic} sublinear algorithm that computes an $\epsilon n$-additive approximation of the matching size on this distribution, requires $\Omega(n^{2-\delta})$ adjacency matrix queries. Then, it follows from Yao's min-max theorem that \emph{any (possibly randomized)} algorithm that approximates the matching size up to an $\epsilon n$ additive error requires the same number of queries.

The final distribution $\mc{D}$ is a mix of two input distributions $\mc{D} := (\yesdist + \nodist)/2$, where:
\begin{enumerate}
    \item a graph drawn from \yesdist{} contains a perfect matching with high probability, and
    \item a graph drawn from \nodist{} leaves $\Omega(\epsilon n)$ vertices unmatched.
\end{enumerate}

Observe that any algorithm that computes an $O(\epsilon n)$-additive approximation of the matching size with constant probability must be able to distinguish between \yesdist{} and \nodist{} with constant probability. Using this observation, we prove the following lemma.

\begin{restatable}{lemma}{deterministicLowerbound}\label{lem:deterministic-lower-bound}
    For every $\delta > 0$ there exists $\epsilon > 0$ (i.e., $\epsilon$ is only a function of $\delta$), take a \emph{deterministic} sublinear algorithm that given a graph $G$ drawn from $\mc{D}$, produces an $\epsilon n$-additive approximation of the matching size with probability $\frac{2}{3}$.
    That is, the algorithm computes $\tilde{\mu}(G)$ such that
    $$
    \Pr[\mu(G) - \epsilon n \leq \tilde{\mu}(G) \leq \mu(G)] \geq \frac{2}{3}
    $$
    where the probability is over the graph $G$ drawn from $\mc{D}$.
    Then, the algorithm requires $\Omega(n^{2-\delta})$ adjacency matrix queries.
\end{restatable}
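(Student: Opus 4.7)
The proof decomposes into two conceptually independent pieces. First, I would reduce the approximation lower bound to an indistinguishability statement between $\yesdist$ and $\nodist$. Second, I would establish this indistinguishability by analyzing the multigraph construction sketched in \Cref{sec:techniques}. For the reduction, fix $\delta > 0$ and choose $\epsilon > 0$ small enough that the matching sizes are separated by more than $4\epsilon n$ between the two components of $\mc{D}$. Concretely, $G \sim \yesdist$ satisfies $\mu(G) = n$ with high probability (this will follow from \Cref{prop:matching-ERgraph} applied to the bipartite $\ER$ structure guaranteed by the construction), while $G \sim \nodist$ leaves at least $4\epsilon n$ vertices unmatched, to be established as a separate structural lemma about \nodist{}. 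Any deterministic algorithm that outputs $\tilde{\mu}(G)$ within an additive $\epsilon n$ of $\mu(G)$ with probability $2/3$ can then be converted into a distinguisher between $\yesdist$ and $\nodist$ by thresholding $\tilde{\mu}(G)$ at $n - 2\epsilon n$; the distinguisher inherits constant advantage. It therefore suffices to show that no deterministic algorithm making $o(n^{2-\delta})$ adjacency-matrix queries can distinguish $\yesdist$ from $\nodist$ with constant advantage.

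To prove indistinguishability, the plan is to construct a coupling between the algorithm's views in the two distributions that succeeds with probability $1 - o(1)$, which bounds the total variation of the query transcripts and rules out any constant-advantage distinguisher. The coupling is built on the ground-edge/pseudo-edge/real-edge hierarchy introduced in \Cref{sec:techniques}: every query response falls into one of three classes — a real edge, a pseudo non-edge, or an ordinary non-edge. Ordinary non-edges have a distribution that is entirely independent of the vertex labels, so they can be coupled trivially and ignored. For real edges, the key design choice is that the marginal probability of a real edge at a fixed pair $(u,v)$ depends only on the expected-degree profiles of $u$ and $v$, which are equalized across both distributions by the parallel-ground-edge construction; combined with the per-ground-edge independence, this makes the marginal distribution of any single query response identical in $\yesdist$ and $\nodist$. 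The gap between the joint distributions therefore comes entirely from the algorithm discovering cycles within the core, which is what lets it leverage the recursive gadget structure. By a union bound using the sparsity of the construction (vertex degrees $\approx n^\sigma$ with $2\sigma < \delta$) and the birthday-style count $O(n^{2\sigma})$ of pairs with multiple real edges, no such multigraph collision will be queried, so the simple-graph realization looks identical to the multigraph, and the probability of ever closing a cycle within the query budget is $o(1)$.

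The hard part will be accounting for pseudo non-edges, which are the only non-trivial responses that genuinely correlate with the labeling. The plan is to assign progressively smaller pseudo-edge densities to deeper gadget levels, so that the expected number of pseudo non-edges the algorithm uncovers at level $\ell$ decays geometrically in $\ell$. This lets one push a level-by-level coupling through the recursion: conditional on the observed pseudo non-edges at the higher levels, the label distribution on the not-yet-queried portion of level $\ell$ is still close in total variation to its unconditioned counterpart, so the next query in either distribution sees the same marginal response. The main technical engine supporting this step is \Cref{prop:chernoff}, used both to certify concentration of the per-level pseudo-edge and real-edge counts and to bound collision events. Once the coupling is in place, the distinguishing advantage of any deterministic $o(n^{2-\delta})$-query algorithm is $o(1)$, which together with the reduction above yields the lemma. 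Note that \Cref{prop:yao} is not needed here since we are already in the deterministic setting; it will only be invoked to lift this lemma to the randomized statement of \Cref{thm:main-intro}.
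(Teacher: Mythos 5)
Your proposal correctly reconstructs the overall architecture (reduce to distinguishing $\yesdist$ from $\nodist$, then bound total variation of transcripts via a coupling organized around ground/pseudo/real edges, with ordinary non-edges decoupled from labels and level-dependent pseudo densities), and it correctly notes that \cref{prop:yao} belongs to the lift to \cref{thm:main-intro} rather than to this lemma. However, the central step of your coupling argument rests on a claim that is false in this construction: that ``the probability of ever closing a cycle within the query budget is $o(1)$.'' The birthday-style count you invoke gives only that no pair with \emph{multiple real edges} is queried (i.e.\ the multigraph is indistinguishable from its simple-graph projection, matching \cref{lem:cant-tell-multigrpah}), but it says nothing about cycles in the queried subgraph. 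With $O(n^{2-\delta})$ queries, the algorithm discovers $\Theta(n^{1-\delta+\sigma_L})$ real/pseudo edges, and the expected number of edges between two already-touched vertices is of order $\rho \cdot n^{2(1-\delta+\sigma_L)} = n^{1-2\delta+3\sigma_L}$, which is polynomially large for any $\delta < 1/2$. So the algorithm \emph{does} close many cycles at the top level (the paper calls these spoiler vertices, \cref{clm:spoiler-count}), and an argument that pretends otherwise has no route to a valid coupling.

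What the paper actually does --- and what is missing from your sketch --- is a recursive accounting that shows cycles, when they are found, do not translate into information about the critical inner gadget. The key objects are the distinguishable edges ($p_e^{\ell\text{-inner}}$ crossing a threshold, \cref{def:distinguishibility}), the spoiler/spoiled vertices and edges, and the shallow-subgraph/tree coupling of \cref{lem:same-tree-different-labels}; the function $g(\ell)$ tracks how the count of distinguishable edges shrinks as one descends levels, until at the base level it is zero with high probability (\cref{clm:no-detected-edge-base}) and all remaining components are genuinely acyclic (\cref{clm:forest-base-level}). Your ``each level's marginal response stays close in TV'' step cannot be made to compose without this machinery: closeness of individual query marginals does not bound the joint transcript TV over $n^{2-\delta}$ queries, which is precisely why the paper works with a per-tree/per-component coupling and a quantitative $(1+O(n^{\sigma_\ell-\delta}))^{|T|}$ shift rather than a naive marginal argument. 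A minor point is also worth fixing: under $\yesdist$ the matching size is $n/2$ (number of matched edges), not $n$, and the gap under $\nodist$ is $\Theta(\epsilon n)$ unmatched vertices; your $4\epsilon n$ figure is not what the parameter choices yield, though this does not affect the structure.
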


Our main theorem then follows from a direct application of Yao's Lemma (\cref{prop:yao}).

\mainTheorem*
\begin{proof}
    In the notation of \cref{prop:yao}, let the cost of an algorithm be the number of adjacency matrix queries. \cref{lem:deterministic-lower-bound} states $F_{1, \frac{1}{3}} \geq \inf_{a \in B_{\mc{D}, \lambda}} c(a, \mc{D}) = \Omega(n^{2-\delta})$. Then, from \cref{prop:yao} we can conclude:
    $$
    F_{2,\frac{1}{3}} \geq \Omega\left(F_{2,\frac{1}{6}}\right) \geq \Omega\left( F_{1,\frac{1}{3}} \right) = \Omega(n^{2-\delta}).
    $$
    That is, for any randomized algorithm that computes an $\epsilon n$-additive approximation of the matching size with probability $\frac{2}{3}$, there exists an input graph $G$ such that the average (hence also the worst case) number of queries is $\Omega(n^{2-\delta})$
\end{proof}

\subsection{Recursive Structure of Graphs Drawn from Input Distribution}\label{sec:recursive-struc}

Here, we describe the recursive structure of the graphs drawn from $\mc{D}$. For now, we present the structure as gadgets between vertices. Later in \cref{sec:multigraph-cons}, we show precisely how these gadgets are used to construct the multigraph, which is then turned into the final simple graph.

In the recursive structure,
to obtain a level $\ell$ graph, we use graphs of level $\ell-1$ combined with some other gadgets that make it hard for the algorithm to distinguish edges of graphs of level $\ell-1$. 
The ultimate goal is to hide some important edges of level 1 that make a difference between the \yesdist{} and \nodist{}.
We use $\yesdist^\ell$ and $\nodist^\ell$ for the distributions of level $i$ graphs for \yesdist{} and \nodist{}, respectively. Also, we let $\mc{D}^\ell := (\yesdist^\ell + \nodist^\ell)/2$.

\subsection{Gadgets for Level 1}\label{sec:level-1}

In this section, we introduce gadgets and define the first level, i.e.\ the base case, of the recursive structure.
Each gadget is defined between two disjoint vertex sets $X$ and $Y$ by a parameter $P(X, Y)$ that controls the density of the edges.
Intuitively, the gadget mimics a bipartite \ER{} graph between $X$ and $Y$ where there exists an edge between each pair with probability $P(X, Y)$.
Based on this intuition, we define $d_\phi(u)$, the contribution of a gadget $\phi$ to the expected degree of a vertex $u$:
\[
d_\phi(u) = \begin{cases}
    P(X, Y) \card{Y}, & \qquad \textnormal{if $u \in X$,} \\
    P(X, Y) \card{X}, & \qquad \textnormal{if $u \in Y$, and} \\
    0, & \qquad \textnormal{otherwise.} \\
\end{cases}
\]
Here, $\phi$ is a gadget between $X$ and $Y$ with density parameter $P(X, Y)$.
For a set of gadgets $\Phi$, the contribution is simply defined as $d_\Phi(u) = \sum_{\phi \in \Phi} d_\phi(u)$.

The actual construction, however, is more intricate as described in \cref{sec:multigraph-cons}. There, a multigraph is constructed such that for a pair of vertices $u \in X$ and $v \in Y$ the expected number of edges between them is $P(X, Y)$. 
We use the following remark to prove some properties of the multigraph and postpone the proof to \cref{sec:properties-multi} to abstract away the details.

\begin{remark} \label{remark:gadgets-are-like-ER}
    Take two vertices $u$ and $v$ such that there is a gadget with density parameter $p$ between them. Then in the multigraph, the expected number of edges between $u$ and $v$ is $p$ (\Cref{cor:single-edge-expectation}), and there is at least one edge between them with probability $p/2$ (\Cref{cor:single-edge-prob}).
    If there are no gadgets between $u$ and $v$, then there are no edges between them.
\end{remark}

The base case of the construction consists of $6r + 2$ vertex sets which are divided into $r$ layers.
For a vertex $u$ in a set $X$, we may refer to $X$ as the label of $u$ on this level.
The sizes of these sets are fixed, but the vertices are otherwise divided between them at random.
The vertex sets are defined as follows:

\begin{itemize}
    \item $A^j_i$ for $i \in [r]$ and $j \in \{1, 2\}$. For $i < r$ and $j \in \{1, 2\}$, each subset $A^j_i$ contains $N_1$ vertices. Each of $A^1_r$ and $A^2_r$ contains $(1-\xi)N_1$ vertices.
    \item $B^j_i$ for $i \in [r]$ and $j \in \{1, 2\}$, each containing $N_1$ vertices. 
    \item $D^j_i$ \emph{(the dummy vertices)} for $i \in [r]$ and $j\in \{1, 2\}$, each containing $\zeta N_1$ vertices. 
    \item $S^j$ \emph{(the special vertices)} for $j \in \{1, 2\}$, each containing $N_1$ vertices.
\end{itemize}

Throughout the paper, we use $S$ to denote $S^1 \cup S^2$. Likewise, we use $A_i$, $B_i$, and $D_i$ to denote $A_i^1 \cup A_i^2$, $B_i^1 \cup B_i^2$, and $D_i^1 \cup D_i^2$, respectively. Also, we let $n_1$ be the total number of vertices in a level-$1$ graph.

Now, we present the gadgets that are the same for the $\yesdist^1$ and $\nodist^1$ distributions.
It consists of (1) sparse gadgets between $S^j$ and $B^j_1$, (2) a dense gadget between $A_i^j$ and $B_i^j$ in each layer $i < r$, (3) a sparse gadget between $B^i_j$ and $A_{i-1}^j$ that connects the layers $i \leq r$ and $i - 1$, and (4) the gadgets to the dummy vertices.

\begin{align*}
    &P(S^j, B^j_1) = \frac{\log^2 n}{N_1}    &\forall\quad j \in \{1, 2\},\\
    &P(B^j_i, A^j_i) = \frac{d_1}{N_1}     &\forall\quad j \in \{1, 2\}, \quad 1 \leq i < r,\\
    &P(B_i^j, A_{i-1}^j) = \frac{\log^2 n}{N_1}     &\forall\quad j \in \{1, 2\}, \quad 1 < i \leq r ,\\
\end{align*}

The gadgets to the dummy vertices are more involved.
Each vertex set $A_i^j$ (resp.\ $B_i^j$ and $D_i^j$) on layer $i$
has gadgets to the dummies $D_k^{3-j}$ (resp.\ $D_k^j$ and $D_k^{j-3}$) for all layers $k \leq i$. There are some gadgets between the dummy vertices of the same layer to control the total degree of each dummy vertex.
$$
\begin{array}{D{l} @{\hspace{1cm}} S{r}}
    P(A_i^j, D_k^{3-j}) = \frac{\gamma d_1}{\zeta N_1}    &\forall\quad j \in \{1, 2\}, \quad 1 < i \leq r, \quad k < i,\\
    P(A_i^j, D_i^{3-j}) = \frac{(r-i+1)\gamma d_1}{\zeta N_1}    &\forall\quad j \in \{1, 2\}, \quad 1 \leq i \leq r,\\
    P(B_i^j, D^j_k) = \frac{\gamma d_1}{\zeta N_1}     &\forall\quad j \in \{1, 2\}, \quad 1 < i \leq r, \quad k < i, \\
    P(B_i^j, D^j_i) = \frac{(r-i+1)\gamma d_1}{\zeta N_1}     &\forall\quad j \in \{1, 2\}, \quad 1 \leq i \leq r,\\
    P(D^j_i, D^{3-j}_k) = \frac{\gamma d_1}{\zeta N_1}     &\forall\quad 1\leq i \leq r, \quad 1\leq k \leq r, \quad i \neq k, \quad j \in \{1,2 \},\\
    P(D^j_i, D^{3-j}_i) = \frac{d_1 + \gamma d_1 + \log^2 n - (4r - 4i + 2 - \xi)\gamma d_1 / \zeta}{\zeta N_1}     &\forall\quad 1\leq i \leq r, \quad j \in \{1,2 \}.
\end{array}
$$

Finally, we describe the gadgets that are different in $\yesdist^1$ and $\nodist^1$.
In the \yes{} case, there is a sparse gadget between $A_r^1$ and $A_r^2$.
In the \no{} case, there is no such gadget. There are gadgets between $A_r^j$ and $B_r^j$, and between $B_r^1$ and $B_r^2$ the parameters of which are changed to control the degrees and make up for the discrepancy between the two cases.
More precisely, we have the following gadgets \textbf{only} in $\yesdist^1$:
\begin{align*}
    &P(A^j_r, B^j_r) = \frac{d_1}{N_1}   \qquad\qquad  &\forall\quad j \in \{1, 2\},\\
    &P(A^1_r, A^2_r) = \frac{\log^2 n}{(1-\xi)N_1},\\
    &P(B^1_r, B^2_r) = \frac{\xi d_1}{N_1},\\
\end{align*}
On the other hand, we have the following gadgets \textbf{only} in $\nodist^1$:
\begin{align*}
    &P(A^j_r, B^j_r) = \frac{d_1 + \log^2 n}{N_1}   \qquad\qquad  &\forall\quad j \in \{1, 2\},\\
    &P(B^1_r, B^2_r) = \frac{\xi d_1 - (1-\xi)\log^2 n}{N_1},\\
\end{align*}

Now, we establish some useful properties of the graphs $\mc{D}^1$. The formal proofs of some of the claims can be found in the appendix, as they follow from simple observations and algebra. 
Firstly, we characterize the expected degree of the vertices:
all special vertices have the same expected degree,
and all the non-special vertices have the same expected degree.
Recall that a gadget between vertex sets $X$ and $Y$ with parameter $p$
contributes $p\card{Y}$ to the expected degree of each vertex in $X$.
The proof is deferred to \cref{sec:omitted-proofs}.

\begin{restatable}{claim}{expectedDegreeBase}
Let $\Phi_\yes^1$ and $\Phi_\no^1$ be the gadgets introduced on level $1$,
and $\Phi^1$ be either one of them.
It holds that,
    \begin{itemize}
        \item[(i)] $d_{\Phi^1}(v) =  \log^2 n$, for $v \in S$
        \item[(ii)] $d_{\Phi^1}(v) = d_1 + r\gamma d_1 + \log^2 n$, for $v \notin S$.
    \end{itemize}
\end{restatable}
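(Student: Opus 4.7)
The plan is to prove both items by enumerating the gadgets incident to a vertex in each label class and summing their contributions using the identity $d_\phi(v) = P(X,Y)\cdot |Y|$ when $v \in X$ and $\phi$ is a gadget between $X$ and $Y$.

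For (i), a vertex $v \in S^j$ participates in exactly one gadget, namely $P(S^j, B^j_1) = \log^2 n / N_1$, so $d_{\Phi^1}(v) = (\log^2 n / N_1) \cdot |B^j_1| = \log^2 n$. This holds for both $\yes$ and $\no$ since the special-vertex gadgets are common.

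For (ii) I would split into three sub-cases $v \in A^j_i$, $v \in B^j_i$, and $v \in D^j_i$, and in each one group the contributing gadgets into a \emph{non-dummy part} and a \emph{dummy part}. The goal is to show that the non-dummy contributions always sum to $d_1 + \log^2 n$ and the dummy contributions always sum to $r \gamma d_1$. For $v \in A^j_i$ or $v \in B^j_i$ with $i<r$, the non-dummy gadgets are the level gadget $P(B^j_i, A^j_i)$ (contributing $d_1$) and the inter-layer gadget of parameter $\log^2 n / N_1$ (contributing $\log^2 n$). The dummy contribution telescopes: the $i-1$ gadgets of parameter $\gamma d_1/(\zeta N_1)$ give $(i-1)\gamma d_1$, and the single gadget with parameter $(r-i+1)\gamma d_1/(\zeta N_1)$ gives $(r-i+1)\gamma d_1$, summing exactly to $r\gamma d_1$. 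This telescoping is the reason the parameter of the ``same-layer'' dummy gadget is boosted by $(r-i+1)$.

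The subtle case is the topmost layer $i = r$, where the $\yes$ and $\no$ gadgets differ and the set $A^j_r$ has size $(1-\xi)N_1$ rather than $N_1$. For $v \in B^j_r$ in $\yesdist^1$, the non-dummy contributions are $(1-\xi)d_1$ from $P(B^j_r, A^j_r)$ (using $|A^j_r| = (1-\xi)N_1$), $\log^2 n$ from $P(B^j_r, A^j_{r-1})$, and $\xi d_1$ from $P(B^1_r, B^2_r)$, summing to $d_1 + \log^2 n$. The analogous accounting for $\nodist^1$ yields the same total because the adjustments $P(A^j_r, B^j_r) = (d_1 + \log^2 n)/N_1$ and $P(B^1_r, B^2_r) = (\xi d_1 - (1-\xi)\log^2 n)/N_1$ are chosen to precisely absorb the removal of the $P(A^1_r, A^2_r)$ gadget while still respecting the size mismatch at layer $r$. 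A parallel computation handles $v \in A^j_r$: the two $\yes$ gadgets $P(A^j_r,B^j_r)$ and $P(A^1_r,A^2_r)$ jointly contribute $d_1 + \log^2 n$, and in $\no$ the single enlarged $P(A^j_r,B^j_r)$ does the same.

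Finally, for $v \in D^j_i$ the calculation is the most elaborate, because $v$ is incident to gadgets from every $A^{3-j}_k$ and $B^j_k$ with $k \geq i$, the inter-dummy gadgets $P(D^j_i, D^{3-j}_k)$ for $k \neq i$, and the special same-layer gadget $P(D^j_i, D^{3-j}_i)$. Carrying out the sum, the $A$- and $B$-contributions produce a term $(4r - 4i + 2 - \xi)\gamma d_1 / \zeta$ (after accounting for $|A^{3-j}_r| = (1-\xi)N_1$), and the inter-dummy term contributes $(r-1)\gamma d_1$. The parameter of $P(D^j_i, D^{3-j}_i)$ was designed to be exactly $(d_1 + \gamma d_1 + \log^2 n - (4r-4i+2-\xi)\gamma d_1/\zeta)/(\zeta N_1)$, so that its contribution $d_1 + \gamma d_1 + \log^2 n - (4r-4i+2-\xi)\gamma d_1/\zeta$ cancels the anomalous $1/\zeta$ factors, leaving $d_1 + r\gamma d_1 + \log^2 n$ as required.

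The main obstacle is not conceptual but purely bookkeeping: correctly tracking the size $(1-\xi)N_1$ of $A^j_r$ wherever it appears, and verifying that the carefully tuned same-layer dummy parameter and the $\yes$/$\no$ layer-$r$ adjustments interact so as to cancel all $\xi$- and $1/\zeta$-terms in every label class. Once these cancellations are spelled out, the claim follows by direct summation.
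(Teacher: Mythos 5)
Your proposal is correct and follows essentially the same approach as the paper's own proof: a case analysis by label class, directly summing the gadget contributions via $d_\phi(v) = P(X,Y)\cdot|Y|$. The paper simply enumerates all terms in each case (including separate sub-cases for $A_r$, $B_r$ in \yes{} vs.\ \no{} and for $D_i$ with $i<r$ vs.\ $i=r$), while you package the $i<r$ computations more conceptually as a ``non-dummy'' part summing to $d_1+\log^2 n$ and a telescoping ``dummy'' part summing to $r\gamma d_1$; the key algebraic identities you single out---the $(4r-4i+2-\xi)\gamma d_1/\zeta$ total from the $A$- and $B$-gadgets incident to $D_i^j$, and its cancellation against the same-layer dummy parameter---are exactly the cancellations the paper carries out.
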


\begin{observation}\label{obs:bipartiteness}
    Any multigraph drawn from $\mc{D}^1$ is bipartite.  
\end{observation}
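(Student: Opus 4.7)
The plan is to exhibit an explicit bipartition $(L,R)$ of the vertex set such that every gadget listed in the level-1 construction has its two endpoint vertex sets on opposite sides. Because \Cref{remark:gadgets-are-like-ER} guarantees that the multigraph places edges only between pairs joined by at least one gadget, producing such a bipartition immediately certifies bipartiteness for any draw from $\mc{D}^1$, independently of the randomness that picks ground, pseudo, and real edges.

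The bipartition I would propose is
$$
L \;:=\; S^1 \cup \bigcup_{i \in [r]} \bigl(A_i^1 \cup B_i^2 \cup D_i^1\bigr), \qquad R \;:=\; S^2 \cup \bigcup_{i \in [r]} \bigl(A_i^2 \cup B_i^1 \cup D_i^2\bigr),
$$
i.e.\ vertices of type $A$, $S$, or $D$ with superscript $j$ go to side $j$, while vertices of type $B$ with superscript $j$ go to side $3-j$. This rule is forced by the gadgets $(B_i^j, A_i^j)$ and $(B_i^j, D_k^j)$, which demand that $B^j$ lie opposite to both $A^j$ and $D^j$; once this is imposed, every remaining gadget's side assignment is determined.

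Verification would then be a routine case analysis over the gadgets listed in \cref{sec:level-1}. First I would dispatch the superscript-preserving gadgets $(S^j, B_1^j)$, $(B_i^j, A_i^j)$, $(B_i^j, A_{i-1}^j)$, $(A_r^j, B_r^j)$, and $(B_i^j, D_k^j)$: each pairs a $B$-vertex of superscript $j$ (side $3-j$) with a non-$B$ vertex of superscript $j$ (side $j$), so they cross. Next the superscript-flipping gadgets $(A_i^j, D_k^{3-j})$ and $(D_i^j, D_k^{3-j})$ involve only non-$B$ vertices whose superscripts differ, and hence cross as well. Finally, the top-layer gadgets $(A_r^1, A_r^2)$ and $(B_r^1, B_r^2)$ pair the two superscripts within a single type; since the side rule depends monotonically on $j$ within each type, these also cross. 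Since this list exhausts both $\Phi_\yes^1$ and $\Phi_\no^1$, the observation follows. There is no substantive obstacle here: the content of the step is simply to record that the construction is, by design, bipartite — a property presumably invoked later to apply \cref{prop:matching-ERgraph} and to argue about perfect matchings in $\yesdist^1$.
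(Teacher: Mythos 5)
Your proposal is correct and uses exactly the same bipartition as the paper, which simply states the same two parts and asserts that no gadget connects vertices within one part. Your extra case analysis verifying that every gadget crosses the partition is a more explicit version of what the paper leaves as a one-line check.
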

\begin{proof}
    Consider the following partitioning of the vertices to the following two disjoint parts:
    \begin{itemize}
        \item[(1)] $\left(\bigcup_{i=1}^r A^1_i \right) \cup \left(  \bigcup_{i=1}^r B^2_i \right) \cup \left(  \bigcup_{i=1}^r D^1_i \right) \cup S^1$, and
        \item[(2)] $\left(\bigcup_{i=1}^r A^2_i \right) \cup \left(  \bigcup_{i=1}^r B^1_i \right) \cup \left(  \bigcup_{i=1}^r D^2_i \right) \cup S^2$.
    \end{itemize}
    According to the gadgets, there are no edges inside each part, which concludes the proof.  
\end{proof}

\begin{claim}\label{clm:perfect-matching-ind-subgraphs}
    In the multigraph construction, all the following hold with high probability:
    \begin{itemize}
        \item There exists a perfect matching between $A^j_i$ and $B^j_{i+1}$ for all $1 \leq i < r$ and $j \in \{1, 2\}$ in each graph drawn from $\mc{D}^1$,
        \item There exists a perfect matching between $B^j_1$ and $S^j$ for all $j \in \{1, 2\}$ in each graph drawn from $\mc{D}^1$,
        \item There exists a perfect matching between $D^1_i$ and $D^2_i$ for all $1 \leq i \leq r$ in each graph drawn from $\mc{D}^1$,
        \item There exists a perfect matching between $A^1_r$ and $A^2_r$ in each graph drawn from $\yesdist^1$.
    \end{itemize}
\end{claim}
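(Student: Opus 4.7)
The plan is to reduce each of the four bullets to an application of \Cref{prop:matching-ERgraph} via \Cref{remark:gadgets-are-like-ER}. Fix one of the bipartite pairs, say $A_i^j$ and $B_{i+1}^j$, and let $p$ denote the density parameter of the gadget between them, namely $p = \log^2 n/N_1$. By \Cref{remark:gadgets-are-like-ER}, for every pair $(u,v) \in A_i^j \times B_{i+1}^j$ the multigraph contains at least one edge between $u$ and $v$ with probability at least $p/2$. The multigraph construction in \cref{sec:multigraph-cons} decides each pair independently (this is exactly the independence property the ``Attempt 3'' construction in \cref{sec:techniques} was designed to guarantee). Hence the bipartite simple graph obtained by collapsing parallel edges between $A_i^j$ and $B_{i+1}^j$ is itself a bipartite \ER{} graph in which each edge is present independently with probability at least $p/2 = \log^2 n / (2N_1)$. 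Since this is within a constant factor of $\log^2 N_1/N_1$, \Cref{prop:matching-ERgraph} (whose conclusion is unchanged when the threshold is replaced by any constant multiple of $\log^2 n/n$, because the true perfect-matching threshold in bipartite \ER{} is only $\Theta(\log n/n)$) yields a perfect matching with high probability.

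The same reduction applies \emph{mutatis mutandis} to the remaining three bullets. For $B_1^j$--$S^j$ we have $p = \log^2 n/N_1$ on parts of size $N_1$; for $A_r^1$--$A_r^2$ in $\yesdist^1$ we have $p = \log^2 n/((1-\xi)N_1)$ on parts of size $(1-\xi)N_1$; in both cases the hypothesis of \Cref{prop:matching-ERgraph} is met directly. The delicate case is $D_i^1$--$D_i^2$, whose density parameter,
$$
p = \frac{d_1 + \gamma d_1 + \log^2 n - (4r - 4i + 2 - \xi)\gamma d_1/\zeta}{\zeta N_1},
$$
carries several correction terms whose purpose is to balance the total expected degree of each dummy vertex. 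Under the parameter regime fixed in \cref{sec:level-1} (in particular, $\gamma$ taken small enough relative to $\zeta$ that $(4r - 4i + 2 - \xi)\gamma/\zeta \leq 1+\gamma$ for every $i$), the numerator is at least $\log^2 n$, so $p \geq \log^2 n/(\zeta N_1)$ and \Cref{prop:matching-ERgraph} applies on parts of size $\zeta N_1$.

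Finally, since $r = O(1)$ there are only constantly many bipartite subgraphs under consideration, and a union bound over them preserves the ``with high probability'' conclusion simultaneously for all four bullets. The main obstacle is bookkeeping rather than a genuine probabilistic difficulty: one must verify that the density parameter is within a constant factor of the threshold $\log^2 n/(\text{part size})$ in each case, with the $D_i^1$--$D_i^2$ case requiring the most care because both positivity and the $\log^2 n$ lower bound on the numerator depend on the relative sizes of $\gamma$, $\zeta$, $\xi$, and $r$.
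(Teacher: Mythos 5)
Your proposal is correct and follows essentially the same route as the paper: invoke \Cref{remark:gadgets-are-like-ER} to model each induced bipartite pair as an \ER{} graph, then apply \Cref{prop:matching-ERgraph} and a union bound over the $O(r)$ pairs. You are somewhat more explicit than the paper about the two points it glosses over — the independence of distinct pairs in the multigraph construction and the positivity/size of the $D_i^1$--$D_i^2$ density parameter — but these are bookkeeping refinements of the same argument rather than a different approach.
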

\begin{proof}
Consider the induced subgraph between $A_i^j$ and $B_{i+1}^j$. The induced subgraph is an \ER{} graph such that each edge exists with a probability of at least $\Omega\left((\log^2 n)/N_1\right)$ (by \Cref{remark:gadgets-are-like-ER}). Also, we have $ (\log^2 n)/N_1 >(\log^2 N_1)/N_1$. Moreover, we have $|A_i^j| = |B_i^j| = N_1$. Thus, by using \Cref{prop:matching-ERgraph}, there exists a perfect matching in the induced subgraph between $A_i^j$ and $B_{i+1}^j$ with high probability. With the exact similar approach, we can prove that there exists a perfect matching in each of the induced subgraphs in the claim statement with high probability.
\end{proof}

\begin{lemma} \label{lem:base-matching-size}
    In the multigraph construction, the size of the maximum matching in $\yesdist^1$ and $\nodist^1$ is characterized as follows:
    \begin{itemize}
        \item[(i)] If $G \sim \yesdist^1$, then $\mu(G) = n_1 / 2$ with high probability.
        \item[(ii)] If $G \sim \nodist^1$, then $\mu(G) \leq n_1 / 2 - N_1/2$.
    \end{itemize}
\end{lemma}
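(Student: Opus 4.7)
The plan is to prove (i) by assembling the perfect matchings supplied by \cref{clm:perfect-matching-ind-subgraphs} into a single perfect matching of $G$, and to prove (ii) via a Hall-type deficiency argument that depends only on which gadgets are present in $\nodist^1$, and is therefore a deterministic upper bound on $\mu(G)$.

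For (i), draw $G \sim \yesdist^1$. \cref{clm:perfect-matching-ind-subgraphs} yields, with high probability, pairwise vertex-disjoint perfect matchings: $M^j_i$ between $A^j_i$ and $B^j_{i+1}$ for every $j \in \{1,2\}$ and $1 \leq i < r$; $M^j$ between $S^j$ and $B^j_1$ for $j \in \{1,2\}$; $M_i$ between $D^1_i$ and $D^2_i$ for $1 \leq i \leq r$; and $M^*$ between $A^1_r$ and $A^2_r$. Their union is a matching of $G$, and a direct inspection shows it covers every vertex: $S^j$ by $M^j$, $A^j_i$ with $i < r$ by $M^j_i$, $A^j_r$ by $M^*$, $B^j_1$ by $M^j$, $B^j_i$ with $i \geq 2$ by $M^j_{i-1}$, and $D^j_i$ by $M_i$. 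Hence $\mu(G) = n_1/2$ with high probability.

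For (ii), \cref{obs:bipartiteness} says $G$ is bipartite, and the two parts each contain exactly $n_1/2$ vertices, so by Hall's deficit theorem it suffices to exhibit a subset $S$ of one part with $|S| - |N(S)| \geq N_1/2$. Exploiting the absence of the $A^1_r \leftrightarrow A^2_r$ gadget in $\nodist^1$, take
\[
S := S^1 \cup \bigcup_{i=1}^{r} A^1_i \subseteq \text{Part 1}.
\]
Scanning the $\nodist^1$ gadget list, every gadget incident to a vertex of $S$ ends in $\bigcup_{i=1}^r B^1_i$ or in the dummies $\bigcup_{k=1}^r D^2_k$. Since \cref{remark:gadgets-are-like-ER} guarantees that edges appear only where a gadget is defined, $N(S) \subseteq \bigcup_i B^1_i \cup \bigcup_k D^2_k$. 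Therefore
\[
|S| = (r + 1 - \xi) N_1 \quad\text{and}\quad |N(S)| \leq (r + r\zeta) N_1,
\]
so $|S| - |N(S)| \geq (1 - \xi - r\zeta) N_1 \geq N_1/2$ under the parameter choices $\xi + r\zeta \leq 1/2$ recorded in \cref{sec:tableofparameters}. Hall's deficit theorem then gives $\mu(G) \leq n_1/2 - N_1/2$.

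The only nontrivial check is the containment $N(S) \subseteq \bigcup_i B^1_i \cup \bigcup_k D^2_k$: one must verify that the unique gadget that would add a neighbor of $S$ outside this set, namely $A^1_r \leftrightarrow A^2_r$, is precisely the gadget that is present in $\yesdist^1$ but absent in $\nodist^1$. This asymmetry is what drives the gap between the two parts of the lemma, and the parameter inequality $\xi + r\zeta \leq 1/2$ is expected to be a routine bookkeeping consequence of the parameter table.
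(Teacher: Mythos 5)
Your proof is correct and follows essentially the same approach as the paper. Part (i) is identical: assemble the perfect matchings from \cref{clm:perfect-matching-ind-subgraphs} into a perfect matching of $G$. For part (ii) you use Hall's deficit theorem with $S = S^1 \cup \bigcup_i A^1_i$, whereas the paper directly exhibits $\bigcup_{i,j} B^j_i \cup \bigcup_{i,j} D^j_i$ as a vertex cover; by König's theorem these are dual views of the same argument (indeed the vertex cover implicit in your deficit set, $(\text{Part }1 \setminus S) \cup N(S)$, is exactly the paper's), so they yield the same bound. Both approaches give (ii) as a deterministic inequality, as required, and your parameter check $\xi + r\zeta \le 1/2$ is the same one the paper uses (for the stated $\zeta=1/r^2$, $r\zeta=1/r$ and $\xi$ is polynomially small in $r$, so it holds for all admissible $\delta$).
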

\begin{proof}
    Let us condition on the high probability event of \Cref{clm:perfect-matching-ind-subgraphs}. Now suppose that $G$ is drawn from $\yesdist^i$. Therefore,
    \begin{align*}
        \mu(G) &\geq \mu(A_r^1, A_r^2) + \sum_{j\in\{1,2\}} \mu(S^j, B_1^j) + \sum_{j\in\{1,2\}, i < r} \mu(A_i^j, B_{i+1}^j) + \sum_{i \leq r} \mu(D_i^1, D_i^2)\\
        & = (1-\xi)N_1 + rN_1(2+\zeta) = n_1 / 2.
    \end{align*}
    Also, the total number of vertices in the graph is $n_1$, thus the inequality is tight, which completes the proof of (i).

    Now suppose that $G$ is drawn from $\nodist^i$. Note that vertices of
    \begin{align*}
        \left(\bigcup_{i\leq r, j \in \{1,2\}} B_i^j \right) \cup \left( \bigcup_{i\leq r, j \in \{1,2\}} D_i^j\right),
    \end{align*}
    form a vertex cover of $G$. 
    Observe that the size of any vertex cover must be larger than the size of the maximum matching because for each edge in the maximum matching, 
    at least one of the endpoints must be in the vertex cover.
    Therefore, 
    \begin{align*}
         \mu(G)  \leq \sum_{i \leq r, j\in \{1,2\}}^r |B^j_i| + \sum_{i\leq r, j\in \{1,2\}}^r |D^j_i| = 2rN_1 + 2r\zeta N_1 = \frac{n_1}{2} - (1-\xi - r\zeta )N_1  < \frac{n_1 - N_1}{2},
    \end{align*}
    which completes the proof of (ii).
\end{proof}

\subsection{Gadgets for Level $\ell$}\label{sec:level-ell}

Now, we describe the higher levels of the recursive structure.
For a level $1 < \ell \leq L$, constructing a graph of $\mc{D}^\ell$ involves drawing multiple instances from $\mc{D}^{\ell - 1}$.
Let $N_\ell = n_{\ell-1}/(2\zeta)$ be a parameter that controls the number of vertices in level $\ell$ of the construction, and $d_\ell$ be a parameter that controls the degree of vertices in level $\ell$ of the construction. Recall $n_{\ell - 1}$ is the total number of vertices for a graph of level $\ell - 1$.

Each graph in $\mc{D}^\ell$ consists of $8r + 2$ vertex sets which are divided into $r$ layers.
For a vertex $u$ in a set $X$, we may refer to $X$ as the label of $u$ on this level.
The sizes of these sets are fixed, but the vertices are otherwise divided between them at random.
We have the following disjoint subset of vertices:
\begin{itemize}
    \item $A^j_i$ for $i \in [r]$ and $j \in \{1, 2\}$, each containing $N_\ell$ vertices. 
    \item $B^j_i$ for $i \in [r]$ and $j \in \{1, 2\}$ each containing $N_\ell$ vertices. 
    \item $D^j_i$ \emph{(the dummy vertices)} for $i \in [r]$ and $1\leq j 
 \leq 4$, each containing $\zeta N_\ell$ vertices. 
    \item $S^j$ \emph{(the special vertices)} for $j \in \{1, 2\}$ each containing $N_\ell$ vertices.
\end{itemize}

We define $A_i$ equal to $A_i^1 \cup A_i^2$, and similarly, $B_i = B_i^1 \cup B_i^2$, $S = S^1 \cup S^2$, and $D_i = D_i^1 \cup D_i^2 \cup D_i^3 \cup D_i^4$.

The structure of the edges is somewhat similar to the base case.
Aside from the gadgets to the dummies, the main difference is that the sparse gadgets are replaced with instances of $\yesdist^{\ell-1}$ and $\nodist^{\ell - 1}$.
For two vertex sets $X$ and $Y$ of size $N_{\ell} = n_{\ell - 1}/\zeta$,
we use \enquote{drawing $1/\zeta$ disjoint graphs from $\yesdist^{\ell - 1}$ between $X$ and $Y$} to refer to the following procedure:
(1) divide the vertices of $X$ into $1/\zeta$ sets $X_1, \ldots, X_{1/\zeta}$ of equal size at random, similarly for $Y$, 
(2) for each $i$, draw a graph from $\yesdist^{\ell - 1}$ and randomly map the vertices of one part to $X_i$ and the vertices of the other part to $Y_i$, and
(3) construct the corresponding gadgets between $X_i$ and $Y_i$ according to the graph drawn from $\yesdist^{\ell - 1}$

First, we describe the parts that are the same in $\yesdist^\ell$ and $\nodist^\ell$.
We recursively define the following parts:
\begin{itemize}
    \item draw $1/\zeta$ disjoint graphs from $\yesdist^{\ell-1}$ between $A_i^j$ and $B_{i+1}^j$ for each $1\leq i < r$ and $j \in \{1,2\}$.
    \item draw $1/\zeta$ disjoint graphs from $\yesdist^{\ell-1}$ between $B_1^j$ and $S^j$ for each $j \in \{1,2\}$.
    \item draw one graph from $\yesdist^{\ell-1}$ between $D_i^j$ and $D_i^{5-j}$ for each $1\leq i \leq r$ and $j \in \{1, 2\}$.
\end{itemize}

There are dense gadgets between $A_i^j$ and $B_i^j$ for every layer $i \in [r]$.
\begin{align*}
    &P(B^j_i, A^j_i) = \frac{d_\ell}{N_\ell}     &\qquad\forall\quad j \in \{1, 2\}, \quad 1 \leq i \leq r
\end{align*}

We also define the following gadgets to the dummy vertices.
Each vertex set $A_i^j$ or $B_i^j$ has gadgets to the dummies of levels up to $i$,
and there are some gadgets between the dummy vertices.
$$
\begin{array}{D{l} S{r}}
    P(B_i^j, D^{j'}_k) = \frac{\gamma d_\ell}{2\zeta N_\ell}     &\forall\quad j \in \{1, 2\}, \quad j' \in \{j, j+2\}, \quad 1 < i \leq r, \quad k < i, \\
    P(B_i^j, D^{j'}_i) = \frac{(r-i+1)\gamma d_\ell}{2\zeta N_\ell}     &\forall\quad j \in \{1, 2\}, \quad j' \in \{j, j+2\}, \quad 1 \leq i \leq r,\\
    P(A_i^j, D_k^{j'}) = \frac{\gamma d_\ell}{2\zeta N_\ell}    &\forall\quad j \in \{1, 2\},  \quad j' \in \{3-j, 5-j\}, \quad 1 < i \leq r, \quad k < i,\\
    P(A_i^j, D_i^{j'}) = \frac{(r-i+1)\gamma d_\ell}{2\zeta N_\ell}    &\forall\quad j \in \{1, 2\}, \quad j' \in \{3-j, 5-j\}, \quad 1 \leq i \leq r,\\
    P(D^j_i, D^{j'}_k) = \frac{\gamma d_\ell}{2\zeta N_\ell}     & \forall\quad 1\leq i \leq r, \quad 1\leq k \leq r, \quad i \neq k, \quad j \in \{1,3 \}, \quad j' \in \{2,4 \}\\
    P(D^j_i, D^{j+1}_i) = \frac{d_\ell + \gamma d_\ell - (2r - 2i + 1)\gamma d_\ell / \zeta}{\zeta N_\ell}     &\forall\quad 1\leq i \leq r, \quad j \in \{1,3 \}
\end{array}
$$

Finally, we present the gadgets that are different in $\yesdist^\ell$ and $\nodist^\ell$. We have the following gadget \textbf{only} in $\yesdist^\ell$:
\begin{itemize}
    \item $1/\zeta$ disjoint graphs drawn from $\yesdist^{\ell-1}$ between $A_r^1$ and $A_r^2$.
\end{itemize}
On the other hand, we have the following gadget \textbf{only} in $\nodist^\ell$:
\begin{itemize}
    \item $1/\zeta$ disjoint graphs drawn from $\nodist^{\ell-1}$ between $A_r^1$ and $A_r^2$.
\end{itemize}

\begin{restatable}{claim}{expectedDegreeHigherLevels}
    Let $\Phi_\yes^\ell$ and $\Phi_\no^\ell$ be the gadgets introduced on level $\ell$,
    and $\Phi^\ell$ be either one of them.
    \begin{itemize}
        \item[(i)] $d_{\Phi^\ell}(v) =  0$, for $v \in S$,
        \item[(ii)] $d_{\Phi^\ell}(v) = d_\ell + r\gamma d_\ell$, for $v \notin S$.
    \end{itemize}
\end{restatable}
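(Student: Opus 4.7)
The plan is to prove both parts by direct computation, summing the density contributions of all explicit level-$\ell$ gadgets incident to a vertex $v$. It is crucial that the recursive instances of $\yesdist^{\ell-1}$ and $\nodist^{\ell-1}$ drawn between various vertex sets are \emph{not} part of $\Phi^\ell$; they contribute to lower-level gadget collections instead. With this in mind, part (i) is immediate: inspecting the definitions, the set $S$ appears only as an endpoint of the recursive drawings between $B_1^j$ and $S^j$, so no explicit level-$\ell$ gadget is incident to a vertex of $S$, giving $d_{\Phi^\ell}(v) = 0$.

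For part (ii), I would do a case analysis on the three non-special vertex types. Consider $v \in A_i^j$ (the case $v \in B_i^j$ is symmetric). The only non-dummy gadget incident to $v$ is $P(A_i^j, B_i^j) = d_\ell/N_\ell$, which contributes $d_\ell$ since $|B_i^j| = N_\ell$. The dummy contributions come from pairs $(k, j')$ with $k \leq i$ and $j' \in \{3-j, 5-j\}$: each of the $2(i-1)$ pairs with $k < i$ contributes $\zeta N_\ell \cdot \gamma d_\ell / (2\zeta N_\ell) = \gamma d_\ell / 2$, for a subtotal of $(i-1)\gamma d_\ell$, and each of the $2$ pairs with $k = i$ contributes $(r-i+1)\gamma d_\ell/2$, for a subtotal of $(r-i+1)\gamma d_\ell$. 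These sum to $r\gamma d_\ell$, yielding $d_\ell + r\gamma d_\ell$ in total.

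The main obstacle is the dummy case. For $v \in D_m^{j^*}$ with $j^* \in \{1,3\}$ (the $\{2,4\}$ case is symmetric), four groups of gadgets contribute: the $B$-to-dummy gadgets (summing to $(2r-2m+1)\gamma d_\ell / (2\zeta)$ after noting that exactly one $j$-value is admissible per layer $k \geq m$, because the constraint $j^* \in \{j, j+2\}$ pins $j$ uniquely), the $A$-to-dummy gadgets (contributing the same amount by the analogous indexing via $j' \in \{3-j, 5-j\}$), the cross-layer dummy-dummy gadgets (contributing $2(r-1)\cdot \gamma d_\ell/2 = (r-1)\gamma d_\ell$ over the $r-1$ other layers and the two admissible partners in $\{2,4\}$), and finally the same-layer dummy-dummy gadget $P(D_m^{j^*}, D_m^{j^*+1})$. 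The density of this last gadget was \emph{designed} with the negative correction $-(2r-2m+1)\gamma d_\ell/\zeta$, which exactly cancels the combined $A$ and $B$ contributions of $2 \cdot (2r-2m+1)\gamma d_\ell/(2\zeta) = (2r-2m+1)\gamma d_\ell/\zeta$. What remains is $d_\ell + \gamma d_\ell + (r-1)\gamma d_\ell = d_\ell + r\gamma d_\ell$. The technical subtlety is purely one of indexing: verifying that the $j'$-conditions in each gadget family pick out exactly the right number of contributing pairs so that the hand-tuned same-layer density produces the required cancellation; once this bookkeeping is in place, the claim follows by direct arithmetic.
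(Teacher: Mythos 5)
Your proposal is correct and follows essentially the same route as the paper: a direct case-by-case summation of the densities of the level-$\ell$ gadgets incident to $v$, with part (i) following from the observation that the only things touching $S$ are the recursive sub-instances, which live on lower levels. The paper works out the dummy case for $j^* = 1$ only and declares the others symmetric, while you articulate the indexing constraints (one admissible $j$ per layer for each of the $B$- and $A$-to-dummy families) a bit more abstractly, but the bookkeeping and the key cancellation against $-(2r-2m+1)\gamma d_\ell/\zeta$ in the same-layer dummy gadget are identical; both approaches check out.
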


\begin{observation}\label{obs:bipartiteness-ell}
   Any multigraph drawn from $\mc{D}^\ell$ is bipartite.  
\end{observation}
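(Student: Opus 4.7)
The plan is to proceed by induction on the level $\ell$. The base case $\ell = 1$ is exactly \Cref{obs:bipartiteness}, so I only need to handle the inductive step $\ell > 1$: exhibit an explicit bipartition of the vertex sets of a graph drawn from $\mc{D}^\ell$, verify that every \emph{direct} gadget crosses this bipartition, and then invoke the inductive hypothesis to handle each \emph{recursively drawn} sub-instance.

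Concretely, I would use the bipartition $V = V_1 \sqcup V_2$ defined by
\[
V_1 = S^1 \cup \bigcup_{i=1}^{r}\left(A_i^1 \cup B_i^2 \cup D_i^1 \cup D_i^3\right),
\qquad
V_2 = S^2 \cup \bigcup_{i=1}^{r}\left(A_i^2 \cup B_i^1 \cup D_i^2 \cup D_i^4\right).
\]
A routine case-check on the direct gadgets listed in \Cref{sec:level-ell} confirms each one crosses the bipartition: the dense $B_i^j$--$A_i^j$ gadget links two sets on opposite sides; the $B_i^j$--$D_k^{j'}$ gadgets are only defined for $j' \in \{j, j+2\}$, which is precisely the set of dummy superscripts on the opposite side from $B_i^j$; the $A_i^j$--$D_k^{j'}$ gadgets are restricted to $j' \in \{3-j, 5-j\}$, again the dummy superscripts opposite $A_i^j$; the cross-layer dummy gadgets are restricted to $j \in \{1, 3\}$ and $j' \in \{2, 4\}$, hence cross the partition; and the same-layer $D_i^j$--$D_i^{j+1}$ gadgets for $j \in \{1, 3\}$ cross trivially.

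For the recursively drawn gadgets---the $1/\zeta$ disjoint copies of $\yesdist^{\ell-1}$ placed between $A_i^j$ and $B_{i+1}^j$ and between $B_1^j$ and $S^j$, the single copy between $D_i^j$ and $D_i^{5-j}$ for $j \in \{1,2\}$, and the $1/\zeta$ copies of $\yesdist^{\ell-1}$ or $\nodist^{\ell-1}$ between $A_r^1$ and $A_r^2$---the inductive hypothesis tells me that each sub-instance is itself bipartite. The construction in \Cref{sec:level-ell} explicitly maps the two parts of each such sub-instance onto the two host sets $X_i$ and $Y_i$, so all of its edges go from $X_i$ to $Y_i$. In every one of the pairs listed above the two host vertex sets lie on opposite sides of $(V_1, V_2)$ (for instance, $D_i^1 \subseteq V_1$ and $D_i^{5-1} = D_i^4 \subseteq V_2$, and $A_r^1 \subseteq V_1$, $A_r^2 \subseteq V_2$), so no sub-instance edge is internal to $V_1$ or $V_2$. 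I do not expect any substantive obstacle: the argument is purely bookkeeping on superscripts, and the conventions of \Cref{sec:level-ell}---in particular, the introduction of two additional dummy labels $D^3, D^4$ at levels $\ell > 1$---are chosen precisely so that this bipartite 2-coloring exists and is preserved by the recursion.
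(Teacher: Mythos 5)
Your proof is correct and uses exactly the same bipartition as the paper's; you simply make explicit the induction on $\ell$ and the superscript case-check that the paper's one-line proof leaves to the reader.
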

\begin{proof}
    Observe that all the recursively produced parts are bipartite.
    Therefore, the following is a bipartition of the graph:
    \begin{itemize}
        \item[(1)] $\left(\bigcup_{i=1}^r A^1_i \right) \cup \left(  \bigcup_{i=1}^r B^2_i \right) \cup \left(  \bigcup_{i=1}^r D^1_i \right) \cup \left(  \bigcup_{i=1}^r D^3_i \right) \cup S^1$, and
        \item[(2)] $\left(\bigcup_{i=1}^r A^2_i \right) \cup \left(  \bigcup_{i=1}^r B^1_i \right) \cup \left(  \bigcup_{i=1}^r D^2_i \right) \cup \left(  \bigcup_{i=1}^r D^4_i \right) \cup S^2$. \qedhere
    \end{itemize}
\end{proof}

\begin{lemma}\label{lem:matching-level-ell}
    In the multigraph construction, the size of the maximum matching in $\yesdist^\ell$ and $\nodist^\ell$ is characterized as follows:
    \begin{itemize}
        \item[(i)] For $G \sim \yesdist^\ell$, it holds $\mu(G) = n_\ell / 2$ with high probability.
        \item[(ii)] If $G \sim \nodist^\ell$, then $\mu(G) \leq n_\ell / 2 - N_1/2$.
    \end{itemize}
\end{lemma}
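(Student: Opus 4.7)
The plan is to prove Lemma \ref{lem:matching-level-ell} by induction on $\ell$, with the base case given by Lemma \ref{lem:base-matching-size} and the inductive hypothesis applied to the $\yesdist^{\ell-1}$ and $\nodist^{\ell-1}$ sub-gadgets that appear in the level-$\ell$ construction.

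For part (i), I would construct an explicit perfect matching of $G \sim \yesdist^\ell$ as the union of the inductively guaranteed perfect matchings inside each $\yesdist^{\ell-1}$ sub-gadget: the $1/\zeta$ copies between $A_r^1$ and $A_r^2$, the copies realizing each $A_i^j$-$B_{i+1}^j$ gadget for $i < r$, those realizing each $B_1^j$-$S^j$ gadget, and the recursive $D_i^j$-$D_i^{5-j}$ dummy gadgets. By inductive hypothesis (i), each sub-gadget admits a perfect matching with high probability; a union bound over the polynomially many sub-gadgets preserves the guarantee, and since these sub-gadgets partition $V(G)$, their union is a perfect matching of size $n_\ell/2$ (the reverse inequality $\mu(G) \leq n_\ell/2$ is immediate from bipartiteness).

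For part (ii), I would invoke K{\"o}nig's theorem on the bipartite graph $G$ (Observation \ref{obs:bipartiteness-ell}) and construct the vertex cover $C = B \cup D \cup C_{\textnormal{gadget}}$, directly analogous to the base case. Here $B = \bigcup_{i,j} B_i^j$, $D = \bigcup_{i,j} D_i^j$, and $C_{\textnormal{gadget}}$ is the union of minimum vertex covers of the $1/\zeta$ copies of $\nodist^{\ell-1}$ inside the $A_r^1$-$A_r^2$ gadget --- each of size at most $n_{\ell-1}/2 - N_1/2$ by inductive hypothesis (ii) combined with K{\"o}nig's theorem applied to each copy (which is bipartite by Observation \ref{obs:bipartiteness-ell} at level $\ell-1$). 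Edge-by-edge verification shows $C$ is a valid cover of $G$: the set $B$ handles all dense $A$-$B$ edges, the recursive $A_i^j$-$B_{i+1}^j$ and $B_1^j$-$S^j$ edges, and the $B$-$D$ dummy edges; the set $D$ handles the $A$-$D$ and $D$-$D$ dummies; and $C_{\textnormal{gadget}}$ handles the only remaining edge type, the $A_r^1$-$A_r^2$ gadget edges. Summing contributions, $|C| \leq 2r N_\ell + 4r\zeta N_\ell + (N_\ell - N_1/(2\zeta)) = n_\ell/2 + 2r\zeta N_\ell - N_1/(2\zeta)$.

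The main obstacle is controlling this final cover size at every level. We need $2r\zeta N_\ell - N_1/(2\zeta) + N_1/2 \leq 0$, i.e., $2r\zeta N_\ell \leq N_1(1-\zeta)/(2\zeta)$; because $N_\ell$ grows roughly like $(2r/\zeta)^{\ell-1} N_1$, this forces $\zeta$ to be chosen sufficiently small as a function of $r$ and the construction depth $L$, mirroring in spirit the base-case constraint $\xi + r\zeta < 1/2$. To close the induction cleanly at every level I would strengthen the statement to track a level-dependent deficit $c_\ell$ satisfying the recurrence $c_\ell \geq c_{\ell-1}/\zeta - 2r\zeta N_\ell$ with initial value $c_1 = (1-\xi-r\zeta) N_1$, and then verify under the standing parameter regime (listed in \Cref{sec:tableofparameters}) that $c_\ell \geq N_1/2$ is maintained for all $\ell \leq L$, thereby delivering the required bound $\mu(G) \leq n_\ell/2 - N_1/2$ at every level of the recursion.
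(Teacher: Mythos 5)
Your overall approach matches the paper's: for (i), take the union of perfect matchings from each recursively drawn $\yesdist^{\ell-1}$ sub-gadget and apply a union bound; for (ii), exhibit a vertex cover $B \cup D \cup C_{\textnormal{gadget}}$ and invoke K\"onig on the bipartition from \Cref{obs:bipartiteness-ell}. Your accounting $|C| = n_\ell/2 + 2r\zeta N_\ell - N_1/(2\zeta)$ is also the right starting point. The gap is in your final paragraph: the recurrence $c_\ell \ge c_{\ell-1}/\zeta - 2r\zeta N_\ell$ does \emph{not} maintain $c_\ell \ge N_1/2$ under the paper's parameter table, and you cannot simply assert that it does. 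With $\zeta = 1/r^2$, $c_1 \approx N_1$, and $N_2 = n_1/(2\zeta) \approx 2r^3 N_1$, one gets $c_2 \approx c_1/\zeta - 2r\zeta N_2 \approx r^2 N_1 - 4r^2 N_1 < 0$; in general the loss term $2r\zeta N_\ell$ scales by $\approx 2r/\zeta = 2r^3$ per level while $c_{\ell-1}/\zeta$ only gains a factor $1/\zeta = r^2$, so the deficit collapses rather than persists, no matter how the base case is tuned.

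The discrepancy traces to the normalization of $N_\ell$. Your calculation (and the parameter table) take $N_\ell = n_{\ell-1}/(2\zeta)$, giving $|C_{\textnormal{gadget}}| = N_\ell - N_1/(2\zeta)$. The paper's chain of inequalities, $\frac{1}{\zeta}(n_{\ell-1}/2 - N_1/2) + 2rN_\ell + 4r\zeta N_\ell \leq (2r + \tfrac{1}{2} + 4r\zeta)N_\ell - \tfrac{N_1}{2}$, after cancellation reduces to $\tfrac{n_{\ell-1}}{2\zeta} - \tfrac{1}{2}N_\ell \le \tfrac{(1-\zeta)N_1}{2\zeta}$, which is automatic if $N_\ell \ge n_{\ell-1}/\zeta$ (the definition used in the level-$\ell$ gadget description) but fails for $\ell \ge 2$ under $N_\ell = n_{\ell-1}/(2\zeta)$. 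Under the larger normalization the gadget cover costs only $\tfrac{1}{2}N_\ell - N_1/(2\zeta)$, the induction closes with a trivially growing deficit, and no level-dependent tracking is needed. You should pin down a single consistent value of $N_\ell$ (adjusting the number of embedded $\nodist^{\ell-1}$ copies so that the vertex counts $\card{X_i}$ match one part of an $n_{\ell-1}$-vertex gadget), then re-derive the cover bound; the recurrence as written, combined with $\zeta = 1/r^2$, does not close.
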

\begin{proof}
    For $(i)$, we prove $G \sim \yesdist^\ell$ has a perfect matching with high probability by induction on $\ell$.
    The base case, $\ell = 1$, has been proven in \cref{lem:base-matching-size}.
    By the induction hypothesis, there exists a perfect matching in each recursively drawn graph from $\yesdist^{\ell - 1}$ with high probability.
    Therefore, by the union bound, there exists a perfect matching 
    between $B_1^j$ and $S^j$ for $j \in \{1, 2\}$;
    between $A_i^j$ and $B_{i+1}^j$ for $1 \leq i < r$ and $j \in \{1, 2\}$;
    between $A_r^1$ and $A_r^2$; and
    between $D_i^j$ and $D_i^{5 - j}$ for $j \in \{1, 2\}$.
    Putting these matchings together results in a perfect matching for the whole graph.
    Observe that using the union bound in the induction is valid because over the course of the induction up to $\ell = L$, we recursively take the union bound over $\left(\frac{8r + 2}{\zeta}\right)^L = O_\epsilon(1)$ events.

    For $(ii)$, we prove by induction on $\ell$ that $G \sim \nodist^\ell$ has a vertex cover of size $n_\ell/2 - N_1/2$.
    The base case, $\ell = 1$, has been proven in \cref{lem:base-matching-size}.
    Assuming the induction hypothesis for $\ell - 1$.
    There exists a vertex cover $X$ of size $\frac{1}{\zeta}(n_{\ell - 1} - N_1/2)$ for the induced graph between $A_r^1$ and $A_r^2$.
    The following set is a vertex cover:
    $$
    C := X \cup \left(\bigcup_{i\leq r, j \in \{1,2\}} B_i^j \right) \cup \left( \bigcup_{i\leq r, j \in \{1,2, 3, 4\}} D_i^j\right),
    $$
    as any edge outside of $G[A_r^1; A_r^2]$ is adjacent to $B$ or $D$.
    Now, we calculate the size of this vertex cover: 
    \begin{align*}
    \card{C} 
    &= \card{X} \cup \sum_{i\leq r, j \in \{1,2\}} \card{B_i^j} + \sum_{i\leq r, j \in \{1, 2, 3, 4\}} \card{D_i^j} \\
    &= \frac{1}{\zeta}(n_{\ell - 1}/2 - N_1/2) + 2rN_{\ell} + 4r\zeta N_\ell \\
    &\leq (2r + 1/2 + 4r\zeta )N_\ell - N_1/2 \\
    &\leq n_\ell/2 - N_1/2 \\
    \end{align*}
    where the last inequality follows from $4r\zeta \leq \frac{1}{2}$ and $n_\ell/2 = (2r + 1 + 2r\zeta)N_\ell$.
    This concludes the proof of $(ii)$ and the claim.
\end{proof}

\begin{lemma}\label{lem:matching-size-final-graph}
    Let $\epsilon = (\delta/10)^{300/\delta^2}$. Any algorithm that estimates the size of the maximum matching of the multigraph that is drawn from the input distribution with $\epsilon n$ additive error must be able to distinguish whether it belongs to \yesdist{} or \nodist{}.
\end{lemma}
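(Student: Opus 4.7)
The plan is to reduce distinguishing $\yesdist$ from $\nodist$ to the existence of an $\epsilon n$-additive matching estimator, by exploiting the matching-size gap established in \cref{lem:matching-level-ell}. Concretely, I would apply \cref{lem:matching-level-ell} at the top level $L$ of the recursion, where $n = n_L$ is the total number of vertices: for $G \sim \yesdist$, $\mu(G) = n/2$ with high probability, while for every $G \sim \nodist$, $\mu(G) \leq n/2 - N_1/2$. Let $\tilde{\mu}(G)$ denote the algorithm's output; by assumption, with constant probability $|\tilde{\mu}(G) - \mu(G)| \leq \epsilon n$. Thus, on $\yesdist$ the estimate lies in $[n/2 - \epsilon n,\, n/2 + \epsilon n]$ w.h.p., while on $\nodist$ it lies in $(-\infty,\, n/2 - N_1/2 + \epsilon n]$ w.h.p. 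If $N_1/2 > 2\epsilon n$, these intervals are disjoint and the algorithm distinguishes the two distributions simply by thresholding $\tilde{\mu}(G)$ at, say, $n/2 - N_1/4$.

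The only nontrivial step is therefore verifying the inequality $N_1 \geq 4\epsilon n$ given the parameter choices implicit in the construction and the specific value $\epsilon = (\delta/10)^{300/\delta^2}$. For this, I would unroll the recursion $N_\ell = n_{\ell - 1}/(2\zeta)$ together with the formula $n_\ell = (4r + 2 + 4r\zeta)\,N_\ell$ from the proof of \cref{lem:matching-level-ell}, which yields $n_\ell = \Theta(r/\zeta)\,n_{\ell - 1}$ per level. Iterating from $\ell = 1$ to $\ell = L$ and using $n_1 = \Theta(r)\,N_1$ (from the base-case count in \cref{sec:level-1}), one obtains $N_1/n_L = \Theta\!\bigl(\zeta^{L-1}/r^L\bigr)$. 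The plan is then to substitute the concrete choices of $r$, $\zeta$, and $L$ (each a small polynomial in $1/\delta$) dictated by the rest of the construction, and check that the resulting lower bound on $N_1/n_L$ comfortably exceeds $4\epsilon$, since $\epsilon$ has been set to be doubly-exponentially small in $1/\delta$.

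The main obstacle is bookkeeping: making sure that the parameter choices used here are consistent with those needed elsewhere (notably, the w.h.p.\ bounds in \cref{clm:perfect-matching-ind-subgraphs} require $d_\ell \gg \log^2 n$, the dummy-scaling requires $4r\zeta \leq 1/2$, and the bound $L = O(1/\delta)$ is needed for the subsequent query lower bound to go through), while simultaneously keeping $N_1/n_L$ large relative to $\epsilon$. Once the parameter hierarchy is pinned down and the algebra above is performed, the lemma follows immediately from the disjoint-intervals argument in the first paragraph.
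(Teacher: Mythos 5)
Your approach is exactly the paper's: invoke \cref{lem:matching-level-ell} to get a matching-size gap of $N_1/2$ between \yesdist{} and \nodist{}, unroll the recursion $n_\ell = \Theta(r/\zeta)\,n_{\ell-1}$ to express $N_1$ in terms of $n = n_L$, and then substitute the concrete parameter choices ($L = 4/\delta$, $r = (10/\delta)^{L+1}$, $\zeta = 1/r^2$) to verify that $N_1$ exceeds $4\epsilon n$. The only remaining work is the algebraic verification you sketch in your second paragraph, which the paper carries out and which goes through with ample slack (the paper's bound $n_L \leq (10/\delta)^{300/\delta^2} N_1$ leaves a huge margin over the needed factor of $4$); so your plan is correct and matches the paper's proof.
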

\begin{proof}
By \cref{lem:matching-level-ell},
the maximum matching size in \yesdist{} and \nodist{} differ by $N_1/2$.
Therefore, any algorithm that estimates the size of the maximum matching within a $(N_1/4)$-additive error, must distinguish between the two distributions.
Now, we express $N_1$ in terms of $n = n_L$.

For the first level, we have 
$$
n_1 = \card{A} + \card{B} + \card{D} + \card{S} = (2r - 2\xi)N_1 + 2rN_1 + 2r\zeta N_1 + 2N_1 = (4r-2r\xi+2r\zeta + 2)N_1.
$$

For the next levels $\ell > 1$, it holds 
$$
N_\ell = n_{\ell - 1}/\zeta 
\qquad \textnormal{and} \qquad
n_{\ell} 
= 2rN_\ell + 2rN_\ell + 4r\zeta N_\ell + 2N_\ell
= (4r + 4r\zeta + 2)N_\ell
$$
Therefore,
$$
n_{\ell} = \frac{(4r + 4r\zeta + 2)}{\zeta}n_{\ell-1}.
$$
Then, it can be shown by induction
$$
n_L = \left(\frac{(4r + 4r\zeta + 2)}{\zeta}\right)^{L-1}n_1
= \left(\frac{(4r + 4r\zeta + 2)}{\zeta}\right)^{L-1} (4r-2r\xi+2r\zeta + 2)N_1.
$$
Plugging in the parameters 
$$
L = 4/\delta, 
\qquad r = (10/\delta)^{L+1}, 
\qquad \zeta = 1/r^2, \qquad \textnormal{and} \qquad 
\xi = 1/r^4,
$$
we get
$$
n_L \leq (10r^3)^L N_1 \leq (10 / \delta)^{3L^2}N_1 \leq (10/\delta)^{300/\delta^2}N_1.
$$

Putting everything together,
if we let $\epsilon = (\delta/10)^{300/\delta^2}$,
then any algorithm that computes an $\epsilon n$-additive approximation,
computes an $(N_1/4)$-additive approximation,
and thus distinguishes between \yesdist{} and \nodist{}.
\end{proof}

\subsection{Constructing the Multigraph}\label{sec:multigraph-cons}

In this subsection, we demonstrate how to construct the multigraph using the gadgets introduced earlier. We first prove the following claim, which is essential for the construction of the multigraph.

\begin{claim}\label{clm:unique-level-edge}
    For every pair of vertices $u$ and $v$, considering all levels, there exists at most one gadget between vertex sets $X$ and $Y$ such that $u \in X$ and $v \in Y$.
\end{claim}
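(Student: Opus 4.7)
The plan is to prove the claim by induction on the construction depth $L$.

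For the base case $L=1$, I would inspect the list of gadgets in \cref{sec:level-1} and verify by direct enumeration that the unordered pairs $(X,Y)$ of level-$1$ vertex sets across which a gadget is declared are pairwise distinct: the intra-layer pair $(A_i^j, B_i^j)$, the between-layer pair $(B_i^j, A_{i-1}^j)$, the special-set pair $(S^j, B_1^j)$, the various $(A,D)$, $(B,D)$, and $(D,D)$ pairs, together with the top-layer pairs $(A_r^1, A_r^2)$ and $(B_r^1, B_r^2)$, all correspond to distinct tuples of the labels $(\mathrm{type}, j, i)$. Since every vertex belongs to exactly one level-$1$ vertex set, any pair of vertices $u,v$ determines a unique pair of labels $(X,Y)$, which is hit by at most one gadget.

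For the inductive step, suppose the claim holds for the level-$(\ell-1)$ construction. At level $\ell$ (see \cref{sec:level-ell}) there are two kinds of edge structures: direct level-$\ell$ gadgets (the intra-layer $(A_i^j, B_i^j)$ and all the $(A,D)$, $(B,D)$, $(D,D)$ gadgets), and recursive instances of $\mc{D}^{\ell-1}$ drawn between $(A_i^j, B_{i+1}^j)$, $(B_1^j, S^j)$, $(D_i^j, D_i^{5-j})$ for $j \in \{1,2\}$, and $(A_r^1, A_r^2)$. A routine enumeration shows that the label pairs hosting a direct gadget are pairwise distinct, the label pairs hosting a recursive instance are pairwise distinct, and the two sets are disjoint from each other; in particular, the same-layer dummy pairs split as $\{(D_i^1, D_i^2), (D_i^3, D_i^4)\}$ for the direct gadgets versus $\{(D_i^1, D_i^4), (D_i^2, D_i^3)\}$ for the recursive instances.

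Given this, for any pair $u,v$ with level-$\ell$ labels $X,Y$: if $(X,Y)$ hosts a direct level-$\ell$ gadget, this is the unique gadget containing $u,v$, since they do not co-occur in any recursive instance at level $\ell$ and hence never land in a common level-$(\ell-1)$ subgraph, ruling out any lower-level gadget as well; if $(X,Y)$ hosts a recursive instance, then $u$ and $v$ either end up in a common level-$(\ell-1)$ subgraph of that instance (in which case the induction hypothesis gives at most one lower-level gadget) or they do not (in which case there are none), and no direct level-$\ell$ gadget exists between $X$ and $Y$; otherwise no gadget exists between them at any level.

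I expect the main obstacle to be precisely the bookkeeping in the enumeration step above: the dummy gadgets at level $\ell$ are parameterized by several indices $(j, j', i, k)$ ranging over different sets, so one has to check carefully that no two distinct index tuples produce the same unordered label pair and that none of them collides with the label pair of a recursive instance. Once this check is done, the rest is a straightforward case analysis driven by the induction hypothesis.
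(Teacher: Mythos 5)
Your proof is correct and essentially the same as the paper's: both argue by induction on the level, using that the level-$\ell$ gadgets are defined between disjoint vertex sets (so a pair determines at most one direct gadget), and that the recursive instances are placed on label pairs disjoint from each other and from those hosting direct level-$\ell$ gadgets, after which the inductive hypothesis covers gadgets arising inside a recursive instance. Your note that the same-layer dummy pairs split as $\{(D_i^1, D_i^2), (D_i^3, D_i^4)\}$ for direct gadgets versus $\{(D_i^1, D_i^4), (D_i^2, D_i^3)\}$ for recursive ones is exactly the observation the paper makes implicitly when it lists the recursive pairs as $(D_i^j, D_i^{5-j})$; you just spell it out more explicitly.
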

\begin{proof}
    The claim is proved by induction on $\ell$.
    For the base case $\ell = 1$, the claim holds as the vertices are divided into disjoint sets, each gadget is defined between two of these sets, and there is at most one gadget between any two sets.
    For $\ell > 1$, the claim holds because 
    (1) the gadgets on level $\ell$ are defined between disjoint vertex sets, and
    (2) the recursively defined gadgets of level smaller than $\ell$ are either between $A_i^j$ and $B_{i+1}^j$, $B_1^j$ and $S^j$, $D_i^j$ and $D_i^{5- j}$, or $A_r^1$ and $A_r^2$.
    Therefore, they do not share pairs with each other or with gadgets of level $\ell$.
\end{proof}

Given this fact, for any two vertices $u$ and $v$ we can define a parameter $p^{(u, v)}$, which is equal to the density parameter associated with the gadget between $u$ and $v$, or zero if no such gadget exists.

\subsubsection*{Warm-up: ``level-free'' construction}
We begin with a simplified ``level-free'' construction and then generalize to include levels. 
En route to our multigraph construction, we construct three helper multigraphs, the {\em ground graph}, the {\em labelled-ground graph}, and the {\em pseudo graph}. It will be helpful for the analysis that the ground and pseudo graphs  are independent of the vertex labels, so their structure does not reveal any information about the labels. 
This scheme also helps obtain the same distribution for the degrees of the vertices, so that the algorithm cannot easily distinguish between two vertices, e.g.\ based on the distribution of the degrees of their neighbors.

All multigraphs are defined over the same set of vertices, and their respective edge sets satisfy the following inclusions:
$$
\text{Ground edges} \supset \text{Labelled-ground edges}, \text{pseudo edges} \supset \text{Real edges}.
$$
In particular, we will eventually let the real multigraph simply be the intersection of the pseudo and the labelled-ground edge sets:
$$
\text{Real edges} = \text{Labelled-ground edges} \cap \text{Pseudo edges}.
$$

\paragraph{Ground edges:}
The ground graph is completely deterministic. 
For every pair of vertices $u$ and $v$, we add exactly $\rho n$ parallel ground edges between $u$ and $v$ (independently of their labels). 

\paragraph{Labelled-ground edges:}
The labelled-ground edges are deterministic conditioned on the (random) labels of the vertices.
For any pair of vertices $u$ and $v$,  let $p^{(u,v)}$ be density parameter of the gadget between $u$ and $v$ (and zero if no such gadget exists). We let  $p^{(u,v)} n$ of the  ground edges between $u,v$ be labelled-ground edges. Observe that if two vertices have the same expected degree $d$ in the gadget construction, they have the same degree $d n$ in the labelled-ground graph.

\paragraph{Pseudo edges:}
Every ground edge is a pseudo edge independently with probability $1/n$.

\subsubsection*{Full multigraph construction with levels}
Our actual construction closely follows the level-free construction, with the modification that each labelled-ground and pseudo edge is assigned a level. Then, the level-$\ell$ real edges are defined as:
$$
\text{Level-$\ell$ real edges} = \text{Level-$\ell$ labelled-ground edges} \cap \text{Level-$(\le \ell)$ pseudo edges}.
$$
Note that we require level exactly $\ell$ for labelled-ground edges, but take any level $\le \ell$ for pseudo edges.

The ground edges are not associated with any levels and are defined exactly as before, i.e.\ there is $\rho n$ of them between any pair.
Then, for any two vertices $u$ and $v$, we add $p^{(u,v)} n \cdot (\rho/\rho_{\ell})$ of the ground edges between them to the level-$\ell$ labelled-ground edges, where $\ell$ is the level of the gadget defined between $u$ and $v$ (if any).

\begin{figure}
    \centering
    \newcommand{\edgeprob}{0.55} 
\newcommand{\edgeprobsub}{0.2} 
\newcommand{\middlebonus}{0.1} 
\newcommand{\myseed}{139}
\newcommand{\mybend}{7}

\scalebox{0.75}{
\begin{tikzpicture}
    \foreach \i in {0,1,2,3,4,5} {
        \node[circle, fill=black, inner sep=2pt] (v\i) at ({60*\i}:2) {};
    }
    
    \foreach \i in {0,1,2,3,4,5} {
        \foreach \j in {0,1,2,3,4,5} {
            \ifnum\i<\j
                \draw (v\i) -- (v\j);
                
                \draw[bend left=\mybend] (v\i) to (v\j);
                
                \draw[bend right=\mybend] (v\i) to (v\j);
            \fi
        }
    }
    
    \foreach \i in {0,1,2,3,4,5} {
        \node[circle, fill=black, inner sep=2pt] (w\i) at ([shift={(5,0)}] {60*\i}:2) {};
    }

    \foreach \i in {0,1,2,3,4,5} {
        \foreach \j in {0,1,2,3,4,5} {
            \ifnum\i<\j
                \pgfmathsetseed{\i*\myseed+\j}
                \pgfmathsetmacro{\rand}{rnd}
            
                \ifdim \rand pt < \edgeprob pt
                    \draw (w\i) -- (w\j);
                \fi
                
                \pgfmathsetmacro{\rand}{rnd}
                \ifdim \rand pt < \edgeprob pt
                    \draw[bend left=\mybend] (w\i) to (w\j);
                \fi
                
                \pgfmathsetmacro{\rand}{rnd}
                \ifdim \rand pt < \edgeprob pt
                    \draw[bend right=\mybend] (w\i) to (w\j);
                \fi
            \fi
        }
    }

    \foreach \i in {0,1,2,3,4,5} {
        \node[circle, fill=black, inner sep=2pt] (u\i) at ([shift={(10,0)}] {60*\i}:2) {};
    }
    
    \foreach \i in {0,1,2,3,4,5} {
        \foreach \j in {0,1,2,3,4,5} {
            \ifnum\i<\j
                \pgfmathsetseed{\i*\myseed+\j}
                \pgfmathsetmacro{\rand}{rnd}
                \pgfmathsetmacro{\probability}{\edgeprobsub + \middlebonus}
                \ifdim \rand pt < \probability pt
                    \draw (u\i) -- (u\j);
                \fi
                
                \pgfmathsetmacro{\rand}{rnd}
                \ifdim \rand pt < \edgeprobsub pt
                    \draw[bend left=\mybend] (u\i) to (u\j);
                \fi
                
                \pgfmathsetmacro{\rand}{rnd}
                \ifdim \rand pt < \edgeprobsub pt
                    \draw[bend right=\mybend] (u\i) to (u\j);
                \fi
            \fi
        }
    }
    
    \foreach \i in {0,1,2,3,4,5} {
        \node[circle, fill=black, inner sep=2pt] (x\i) at ([shift={(10,-6)}] {60*\i}:2) {}; 
    }

    \draw[thick] plot[smooth cycle, tension=1] coordinates {($(x5) + (-0.3,-0.4)$) ($(x0) + (0.3,0)$) ($(x1) + (-0.3,0.4)$)};
    \node at (12.2,-7.5) {Y};
    
    \draw[thick] plot[smooth cycle, tension=1] coordinates {($(x2) + (0.3,0.4)$) ($(x3) + (-0.3,0)$) ($(x4) + (0.3,-0.4)$)};
    \node at (7.8,-7.5) {X};

    \foreach \i in {5,0,1} {
        \foreach \j in {2,3,4} {
            \draw (x\i) -- (x\j);
        }
    }

    \foreach \i in {0,1,2,3,4,5} {
        \node[circle, fill=black, inner sep=2pt] (y\i) at ([shift={(15,-3)}] {60*\i}:2) {}; 
    }
    
    \node at (0,-2.5) {Ground edges};
    \node at (5,-2.5) {Level-$L$ Pseudo edges};
    \node at (10,-2.5) {Level-$(L-1)$ Pseudo edges};
    \node at (10,-8.5) {Level-$(L-1)$ Labeled-ground edges};
    \node at (15,-5.5) {Level-$(L-1)$ Real edges};
    
    \draw[thick] plot[smooth cycle, tension=1] coordinates {($(y5) + (-0.3,-0.4)$) ($(y0) + (0.3,0)$) ($(y1) + (-0.3,0.4)$)};
    \node at (17,-5) {Y};
    
    \draw[thick] plot[smooth cycle, tension=1] coordinates {($(y2) + (0.3,0.4)$) ($(y3) + (-0.3,0)$) ($(y4) + (0.3,-0.4)$)};
    \node at (13,-5) {X};

    \foreach \i in {5,0,1} {
        \foreach \j in {2,3,4} {
            \pgfmathsetseed{\j*\myseed+\i}
            \ifnum\i<\j
                \pgfmathsetseed{\i*\myseed+\j}
            \fi
            \pgfmathsetmacro{\rand}{rnd}
            \pgfmathsetmacro{\probability}{\edgeprobsub + \middlebonus}
            \ifdim \rand pt < \probability pt
                \draw (y\i) -- (y\j);
            \fi
        }
    }

\end{tikzpicture}
}
    \caption{An illustration of the levelled construction for a gadget on level $L-1$.
    The ground edges and pseudo edges do not depend on the vertex labels.
    Level-$L$ pseudo edges are a subset of the ground edges,
    and level-$\ell$ pseudo edges are a subset of the Level-$(\ell+1)$ pseudo edges.
    The level-$\ell$ Laballed-ground edges are determined based on the vertex labels and the level-$\ell$ gadgets between them, here the $X$-$Y$ gadget on level $L-1$.
    The level-$\ell$ real edges are the intersection of level-$\ell$ pseudo edges and level-$\ell$ labelled-ground edges.
    }
    \label{fig:enter-label}
\end{figure}
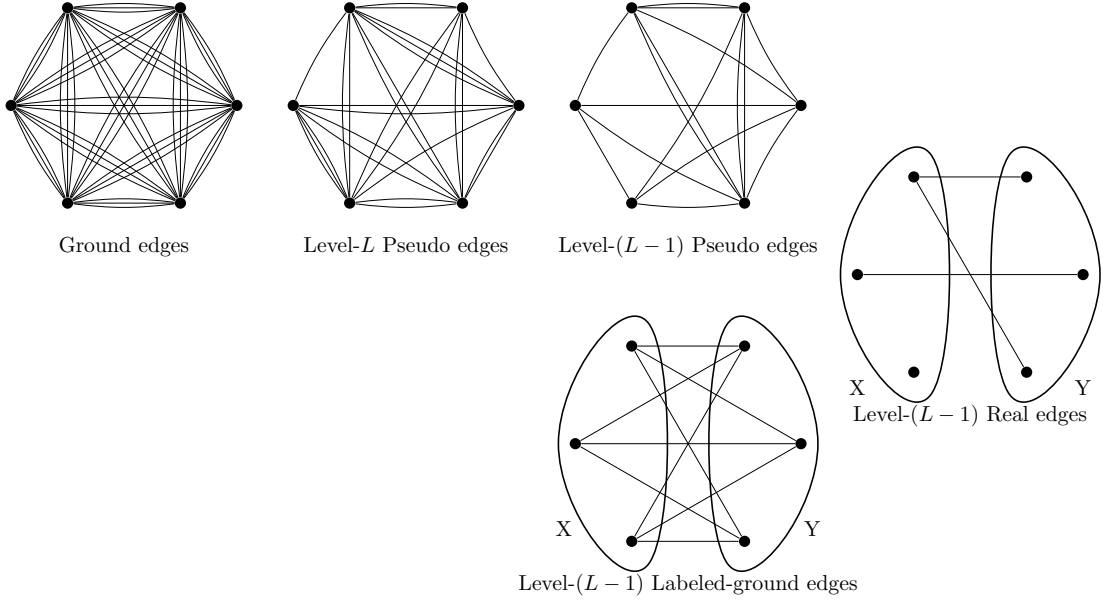

Finally, any ground edge is marked as a pseudo edge of level $L$ with probability $1/n$. 
If it is marked as a pseudo edge of level $L$, 
then it is marked as a pseudo edge of level $L-1$ with probability $\rho_{L-1}/\rho_L$, and so on. 
More specifically, if the edge is marked as a pseudo edge of level $\ell$, it will be marked as a pseudo edge of level $\ell-1$ with probability $\rho_{\ell-1}/\rho_\ell$. 
Observe that at this point the edge can become a real edge if it is also a level-$\ell$ labelled-ground edge.

We let $\deg^n_G(v)$, $\deg^p_G(v)$, and $\deg^r_G(v)$ be the number of non-pseudo edges, pseudo edge, real edges of vertex $v$ in graph $G$, respectively.

\subsection{Properties of the Constructed Multigraph}\label{sec:properties-multi}

\begin{claim} \label{clm:pseudo-prob}
    It holds
    \[
    \textnormal{Level-$\ell$ pseudo edges} \supset \textnormal{Level-$(\ell-1)$ pseudo edges},
    \]
    and for any ground edge, the probability of becoming a level-$\ell$ pseudo edge is $\rho_\ell/(n\rho)$.
\end{claim}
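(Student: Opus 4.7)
The plan is to unpack the two-part statement directly from the Markov-chain-style recipe for pseudo edges given in \cref{sec:multigraph-cons}. The containment is essentially by definition: the construction first decides whether each ground edge becomes a level-$L$ pseudo edge, and only then decides, \emph{conditional on already being} a level-$k$ pseudo edge, whether it is demoted to a level-$(k-1)$ pseudo edge. Hence every level-$(\ell-1)$ pseudo edge is in particular a level-$\ell$ pseudo edge, which gives $\textnormal{Level-}\ell \textnormal{ pseudo edges} \supset \textnormal{Level-}(\ell-1)\textnormal{ pseudo edges}$.

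For the probability, I would fix a single ground edge $E$ and compute $P_\ell := \Pr[E \text{ is a level-}\ell\text{ pseudo edge}]$ by chaining conditional probabilities along the descending chain of levels. By the construction $P_L = 1/n$, and for each $\ell < k \le L$ we have $\Pr[E \text{ is level-}(k-1) \mid E \text{ is level-}k] = \rho_{k-1}/\rho_k$. Combined with the containment just established, this yields the recurrence $P_{k-1} = P_k \cdot \rho_{k-1}/\rho_k$, which telescopes to
\[
P_\ell \;=\; \frac{1}{n} \prod_{k=\ell+1}^{L} \frac{\rho_{k-1}}{\rho_k} \;=\; \frac{1}{n} \cdot \frac{\rho_\ell}{\rho_L} \;=\; \frac{\rho_\ell}{n\rho},
\]
where in the last step I invoke the identification $\rho = \rho_L$ built into the construction (the $\rho n$ ground edges per pair, together with the level-$L$ inclusion probability $1/n$, are precisely chosen so that the top-level pseudo-edge density matches $\rho_L$).

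The whole argument is therefore just a one-line telescoping product built on top of the set-inclusion observation. I do not foresee any real obstacle: the coin flips that demote an edge across consecutive levels are independent by construction, and the only bookkeeping is to confirm the parameter identification $\rho = \rho_L$, which is forced by the expected level-$L$ pseudo density.
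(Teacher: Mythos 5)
Your proof is correct and takes essentially the same route as the paper: the paper proves the probability statement by downward induction on $\ell$ (base case $\ell = L$, inductive step multiplying by $\rho_{\ell}/\rho_{\ell+1}$), while you unroll that induction into the equivalent telescoping product $\tfrac{1}{n}\prod_{k=\ell+1}^{L}\rho_{k-1}/\rho_k = \rho_\ell/(n\rho)$, using the same identification $\rho = \rho_L$ (which indeed holds since $\sigma_L = \delta/10$ in the parameter table).
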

\begin{proof}
    The first part of the claim follows from the definition.
    The second part can be proven by induction on $\ell$.
    It holds for $\ell = L$ since $\rho_L = \rho$ and the probability of being a level-$L$ pseudo edge is $\frac{1}{n} = \frac{\rho_L}{\rho n}$.
    For $\ell < L$, by the induction hypothesis, the edge becomes a level-$(\ell + 1)$ pseudo edge with probability $\frac{\rho_{\ell + 1}}{\rho n}$.
    Therefore, it becomes a level-$\ell$ pseudo edge with probability
    $$
    \frac{\rho_{\ell + 1}}{\rho n} \cdot \frac{\rho_\ell}{\rho_{\ell + 1}} = \frac{\rho_{\ell}}{\rho n},
    $$
    which concludes the proof.
\end{proof}

\begin{claim} \label{clm:single-edge-dist}
    For any two vertices $u$ and $v$, the distribution of the number of edges between them is $\Binom\left( p^{(u, v)} n \cdot (\rho/\rho_\ell) , \rho_\ell/n\rho \right)$, where $\ell$ is the level of the gadget between $u$ and $v$. If there is no such gadget the number of edges is simply zero.
\end{claim}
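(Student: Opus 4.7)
The plan is to read off the distribution directly from the multigraph construction, combining two facts: \textbf{(i)} the number of level-$\ell$ labelled-ground edges between $u$ and $v$ is deterministic once we condition on the vertex labels, and \textbf{(ii)} whether each individual ground edge becomes a pseudo edge of a given level is decided by an independent coin flip, per the construction.

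More concretely, first I would handle the case when no gadget exists between $u$ and $v$. By \Cref{clm:unique-level-edge} there is at most one gadget across all levels for the pair, so the absence of any gadget means $p^{(u,v)} = 0$, and the construction places no labelled-ground edges between $u$ and $v$. Since every real edge must be a labelled-ground edge, the real multiplicity is zero, matching the $\Binom(0, \cdot)$ degenerate case.

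Now suppose the (unique) gadget between $u$ and $v$ lives at level $\ell$ with density parameter $p^{(u,v)}$. The construction designates exactly $p^{(u,v)} n \cdot (\rho/\rho_\ell)$ of the $\rho n$ parallel ground edges between $u$ and $v$ as level-$\ell$ labelled-ground edges; this is a deterministic count (conditioned on the labels), requiring me only to note $\rho/\rho_\ell \ge 1$ and $p^{(u,v)} \le \rho_\ell/\rho$, so the designation is well-defined. By the defining equation
\[
\text{Level-}\ell\text{ real edges} = \text{Level-}\ell\text{ labelled-ground edges} \cap \text{Level-}(\le \ell)\text{ pseudo edges},
\]
together with the nesting $\text{Level-}(\ell)\text{ pseudo} \supset \text{Level-}(\ell-1)\text{ pseudo}$ from \Cref{clm:pseudo-prob} (which implies the union $\text{Level-}(\le \ell)\text{ pseudo}$ equals $\text{Level-}\ell\text{ pseudo}$), the real-edge multiplicity between $u$ and $v$ equals the number of these $p^{(u,v)} n \cdot (\rho/\rho_\ell)$ designated edges that happen to be level-$\ell$ pseudo edges.

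To finish, I would invoke the two parts of \Cref{clm:pseudo-prob}: each ground edge is independently a level-$\ell$ pseudo edge with probability $\rho_\ell/(n\rho)$. Independence across the parallel ground edges between $u$ and $v$ follows from the construction—each ground edge runs its own sequence of level marks ($1/n$ for level $L$, then $\rho_{\ell-1}/\rho_\ell$ to drop another level), independent of all others. Therefore the real multiplicity between $u$ and $v$ is a sum of $p^{(u,v)} n \cdot (\rho/\rho_\ell)$ i.i.d.\ Bernoulli$(\rho_\ell/(n\rho))$ trials, yielding the stated $\Binom\bigl(p^{(u,v)} n \cdot (\rho/\rho_\ell),\ \rho_\ell/(n\rho)\bigr)$ distribution. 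The only subtle step is the reduction from level-$(\le \ell)$ pseudo edges to level-$\ell$ pseudo edges via the nesting property; everything else is a direct unpacking of the construction.
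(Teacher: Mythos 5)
Your proposal is correct and takes essentially the same approach as the paper: both read off the count of level-$\ell$ labelled-ground edges between $u$ and $v$ as $p^{(u,v)}n\cdot(\rho/\rho_\ell)$, then apply \Cref{clm:pseudo-prob} to conclude each such edge is independently a level-$\ell$ pseudo edge with probability $\rho_\ell/(n\rho)$, yielding the stated binomial. You spell out two points the paper leaves implicit---the no-gadget base case via \Cref{clm:unique-level-edge}, and the reduction from level-$(\le\ell)$ pseudo edges to level-$\ell$ pseudo edges via the nesting---which is reasonable added care but not a different route. One small slip in your aside about well-definedness: the condition needed for $p^{(u,v)}n\cdot(\rho/\rho_\ell)$ not to exceed the $\rho n$ available ground edges is $p^{(u,v)}\le\rho_\ell$, not $p^{(u,v)}\le\rho_\ell/\rho$; this is tangential to the distributional claim and the paper's proof does not verify it either.
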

\begin{proof}
    Recall, we added $p^{(u, v)}\cdot(\rho/\rho_\ell)$ level-$\ell$ labelled-ground edges between $u$ and $v$.
    By \cref{clm:pseudo-prob}, each of them has a probability $\rho_\ell/(n\rho)$ of becoming a level-$\ell$ pseudo edge.
    Those edge then constitute the real edges between $u$ and $v$.
    This concludes the proof.
\end{proof}

\begin{corollary} \label{cor:single-edge-expectation}
    For two vertices $u$ and $v$, the expected number of edges between them is $p^{(u, v)}$.
\end{corollary}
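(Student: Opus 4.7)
The corollary is an immediate consequence of \cref{clm:single-edge-dist}, so my plan is essentially a one-line computation wrapped in a brief case analysis. First I would observe that if no gadget is defined between $u$ and $v$ then $p^{(u,v)} = 0$ by convention, and \cref{clm:single-edge-dist} tells us the number of edges between $u$ and $v$ is identically zero, so the expectation is trivially $0 = p^{(u,v)}$.

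In the remaining case, let $\ell$ be the (unique, by \cref{clm:unique-level-edge}) level of the gadget between $u$ and $v$. By \cref{clm:single-edge-dist}, the number of real edges between $u$ and $v$ follows a $\Binom\bigl(p^{(u,v)} n \cdot (\rho/\rho_\ell),\; \rho_\ell/(n\rho)\bigr)$ distribution. Using the standard formula $\E[\Binom(N,q)] = Nq$, the expected number of edges is
\[
p^{(u,v)} n \cdot \frac{\rho}{\rho_\ell} \cdot \frac{\rho_\ell}{n\rho} \;=\; p^{(u,v)},
\]
where the factors of $n$, $\rho$, and $\rho_\ell$ cancel exactly. This was in fact the design goal of the multigraph construction in \cref{sec:multigraph-cons}, namely to arrange the number of level-$\ell$ labelled-ground edges and the pseudo-edge probability so that their product matches the intended gadget density $p^{(u,v)}$.

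There is no real obstacle here; the only thing to be careful about is to invoke \cref{clm:unique-level-edge} so that the level $\ell$ (and hence the Binomial parameters) is well-defined, and to handle the no-gadget case separately.
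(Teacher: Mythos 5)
Your proof is correct and matches the paper's (implicit) reasoning exactly: the corollary follows directly from \cref{clm:single-edge-dist} by computing the mean of the Binomial distribution, with the parameters $p^{(u,v)} n \cdot (\rho/\rho_\ell)$ and $\rho_\ell/(n\rho)$ cancelling to give $p^{(u,v)}$, and the no-gadget case handled trivially.
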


\begin{corollary} \label{cor:single-edge-prob}
    For two vertices $u$ and $v$, the probability that there exists an edge between them is at least $ p^{(u, v)}/2$.
\end{corollary}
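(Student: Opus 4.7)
The plan is to combine \cref{clm:single-edge-dist} with a standard inequality that lower bounds the tail of a binomial distribution. By \cref{clm:single-edge-dist}, the number of edges between $u$ and $v$ is distributed as $X \sim \Binom(m, q)$ where $m = p^{(u,v)} n \rho/\rho_\ell$ and $q = \rho_\ell/(n\rho)$, so in particular $mq = p^{(u,v)}$. The probability of at least one edge is
$$
\Pr[X \geq 1] = 1 - (1-q)^m.
$$

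Next I would apply the elementary inequalities $(1-q)^m \leq e^{-mq}$ and $1 - e^{-x} \geq x/2$ for $x \in [0, 1]$ (the latter follows from $e^{-x} \leq 1 - x + x^2/2 \leq 1 - x/2$ when $x \leq 1$). Together these give
$$
1 - (1-q)^m \;\geq\; 1 - e^{-mq} \;\geq\; \frac{mq}{2} \;=\; \frac{p^{(u,v)}}{2},
$$
which is exactly the claimed bound.

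The only thing left to verify is the side condition $p^{(u,v)} \leq 1$, so that $1 - e^{-x} \geq x/2$ is applicable. Here I would simply inspect the list of gadget density parameters defined in \cref{sec:level-1,sec:level-ell}: each is of the form (a term that is at most $O(r d_\ell)$ or $O(\log^2 n)$) divided by $N_\ell$ (or $\zeta N_\ell$), and with the parameters chosen in the proof of \cref{lem:matching-size-final-graph}, each $p^{(u,v)}$ is $o(1)$ for sufficiently large $n$, so in particular at most $1$. I do not expect any genuine obstacle here — the argument is essentially a one-line computation once the binomial distribution is in hand, and the side condition is immediate from the parameter choices.
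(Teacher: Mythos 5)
Your proof is correct and follows essentially the same route as the paper: bound $\Pr[\text{no edge}] = (1-q)^m \le e^{-mq} = e^{-p^{(u,v)}}$, then lower bound $1 - e^{-x}$ by $x/2$ for $x \le 1$. The only cosmetic difference is in justifying the last step (the paper uses $1 - e^{-x} \ge x(1-1/e)$ via convexity, you use the truncated Taylor bound $e^{-x} \le 1 - x + x^2/2$); both are valid, and your check of the side condition $p^{(u,v)} \le 1$ matches what the paper tacitly assumes.
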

\begin{proof}
    The probability of there being no edges is:
    $$
    (1 - \rho_\ell/n\rho)^{ p^{(u, v)}n\cdot(\rho/\rho_\ell)} \leq
    e^{-p^{(u, v)}}.
    $$
    Therefore, since $ p^{(u, v)} \leq 1$, the probability that there is an edge is at least
    \begin{equation*}
    1 - e^{-p^{(u, v)}} \geq  p^{(u, v)}\left(1 - \frac{1}{e}\right) \geq  p^{(u, v)} / 2. \qedhere
    \end{equation*}
\end{proof}

\begin{observation} \label{obs:non-pseudo-same-prob}
    Let $e$ and $e'$ be two different edges in the ground multigraph (possibly with the same endpoints). Then, the probability that $e$ is not in the pseudo graph is equal to the probability that $e'$ is not in the pseudo graph. 
\end{observation}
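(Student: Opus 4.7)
The plan is to unpack the randomized procedure from \cref{sec:multigraph-cons} that decides whether a given ground edge becomes a pseudo edge, and to observe that this procedure's marginal distribution depends only on the coin flips associated with that single edge. By construction, each ground edge is first marked as a level-$L$ pseudo edge independently with probability $1/n$, and then (only if so marked) cascaded down to levels $L-1, L-2, \ldots$ with the fixed ratios $\rho_{\ell-1}/\rho_\ell$. Nowhere in this definition do the identities of the endpoints, the random labels assigned to them, or the gadget structure between them appear; the only input is the coin flips attached to that particular ground edge.

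In the first step I would invoke \cref{clm:pseudo-prob} with $\ell = L$, which gives that any ground edge $e$ is a level-$L$ pseudo edge with probability $\rho_L/(n\rho) = 1/n$. Since the sets of pseudo edges at different levels are nested (also \cref{clm:pseudo-prob}), the set of pseudo edges of any level coincides with the set of level-$L$ pseudo edges, so the pseudo graph is exactly the level-$L$ pseudo graph. Hence
\[
\Pr[e \notin \textnormal{pseudo graph}] = 1 - \tfrac{1}{n}.
\]
Applying the identical argument to $e'$ yields the same value, and the observation follows.

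Because this statement is essentially a restatement of how pseudo edges are sampled, there is no real obstacle: the entire content is recognizing that the sampling is performed per ground edge, identically distributed, and oblivious to labels or to the existence of other edges. The subtle point worth emphasizing is that $e$ and $e'$ may share endpoints (parallel ground edges), so one has to be careful that the coin flips are attached to edges and not to vertex pairs; the construction is explicit about this, using $\rho n$ parallel ground edges each tossed independently. Conceptually, this observation is the first building block of the program outlined in \cref{sec:techniques}: non-pseudo edges should carry essentially no label-dependent information, and the identical marginals established here is the minimal sanity check before one extends this to the joint distribution of the unseen edges conditioned on the algorithm's transcript.
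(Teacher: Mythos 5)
Your proof is correct and takes essentially the same approach as the paper's one-line proof, which simply notes that each ground edge independently becomes a pseudo edge with probability $1/n$; you unpack this via \cref{clm:pseudo-prob} with $\ell = L$ and the nesting of pseudo-edge levels, but the core observation is the same.
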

\begin{proof}
    The proof follows from the fact that each edge is a pseudo edge independently with probability $1/n$.
\end{proof}

\begin{lemma} \label{lem:deg-dist-general}
    Take a vertex $u$ in the multigraph.
    Let $\Phi$ be a subset of level-$\ell$ gadgets,
    and let $H$ be the subgraph including the ground edges associated with $\Phi$.
    Then, $\deg_H^r(u) = \Binom(d_\Phi(u) \cdot n\rho/\rho_\ell, \rho_\ell/n\rho)$.
    That is, the distribution of $\deg_H^r(u)$, the real degree of $u$ from edges of $H$,
    is solely determined by $d_\Phi(u)$.
\end{lemma}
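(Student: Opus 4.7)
The plan is to express $\deg_H^r(u)$ as a sum of i.i.d. Bernoulli indicators, one for each level-$\ell$ labelled-ground edge incident to $u$ within the gadgets in $\Phi$, and then identify the resulting binomial distribution.

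First, I would count the number of level-$\ell$ labelled-ground edges incident to $u$ contributed by $\Phi$. Consider any gadget $\phi \in \Phi$ between vertex sets $X_\phi$ and $Y_\phi$ with $u \in X_\phi$; by the construction of the multigraph, between $u$ and each $v \in Y_\phi$ exactly $p^{(u,v)} n \cdot (\rho/\rho_\ell)$ of the $\rho n$ ground edges are marked as level-$\ell$ labelled-ground edges. Summing over $v \in Y_\phi$ and recalling the definition of $d_\phi(u) = \sum_{v} p^{(u,v)}$, the total contribution of $\phi$ to $u$'s labelled-ground degree is $d_\phi(u) \cdot n\rho/\rho_\ell$. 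By \cref{clm:unique-level-edge}, distinct gadgets in $\Phi$ cover disjoint vertex pairs, so their labelled-ground edges are disjoint multi-edges in the ground graph. Summing over $\phi \in \Phi$ yields exactly $d_\Phi(u) \cdot n\rho/\rho_\ell$ level-$\ell$ labelled-ground edges incident to $u$ in $H$.

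Next, I would characterize, for each such edge, the probability of being a level-$\ell$ real edge. By the definition in \cref{sec:multigraph-cons}, an edge is a level-$\ell$ real edge iff it is a level-$\ell$ labelled-ground edge and (since level-$\ell$ pseudo edges contain all deeper levels by \cref{clm:pseudo-prob}) a level-$\ell$ pseudo edge. Each of the labelled-ground edges identified above is already of level $\ell$, so the only remaining randomness is the cascade that decides its pseudo status. By \cref{clm:pseudo-prob}, each ground edge becomes a level-$\ell$ pseudo edge with probability $\rho_\ell/(n\rho)$.

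Finally, I would invoke independence: the pseudo-status of each ground edge is decided by its own independent cascade (independent Bernoulli$(1/n)$ trial at level $L$, followed by its own conditional trials at each deeper level). Therefore the indicators for distinct ground edges being level-$\ell$ pseudo are mutually independent, and $\deg_H^r(u)$ is the sum of $d_\Phi(u) \cdot n\rho/\rho_\ell$ i.i.d.\ Bernoulli$(\rho_\ell/(n\rho))$ variables, which is precisely $\Binom(d_\Phi(u)\cdot n\rho/\rho_\ell,\, \rho_\ell/(n\rho))$.

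The only delicate point is the disjointness of the labelled-ground edges across different gadgets in $\Phi$, which is what guarantees that the Bernoulli indicators are not only identically distributed but also independent; this is handled cleanly by \cref{clm:unique-level-edge}. Beyond that, the proof is essentially a routine bookkeeping of the construction.
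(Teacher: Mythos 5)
Your proof is correct and follows essentially the same route as the paper's. The paper first packages the per-pair randomness via \cref{clm:single-edge-dist} (the number of real edges between $u$ and each $v$ is $\Binom(p^{(u,v)} n\rho/\rho_\ell,\, \rho_\ell/n\rho)$) and then sums these independent binomials over all $v$ and all gadgets touching $u$; you instead drop straight to the underlying Bernoulli indicators at the ground-edge level, which is the same calculation unpacked one layer further and makes the independence and the binomial shape slightly more explicit. Both count the same $d_\Phi(u)\cdot n\rho/\rho_\ell$ labelled-ground edges and invoke the same $\rho_\ell/(n\rho)$ per-edge pseudo probability from \cref{clm:pseudo-prob}.
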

\begin{proof}
    We utilize \cref{clm:single-edge-dist}.
    Let $\Phi' \subseteq \Phi$ be gadgets that have $u$ on one side.
    That is $\Phi' = \{\phi_1, \phi_2, \ldots, \phi_k\}$, where gadget $\phi_i$ with density parameter $P_i$ is between vertex sets $X_i$ and $Y_i$, and $u \in X_i$.
    Then, we have (below we use $\deg^r_H(u, Y)$ for a vertex set $Y$, to denote the number of real edges of $H$ between $u$ and $Y$):
    \begin{align*}
        \deg_H^r(u)
        &= \sum_{i} \deg^r_H(u, Y_i) \notag \\
        &= \sum_{i} \sum_{v \in Y_i} \deg^r_H(u, v) \notag \\
        &= \sum_i \sum_{v \in Y_i} \Binom(p^{(u, v)} \cdot n\rho/\rho_\ell, \rho_\ell/n\rho) \tag{by \cref{clm:single-edge-dist}} \\
        &= \sum_i \Binom(\card{Y_i} P_i \cdot n\rho/\rho_\ell, \rho_\ell/n\rho) \\
        &= \sum_i \Binom(d_{\phi_i}(u) \cdot n\rho/\rho_\ell, \rho_\ell/n\rho) 
        \tag{$d_{\phi_i}(u) = \card{Y_i}P_i$} \\
        &= \Binom(\sum_i d_{\phi_i}(u) \cdot n\rho/\rho_\ell, \rho_\ell/n\rho) \\
        &= \Binom(d_\Phi(u) \cdot n\rho/\rho_\ell, \rho_\ell/n\rho) \qedhere \\
    \end{align*}
\end{proof}

\begin{corollary}\label{cor:same-deg-dist-l}
    Take a vertex $u$ in the multigraph, 
    and let $H$ be the subgraph of ground edges associated with $\Phi^\ell$, the gadgets of level $\ell$.
    Then, the distribution of $\deg_H^r(u)$ is solely determined by $d_{\Phi^\ell}(u)$.
\end{corollary}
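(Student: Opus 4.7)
The plan is to invoke \cref{lem:deg-dist-general} directly with the choice $\Phi := \Phi^\ell$, the full set of level-$\ell$ gadgets. The hypotheses of the lemma are satisfied verbatim: $\Phi^\ell$ is indeed a subset (in fact all) of the level-$\ell$ gadgets, and $H$ is defined in the corollary to be exactly the subgraph of ground edges associated with these gadgets. Therefore the lemma yields
$$
\deg_H^r(u) \;\sim\; \Binom\!\left( d_{\Phi^\ell}(u) \cdot n\rho/\rho_\ell,\; \rho_\ell/(n\rho) \right).
$$

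The key observation to extract the corollary is that the right-hand side depends on the vertex $u$ \emph{only} through the scalar $d_{\Phi^\ell}(u)$: both the trial count $d_{\Phi^\ell}(u)\cdot n\rho/\rho_\ell$ and the success probability $\rho_\ell/(n\rho)$ are functions of $d_{\Phi^\ell}(u)$ and the global parameters $n,\rho,\rho_\ell$ alone. Hence any two vertices with the same value of $d_{\Phi^\ell}(\cdot)$ induce the same distribution for $\deg_H^r(\cdot)$, which is exactly the content of the corollary.

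There is essentially no obstacle here; the statement is a direct specialization of \cref{lem:deg-dist-general}, recorded as a corollary because it is the form actually used later (together with \cref{clm:expectedDegreeBase,clm:expectedDegreeHigherLevels}, which fix $d_{\Phi^\ell}(u)$ to a common value for all non-special vertices, and another common value for special vertices). Thus the write-up will consist of a single sentence invoking the lemma with $\Phi=\Phi^\ell$ and pointing out that the resulting binomial parameters depend on $u$ only through $d_{\Phi^\ell}(u)$.
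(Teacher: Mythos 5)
Your proposal matches the paper's proof exactly: both invoke \cref{lem:deg-dist-general} with $\Phi = \Phi^\ell$ and observe that the resulting binomial parameters depend on $u$ only through $d_{\Phi^\ell}(u)$. No issues.
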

\begin{proof}
    Follows from a direct application of $\cref{lem:deg-dist-general}$ for $\Phi = \Phi^\ell$.
\end{proof}

\begin{corollary}\label{cor:same-deg-dist-pseudo}
    Take a vertex $u$ in the multigraph,
    and let $H$ be the subgraph of ground edges associated with $\Phi^\ell$, the gadgets of level $\ell$.
    Then, the distribution of the number of non-real pseudo edges of $H$ adjacent to vertex $u$, i.e.\ $\deg_H^p(u) - \deg_H^r(u)$, is solely determined by $d_{\Phi^\ell}$(u).
\end{corollary}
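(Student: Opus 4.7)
The plan is to mirror the proof of Lemma~\ref{lem:deg-dist-general} almost verbatim, decomposing $\deg_H^p(u) - \deg_H^r(u)$ pair-by-pair and then summing binomials. The core observation is that a ground edge contributes to $\deg_H^p(u) - \deg_H^r(u)$ precisely when it is a pseudo edge but not a real edge; since every real edge must be labelled-ground, I will focus on the labelled-ground portion of $H$ and count those labelled-ground edges that are pseudo at level $L$ but \emph{not} pseudo at level $\ell$ (equivalently, pseudo but not real).

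For a pair $(u,v)$ lying in a gadget $\phi \in \Phi^\ell$, there are $p^{(u,v)}\, n\rho/\rho_\ell$ labelled-ground edges between them. By the construction, each such edge is independently a level-$L$ pseudo edge with probability $1/n$ and, conditioned on this, a level-$\ell$ pseudo edge with probability $\rho_\ell/\rho$. Combining the two independent coin flips, each labelled-ground edge is a non-real pseudo edge independently with probability
\[
\frac{1}{n}\left(1 - \frac{\rho_\ell}{\rho}\right) = \frac{\rho - \rho_\ell}{n\rho}.
\]
Hence the contribution of the pair $(u,v)$ to $\deg_H^p(u) - \deg_H^r(u)$ is distributed as $\Binom\!\left(p^{(u,v)}\, n\rho/\rho_\ell,\ (\rho - \rho_\ell)/(n\rho)\right)$, in complete analogy with the per-pair formula used for $\deg_H^r(u,v)$ in Lemma~\ref{lem:deg-dist-general}.

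Summing these independent binomials over all $\phi \in \Phi^\ell$ with $u \in X_\phi$ and all $v \in Y_\phi$, and using the identity $\sum_{v \in Y_\phi} p^{(u,v)} = |Y_\phi|\, P_\phi = d_\phi(u)$ exactly as in the proof of Lemma~\ref{lem:deg-dist-general}, the aggregate first parameter simplifies to $d_{\Phi^\ell}(u)\cdot n\rho/\rho_\ell$, yielding
\[
\deg_H^p(u) - \deg_H^r(u) \;\sim\; \Binom\!\left(d_{\Phi^\ell}(u) \cdot \frac{n\rho}{\rho_\ell},\ \frac{\rho - \rho_\ell}{n\rho}\right).
\]
Both binomial parameters depend only on $d_{\Phi^\ell}(u)$ together with the global constants $n$, $\rho$, and $\rho_\ell$, which is exactly the conclusion of the corollary. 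The only delicate point in the argument is justifying that ``level-$L$ pseudo but not level-$\ell$ pseudo'' really is an independent Bernoulli event across labelled-ground edges with the stated probability; this follows immediately from \cref{clm:pseudo-prob} and the nested coin-flip description by which pseudo-levels are assigned, so no new combinatorial idea beyond that of Lemma~\ref{lem:deg-dist-general} is required.
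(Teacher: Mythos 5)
Your proposal takes a genuinely different route from the paper and, unfortunately, contains a gap that the paper's shorter argument carefully avoids.

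The paper's proof is a two-line observation: the distribution of $\deg_H^p(u)$ is determined by $\ell$ alone and is independent of labels and gadgets (because pseudo edges are flipped uniformly on ground edges irrespective of labelling), while the distribution of $\deg_H^r(u)$ is determined by $d_{\Phi^\ell}(u)$ by \cref{lem:deg-dist-general}; the corollary then follows. You instead try to directly decompose $\deg_H^p(u) - \deg_H^r(u)$ pair by pair, which is a reasonable idea, but you make a wrong restriction early on.

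The error is in the sentence ``since every real edge must be labelled-ground, I will focus on the labelled-ground portion of $H$.'' This is a non sequitur. It is true that every \emph{real} edge is labelled-ground, so $\deg_H^r(u)$ only involves labelled-ground edges. But $\deg_H^p(u)$ counts \emph{all} pseudo edges of $H$ incident to $u$, including those sitting on ground edges that are \emph{not} labelled-ground. Any such pseudo edge is automatically non-real and therefore contributes to $\deg_H^p(u) - \deg_H^r(u)$. Concretely, between a gadget pair $(u,v) \in \Phi^\ell$ there are $\rho n$ ground edges, of which only $p^{(u,v)} n \rho / \rho_\ell$ are labelled-ground; each of the remaining $\rho n - p^{(u,v)} n \rho/\rho_\ell$ ground edges is independently a pseudo edge with probability $1/n$ and, if so, is never a real edge. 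Your final formula
\[
\Binom\!\left(d_{\Phi^\ell}(u)\cdot\tfrac{n\rho}{\rho_\ell},\ \tfrac{\rho-\rho_\ell}{n\rho}\right)
\]
omits the additional independent term $\Binom\!\left(\rho n\cdot |N_{\Phi^\ell}(u)| - d_{\Phi^\ell}(u)\cdot\tfrac{n\rho}{\rho_\ell},\ \tfrac{1}{n}\right)$, where $|N_{\Phi^\ell}(u)|$ is the number of vertices sharing some level-$\ell$ gadget with $u$. This missing term depends on $|N_{\Phi^\ell}(u)|$, which is \emph{not} obviously a function of $d_{\Phi^\ell}(u)$ alone, since $d_{\Phi^\ell}(u)$ weights neighboring vertex sets by their gadget densities while $|N_{\Phi^\ell}(u)|$ does not. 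That is exactly the point the corollary has to deal with, and the paper does so by arguing that the distribution of $\deg_H^p(u)$ itself (the total pseudo count, labelled-ground or not) is label-independent, rather than by mirroring \cref{lem:deg-dist-general} on the labelled-ground portion only. As written your proof does not establish the statement.
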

\begin{proof}
    The distribution of $\deg_H^r(u)$ is determined by $d_{\Phi^\ell}(u)$, and the distribution of $\deg_H^p(u)$ is determined by $\ell$ and is independent of the labels and gadgets.
\end{proof}

To wrap the section up,
we prove a lemma that states
the part of the graph queried by the algorithm is essentially a simple graph.

\begin{lemma} \label{lem:cant-tell-multigrpah}
    Let $\mc{A}$ be an algorithm that makes $O(n^{2-\delta})$ queries. Then, with high probability, it does not query any pair $(u, v)$ such that there is more than one pseudo edge between them in $G$.
\end{lemma}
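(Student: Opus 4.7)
The plan is to show that pairs of vertices with multiple pseudo edges are rare, and then exploit independence of pseudo edges across pairs to argue that an adaptive algorithm cannot target such pairs any better than random sampling. The first step is a per-pair bound: for any fixed pair $(u,v)$, the construction places exactly $\rho n$ ground edges between them, each of which independently becomes a pseudo edge (of some level) with probability $1/n$, since a pseudo edge of any level is in particular a level-$L$ pseudo edge, and level-$L$ pseudo edges are obtained by independent $1/n$-coin flips on the ground edges (see \cref{clm:pseudo-prob} with $\ell = L$). Thus the number of pseudo edges between $u$ and $v$ is distributed as $\Binom(\rho n, 1/n)$, and the probability that this quantity is at least two is at most $\binom{\rho n}{2}(1/n)^2 \leq \rho^2/2$.

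The second step is to translate the per-pair bound into a statement about the algorithm's queries, which requires handling adaptivity. The key observation is that the pseudo-edge count of any pair $(u,v)$ is a function only of the $\rho n$ independent coin flips associated with the ground edges incident to that pair, and is therefore independent both of the random vertex labels and of the pseudo-edge counts of every other pair. Each query response reveals only the real edges on previously queried pairs, and these real edges are determined only by labels and by the pseudo edges of the very pairs queried. Hence, for any pair $q_i$ not yet queried, its pseudo-edge count remains independent of the query history $H_{i-1}$. Consequently, even when $q_i$ is selected adaptively as a function of $H_{i-1}$, the probability that $q_i$ carries at least two pseudo edges is still at most $\rho^2/2$. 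A union bound over the $t = O(n^{2-\delta})$ queries (assuming without loss of generality that no query is repeated) then yields
$$
\Pr[\text{some queried pair has at least two pseudo edges}] \leq t \cdot \rho^2 / 2 = O(n^{2-\delta}\rho^2) = o(1),
$$
where the last equality follows from the parameter choice $\rho \lesssim n^{\sigma-1}$ with $2\sigma < \delta$, as laid out in the overview.

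The step I expect to require the most care is the adaptivity argument in the second step: one must be explicit that the only information about the pseudo edges of $q_i$ that is leaked through earlier responses is mediated by vertex labels (which are independent of those pseudo edges by construction), so conditioning on the history does not shift the pseudo-edge distribution on $q_i$. Once this independence is pinned down cleanly, the rest of the proof is a routine combination of a binomial tail estimate and a union bound.
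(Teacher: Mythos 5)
Your proof is correct and matches the paper's argument: a per-pair binomial bound on the pseudo-edge multiplicity $\Binom(\rho n, 1/n)$, the observation that this count depends only on the independent coin flips for that pair's ground edges and hence is independent of the query history even under adaptivity, and a union bound over the $O(n^{2-\delta})$ queries using $\rho^2 = O(n^{\delta/5-2})$. One small slip worth flagging: in the paper's strengthened query model (\cref{sec:final-simple-graph}) a query reveals the pseudo-edge count as well as the real-edge count of the queried pair, not only the real edges as you state; this does not affect your independence argument, since each response still concerns only the coin flips of the pair actually queried.
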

\begin{proof}
    Take any query $(u, v)$ that the algorithm makes.
    Recall that the number of pseudo edges between them has distribution $\Binom(\rho n, \frac{1}{n})$, and is independent of everything else in the graph (including the parts the algorithm has queried).
    Therefore, the probability that $(u, v)$ has at most one pseudo edge is
    \begin{align*}
        \left(1 -\frac{1}{n}\right)^{\rho n}
        + \rho n \cdot \frac{1}{n}\left(1 -\frac{1}{n}\right)^{\rho n - 1} 
        &\geq \left(1 + \rho\right)\left(1 -\frac{1} {n}\right)^{\rho n} \\
        &\geq (1 + \rho)\left(1 - \rho + \frac{\rho^2}{2} - O(\rho/n) \right) \\
        &= 1 - O(\rho^2).
    \end{align*}
    It follows that the probability of having more than one pseudo edge is at most $O(\rho^2)$.
    By taking the union bound over all the $O(n^{2-\delta})$ queries,
    the probability that the algorithm queries a pair with more than one pseudo edge is at most
    \begin{equation*}
    O(\rho^2 \cdot n^{2-\delta}) = O(n^{\delta/5 - 2} \cdot n^{2-\delta})
    = O(n^{-4\delta/5}) = o(1). \qedhere
    \end{equation*}
\end{proof}

\subsection{Multigraph Query Model and Reduction to Simple Graphs} \label{sec:final-simple-graph}
\label{sec:not-observable-difference}

Recall that the final algorithm aims to approximate the maximum matching in the real multigraph.
A natural model for adjacency matrix queries for multigraphs
is querying the number of edges between two vertices $u$ and $v$.
We make the algorithm stronger as follows:
Upon querying a pair $(u, v)$, the algorithm finds out (1) how many pseudo edges there are between them, and (2) how many of them are real edges.
In later sections, we prove the lower bound for this model.
Note that solving the problem in this model is only easier than the natural multigraph model.

\begin{remark}
    In the proof of the lower bound, due to \cref{lem:cant-tell-multigrpah}, we can assume that the algorithm upon querying a pair $(u, v)$ either (1) sees no edges between them,
    (2) sees a pseudo edge that is not a real edge, or
    (3) sees a pseudo edge that is also a real edge.
\end{remark}

Furthermore, the problem is not easier for simple graphs.
Consider the following reduction.
Given an algorithm $\mc{A}$ for simple graphs, we obtain an algorithm for multigraphs:
Run $\mc{A}$ on the multigraph; if a query between two vertices $u$ and $v$ reports more than one edge, simply report one edge to $\mc{A}$.
This reduction essentially removes the multiplicity of the edges.
Therefore, it does not change the size of the maximum matching,
and the output of $\mc{A}$ is still an $\epsilon n$-additive approximation.

In the following sections, we prove that if the algorithm makes $\Omega(n^{2-\delta})$ queries, it cannot distinguish between input graphs drawn from \yesdist{} and \nodist{}. Our proof is based on a coupling between \yesdist{} and \nodist{}. The ground edges that the algorithm finds which are not pseudo or real edge do not affect the algorithm's ability to distinguish between the two cases \Cref{obs:non-pseudo-same-prob}. This is because the probability that a ground edge is a non-pseudo edge is independent of the labels of the vertices by \Cref{obs:non-pseudo-same-prob}. Consequently, such edges provide no useful information for distinguishing between the two distributions. As a result, we focus on coupling the pseudo and real edges between \yesdist{} and \nodist{}. So for the rest of the proof, we assume that ground edges that are not marked as pseudo do not exist in the graph and when we refer to edges, we mean pseudo and real edges. Whenever we want to mention a ground edge, we explicitly mention that. Moreover, for the rest of the proof when we refer to level $\ell$ edge, we mean pseudo or real edges that belong to level $\ell$ or lower. Also, when we use $A_i$ (resp. $B_i$, and $D_i$) without superscript, we mean $A_i^1$ and $A_i^2$ (the same for both $B_i$ and $D_i$). Further, when in the proof we are considering level $\ell$, by label of the vertex we mean the label of the vertex in the same level.

\section{Losing Advantage in the Highest Level}

The key distinction between graphs drawn from $\yesdist^L$ and $\nodist^L$ lies in the subgraph between $A_r^1$ and $A_r^2$ at the highest level. Specifically, in $\yesdist^L$ , this subgraph is sampled from $\yesdist^{L-1}$ , while in $\nodist{}^L$, it is sampled from $\nodist^{L-1}$. Therefore, any algorithm aiming to differentiate between $\yesdist^L$ and $\nodist^L$ must detect differences within this subgraph.

In this subsection, we derive an upper bound on the number of level $L-1$ edges the algorithm can recognize as belonging to this critical subgraph. To formalize this, we introduce the concept of the distinguishability of an edge within the subgraph between $A_r^1$ and $A_r^2$.

When the algorithm queries a typical edge, due to our choices of $d_L$ and $d_{L-1}$, parameters that control the degrees of vertices at levels $L$ and $L-1$, the probability that this edge belongs to the subgraph between $A_r^1$ and $A_r^2$ is approximately $d_{L-1}/{d_L} = n^{\sigma_{L-1} - \sigma_L}$ if the edge is a real edge. Also, for a pseudo edge, because of our choices of $\rho_L$ and $\rho_{L-1}$, the probability that this pseudo edge belongs to the subgraph at the lower level is approximately $\rho_{L-1}/\rho_L = n^{\sigma_{L-1} - \sigma_L}$.

We say an edge is distinguishable when the algorithm can infer a bias in this probability, conditioned on the queried subgraph.

\begin{definition}[$p_e^{inner}$ and Distinguishability of an Edge]\label{def:distinguishibility-top-level}
Let $e$ be a real edge or pseudo edge queried by the algorithm, and let $p_e^{inner}$ denote the probability that if $e$ is a level $L-1$ edge that belongs to subgraph between $A_r^1$ and $A_r^2$, conditioned on all queries made by the algorithm so far and assuming either input distribution. We say edge $e$ is  \textit{distinguishable} if $p_e^{\text{inner}} > 10 n^{\sigma_{L - 1} - \sigma_L}$.
\end{definition}

\begin{lemma}\label{lem:total-edges-discovered}
    With high probability, the total number of edges that the algorithm discovers is at most $O(n^{1-\delta+\sigma_L})$.
\end{lemma}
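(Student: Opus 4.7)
The plan is to bound the expected number of edges (pseudo edges) discovered and then apply a Chernoff concentration bound. By the remark just before the lemma, ``edges'' means pseudo edges and real edges; since $\text{Real edges} \subseteq \text{Pseudo edges}$ by the construction (Real edges $=$ Labelled-ground edges $\cap$ Pseudo edges), it suffices to bound the total number of pseudo edges the algorithm reveals.

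The key observation is that between any fixed pair of vertices $(u,v)$, the construction places exactly $\rho n$ ground edges, and each ground edge is independently promoted to a pseudo edge with probability $1/n$. Moreover, these promotion coin-flips are mutually independent across \emph{all} ground edges (including those between different pairs). Consequently, for each queried pair $(u_i, v_i)$ the number of pseudo edges $Y_i$ between them is distributed as $\mathrm{Binom}(\rho n, 1/n)$ with expectation $\rho$.

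The only subtlety is that the algorithm is adaptive: the choice of $(u_{i+1}, v_{i+1})$ depends on the pseudo edges revealed by earlier queries. However, since pseudo-edge promotions are independent across pairs, conditional on the transcript $\mathcal{F}_{i-1}$ of the first $i-1$ queries and responses, $Y_i$ is still a fresh $\mathrm{Binom}(\rho n, 1/n)$ for any pair not previously queried (and we may assume WLOG that all $Q = O(n^{2-\delta})$ queried pairs are distinct, since repeating a query gives the same answer). Unpacking each $Y_i$ into its $\rho n$ constituent Bernoulli$(1/n)$ trials, the total count $\sum_{i=1}^{Q} Y_i$ therefore has the same distribution as $\mathrm{Binom}(Q\rho n, 1/n)$.

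This sum has expectation $Q\rho = O(n^{2-\delta}) \cdot O(n^{\sigma_L - 1}) = O(n^{1-\delta+\sigma_L})$, using the construction-wide relation $\rho = O(n^{\sigma_L - 1})$ dictated by the maximum density parameter in the gadgets. Applying the Chernoff bound of \cref{prop:chernoff} to this sum of independent Bernoullis then shows that with probability $1 - n^{-\omega(1)}$ the count is at most twice its expectation, i.e.\ $O(n^{1-\delta+\sigma_L})$. The conceptually hard point is exactly the adaptivity-vs-independence interplay handled above; once the sum has been identified in distribution with a standard Binomial, what remains is a routine Chernoff calculation.
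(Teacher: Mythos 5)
Your proposal is correct and takes essentially the same approach as the paper: bound the number of queried ground edges by $O(\rho n^{3-\delta})$, note each independently becomes a pseudo edge with probability $1/n$, and apply Chernoff to the resulting sum of independent Bernoullis. Your treatment of the adaptivity of the algorithm's queries is in fact slightly more careful than the paper's, which simply asserts the $X_i$ are independent; otherwise the calculation and conclusion match.
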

\begin{proof}
Since there are $\rho n$ ground edges between every two vertices and the algorithm makes $O(n^{2-\delta})$ pair queries, the total number of ground edges the algorithm queries is $O(\rho n^{3-\delta})$. Moreover, each of these ground edges is classified as either a pseudo or real edge with an independent probability of at most $1/n$, as they must be marked as pseudo edges of level $L$, which occurs with probability $1/n$. Therefore, in expectation, the algorithm finds $O(\rho n^{2-\delta})$ such edges.  

Let $X_i$ be the indicator random variable for the event that the $i$-th queried ground edge is a pseudo or real edge. Let $X = \sum X_i$ denote the total number of such edges found by the algorithm. We have $E[X_i] \leq 1/n$, and $E[X] \leq O(\rho n^{2 - \delta})$. Furthermore, the random variables $X_i$ are independent. Therefore, applying the Chernoff bound, we have:
\begin{align*}    
\Pr\left[ |X - E[X]| \geq 2 \sqrt{E[X] \log n} \right] \leq 2 \exp \left( -\frac{(2 \sqrt{E[X] \log n})^2}{3 E[X]} \right) < \frac{1}{n}.
\end{align*}
This implies that with probability at least $1 - 1/n$, the total number of pseudo or real edges discovered by the algorithm is $O(\rho n^{2 - \delta} )$. Plugging $\rho = n^{\sigma_L - 1}$ completes the proof.
\end{proof}

\begin{definition}[Direction of an Edge]\label{def:directing}
Let $(u, v)$ be an edge queried by the algorithm. Suppose the algorithm has already discovered some edges of $u$, but has not discovered any edges of $v$. When we refer to the \textit{direction} of the edge $(u, v)$, we mean that the edge is considered to go from $u$ to $v$. If the algorithm has already discovered edges for both $u$ and $v$, or has not discovered any edges for either, we assign the direction of the edge randomly.
\end{definition}

\begin{claim}\label{clm:max-in-deg-top}
    Each vertex in the queried subgraph has at most $3\sqrt{\log n}$ indegree with high probability.
\end{claim}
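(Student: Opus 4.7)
The plan is to fix an arbitrary vertex $v$ in the queried subgraph, show $\Pr[I_v > 3\sqrt{\log n}] = o(1/n)$, and then union bound over the at most $n$ vertices. To bound $I_v$ for a fixed $v$, I would decompose its in-edges according to \cref{def:directing}. An edge $(u,v)$ is directed into $v$ exactly when either (a) $v$ was fresh and $u$ was known at the discovery time (direction deterministically into $v$), (b) both were fresh and the random coin landed toward $v$, or (c) both were known and the random coin landed toward $v$. The key structural observation is that $v$ transitions from ``fresh'' to ``known'' the moment its first edge is discovered, and remains known forever after; hence cases (a) and (b) can contribute at most $1$ in-edge in total.

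Letting $D_v$ denote the number of case-(c) edges at $v$, this gives
$$ I_v \;\le\; 1 + \mathrm{Binom}(D_v,\tfrac{1}{2}), $$
where the $D_v$ independent coin flips are independent of $D_v$ itself and of the entire interaction between the algorithm and the graph, since the algorithm never observes the direction assignment. Conditioning on the queried subgraph (equivalently, on $D_v$), the random directions of the case-(c) edges can be revealed \emph{after} the algorithm has completed all its queries. Hence I can treat $\mathrm{Binom}(D_v, 1/2)$ as a sum of $D_v$ fresh fair coins.

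The next step is to bound $D_v$, for which I would appeal to the independence of pseudo-edge indicators across distinct ground edges (\cref{clm:pseudo-prob}, \cref{obs:non-pseudo-same-prob}): each of the at most $n-1$ pairs $(u,v)$ the algorithm may query contains a pseudo edge with probability at most $\rho$, independently of all other pairs (and in particular independently of the labels). Thus the degree $\deg_v$ of $v$ in the queried subgraph is stochastically dominated by $\mathrm{Binom}(n,\rho)$, and a Chernoff bound (\cref{prop:chernoff}) yields a high-probability bound on $\deg_v$, hence on $D_v \le \deg_v$. A second Chernoff bound applied to $\mathrm{Binom}(D_v,1/2)$, combined with this control on $D_v$, then yields $I_v \le 3\sqrt{\log n}$ with probability $1 - o(1/n)$; a union bound over at most $n$ vertices completes the proof.

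The main obstacle is obtaining a sharp enough bound on $D_v$ simultaneously for \emph{every} vertex, since the algorithm is adaptive and could in principle concentrate all its budget on a single vertex: a naive calculation gives $E[\deg_v] \le n\rho = n^{\sigma_L}$, which is polynomial rather than polylogarithmic. The argument therefore needs to combine the independence of pseudo-edge indicators with the specific parameter regime $\sigma_L \ll \delta$ (so that after two nested Chernoff applications the deviation is subsumed by $3\sqrt{\log n}$), possibly together with the observation that only few vertices can receive many queries given the total budget $O(n^{2-\delta})$, handling heavy-query vertices separately from light-query ones.
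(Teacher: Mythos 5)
Your decomposition of the in-edges of $v$ into cases (a)--(c), and the observation that (a) and (b) together contribute at most one in-edge because $v$ becomes ``known'' permanently after its first discovered edge, is exactly the structural starting point of the paper's argument. You also correctly notice that case-(c) edges require the partner $u$ to already be known. But you then discard this restriction when you bound $D_v$: you pass to $D_v \le \deg_v$ and dominate $\deg_v$ by $\mathrm{Binom}(n,\rho)$, which (as you yourself flag) gives $\mathbb{E}[\deg_v] \approx n^{\sigma_L}$, far too large. Your suggested repair---separating heavy-query from light-query vertices based on the $O(n^{2-\delta})$ total query budget---does not obviously close the gap and is not what the paper does.

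The missing ingredient is to keep using the constraint you already identified: every case-(c) edge at $v$ has its other endpoint in $\hat V$, the set of vertices with at least one discovered edge, and $\hat V$ is small. By \Cref{lem:total-edges-discovered}, $|\hat V| = O(n^{1-\delta+\sigma_L})$ with high probability. So $D_v$ is dominated not by $\mathrm{Binom}(n,\rho)$ but by the number of pseudo/real edges between $\{v\}$ and $\hat V$, i.e.\ by $\mathrm{Binom}\bigl(|\hat V|\cdot \rho n,\, 1/n\bigr)$, whose mean is $|\hat V|\cdot\rho = O(n^{2\sigma_L-\delta}) = o(1)$ since $2\sigma_L < \delta$. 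A single Chernoff bound (\Cref{prop:chernoff}) then gives $D_v = O(\sqrt{\log n})$ with probability $1 - O(1/n^2)$, and a union bound over the $n$ choices of $v$ finishes. In particular, the $\mathrm{Binom}(D_v, 1/2)$ step in your proposal is unnecessary: once $D_v$ itself is $O(\sqrt{\log n})$, you do not gain anything by flipping the direction coins, and the paper simply upper-bounds the indegree by $1 + D_v$ directly. As written, your proposal does not establish the claim, and its last paragraph is an acknowledgment of the gap rather than a resolution.
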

\begin{proof}
    Let $v$ be an arbitrary vertex. Suppose that $v$ has at least one incoming edge. Also, let $\hat{V}$ be the set of vertices in the graph for which the algorithm finds at least one edge. According to the way we direct the edges in \Cref{def:directing}, the rest of the incoming edges of $v$ are queried at the time when the other endpoint already has an existing edge. Therefore, there are $|\hat{V}|$ possible pairs between $\hat{V}$ and $v$, each containing $\rho n$ ground edges, and each being a pseudo or real edge with probability at most $1/n$. Thus, the expected number of edges between $\hat{V}$ and $v$ is:  
\begin{align*}
E[X] = |\hat{V}|\cdot \rho = \wt{O}(n^{2\sigma_L - \delta}),
\end{align*}
since $|\hat{V}| = O(n^{1-\delta+\sigma_L})$ by \Cref{lem:total-edges-discovered}.  Let $X_i$ be the indicator random variable for the event that the $i$-th ground edge between $\hat{V}$ and $v$ is a pseudo or real edge, and let $X = \sum X_i$ denote the total number of such edges. We have $E[X] = O(n^{2\sigma_L - \delta}) < 1$ for large enough $n$. Furthermore, the random variables $X_i$ are independent. Applying the Chernoff bound, we obtain:
\begin{align*}
\Pr\left[ |X - E[X]| \geq 3 \sqrt{E[X] \log n} \right] \leq 2 \exp \left( -\frac{(3 \sqrt{E[X] \log n})^2}{3 E[X]} \right) < \frac{1}{n^2}.
\end{align*}
Thus, with probability at least $1 - 1/n^2$, the total number of incoming pseudo or real edges to $v$ is $3\sqrt{\log n}$. Applying the union bound over all vertices concludes the proof.
\end{proof}

\begin{definition}[Shallow Subgraph]
For a vertex $v$, we define $v$'s shallow subgraph as the set of vertices that are reachable from $v$ via directed paths of length at most $10 \log n$ in the queried subgraph. We denote $v$'s shallow subgraph by $T(v)$.
\end{definition}

\begin{lemma}\label{lem:belong-shallow-count-vertex}
Each vertex belongs to at most $\widetilde{O}(1)$ shallow subgraphs with high probability.
\end{lemma}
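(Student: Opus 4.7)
The plan is to bound, for each fixed vertex $v$, the size of $A(v) := \{u : v \in T(u)\}$, which is exactly the set of vertices that reach $v$ via a directed path of length at most $10\log n$ in the queried subgraph. Establishing $|A(v)| = \widetilde{O}(1)$ for every fixed $v$ with high probability, combined with a union bound over the $n$ choices of $v$, gives the lemma.

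My approach is a backward BFS from $v$, quantitatively controlled using the sparsity of the queried subgraph. By \Cref{lem:total-edges-discovered}, the queried subgraph has only $m = O(n^{1-\delta+\sigma_L})$ edges, hence average degree $o(1)$. The main quantitative step is to show that directed paths are rare beyond constant depth: using the independence of pseudo-edge assignments across distinct ground edges from \Cref{obs:non-pseudo-same-prob} and the fact that the pseudo-edge density satisfies $\rho = n^{\sigma_L - 1}$, the expected number of directed length-$k$ paths in the queried subgraph (summed over all endpoints) is at most $\approx m^k / n^{k-1} = n^{1 - k(\delta - \sigma_L)}$. Since $\sigma_L < \delta$, this expectation is $o(1)$ once $k$ exceeds a constant $K = O(1/(\delta - \sigma_L))$. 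Applying Markov's inequality, with high probability no directed path of length greater than $K$ exists anywhere in the queried subgraph.

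Conditioning on this event, the backward BFS from any $v$ terminates within $K$ layers, and by the in-degree bound of \Cref{clm:max-in-deg-top} each layer multiplies the frontier size by at most $3\sqrt{\log n}$. Therefore $|A(v)| \le \sum_{i=0}^{K} (3\sqrt{\log n})^i = (\log n)^{O(K)} = \widetilde{O}(1)$, since $K$ is a constant independent of $n$. A final union bound over all vertices $v$ yields the high-probability statement for every vertex simultaneously.

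The main obstacle is making the path-counting estimate rigorous against the algorithm's adaptive queries: the queried subgraph is not a uniformly random sparse graph, and in principle the algorithm could attempt to bias its queries so as to create long chains terminating at a specific vertex. I would handle this by exposing the pseudo-edge randomness up front and observing that, since the algorithm's queries cannot influence which ground edges are pseudo edges, the final queried subgraph is a (deterministic-given-randomness) subgraph of a sparse random pseudo graph. The first-moment path estimate in this pseudo graph upper bounds the queried-subgraph path count, and the gap $\delta - \sigma_L > 0$ provides enough slack to overcome any adaptive advantage.
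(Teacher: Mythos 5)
The paper's proof and your proposal are genuinely different, and your proposal has a gap.

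\textbf{The gap.} Your argument hinges on the claim that, with high probability, no directed path of length greater than a constant $K = O(1/(\delta - \sigma_L))$ exists in the queried subgraph, and you support this with a first-moment estimate $\approx m^k/n^{k-1}$ for the expected number of length-$k$ directed paths. This estimate would be valid if the queried subgraph were a uniformly random graph with $m$ edges, but it is not: the algorithm queries adaptively, and it can deliberately construct a single directed path of length $\Theta(n^{1-\delta+\sigma_L}) \gg \log n$. Concretely, the algorithm can find some initial edge, then repeatedly pick the current endpoint $v_i$ and query $(v_i, u)$ for fresh random $u$ until a pseudo/real edge appears; each extension costs $\approx 1/\rho = n^{1-\sigma_L}$ queries, so within the $n^{2-\delta}$ query budget it grows a chain of length $\approx n^{1-\delta+\sigma_L}$. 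Because each new endpoint is a fresh vertex (so it has no edges yet when queried), \Cref{def:directing} orients every one of these edges forward. Your proposed repair — upper-bounding queried-subgraph paths by paths in the full pseudo graph — also fails quantitatively: the pseudo graph has density $\approx \rho = n^{\sigma_L - 1}$ and hence $\approx n^{1+\sigma_L}$ edges, so the expected number of length-$k$ paths in it is $\Theta(n^{k+1}\rho^k) = \Theta(n^{1 + k\sigma_L})$, which grows with $k$ rather than vanishing; the correct first-moment exponent should use the pseudo density, not $m$.

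\textbf{What the paper does instead.} The paper never bounds path lengths; it bounds the in-growth per layer. Letting $V_i$ be the vertices at backward distance exactly $i$ from $v$, it proves the additive recurrence $|V_i| \le |V_{i-1}| + 3\sqrt{\log n}$, not the multiplicative one you would get from a raw in-degree bound. The structural observation driving this is that a vertex $u$ can receive at most one in-edge from a vertex that was \emph{not} already in $\hat V$ (the set of non-singleton vertices): all further in-edges of $u$ must originate in $\hat V$ by \Cref{def:directing}, and the total number of queried pseudo/real edges between $\hat V$ and the (small, $\widetilde{O}(1)$-sized) set $V_{i-1}$ is at most $\widetilde{O}(1)$ with high probability via Chernoff, since the expected number is $\widetilde{O}(|\hat V| \cdot |V_{i-1}| \cdot \rho) = \widetilde{O}(n^{2\sigma_L - \delta}) = o(1)$. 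Summing the additive increments over the $10\log n$ layers that define the shallow subgraph gives $\widetilde{O}(1)$. This argument is robust to the long-path attack you'd need to exclude, because a long path contributes only one new vertex per layer. If you want to keep your approach, you would need to replace the false "no long paths" claim with a per-layer argument of this kind, or find some other way to neutralize the adaptive path-extension strategy.
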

\begin{proof}
Let $v$ be an arbitrary vertex in the graph. Let $V_i$ be the set of vertices at a distance $i$ from $v$ for $i \in [10 \log n]$ using edges in the reverse direction. We will show that, with high probability, $|V_i| \leq 3i\sqrt{\log n}$ using induction.

For the base case of the induction, $i = 1$, the claim holds by \Cref{clm:max-in-deg-top}. Suppose the claim holds for all $i' < i$. Suppose that each vertex in $V_{i-1}$ has at least one incoming edge. Note that this only increases the size of $V_i$. Let $u \in V_{i-1}$. Also, let $\hat{V}$ be the set of vertices that the algorithm has found at least one edge in the queried subgraph. If $u$ has more than one incoming edge, it should be between a vertex that already has an edge because of the way we defined the direction of edges in \Cref{def:directing}. Therefore, there are $|\hat{V}|$ possible pairs between $\hat{V}$ and $u$, each containing $\rho n$ ground edges, and each being a pseudo or real edge with probability at most $1/n$. Thus, the expected number of edges between $\hat{V}$ and $u$ is $|\hat{V}|\cdot \rho = O(n^{2\sigma_L - \delta})$,
since $|\hat{V}| = O(n^{1-\delta+\sigma_L})$ by \Cref{lem:total-edges-discovered}.  Additionally, $|V_{i-1}| \leq 3(i - 1)\sqrt{\log n}$ by induction hypothesis. Hence, the expected number of edges between $\hat{V}$ and $V_{i-1}$ is $\wt{O}(n^{2\sigma_L - \delta})$.

Let $X_i$ be the indicator random variable for the event that the $i$-th ground edge between $\hat{V}$ and $V_{i-1}$ is a pseudo or real edge, and let $X = \sum X_i$ denote the total number of such edges. We have $E[X] = \wt{O}(n^{2\sigma_L - \delta}) < 1$ for large enough $n$. Furthermore, the random variables $X_i$ are independent. Applying the Chernoff bound, we obtain:
\begin{align*}
\Pr\left[ |X - E[X]| \geq 3 \sqrt{E[X] \log n} \right] \leq 2 \exp \left( -\frac{(3 \sqrt{E[X] \log n})^2}{3 E[X]} \right) < \frac{1}{n^2}.
\end{align*}
Thus, with probability at least $1 - 1/n^2$, the total number of incoming pseudo or real edges to $v$ is $3\sqrt{\log n}$. Moreover, we assume that each vertex in $V_{i-1}$ contains at least one incoming edge. Hence, we have $|V_i| \leq |V_{i-1}| + 3\sqrt{\log n} = \leq 3i\sqrt{\log n}$ which completes the induction step. Therefore, we have $|V_i| \leq 3i\sqrt{\log n}$ for all $i \leq 10 \log n$. As a result, the total number of shallow subgraphs that contain $v$ is bounded by
\begin{align*}
    1 + \sum_{i=1}^{10\log n} |V_i| \leq 1 + \sum_{i=1}^{10\log n}3i\sqrt{\log n} \leq (10 \log n)\cdot (30 \log n \sqrt{\log n}),
\end{align*}
which completes the proof.
\end{proof}

\begin{corollary}\label{cor:belong-shallow-count-edge}
    Each edge belongs to at most $\widetilde{O}(1)$ shallow subgraphs with high probability.
\end{corollary}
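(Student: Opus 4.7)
The plan is to deduce this corollary as a nearly immediate consequence of Lemma \ref{lem:belong-shallow-count-vertex}. The key observation is that an edge $e = (u, w)$ can belong to a shallow subgraph $T(v)$ only if both of its endpoints lie in $T(v)$: by definition $T(v)$ is a set of vertices reachable from $v$ via a directed path of length at most $10 \log n$, and if the edge $e$ appears on such a reachability path from $v$, then in particular both $u$ and $w$ must themselves be reachable from $v$ within distance $10 \log n$. Hence the number of shallow subgraphs containing $e$ is bounded above by the number of shallow subgraphs containing either endpoint of $e$.

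Concretely, I would first condition on the high-probability event of Lemma \ref{lem:belong-shallow-count-vertex}, which states that every vertex of the queried subgraph belongs to at most $\widetilde{O}(1)$ shallow subgraphs. Under this event, for any edge $e = (u, w)$, the set of shallow subgraphs containing $e$ is a subset of the set of shallow subgraphs containing $u$ (say), whose cardinality is $\widetilde{O}(1)$. Since this conditioning already holds with high probability over the randomness of the construction, no additional union bound or probabilistic argument is required, and the conclusion follows.

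The only subtlety to nail down will be formalizing what it means for an edge to ``belong'' to a shallow subgraph, which is currently phrased in terms of a vertex set. Since the shallow subgraph is generated by directed reachability from $v$, the natural convention---that $e$ belongs to $T(v)$ iff $e$ appears on some directed path from $v$ of length at most $10 \log n$, equivalently iff both endpoints of $e$ lie in $T(v)$ and $e$ is a queried edge of the induced structure---makes the implication ``$e \in T(v)$ implies each endpoint of $e$ lies in $T(v)$'' immediate. With that convention in place there is no main obstacle; the corollary is a one-line deduction from the vertex-level bound already proven.
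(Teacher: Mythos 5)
Your proposal is correct and takes essentially the same approach as the paper: the paper's proof also reduces the edge-level bound to the vertex-level bound of \Cref{lem:belong-shallow-count-vertex} by observing that a directed edge $(u,w)$ can only lie in $T(v)$ when its tail $u$ does, hence the edge inherits the $\widetilde{O}(1)$ bound from $u$.
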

\begin{proof}
    For directed edge $(u,v)$, since $u$ is in at most $\wt{O}(1)$ shallow subgraphs by \Cref{lem:belong-shallow-count-vertex}, then $(u,v)$ is in at most $\wt{O}(1)$ shallow subgraphs.
\end{proof}

\begin{definition}[Spoiler Vertex]\label{def:spoiler-vertex}
A vertex $u$ is called a spoiler vertex if an edge $(u, v)$ is discovered by the algorithm at a time when both $u$ and $v$ already have a non-zero degree.
\end{definition}

\begin{claim}\label{clm:spoiler-count}
    There are at most $O(n^{1-2\delta + 3\sigma_L})$ spoiler vertices with high probability.
\end{claim}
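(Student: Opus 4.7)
My plan is to bound the number of spoiler \emph{edges}, and then use the fact that every spoiler vertex has at least one incident spoiler edge to conclude that the number of spoiler vertices is at most twice the number of spoiler edges. Following the style of \Cref{clm:max-in-deg-top} and \Cref{lem:belong-shallow-count-vertex}, let $\hat{V}$ denote the set of vertices that are incident to at least one discovered edge. By \Cref{lem:total-edges-discovered}, the algorithm discovers at most $O(n^{1-\delta+\sigma_L})$ edges with high probability, so $|\hat{V}| = O(n^{1-\delta+\sigma_L})$ with high probability as well, and I condition on this event throughout.

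The key steps are as follows. First, observe that by definition any spoiler edge has both endpoints in $\hat{V}$, so in particular each spoiler edge is a pseudo edge between a pair of vertices in $\hat{V}$. Second, the total number of ground edges between pairs in $\hat{V}$ is at most $\binom{|\hat{V}|}{2} \cdot \rho n = O(n^{2-2\delta+3\sigma_L})$, since each pair has exactly $\rho n$ parallel ground edges and $\rho = n^{\sigma_L - 1}$. Third, each ground edge is independently a pseudo edge with probability $1/n$ by construction, so the expected number of pseudo edges among pairs in $\hat{V}$ is at most $\binom{|\hat{V}|}{2}\rho = O(n^{1-2\delta+3\sigma_L})$. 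Fourth, letting $X_i$ be the indicator that the $i$-th ground edge between pairs in $\hat V$ is a pseudo edge, the $X_i$ are independent Bernoulli$(1/n)$ variables, and applying the Chernoff bound \Cref{prop:chernoff} to $X = \sum_i X_i$ yields $X = O(n^{1-2\delta+3\sigma_L})$ with high probability. Since every spoiler edge contributes to $X$ and every spoiler vertex is an endpoint of some spoiler edge, the number of spoiler vertices is at most $2X = O(n^{1-2\delta+3\sigma_L})$, as required.

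The main obstacle is arguing rigorously that the Chernoff concentration applies even though $\hat{V}$ is itself a random set determined adaptively by the algorithm's responses. This will be handled in the same way as the analogous arguments in \Cref{clm:max-in-deg-top} and \Cref{lem:belong-shallow-count-vertex}: once we condition on the good event of \Cref{lem:total-edges-discovered} we obtain a uniform upper bound on $|\hat{V}|$ throughout the execution, and since the pseudo edges are determined independently across all pairs (each ground edge being a pseudo edge independently of all others with probability $1/n$), the concentration bound is valid regardless of how the algorithm chooses its queries adaptively. If a more formal treatment of the adaptivity is needed, one can equivalently process the queries in order and track $S = \sum_t Y_t X_t$ where $Y_t$ indicates that query $q_t$ is a spoiler query and $X_t$ indicates that it reveals an edge, observing that $Y_t$ is measurable with respect to the history and $\Pr[X_t = 1 \mid \mathcal{F}_{t-1}] \leq \rho$ by pseudo-edge independence, and then applying a Freedman-type martingale concentration with predictable quadratic variation at most $\rho \cdot \binom{|\hat V|}{2} = O(n^{1-2\delta+3\sigma_L})$.
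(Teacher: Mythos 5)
Your proposal follows essentially the same approach as the paper's proof: both bound the number of pairs of non-isolated vertices by $O(n^{2-2\delta+2\sigma_L})$ via \Cref{lem:total-edges-discovered}, multiply by $\rho n$ ground edges per pair and the $1/n$ pseudo-edge probability to get an expectation of $O(n^{1-2\delta+3\sigma_L})$, and then apply Chernoff. Your more explicit handling of the adaptivity via a martingale argument is a reasonable elaboration of what the paper leaves implicit, but it is not a different route.
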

\begin{proof}
    Note that each query between two vertices that already have non-zero degree in the queried subgraph, resulting in a pseudo or real edge, creates two spoiler vertices. By \Cref{lem:total-edges-discovered}, there are at most $O(n^{1 - \delta + \sigma_L})$ vertices with non-zero degree. Therefore, the total number of such vertex pairs is at most $O(n^{2 - 2\delta + 2\sigma_L})$. If the algorithm queries all these pairs, the total number of ground edges queried is at most $O(\rho n^{3 - 2\delta + 2\sigma_L})$, where each forms a pseudo or real edge with probability at most $1/n$. Consequently, the expected number of edges discovered by the algorithm between pairs of vertices with a non-zero degree is at most $O(\rho n^{2 - 2\delta + 2\sigma_L}) = O(n^{1 - 2\delta + 3\sigma_L})$. Therefore, using Chernoff bound, we can show that with high probability, there are at most $O(n^{1 - 2\delta + 3\sigma_L})$ spoiler vertices.
\end{proof}

\begin{definition}[Spoiled Vertex]\label{def:spoiled-vertex}
A vertex $v$ is called a spoiled vertex if its shallow subgraph contains any of the following:
\begin{itemize}
    \item[(i)] a spoiler vertex; or
    \item[(ii)] at least $n^{\delta - 2\sigma_L}$ vertices.
\end{itemize}
\end{definition}

\begin{observation}\label{obs:tree-structure-of-unspoiled}
    Let $v$ be a vertex that is not spoiled. Then, the shallow subgraph of $v$ forms a rooted tree with at most $n^{\delta - 2\sigma_L}$ vertices. Additionally, for every edge $(u,w)$ in the shallow subgraph of $v$, when the algorithm queries this edge, vertex $w$ is a singleton.
\end{observation}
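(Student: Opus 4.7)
The observation bundles three claims about $T(v)$: (a) the size bound $|T(v)| < n^{\delta-2\sigma_L}$, (b) the rooted tree structure, and (c) the singleton property at query time for every edge of $T(v)$. I would prove them in this order, since (a) is immediate, (b) follows from the absence of spoiler vertices inside $T(v)$, and (c) is a refinement of the same argument using the direction rule for edges.

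Claim (a) is immediate from the contrapositive of \Cref{def:spoiled-vertex}(ii): if $|T(v)| \geq n^{\delta-2\sigma_L}$, then $v$ would be spoiled, contradicting the hypothesis.

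For (b), I would show that the induced subgraph on $T(v)$ is acyclic; the connectedness through $v$ via directed paths is built into the definition of the shallow subgraph, so acyclicity upgrades the structure to a rooted tree. Suppose towards contradiction there is an undirected cycle $w_1 - w_2 - \cdots - w_k - w_1$ inside $T(v)$, and let $(w_i, w_{i+1})$ denote the cycle edge the algorithm queries last. At that moment the other $k - 1$ cycle edges have already been discovered, so both $w_i$ and $w_{i+1}$ are incident to previously discovered edges; by \Cref{def:spoiler-vertex} both become spoiler vertices, and since they both lie in $T(v)$ this contradicts \Cref{def:spoiled-vertex}(i). I would also note that length-$2$ ``cycles'' coming from two parallel pseudo edges between the same pair are ruled out by \Cref{lem:cant-tell-multigrpah}.

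For (c), I would orient any edge $(u,w) \in T(v)$ so that $u$ lies on a directed $v \to w$ path, and split according to the three cases of \Cref{def:directing} at the time the edge is queried. If $u$ already had discovered edges and $w$ did not, then $w$ was a singleton by hypothesis. If both $u$ and $w$ already had discovered edges, then both become spoilers by \Cref{def:spoiler-vertex}, once more contradicting \Cref{def:spoiled-vertex}(i). If neither had any discovered edges, $w$ was a singleton trivially, regardless of which random orientation was assigned. Hence in every non-contradictory case $w$ is a singleton at query time. I do not anticipate any deeper obstacle; the entire observation is essentially a bookkeeping exercise on \Cref{def:directing}, \Cref{def:spoiler-vertex}, and \Cref{def:spoiled-vertex}. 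The only step that needs a second glance is the third direction case, where the random tie-breaking is innocuous precisely because the singleton conclusion holds automatically before any orientation is assigned.
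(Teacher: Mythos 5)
Your proof is correct and takes essentially the same approach as the paper, which records the argument much more tersely: the paper simply asserts that $w$ must be a singleton at query time by the spoiler/spoiled definitions and that this forces the rooted-tree structure. Your write-up makes the underlying cycle argument and the direction-rule case analysis explicit (and the aside about parallel pseudo edges being ruled out is a reasonable extra detail the paper leaves implicit), but it is the same reasoning.
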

\begin{proof}
    When the algorithm encounters an edge $(u,w)$ in the shallow subgraph of $v$, vertex $w$ must be a singleton, as per \Cref{def:spoiler-vertex} and \Cref{def:spoiled-vertex}. This ensures that the shallow subgraph of $v$ is indeed a rooted tree.
\end{proof}

\begin{lemma}\label{lem:spoiled-count}
    There are at most $O(n^{1-2\delta + 4\sigma_L})$ spoiled vertices with high probability.
\end{lemma}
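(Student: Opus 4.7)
The plan is to bound the two sources of spoiling separately and then add the bounds.
For condition~(i), a vertex $v$ is spoiled precisely when its shallow subgraph $T(v)$ intersects the set $\mathcal{S}$ of spoiler vertices. A double count then gives that the number of such $v$ is at most
\[
\sum_{u \in \mathcal{S}} \card{\{v : u \in T(v)\}}.
\]
\Cref{lem:belong-shallow-count-vertex} bounds each inner cardinality by $\wt{O}(1)$, and \Cref{clm:spoiler-count} bounds $\card{\mathcal{S}}$ by $O(n^{1 - 2\delta + 3\sigma_L})$, yielding a total of $\wt{O}(n^{1 - 2\delta + 3\sigma_L})$ vertices spoiled by~(i).

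For condition~(ii), first note that $n^{\delta - 2\sigma_L} \geq 2$ for large $n$ (since $2\sigma_L < \delta$), so any vertex with $\card{T(v)} \geq n^{\delta - 2\sigma_L}$ must have at least one discovered edge. Let $E$ be the set of vertices with at least one discovered edge; by \Cref{lem:total-edges-discovered}, $\card{E} = O(n^{1 - \delta + \sigma_L})$. The plan is to double count $\sum_{v \in E} \card{T(v)}$ as $\sum_{w} \card{\{v \in E : w \in T(v)\}}$. The crucial observation is that if $v \in E$ and $w \in T(v)$, then either $w = v$ or $w$ has an incoming discovered edge (because it is reachable from $v$ along a directed path in the queried subgraph); in either case $w \in E$. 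Thus the outer sum can be restricted to $w \in E$, and each inner term is $\wt{O}(1)$ by \Cref{lem:belong-shallow-count-vertex}, giving $\sum_{v \in E} \card{T(v)} \leq \wt{O}(\card{E}) = \wt{O}(n^{1 - \delta + \sigma_L})$. Markov's inequality then bounds the number of $v$ with $\card{T(v)} \geq n^{\delta - 2\sigma_L}$ by
\[
\frac{\wt{O}(n^{1 - \delta + \sigma_L})}{n^{\delta - 2\sigma_L}} = \wt{O}(n^{1 - 2\delta + 3\sigma_L}).
\]

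Combining the two cases gives at most $\wt{O}(n^{1 - 2\delta + 3\sigma_L})$ spoiled vertices, and the extra $n^{\sigma_L}$ factor in the lemma statement comfortably absorbs the hidden polylogarithmic factors. The main subtlety will be sharpening the bound for condition~(ii): the naive double count $\sum_v \card{T(v)} \leq n \cdot \wt{O}(1) = \wt{O}(n)$ only yields $\wt{O}(n^{1 - \delta + 2\sigma_L})$ spoiled vertices, which is strictly larger than $n^{1 - 2\delta + 4\sigma_L}$ since $2\sigma_L < \delta$. The tightening comes from noting that a large shallow subgraph forces $v \in E$, and that every $w$ appearing in the double count must then also lie in $E$, which shrinks the bound by a factor of $\card{E}/n$.
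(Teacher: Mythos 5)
Your proof is correct and follows essentially the same outline as the paper's: for condition (i) you combine \Cref{clm:spoiler-count} with \Cref{lem:belong-shallow-count-vertex}, and for condition (ii) you bound $\sum_v |T(v)| \le \wt{O}(n^{1-\delta+\sigma_L})$ and apply Markov. The only minor difference is that for (ii) the paper invokes \Cref{cor:belong-shallow-count-edge} (each edge lies in $\wt{O}(1)$ shallow subgraphs) to bound the sum, whereas you go directly through \Cref{lem:belong-shallow-count-vertex}, making explicit the observation that both $v$ and all $w \in T(v)$ must be non-isolated so the double-count ranges only over the $O(n^{1-\delta+\sigma_L})$ discovered vertices; both routes yield the same bound.
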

\begin{proof}
    According to \Cref{def:spoiled-vertex}, at least one of the two conditions is required for a vertex to be spoiled. For condition (i), by \Cref{clm:spoiler-count}, there are $O(n^{1-2\delta + 3\sigma_L})$ spoiler vertices. Moreover, each vertex is in $\wt{O}(1)$ shallow subgraphs by \Cref{lem:belong-shallow-count-vertex}, which implies that there are at most $O(n^{1-2\delta + 4\sigma_L})$ vertices that have the condition (i) for large enough $n$.

    For condition (ii), by \Cref{cor:belong-shallow-count-edge}, each edge in the queried subgraph appears in $\wt{O}(1)$ shallow subgraphs. Thus, for non-isolated vertices in the queried subgraph, by \Cref{lem:total-edges-discovered}, we have $\sum_v |T(v)| \leq \wt{O}(n^{1 - \delta + \sigma_L})$. Consequently, the total number of vertices whose shallow subgraph contains more than $n^{\delta - 2\sigma_L}$ vertices is at most $O(n^{1 - 2\delta + 4\sigma_L})$.
\end{proof}

\begin{definition}[Spoiled Edge]\label{def:spoiled-edge}
    Let $(u,v)$ be a directed edge. We say $(u,v)$ is a spoiled edge if $u \in A_r$ and at least one of the following condition holds:
    \begin{itemize}
        \item[(i)] $v$ is a spoiled vertex; or
        \item[(ii)] $u$ has at least $n^{\sigma_L}/3$ spoiled neighbors in the queried subgraph.
    \end{itemize}
\end{definition}

\begin{lemma}\label{lem:spoiled-edge-count}
    There are at most $O(n^{1-2\delta + 5\sigma_L})$ spoiled edges with high probability.
\end{lemma}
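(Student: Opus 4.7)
My plan is to bound the two contributions from \Cref{def:spoiled-edge} separately, using the ``Chernoff-plus-union-bound'' style of the surrounding proofs.

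For case (i), every spoiled edge is a directed edge $(u,v)$ incoming to a spoiled vertex. I would combine \Cref{clm:max-in-deg-top} (which caps the in-degree of every vertex at $3\sqrt{\log n}$ w.h.p.) with \Cref{lem:spoiled-count} (which bounds the number of spoiled vertices by $O(n^{1-2\delta+4\sigma_L})$ w.h.p.). Summing the in-degree bound over all spoiled vertices immediately gives at most $\wt{O}(n^{1-2\delta+4\sigma_L})$ case-(i) edges, comfortably within the target.

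For case (ii), let $U$ be the set of $u \in A_r$ with at least $n^{\sigma_L}/3$ spoiled neighbors in the queried subgraph. I would proceed in two steps. First, I would establish that every vertex in the queried subgraph has degree at most $O(n^{\sigma_L})$ w.h.p.: the pseudo degree of any fixed $v$ in the ambient multigraph is $\Binom((n-1)\rho n,\, 1/n)$ with mean $(n-1)\rho = \Theta(n^{\sigma_L}) \ge \log n$, so a Chernoff bound plus a union bound over $v$ yields the claim, and the queried subgraph is a subgraph of the pseudo multigraph. Second, letting $S$ be the set of spoiled vertices, a double counting gives
\[
|U|\cdot \tfrac{n^{\sigma_L}}{3} \,\le\, \sum_{v \in S}\deg(v) \,\le\, |S|\cdot O(n^{\sigma_L}) \,=\, O(n^{1-2\delta+5\sigma_L}),
\]
so $|U| = O(n^{1-2\delta+4\sigma_L})$. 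Then the number of case-(ii) edges is at most $\sum_{u \in U}\deg(u) \le |U|\cdot O(n^{\sigma_L}) = O(n^{1-2\delta+5\sigma_L})$, and summing the two cases finishes the proof.

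The main obstacle I expect is the maximum-degree bound: because the algorithm is adaptive, it is awkward to argue about degrees in the queried subgraph directly. The cleanest workaround is to prove the stronger, purely structural statement that every vertex has $O(n^{\sigma_L})$ pseudo edges in the full multigraph, and then to note that subgraphs can only shrink degrees. Everything else is routine once this bound is in hand, since the preceding claims already supply both the spoiled-vertex count and the in-degree bound needed to close case (i) and to run the double counting in case (ii).
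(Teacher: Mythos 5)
Your proof is correct, and its overall shape matches the paper's: bound the case-(i) edges by pairing the indegree bound from \Cref{clm:max-in-deg-top} with the spoiled-vertex count from \Cref{lem:spoiled-count}, then bound the case-(ii) edges by bounding the set $U$ of vertices with many spoiled neighbors and multiplying by a maximum degree. There is one genuinely useful difference and one small inefficiency worth flagging.

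The useful difference is that you explicitly supply the maximum-degree bound $O(n^{\sigma_L})$ for the queried subgraph, proving it structurally via the $\Binom((n-1)\rho n, 1/n)$ pseudo-degree of each vertex in the ambient multigraph and then observing that the queried subgraph can only shrink degrees. The paper's proof of case (ii) needs exactly this bound to convert its count of vertices satisfying (ii) into a count of edges, but never states or proves it; you correctly identified the gap and your workaround is the right one, since the degree of the full pseudo multigraph is independent of the algorithm's adaptive choices.

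The small inefficiency is in your double-counting. To bound $|U|$ you use $\sum_{v\in S}\deg(v)\le |S|\cdot O(n^{\sigma_L})$, whereas the paper's route is to note that each spoiled vertex has \emph{indegree} only $\wt O(1)$ by \Cref{clm:max-in-deg-top}, and that a vertex $u$ satisfying (ii) must have $\gtrsim n^{\sigma_L}$ \emph{outgoing} edges to spoiled vertices (since $u$ itself has indegree $\wt O(1)$). That counts only the edges into spoiled vertices and gives $|U|\le\wt O(n^{1-2\delta+3\sigma_L})$, a factor $\approx n^{\sigma_L}$ better than your $O(n^{1-2\delta+4\sigma_L})$. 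Your looser bound still closes the lemma after multiplying by $O(n^{\sigma_L})$, so the proof goes through; the paper's bookkeeping just keeps a bit more slack in reserve.
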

\begin{proof}
    Each vertex in the queried subgraph has an indegree of $\wt{O}(1)$ with high probability by \Cref{clm:max-in-deg-top}. Also, by \Cref{lem:spoiled-count}, we have at most $O(n^{1-2\delta + 4\sigma_L})$ spoiled vertices. Therefore, there are at most $O(n^{1-2\delta + 5\sigma_L})$ edges that satisfy condition (i) of \Cref{def:spoiled-edge}.

    Further, if a vertex $u$ satisfies condition (ii), it must have at least $n^{\sigma L}/4$ outgoing edges $(u, w)$ where $w$ is spoiled.
    However, from \Cref{clm:max-in-deg-top}, each spoiled vertex has indegree of $\wt{O}(1)$. Also, the total number of spoiled vertices is $O(n^{1 - 2\delta + 4\sigma L})$ which implies that the number of edges that satisfy condition (ii) of \Cref{def:spoiled-edge} is at most $O(n^{1 - 2\delta + 5\sigma L})$.
\end{proof}

We defer the proof of the following lemma to a later section, as it is involved, lengthy, and mostly independent of the flow of this section.

\begin{lemma}\label{lem:coupling-high-level}
Let $v$ be a vertex that is not spoiled and belongs to $\{A_r, B_r, D_r\}$. Let $\mc{L}(v)$ and $\mc{L}'(v)$ represent an arbitrary label for $v$ from $\{A_r, B_r, D_r\}$ and the entire queried subgraph all available labels at level $L$, excluding the shallow subgraph of $v$ and isolated vertices. Then, we have:  
\begin{align*}
\Pr[T(v) \mid \mc{L}(v)] \leq \left(1 + O(n^{\sigma_L - \delta})\right)^{|T(v)|} \cdot \Pr[T(v) \mid \mc{L}'(v)].
\end{align*}
\end{lemma}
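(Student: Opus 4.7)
The plan is to exploit the tree structure of $T(v)$ and bound the probability ratio between the two labelings edge-by-edge, using the equal-expected-degree property at level $L$ for non-special labels.

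First, I would invoke \Cref{obs:tree-structure-of-unspoiled}: since $v$ is not spoiled, $T(v)$ is a rooted tree on at most $n^{\delta - 2\sigma_L}$ vertices, and for every tree edge $(u, w)$ with $u$ the parent, $w$ was still a singleton when the algorithm queried $(u, w)$. Moreover, by \Cref{obs:non-pseudo-same-prob}, any queried pair whose outcome is ``no pseudo edge'' contributes a label-independent factor (it depends only on the $\Binom(\rho n, 1/n)$ law of pseudo edges), so such queries cancel in the ratio. Hence it suffices to control the contribution of the at most $|T(v)|-1$ discovered edges.

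Second, I would condition on a full labeling $\sigma$ of $T(v)$ extending $\mc{L}(v)$. By \Cref{clm:single-edge-dist}, the outcomes on different pairs are conditionally independent given $\sigma$, so
$$
\Pr[T(v) \mid \mc{L}(v)] \;=\; \sum_{\sigma} \Pr[\sigma \mid \mc{L}(v)] \prod_{(u,w) \in T(v)} \Pr[\text{observed type} \mid \sigma(u), \sigma(w)],
$$
and the analogous identity holds for $\mc{L}'(v)$. Because $|T(v)| \ll N_L$, conditioning on labels inside $T(v)$ perturbs the prior on any single label by only an additive $O(|T(v)|/N_L)$, which will be absorbed into the $n^{\sigma_L - \delta}$ slack.

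Third, I would process the tree edges in top-down (BFS) order from $v$ and control each edge's contribution to the ratio. The crucial algebraic input is the equal-expected-degree identity at level $L$: $d_{\Phi^L}(u) = d_L + r\gamma d_L$ for \emph{every} non-special label of $u$. Combined with \Cref{clm:single-edge-dist}, this implies that once the child label $l_w$ is marginalized out according to its (essentially uniform) prior, the marginal probability of a pseudo-only or a real edge from $u$ of the observed type depends only on whether $u$ is special, not on which non-special label among $\{A_r, B_r, D_r\}$ it carries. The $1 + O(n^{\sigma_L - \delta})$ slack per edge then absorbs (i) the small perturbation to the prior on $l_w$ caused by previously processed tree edges, and (ii) the contribution of lower-level recursive gadgets, whose relative weight is at most $O(n^{\sigma_{L-1}-\sigma_L})$ compared to a level-$L$ gadget. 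Multiplying across the $|T(v)|$ edges yields $(1 + O(n^{\sigma_L - \delta}))^{|T(v)|}$, as required.

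The main obstacle lies in the third step: carefully maintaining the conditional joint label distribution on the interior of $T(v)$ as edges are revealed, and verifying that the recursively defined lower-level gadgets—whose structure itself changes with the relabeling—cannot accumulate a larger bias than the level-$L$ balanced-degree identity can cover. The key reasons this goes through are that the tree is small ($|T(v)| \leq n^{\delta - 2\sigma_L}$) so the prior essentially does not shift, and the lower-level contributions are suppressed by the ratio $\rho_{L-1}/\rho_L = n^{\sigma_{L-1}-\sigma_L}$ chosen precisely so that they remain within the error budget.
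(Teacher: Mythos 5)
There is a genuine gap in the third step, which you flag as the main obstacle but do not actually resolve. The equal-expected-degree identity $d_{\Phi^L}(u) = d_L + r\gamma d_L$ gives total degree parity across non-special labels, but it does not give you that the per-step marginal probability of the observed edge type is label-invariant, because the labels of the subtree below a child $w$ are correlated with $w$'s label through the recursive gadget structure, and different labels in $\{A_r, B_r, D_r\}$ route their edges to structurally different targets (e.g.\ $A_r^1$ connects to $A_r^2$ in $\yesdist$ but not in $\nodist$; $D$-vertices connect to each other via dense gadgets). ``Marginalizing out the child label'' locally therefore does not close the ratio.

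The paper's actual mechanism is a top-down \emph{coupling} of labelings (Lemma~\ref{lem:same-tree-different-labels}) that maintains an explicit invariant: the \emph{progress} $p_u$ (the number of special crossings on the root-to-$u$ path) is preserved between the two labelings at every vertex. The coupling then case-splits each tree edge into five types — special crossing, mixer vertex via real edge, mixer vertex via pseudo edge, non-special real edge within $G_{r - p_u}$, non-special pseudo edge within $G_{r - p_u}$ — and uses the finer layer-wise degree identities (Corollaries~\ref{cor:same-deg-dist-l} and~\ref{cor:same-deg-dist-pseudo}), not just the global one, to show each case couples. Crucially, the coupling is only well-defined if no root-to-vertex path has $r-1$ special crossings without a mixer vertex, which is a separate high-probability event (Lemma~\ref{lem:mixer-vertex-in-tree}) whose failure probability $\wt{O}(|T(v)|/d^{r-1}) = O(n^{-1})$ must be folded into the error budget. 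Your proposal mentions neither special crossings, mixer vertices, the progress invariant, nor the mixer-vertex failure event, so the asserted per-edge $(1 + O(n^{\sigma_L - \delta}))$ factor is not justified. The $O(n^{1-\delta+\sigma_L}/n)$ prior-shift accounting you mention is indeed part of the paper's bookkeeping, but it is only one of three multiplicative contributions (coupling slack, conditioning slack, mixer-vertex failure), and by itself it does not reach the stated inequality.
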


\begin{lemma}\label{lem:non-spoiled-edge-non-bias}
    Let $e$ be a directed edge that is not spoiled, then it holds that $p_e^{\text{inner}} \leq 10 n^{\sigma_{L - 1} - \sigma_L}$.
\end{lemma}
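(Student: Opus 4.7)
The plan is to upper bound $p_e^{inner}$ by decomposing the target event into (a) the labels of the endpoints of $e=(u,v)$ landing in opposite sides of $A_r$, and (b) the edge being marked at level $L-1$ (rather than only at level $L$). Letting $Q$ denote everything the algorithm has observed so far, we can write
\begin{align*}
p_e^{inner} = \sum_{j\in\{1,2\}} \Pr[u\in A_r^{j},\,v\in A_r^{3-j}\mid Q]\cdot \Pr\bigl[e\text{ is level $L-1$}\,\big|\,u\in A_r^j,\,v\in A_r^{3-j},\, Q\bigr].
\end{align*}

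The second factor is controlled purely by the level structure of \cref{sec:multigraph-cons}. Given labels $u\in A_r^j$, $v\in A_r^{3-j}$, the only level-$L$ gadget between $u$ and $v$ is the level-$(L-1)$ recursive one, and \cref{clm:pseudo-prob} says that any pseudo edge is level-$\le L-1$ with probability $\rho_{L-1}/\rho_L = n^{\sigma_{L-1}-\sigma_L}$, independently of everything else in the graph. For real edges, \cref{clm:single-edge-dist} yields an analogous bound: the expected number of level-$(L-1)$ labelled-ground edges between $u$ and $v$ is a factor $O(d_{L-1}/d_L)=O(n^{\sigma_{L-1}-\sigma_L})$ smaller than the level-$L$ ones, and the pseudo-edge filter is applied at the same level.

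For the first factor, I would invoke the coupling \cref{lem:coupling-high-level}. Since $e$ is not spoiled, $v$ is not a spoiled vertex, so by \cref{obs:tree-structure-of-unspoiled} the shallow subgraph $T(v)$ is a rooted tree of size at most $n^{\delta-2\sigma_L}$. Applying the coupling lemma to each candidate label of $v$ in $\{A_r,B_r,D_r\}$ and inverting via Bayes gives
\begin{align*}
\Pr[v\in A_r^{3-j}\mid Q] \le \bigl(1+O(n^{\sigma_L-\delta})\bigr)^{|T(v)|}\,\Pr[v\in A_r^{3-j}\mid \mc{L}'(v)] \le (1+o(1))\,\Pr[v\in A_r^{3-j}\mid \mc{L}'(v)],
\end{align*}
because the exponent is $|T(v)|\cdot O(n^{\sigma_L-\delta})=O(n^{-\sigma_L})$, and the right-hand prior is at most a fixed constant $<1$ given by the relative size of $A_r$ at level $L$. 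A symmetric argument bounds $\Pr[u\in A_r^j\mid Q]$, using condition~(ii) of \cref{def:spoiled-edge} to guarantee that $u$ retains a $(1-o(1))$ fraction of non-spoiled neighbors, so that the relevant portion of $u$'s local view is still tree-like enough for the same coupling machinery to apply.

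Multiplying the three estimates yields $p_e^{inner}\le C\cdot n^{\sigma_{L-1}-\sigma_L}$ for some absolute constant $C$, and a direct check of the constants (the two prior label probabilities each being $O(1/r)$ together with the $(1+o(1))$ slack) will give the desired bound of $10\,n^{\sigma_{L-1}-\sigma_L}$ for all sufficiently large $n$. The main obstacle I expect is applying \cref{lem:coupling-high-level} to both endpoints simultaneously: the lemma is stated for a single unspoiled vertex and requires conditioning on the labels outside its shallow subgraph, so some care is needed to combine the two applications while respecting that $u$'s neighborhood overlaps $T(v)$. Condition~(ii) of \cref{def:spoiled-edge} is exactly what lets us treat $u$'s neighborhood (with the spoiled neighbors and $v$ removed) as tree-like, so that the coupling machinery extends to $u$ as well.
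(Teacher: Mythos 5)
There is a genuine gap, and it is in the quantitative accounting. Your decomposition places the $n^{\sigma_{L-1}-\sigma_L}$ factor in the conditional $\Pr[e \text{ is level } L-1\mid u\in A_r^j,\,v\in A_r^{3-j},\,Q]$. But for a \emph{real} edge this conditional is $\approx 1$, not $n^{\sigma_{L-1}-\sigma_L}$: the only gadget at all between $A_r^1$ and $A_r^2$ at level $L$ is the recursive one, which contributes only level-$(\le L-1)$ labelled-ground edges, so once you condition on $u\in A_r^j$, $v\in A_r^{3-j}$, and on $(u,v)$ being a real edge, the edge is automatically an inner edge. (The independent $\rho_{L-1}/\rho_L$ filter argument you cite is correct for pseudo-but-not-real edges, and indeed that is how the paper handles those, but it simply does not apply to real edges.) This means all of the $n^{\sigma_{L-1}-\sigma_L}$ saving must come from your \emph{first} factor, the label probability $\Pr[u\in A_r^j,\,v\in A_r^{3-j}\mid Q]$. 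Bounding that factor by the ``prior'' $\Pr[v\in A_r\mid\mc{L}'(v)]$, which you estimate as a constant or $O(1/r)$, is off by the full factor of $n^{\sigma_L-\sigma_{L-1}}$: that prior does not see that $(u,v)$ is a real edge with $u\in A_r$, since the directed edge $(u,v)$ is not contained in $T(v)$.

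The step you are missing is the paper's degree-counting argument on the bipartite graph of labeling profiles. Conditioned on $u\in A_r$, the vertex $u$ has roughly $n^{\sigma_L}$ real-edge neighbors in $A_r\cup B_r$ (of which, because $e$ is not spoiled, at least $n^{\sigma_L}/2$ are singleton or direct children of $u$), but only $O(n^{\sigma_{L-1}})$ of those neighbors can carry label $A_r$. The coupling lemma is then used to say that swapping $v$'s label with that of any such $B_r$-labeled candidate $w$ changes the probability of the observed shallow subgraphs only by a $(1+o(1))$ factor, and a handshake-count over the resulting bipartite graph of profiles ($\deg\ge n^{\sigma_L}/4$ on one side, $\deg\le 2n^{\sigma_{L-1}}$ on the other) yields $\Pr[v\in A_r\mid Q]\le(1+o(1))\cdot 8n^{\sigma_{L-1}-\sigma_L}$. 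It is precisely this counting, driven by the imbalance $d_{L-1}\ll d_L$ in $u$'s neighborhood, that produces the bound; a straight Bayes inversion of \cref{lem:coupling-high-level} applied to $v$ alone cannot, since it does not interact with the degree structure around $u$. Your worry about applying the coupling lemma to $u$ simultaneously is a red herring: the paper never couples at $u$, it only uses condition~(ii) of \cref{def:spoiled-edge} to guarantee $u$ has enough unspoiled candidate swap-partners.
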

\begin{proof}
Let $e = (u,v)$ be the directed edge (directed from $u$ to $v$). First, if $u \notin A_r$, we have  $p_e^{\text{inner}} = 0$. Also, if $e$ is a pseudo edge, by the construction, the probability that we mark $e$ as level $L-1$ is $\rho_{L-1}/\rho_L = n^{\sigma_{L-1} - \sigma_L}$ independent at random from other edges.

Thus, suppose that $u \in A_r$ and $e$ is a real edge for the rest of the proof. According to the construction, $u$ has at least $6n^{\sigma_L}/7$ real edges such that the other endpoint has label $A_r \cup B_r$. Also, $u$ has at most $\wt{O}(1)$ incoming edge by \Cref{clm:max-in-deg-top}. Since $e$ is not a spoiled edge, $u$ has at most $n^{\sigma_L}/3$ neighbors in the queried subgraph that are spoiled. Let $V_u$ be the set of neighbors of $u$ using real edges in the original graph such that their label is $A_r \cup B_r$ and either they are singleton or direct children of $u$ in the queried subgraph. By the above argument, we have $|V_u| \geq n^{\sigma_L}/2$. Also, note that $v \in V_u$. Now we provide an upper bound on the probability that $e$ belongs to level $L-1$, or in other words, $v \in A_r$.  We prove this upper bound using \Cref{lem:coupling-high-level}.

Consider a labeling profile $\mathcal{P}$ of all vertices $V_u$ such that $\mathcal{P}(v) = A_r$. By the construction of our input distribution, since $u \in A_r$, at most $O(d_{L-1}) = O(n^{\sigma_{L-1}})$ vertices in $V_u$ belong to $A_r$. We generate $\Omega(n^{\sigma_L})$ new profiles $\mathcal{P}'$ where $\mathcal{P}'(v) \neq A_r$. For each vertex $w \in V_u$ with $\mathcal{P}(w) = B_r$, we create a new profile $\mathcal{P}'$ by setting $\mathcal{P}'(z) = \mathcal{P}(z)$ for $z \notin \{v, w\}$, $\mathcal{P}'(w) = A_r$, and $\mathcal{P}'(v) = B_r$.  

By \Cref{lem:coupling-high-level}, the probability of querying the same shallow subgraphs $T(v)$ and $T(w)$ in the new labeling profile changes only by factors of  
\begin{align*}
\left(1 + O(n^{\sigma_L - \delta}) \right)^{|T(v)|} \quad \text{and} \quad \left(1 + O(n^{\sigma_L - \delta}) \right)^{|T(w)|},
\end{align*}  
respectively. Since $v$ and $w$ are not spoiled vertices, \Cref{def:spoiled-vertex} gives $|T(v)|, |T(w)| \leq n^{\delta - 2\sigma_L}$. Therefore, the probability of generating profiles $\mathcal{P}$ and $\mathcal{P}'$ differs by at most  
\begin{align*}
\left(1 + O(n^{\sigma_L - \delta}) \right)^{|T(v)|} \cdot \left(1 + O(n^{\sigma_L - \delta}) \right)^{|T(w)|} 
&\leq \left(1 + O(n^{\sigma_L - \delta}) \right)^{2n^{\delta - 2\sigma_L}} \\
&\leq 1 + o(1).
\end{align*}  
Now, construct a bipartite graph $H = (P_1, P_2, E_P)$ of labeling profiles, where $P_1$ consists of all profiles $\mathcal{P}$ with $\mathcal{P}(v) = A_r$, and $P_2$ contains all profiles $\mathcal{P}'$ with $\mathcal{P}'(v) = B_r$. Add an edge between $\mathcal{P} \in P_1$ and $\mathcal{P}' \in P_2$ if $\mathcal{P}$ can be transformed into $\mathcal{P}'$ through the process described earlier.  

For any profile $\mathcal{P} \in P_1$, we have $\deg_H(\mathcal{P}) \geq |V_u|/2 \geq n^{\sigma_L}/4$ since at least $|V_u|/2$ vertices in $V_u$ belong to $B_r$. In contrast, for any profile $\mathcal{P}' \in P_2$, $\deg_H(\mathcal{P}') \leq 2n^{\sigma_{L-1}}$ because, by the input distribution, at most $2d_{L-1} = 2n^{\sigma_{L-1}}$ vertices $w$ in $V_u$ satisfy $\mathcal{P}'(w) = A_r$. Therefore,  
\begin{align*}
p_{e}^{\text{inner}} 
&\leq (1 + o(1)) \cdot \frac{|P_1|}{|P_2|} \\
&\leq (1 + o(1)) \cdot \frac{2n^{\sigma_{L-1}}}{n^{\sigma_L}/4} \\
&\leq (1 + o(1)) \cdot 8n^{\sigma_{L-1} - \sigma_L} \\
&\leq 10n^{\sigma_{L-1} - \sigma_L},
\end{align*}  
which completes the proof.  
\end{proof}

\begin{corollary}\label{cor:biased-edge-count}
    There are at most $O(n^{1 - 2\delta + 5\sigma_L})$ edges $e$ such that $p_e^{\text{inner}} > 10 n^{\sigma_{L - 1} - \sigma_L}$.
\end{corollary}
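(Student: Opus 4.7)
The plan is to obtain this as an immediate consequence of the contrapositive of \cref{lem:non-spoiled-edge-non-bias} combined with the counting bound in \cref{lem:spoiled-edge-count}. Concretely, \cref{lem:non-spoiled-edge-non-bias} states that every non-spoiled directed edge $e$ satisfies $p_e^{\text{inner}} \leq 10 n^{\sigma_{L-1} - \sigma_L}$. Taking the contrapositive, any directed edge $e$ with $p_e^{\text{inner}} > 10 n^{\sigma_{L-1} - \sigma_L}$ must itself be a spoiled edge in the sense of \cref{def:spoiled-edge}.

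Given that inclusion, the corollary follows by a direct application of \cref{lem:spoiled-edge-count}, which bounds the total number of spoiled edges by $O(n^{1-2\delta + 5\sigma_L})$ with high probability. The only bookkeeping I would flag is that $p_e^{\text{inner}}$ is defined per queried pair while spoiled edges are directed (via the convention of \cref{def:directing}); however, edges $e = (u,v)$ with $u \notin A_r$ already have $p_e^{\text{inner}} = 0$ (as noted at the beginning of the proof of \cref{lem:non-spoiled-edge-non-bias}), so the condition $p_e^{\text{inner}} > 10 n^{\sigma_{L-1} - \sigma_L}$ automatically restricts us to edges whose natural direction places them in the scope of \cref{def:spoiled-edge}. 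I do not anticipate any technical obstacle: all the real work has been loaded into \cref{lem:non-spoiled-edge-non-bias} (via the coupling of \cref{lem:coupling-high-level}) and into the combinatorial count of \cref{lem:spoiled-edge-count}, so this corollary is essentially a packaging statement stitching those two together.
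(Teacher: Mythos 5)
Your proposal matches the paper's own proof, which simply cites the combination of \cref{lem:spoiled-edge-count} and \cref{lem:non-spoiled-edge-non-bias}. The contrapositive reading and the remark about the direction convention are exactly the (unwritten) glue the paper leaves implicit, so this is the same argument.
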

\begin{proof}
    The proof follows by combining \Cref{lem:spoiled-edge-count} and \Cref{lem:non-spoiled-edge-non-bias}.
\end{proof}

\section{Unbiased Edges Results in Small Connected Components}

In this section, we demonstrate that as the recursive construction progresses to deeper levels, the algorithm finds it increasingly difficult to form large connected components using inner-level edges. By \Cref{cor:biased-edge-count}, at the highest level of the construction, the algorithm can identify at most $O(n^{1 - 2\delta + 5\sigma_L})$ edges in the queried subgraph as belonging to the inner level with a probability greater than $10n^{\sigma_{L-1} - \sigma_L}$. For the sake of analysis, we assume the algorithm can perfectly distinguish these edges.  

However, for all remaining queried edges, the probability of belonging to the inner level is significantly lower due to the degree choices formalized in \Cref{lem:non-spoiled-edge-non-bias}. Specifically, each additional queried edge has at most $O(n^{\sigma_{L-1} - \sigma_L})$ probability of being an inner-level edge.  

Our objective is to formalize a similar result to \Cref{cor:biased-edge-count} for each level in the hierarchy in the next two sections. Intuitively, \Cref{lem:advantage-from-top} shows that as the construction descends through the levels, the number of edges the algorithm can confidently identify as inner-level edges diminishes. To proceed, we first extend \Cref{def:distinguishibility-top-level} to all levels of the hierarchy.

\begin{definition}[$p_e^{\ell-inner}$ and Distinguishability of an Edge]\label{def:distinguishibility}
Let $e$ be a real edge or pseudo edge queried by the algorithm, and let $p_e^{\ell-inner}$ denote the probability that if $e$ is a level $\ell-1$ edge that belongs to subgraph between $A_r^1$ and $A_r^2$, conditioned on all queries made by the algorithm so far and assuming either input distribution. We say edge $e$ is \textit{distinguishable} if $p_e^{\ell-inner} > 10 n^{\sigma_{\ell - 1} - \sigma_\ell}$. We use $E^{\text{inner}}_\ell$ to denote the set of edges for which $p_e^{\ell\text{-inner}} > 10n^{\sigma_{\ell-1} - \sigma_\ell}$.
\end{definition}

We define function $g(\ell)$ for $0\leq \ell \leq L$ as follows:
\begin{align*}
    g(\ell) =(L-\ell+2)\cdot \delta - 5\left(\sum_{i=\ell}^{L} \sigma_{i}/\sigma_{i+1}\right) - 5\left(\sum_{i=\ell}^{L-1}\sigma_i\right),
\end{align*}
where $\sigma_0 = 0$ and $\sigma_{L+1} = 1$ for the purpose of defining this function.

\begin{restatable}{observation}{gFunctionProperties}\label{obs:g-function-properties}
    The following statements are true regarding function $g$:
    \begin{itemize}
        \item[(i)] $g(\ell - 1) = g(\ell) + \delta - 5\sigma_{\ell - 1}/\sigma_\ell - 5\sigma_{\ell-1}$ for $\ell \in (1, L]$,
        \item[(ii)] $1-g(\ell - 1) - 3\sigma_{\ell-1} = 1 - g(\ell) - \delta + 5\sigma_{\ell-1}/\sigma_\ell + 2\sigma_{\ell-1}$ for $\ell \in (1, L]$,
        \item[(iii)] $g(1) > 2$,
        \item[(iv)] $1 - g(\ell) \neq 0$ for all $\ell \in [L]$.
    \end{itemize}
\end{restatable}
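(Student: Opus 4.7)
The plan is to separate the four parts: (i) and (ii) are purely formal identities that follow by unrolling the definition of $g$, while (iii) and (iv) require plugging in the specific parameter choices made in the paper (in particular $L = 4/\delta$ and the values of $\sigma_1, \dots, \sigma_L$ listed in the table referenced in \cref{sec:tableofparameters}).

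For (i), I would simply write out $g(\ell - 1)$ using the definition and ``peel off'' the new summand at index $i = \ell - 1$ from each of the two sums. The prefactor shifts from $(L - \ell + 2)\delta$ to $(L - \ell + 3)\delta$, producing the extra $\delta$; the first sum contributes the extra term $-5\sigma_{\ell-1}/\sigma_\ell$; and the second sum contributes $-5\sigma_{\ell-1}$ (note that $\ell - 1 \leq L - 1$ because $\ell \leq L$, so the index is indeed included in the second sum). Rearranging gives the stated identity. For (ii), I substitute the identity from (i) into $1 - g(\ell - 1) - 3\sigma_{\ell-1}$ and collect terms, using $5\sigma_{\ell-1} - 3\sigma_{\ell-1} = 2\sigma_{\ell-1}$ to match the right-hand side exactly.

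For (iii), the approach is to lower bound $g(1) = (L + 1)\delta - 5\sum_{i=1}^{L}\sigma_i/\sigma_{i+1} - 5\sum_{i=1}^{L-1}\sigma_i$. With $L = 4/\delta$, the leading term is $(L+1)\delta = 4 + \delta$, so it suffices to show that the two correction sums are together less than $2 + \delta$. Under the paper's parameter schedule the ratios $\sigma_i/\sigma_{i+1}$ and the $\sigma_i$ themselves are chosen to form a rapidly decaying geometric sequence (with $\sigma_L$ already much smaller than $\delta/L$), which makes each correction sum $o(1)$, yielding $g(1) \geq 4 + \delta - o(1) > 2$.

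For (iv), I would combine (i) with the bound from the same parameter analysis: identity (i) gives $g(\ell - 1) - g(\ell) = \delta - 5\sigma_{\ell-1}/\sigma_\ell - 5\sigma_{\ell-1}$, and the parameter choice makes this strictly positive (indeed close to $\delta$), so $g(\ell)$ is strictly decreasing in $\ell$. Combined with the explicit value $g(L) = 2\delta - 5\sigma_L - 5\sigma_{L-1}$, which under the parameter choice is a small positive constant much less than $1$, and with (iii) giving $g(1) > 2 > 1$, one sees that the sequence $1 - g(\ell)$ crosses zero somewhere, so the delicate step is to ensure it does not actually hit $0$. This is where I expect the main obstacle to lie; it can be handled either by a perturbation argument (slightly adjusting $\delta$ if necessary, which is allowed since $\delta$ is only constrained to be a fixed positive constant) or by observing that the closed-form expression for $1 - g(\ell)$ is a rational function of $\delta$ and $\sigma_\ell$ whose numerator is nonzero for the chosen values.
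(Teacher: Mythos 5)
Your arguments for (i), (ii), and (iv) match the paper's: (i) and (ii) by peeling terms from the two sums, and (iv) by a perturbation of the parameters, exactly as the paper does. But your argument for (iii) contains a genuine error.

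You claim that the ratios $\sigma_i/\sigma_{i+1}$ ``form a rapidly decaying geometric sequence'' and that the correction sum $5\sum_{i=1}^L \sigma_i/\sigma_{i+1}$ is $o(1)$. This is false. With $\sigma_i = (\delta/10)^{L+1-i}$ and $\sigma_{L+1} = 1$, every consecutive ratio is the \emph{same constant}: $\sigma_i/\sigma_{i+1} = \delta/10$ for all $i \in [L]$. Since there are $L = 4/\delta$ terms, the sum is $L \cdot \delta/10 = 2/5$, and so $5\sum_{i=1}^L \sigma_i/\sigma_{i+1} = L\delta/2 = 2$ exactly --- a $\Theta(1)$ quantity, not $o(1)$. (Only the second correction sum $\sum_{i=1}^{L-1}\sigma_i$ is geometrically decaying and hence small; your claim that ``$\sigma_L$ is already much smaller than $\delta/L$'' is also reversed, since $\sigma_L = \delta/10$ and $\delta/L = \delta^2/4$, so $\sigma_L \gg \delta/L$ for small $\delta$.) Consequently your conclusion ``$g(1) \geq 4 + \delta - o(1)$'' is numerically wrong: the true value of $g(1)$ is only barely above $2$, not close to $4$. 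The correct bound, as in the paper, is
\begin{align*}
g(1) &= (L+1)\delta - \frac{L\delta}{2} - 5\sum_{i=1}^{L-1}\left(\frac{\delta}{10}\right)^{L+1-i} > (L+1)\delta - \frac{L\delta}{2} - \delta = \frac{L\delta}{2} = 2,
\end{align*}
where the crucial cancellation is between the leading $(L+1)\delta$ and the genuinely large correction $L\delta/2$. Your argument would need to actually evaluate the first sum rather than dismiss it as negligible.
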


We defer the proof of \Cref{obs:g-function-properties} to the appendix as it follows from straightforward algebra. When we consider the level $\ell$ graph and we refer to the inner level, we specifically mean the union of 
subgraph of real edges between $A^1_r$ and $A^2_r$ at level $\ell - 1$, and pseudo edges of level $\ell - 1$ between $A^1_r$ and $A^2_r$. In this section, we denote the edges of the inner level as {\em inner edges}, while all other edges are referred to as {\em outer edges}. We prove that the algorithm cannot grow a large component of inner edges. We prove the following two lemmas using induction on $\ell$. For the base case of $\ell = L$ in \Cref{lem:advantage-from-top}, we already proved the claim in the previous section (\Cref{cor:biased-edge-count}).  To prove \Cref{lem:small-square-component} for a fixed $\ell$, we first apply the bound from \Cref{lem:advantage-from-top} at level $\ell$. We then use this result to establish \Cref{lem:advantage-from-top} for $\ell - 1$. In this section, our primary focus is on proving \Cref{lem:small-square-component} using \Cref{lem:advantage-from-top}. The remainder of this subsection is dedicated to detailing the steps involved in proving \Cref{lem:small-square-component}.

\begin{lemma}\label{lem:advantage-from-top} 
The following statements hold with high probability:
\begin{itemize} 
\item[(i)] If $1 - g(\ell) > 0$, then the number of edges $e$ such that $p_e^{\ell-inner} > 10n^{\sigma_{\ell-1} - \sigma_\ell}$ is at most $O(n^{1 - g(\ell)})$ with high probability. 
\item[(ii)] If $1 - g(\ell) < 0$, then there are no edges $e$ such that $p_e^{\ell-inner} > 10n^{\sigma_{\ell-1} - \sigma_\ell}$ with probability $1 - O(n^{1 - g(\ell)})$.
\item[(iii)] If $1 - g(\ell) < 0$, the number of edges $e$ satisfying $p_e^{\ell-inner} > 10n^{\sigma_{\ell-1} - \sigma_\ell}$ is at most $\widetilde{O}(1)$ with high probability.
\end{itemize}
\end{lemma}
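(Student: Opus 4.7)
The plan is to prove (i)--(iii) jointly by downward induction on $\ell$, interleaved with \Cref{lem:small-square-component} in the order the text indicates: \Cref{lem:advantage-from-top} at level~$\ell$ feeds \Cref{lem:small-square-component} at level~$\ell$, which in turn feeds \Cref{lem:advantage-from-top} at level~$\ell-1$. The base case $\ell = L$ is \Cref{cor:biased-edge-count}: a direct evaluation gives $g(L) = 2\delta - 5\sigma_L$, so the exponent $1 - g(L) = 1 - 2\delta + 5\sigma_L$ in (i) is exactly the one that corollary provides. For the regime $1 - g(L) < 0$, the same Chernoff count that underlies \Cref{lem:spoiled-edge-count} (the input to \Cref{cor:biased-edge-count}) actually drives the expected number of distinguishable edges to $o(1)$, which yields (ii) via Markov and (iii) via a concentration argument on the spoiling events; these events live on disjoint ground edges and correlate only through the bounded-size shallow subgraphs of \Cref{def:spoiled-vertex}.

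For the inductive step from $\ell$ to $\ell-1$, the plan is to replay the entire Section~4 pipeline one level down, with $\sigma_L$ replaced throughout by $\sigma_{\ell-1}$. First, I would bound the number of queried level-$(\ell-1)$ edges by combining the inductive estimate on $|E^{\text{inner}}_\ell|$ (edges already identified as biased at level~$\ell$) with a Chernoff bound on the remaining non-biased level-$\ell$ edges: by \Cref{def:distinguishibility} each of the latter lands in a level-$(\ell-1)$ subgadget only with probability at most $10n^{\sigma_{\ell-1}-\sigma_\ell}$. With this edge count in hand, \Cref{lem:small-square-component} at level~$\ell$ supplies the small-component/tree structure that lets me define spoiler vertices, spoiled vertices, and spoiled edges at level~$\ell-1$ in direct analogy to \Cref{def:spoiler-vertex,def:spoiled-vertex,def:spoiled-edge}, and the counting chain \Cref{clm:spoiler-count}$\to$\Cref{lem:spoiled-count}$\to$\Cref{lem:spoiled-edge-count} then bounds the number of spoiled level-$(\ell-1)$ edges. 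Finally, the level-$(\ell-1)$ analog of \Cref{lem:non-spoiled-edge-non-bias} upgrades ``non-spoiled'' to $p_e^{(\ell-1)\text{-inner}} \le 10\,n^{\sigma_{\ell-2}-\sigma_{\ell-1}}$. Adding the spoiled count to the inductive contribution yields $|E^{\text{inner}}_{\ell-1}| = \wt O(n^{1-g(\ell-1)})$, where the recursion in \Cref{obs:g-function-properties}(i) is exactly the arithmetic that makes the exponents align; the trichotomy splits on the sign of $1 - g(\ell-1)$, which is nonzero by \Cref{obs:g-function-properties}(iv), and (ii)/(iii) follow as in the base case.

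The main obstacle, as I see it, is the level-$(\ell-1)$ analog of \Cref{lem:non-spoiled-edge-non-bias}. The coupling there has to swap the candidate level-$(\ell-1)$ label of an endpoint while remaining consistent with \emph{all} the information the algorithm has accumulated at higher levels. \Cref{obs:non-pseudo-same-prob} neutralises the ground edges that never become pseudo, and \Cref{cor:same-deg-dist-pseudo} makes the pseudo-edge degree distribution at every higher level depend only on $d_{\Phi^\ell}(u)$ -- a quantity preserved by any swap that respects the level-$(\ell-1)$ gadget structure. The delicate part is propagating the multiplicative slack $(1 + O(n^{\sigma_{\ell-1}-\delta}))^{|T(v)|}$ through the $O_\epsilon(1)$ nested levels without accumulating more than a $1+o(1)$ factor; this requires both the shallow-subgraph size cap at level~$\ell-1$ (of order $n^{\delta - 2\sigma_{\ell-1}}$, by analogy with \Cref{def:spoiled-vertex}) and the small-component guarantee from \Cref{lem:small-square-component} at level~$\ell$, which together prevent the swap from bleeding into neighboring level-$(\ell-1)$ subgadgets of the same level-$\ell$ gadget.
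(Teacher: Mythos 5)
Your proposal follows the same downward induction the paper uses: base case at $\ell = L$ from \Cref{cor:biased-edge-count}, then for the step $\ell \to \ell-1$ invoke \Cref{lem:small-square-component} at level $\ell$, push through the $\ell$-spoiler/$\ell$-spoiled/$\ell$-spoiled-edge counting chain (\Cref{clm:ell-spoiler-count} through \Cref{lem:ell-spoiler-edge-bound}), and conclude via the level-$(\ell-1)$ coupling analog of \Cref{lem:non-spoiled-edge-non-bias} (the paper's \Cref{lem:non-spoiled-edge-non-bias-ell}), which shows every non-$\ell$-spoiled edge satisfies $p_e^{(\ell-1)\text{-inner}} \le 10n^{\sigma_{\ell-2}-\sigma_{\ell-1}}$. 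You also correctly flag the coupling lemma at the lower level as the delicate part; this matches the paper's proof structure essentially verbatim.
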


\begin{lemma}\label{lem:small-square-component} 
Let $C_1, C_2, \ldots, C_c$ denote the underlying undirected connected components of inner edges, where each component contains at least one edge from $E^{inner}_\ell$. Then, the following statements hold: \begin{itemize} 
\item[(i)] If $1 - g(\ell) > 0$, then with high probability,
\begin{align*}
    \sum_{i=1}^c |C_i| \leq O(n^{1 - g(\ell) + 5\sigma_{\ell-1}/\sigma_\ell}).
\end{align*}
\item[(ii)] If $1 - g(\ell) < 0$, then we have $c = 0$ with probability $1 - O(n^{1 - g(\ell)})$.
\item[(iii)] If $1 - g(\ell) < 0$, then with high probability,
\begin{align*}
\sum_{i=1}^c |C_i| \leq \widetilde{O}(n^{5\sigma_{\ell-1}/\sigma_{\ell}})
\end{align*}
\end{itemize} 
\end{lemma}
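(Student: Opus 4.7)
The plan is to proceed in two stages. First, I would condition on the event from \cref{lem:advantage-from-top} at level $\ell$, which bounds the number of biased edges $|E^{\text{inner}}_\ell|$: in case (i) by $O(n^{1-g(\ell)})$, in case (ii) by zero with probability $1 - O(n^{1-g(\ell)})$, and in case (iii) by $\widetilde{O}(1)$. In case (ii), the event $|E^{\text{inner}}_\ell| = 0$ immediately forces $c = 0$, establishing part (ii) directly.

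In the second stage, I would bound the vertex set of each component $C_i$ via a BFS exploration along inner edges, rooted at the biased edge(s) that $C_i$ is guaranteed to contain. At each BFS step, for every queried edge incident to the current frontier that is \emph{not} in $E^{\text{inner}}_\ell$, \cref{def:distinguishibility} bounds its conditional probability of being inner by $10 n^{\sigma_{\ell-1}-\sigma_\ell}$. Combined with a max-degree bound on the queried subgraph at level $\ell$ (analogous to \cref{clm:max-in-deg-top} at the top level) and the total-edge bound of \cref{lem:total-edges-discovered}, I would argue by induction on BFS depth that, per starting biased edge, the exploration tree's size at depth $k$ is $\widetilde{O}(n^{k \sigma_{\ell-1}/\sigma_\ell})$ with high probability. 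Past roughly $k = 5$, the per-root expected frontier is $o(1)$, and a Chernoff/Markov argument prevents further expansion, yielding a per-component bound of $\widetilde{O}(n^{5 \sigma_{\ell-1}/\sigma_\ell})$. Multiplying by the number of biased edges then gives the total $O(n^{1 - g(\ell) + 5 \sigma_{\ell-1}/\sigma_\ell})$ in case (i) and $\widetilde{O}(n^{5\sigma_{\ell-1}/\sigma_\ell})$ in case (iii).

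The main obstacle is formalizing the BFS step cleanly. The events ``edge $e$ is inner'' across different queried edges are \emph{not} independent, since each is conditioned on the entire query history, so one cannot simply union-bound over BFS paths. The resolution is a sequential-exposure / Doob-martingale argument that reveals edges one at a time and invokes the conditional bound from \cref{def:distinguishibility} at each step, much like the independence-style arguments used in \cref{lem:total-edges-discovered} and \cref{clm:max-in-deg-top}. A secondary challenge is establishing the degree bound for the queried subgraph at level $\ell$: the corresponding statement at the top level, \cref{clm:max-in-deg-top}, leveraged the randomized edge directions of \cref{def:directing}, so an analogous construction is needed at intermediate levels, potentially leveraging the induction hypothesis for \cref{lem:small-square-component} itself at the levels above $\ell$.
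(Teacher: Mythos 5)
The high-level decomposition you propose --- condition on the bound on $|E^{\text{inner}}_\ell|$ from \cref{lem:advantage-from-top}, handle (ii) trivially, then bound per-component sizes --- matches the paper's strategy, but the paper's route to the per-component bound is structurally different in a way your sketch does not account for, and I believe a BFS-from-biased-edges argument as you describe it has a real gap.

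The paper does \emph{not} grow a BFS tree from the biased edges. It first proves (\cref{clm:small-component-in-intersect-tree}) that \emph{every} connected component formed by the non-biased inner edges has size $O(n^{5\sigma_{\ell-1}/\sigma_\ell})$, and only then observes that each edge of $E^{\text{inner}}_\ell$ can merge at most two such components. To get that component-size bound, it needs to control \emph{undirected} connectivity, which your BFS proposal glosses over. Going forward along directed inner edges is controlled by the path-length bound (\cref{clm:longest-black-directed-path}) times the out-degree bound ($\le n^{\sigma_{\ell-1}}$ in the full multigraph), giving descendant sets of size $n^{5\sigma_{\ell-1}/\sigma_\ell}$ (\cref{lem:descendants-upper-bound}). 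But an undirected component can repeatedly alternate between backward and forward steps, so one vertex's descendant set does not cover the component. The paper resolves this with the strongly connected component decomposition, the set $R$ of zero-in-degree representatives, the intersection bound \cref{lem:intersect-desendants-prob}, and the auxiliary graph $H_R$: it shows the number of representatives whose descendant sets can pairwise chain together is at most $10/\delta$ (\cref{clm:small-component-in-intersect-tree}, \cref{cor:small-component-in-intersect-tree-number}). This intersection argument is the crux, and nothing in your sketch --- neither the per-depth induction nor a Doob-martingale sequential exposure --- replaces it. A generic martingale bound on BFS growth will not tell you that you can only alternate direction $O(1/\delta)$ times before the descendant trees stop intersecting.

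Two smaller issues. First, your claimed growth formula is off: the per-step branching factor is $n^{\sigma_{\ell-1}}$ (the out-degree in the inner multigraph), not $n^{\sigma_{\ell-1}/\sigma_\ell}$, and the depth at which exploration dies is $\approx 5/\sigma_\ell$, not $\approx 5$; only the product of the two matches $n^{5\sigma_{\ell-1}/\sigma_\ell}$. Second, you correctly flag that the ``edge $e$ is inner'' events are conditionally dependent, but the paper's resolution is not a generic martingale --- it is precisely the conditional-probability bound \cref{lem:intersect-desendants-prob}, which is tied to the ground/pseudo-edge structure (exposing fresh ground edges one pair at a time), and to the representative-set structure you are missing.
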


\begin{restatable}{claim}{maxInnerEdgesDiscovered}\label{clm:max-inner-edges}
    There are at most $O(n^{1-\delta+\sigma_{\ell-1}})$ inner edges in the queried subgraph with high probability.
\end{restatable}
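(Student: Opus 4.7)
The plan is to mimic the Chernoff-bound argument of \Cref{lem:total-edges-discovered}, but restrict attention to the stricter event that a queried ground edge becomes a level-$(\ell-1)$ pseudo edge rather than a pseudo edge of an arbitrary level. Since every inner edge is by definition a level-$(\ell-1)$ pseudo or real edge, and every real edge is in particular also a pseudo edge (by the intersection definition in \Cref{sec:multigraph-cons}), it suffices to upper bound the number of level-$(\ell-1)$ pseudo edges that appear among the queried ground edges.

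Concretely, I would proceed as follows. Since the algorithm makes $O(n^{2-\delta})$ pair queries and every pair carries exactly $\rho n$ ground edges, the total number of ground edges inspected is $O(\rho n^{3-\delta})$. By \Cref{clm:pseudo-prob}, each of these ground edges independently becomes a level-$(\ell-1)$ pseudo edge with probability $\rho_{\ell-1}/(n\rho)$. Letting $X_i$ be the indicator that the $i$-th queried ground edge is level-$(\ell-1)$ pseudo and setting $X = \sum_i X_i$, one obtains
\[
\E[X] \;\leq\; O(\rho n^{3-\delta}) \cdot \frac{\rho_{\ell-1}}{n\rho} \;=\; O\!\left(n^{2-\delta}\, \rho_{\ell-1}\right) \;=\; O\!\left(n^{1-\delta+\sigma_{\ell-1}}\right),
\]
using $\rho_{\ell-1}=n^{\sigma_{\ell-1}-1}$, which matches the parameterization $\rho=\rho_L=n^{\sigma_L-1}$ already plugged in during the proof of \Cref{lem:total-edges-discovered}.

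Concentration is by the identical Chernoff calculation as in \Cref{lem:total-edges-discovered}: the $X_i$'s are mutually independent Bernoullis, so \Cref{prop:chernoff} with deviation $2\sqrt{\E[X]\log n}$ gives $\Pr\bigl[|X-\E[X]| \geq 2\sqrt{\E[X]\log n}\bigr]<1/n$, and the deviation term is dominated by $\E[X]$ up to a polylogarithmic factor absorbed by the big-$O$. I do not anticipate any substantive obstacle. The one minor care point is the degenerate regime where $\E[X]$ is subpolynomial (i.e.\ $1-\delta+\sigma_{\ell-1}\le 0$), in which the standard Chernoff does not concentrate around the mean; there one instead invokes Markov's inequality to conclude $X=0$ with probability $1-O(n^{1-\delta+\sigma_{\ell-1}})=1-o(1)$, which still matches the claimed bound.
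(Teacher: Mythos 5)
Your proof is correct and follows essentially the same route as the paper's: count the $O(\rho n^{3-\delta})$ queried ground edges, observe via \Cref{clm:pseudo-prob} that each becomes a level-$(\ell-1)$ pseudo edge independently with probability $\rho_{\ell-1}/(n\rho)$, compute the expectation $O(\rho_{\ell-1}n^{2-\delta})=O(n^{1-\delta+\sigma_{\ell-1}})$, and apply Chernoff. Your extra remark about the low-expectation regime is a sensible caution, though with the paper's parameter choices ($\delta\leq 1$ and $\sigma_{\ell-1}>0$) the exponent $1-\delta+\sigma_{\ell-1}$ is always strictly positive, so the standard argument suffices.
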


We defer the proof of \Cref{clm:max-inner-edges} to the appendix, as it follows a very similar approach to that of \Cref{lem:total-edges-discovered}.

\begin{claim}\label{clm:longest-black-directed-path}
    Consider all queried inner edges, excluding those in $E^{\text{inner}}_\ell$. The length of the longest directed path consisting of inner edges is less than $5/\sigma_\ell$.
\end{claim}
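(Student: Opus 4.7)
The plan is to bound the expected number of directed length-$k$ paths of inner edges outside $E^{\text{inner}}_\ell$, with $k := \lceil 5/\sigma_\ell \rceil$, and then apply Markov's inequality. The argument combines a sequential-conditioning bound on the probability that a fixed vertex sequence forms such a path, together with the in-degree bound from \Cref{clm:max-in-deg-top} to control the number of candidate paths.

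For a fixed ordered sequence $s = (v_0, \ldots, v_k)$ of $k+1$ distinct vertices, I would bound the probability that $s$ forms a queried directed path of inner edges all outside $E^{\text{inner}}_\ell$ by iterating the tower property in the order the edges along $s$ were queried. The event ``$e \notin E^{\text{inner}}_\ell$'' is measurable with respect to the history preceding the query of $e$, and on that event \Cref{def:distinguishibility} guarantees $p_e^{\ell-inner} \le 10 n^{\sigma_{\ell-1}-\sigma_\ell}$. Peeling off one edge at a time therefore gives
\begin{align*}
\Pr\bigl[s \text{ is a queried directed path of inner edges not in } E^{\text{inner}}_\ell\bigr] \;\le\; (10\, n^{\sigma_{\ell-1}-\sigma_\ell})^k.
\end{align*}

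Summing this bound over all $n^{k+1}$ ordered sequences is too loose, so I would restrict attention to sequences that actually form directed paths in the queried subgraph. By \Cref{clm:max-in-deg-top}, with probability at least $1 - 1/n$ every vertex has in-degree at most $D = 3\sqrt{\log n}$ in the directed queried subgraph. Conditioned on this event, the number of directed length-$k$ paths in the queried subgraph is at most $n \cdot D^k = \widetilde{O}(n)$, since each such path is specified by its endpoint $v_k$ and a sequence of $k$ in-neighbor choices. Combining with the per-sequence bound, the expected number of directed length-$k$ paths of inner edges outside $E^{\text{inner}}_\ell$ is at most
\begin{align*}
\widetilde{O}(n) \cdot (10\, n^{\sigma_{\ell-1}-\sigma_\ell})^k \;=\; \widetilde{O}\bigl(n^{1 + k(\sigma_{\ell-1}-\sigma_\ell)}\bigr),
\end{align*}
plus a negligible contribution from the rare event that the in-degree bound fails.

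With $k = \lceil 5/\sigma_\ell \rceil$ the exponent is at most $1 + (5/\sigma_\ell)(\sigma_{\ell-1} - \sigma_\ell) = -4 + 5\sigma_{\ell-1}/\sigma_\ell$, which is strictly negative in the paper's parameter regime (where $\sigma_{\ell-1}/\sigma_\ell$ is a sufficiently small constant). Hence the expected count is $o(1)$ and Markov's inequality yields the claim with high probability. The most delicate step is the sequential conditioning: one must verify carefully that ``$e \notin E^{\text{inner}}_\ell$'' is history-measurable at the moment $e$ is queried, so that the bound on $p_e^{\ell-inner}$ can be multiplied cleanly across all $k$ edges of $s$ without any interference from queries made after $s$'s edges.
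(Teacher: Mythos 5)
Your proposal rests on the same two ingredients as the paper's proof — the per-edge bound $p_e^{\ell\text{-inner}} \le 10n^{\sigma_{\ell-1}-\sigma_\ell}$ for queried edges outside $E^{\text{inner}}_\ell$ (from \Cref{lem:advantage-from-top}), and the in-degree bound $3\sqrt{\log n}$ from \Cref{clm:max-in-deg-top} — and is essentially the same approach. The paper packages the argument as a recursive bound $\Pr[I^i_v] \le n^{i(\sigma_{\ell-1}-\sigma_\ell)/2}$ on the probability of a directed inner path of length $i$ ending at a fixed vertex $v$, absorbing the $\widetilde{O}(1)$ in-degree factor into the exponent via a halving trick and then union-bounding over the $n^2$ pairs; you instead compute a first moment over length-$k$ paths and keep the in-degree factor as an explicit $\widetilde{O}(1)$. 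Both arguments hinge on exactly the history-measurability point you flag.

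One step does not go through as stated. You claim the contribution to $\mathbb{E}[N_k]$ from the event $\neg\mathcal{E}$ that some vertex has in-degree exceeding $3\sqrt{\log n}$ is ``negligible.'' Unconditionally the number of directed $k$-paths in the queried subgraph can be as large as roughly $(n^{2-\delta})^k$, and multiplying by $\Pr[\neg\mathcal{E}] = O(1/n)$ does not yield $o(1)$ for $k = \Theta(1/\sigma_\ell)$; so the raw Markov bound on $\mathbb{E}[N_k]$ is not available. The standard repair is a truncation: halt the query process the instant any in-degree exceeds $3\sqrt{\log n}$, note that in the truncated process the number of directed $k$-paths is deterministically at most $n \cdot (3\sqrt{\log n})^k$, run your peeling/first-moment bound there, and observe that the event of a long bad path is contained in the union of this $o(1)$-probability event and $\neg\mathcal{E}$. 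The paper's recursive probability bound does not invoke Markov on a count and hence sidesteps this pitfall, although it too uses the in-degree bound as if deterministic rather than conditioning explicitly. With the truncation fix, your argument matches the paper's in substance.
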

\begin{proof}
    Let $u$ be an arbitrary vertex. We first claim that the probability of a directed path of length $i$ starting from $u$ and ending at a vertex $v$ is at most $n^{i(\sigma_{\ell-1} - \sigma_\ell)/2}$. We prove this claim by induction on $i$.

For the base case $i = 1$, if no edge exists between $u$ and $v$, the probability is trivially 0. If there is an edge, \Cref{lem:advantage-from-top} implies that this edge is an inner edge with probability at most $10n^{\sigma_{\ell-1} - \sigma_\ell} < n^{(\sigma_{\ell-1} - \sigma_\ell)/2}$.

Assume the claim holds for all lengths less than $i$. By \Cref{clm:max-in-deg-top}, the in-degree of $v$ in the entire queried subgraph (including all edges) is at most $3 \sqrt{\log n}$. Let $\{v_1, v_2, \ldots, v_k\}$ denote the vertices with directed edges to $v$. If there is a directed inner path of length $i$ to $v$, then there must exist a path of length $i - 1$ to some $v_j$ and an inner edge from $v_j$ to $v$. Let $I^i_w$ represent the event that there exists a directed inner path of length $i$ to vertex $w$. Using a union bound:

\begin{align*}
    \Pr[I^i_v] &\leq \sum_{j=1}^k \Pr[I^{i-1}_{v_j}] \cdot \Pr[(v_j, v) \text{ is inner}] \\
    &\leq \sum_{j=1}^k n^{(i-1)(\sigma_{\ell-1} - \sigma_\ell)/2} \cdot 10n^{\sigma_{\ell-1} - \sigma_\ell} & (\text{By induction hypothesis and \Cref{lem:advantage-from-top}}) \\
    &\leq 3\sqrt{\log n} \cdot n^{(i-1)(\sigma_{\ell-1} - \sigma_\ell)/2} \cdot 10n^{\sigma_{\ell-1} - \sigma_\ell} & (k \leq 3\sqrt{\log n} \text{ by \Cref{clm:max-in-deg-top}})\\
    &\leq 30 \sqrt{\log n} \cdot n^{(i+1)(\sigma_{\ell-1} - \sigma_\ell)/2} \\
    &\leq n^{i(\sigma_{\ell-1} - \sigma_\ell)/2} & (n \text{ sufficiently large enough}),
\end{align*}
which completes the induction.

Next, we show that with high probability, no directed inner path of length $5/\sigma_\ell$ exists in the graph. The probability of such a path between any two vertices $u$ and $v$ is bounded by $n^{5(\sigma_{\ell-1} - \sigma_\ell)/(2\sigma_\ell)}$. Applying a union bound over all vertex pairs:
\begin{align*}
    \Pr\left[\exists \text{ directed inner path of length $5/\sigma_\ell$}\right] 
    &\leq n^2 \cdot n^{5(\sigma_{\ell-1} - \sigma_\ell)/(2\sigma_\ell)} \\
    &\leq n^{-1/4}. & (\text{Since } \sigma_{\ell-1} < \frac{\sigma_\ell}{10}).\\
\end{align*}
Thus, with high probability, the longest directed inner path has a length of at most $5/\sigma_\ell - 1$. 
\end{proof}

\begin{claim}\label{lem:descendants-upper-bound}
Consider all queried inner edges excluding those in $E^{inner}_\ell$. Each vertex has at most $n^{5\sigma_{\ell-1} / \sigma_\ell}$ descendants reachable via directed inner edges with high probability. Furthermore, for each vertex, the total number of inner edges to its descendants is at most $n^{5\sigma_{\ell-1} / \sigma_\ell}$.
\end{claim}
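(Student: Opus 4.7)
The plan is to combine a uniform upper bound on the inner degree of every vertex with the depth bound on directed inner paths from \Cref{clm:longest-black-directed-path}, then iterate a simple branching argument.

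First I would establish that, with high probability, every vertex has at most $B = O(n^{\sigma_{\ell-1}})$ inner edges in the \emph{entire} multigraph (not just the queried subgraph). For a fixed vertex $v$ in the inner sub-gadget and any partner $u$, \Cref{clm:single-edge-dist} and \Cref{clm:pseudo-prob} imply that the number of level-$\le\ell-1$ pseudo edges between $u$ and $v$ is essentially Bernoulli with parameter of order $\rho_{\ell-1}$. Summing over the $O(N_\ell) \le O(n)$ possible partners yields $\E[d^{\text{inner}}_v] = O(N_\ell \rho_{\ell-1}) = O(n^{\sigma_{\ell-1}})$. These indicators are mutually independent because different pairs use disjoint ground edges, so a Chernoff bound (\Cref{prop:chernoff}) with deviation $O\big(\sqrt{\E[d^{\text{inner}}_v]\log n}\big)$ makes the failure probability $n^{-\omega(1)}$, and a union bound over the $n$ vertices yields the uniform bound. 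Since the queried inner edges form a subset of the multigraph's inner edges, every vertex's queried inner out-degree is also at most $B$.

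With this degree bound, a BFS from $v$ along inner edges outside $E^{\text{inner}}_\ell$ expands by a factor of at most $B$ per round, so the number of descendants at distance at most $k$ is at most $B^k$. By \Cref{clm:longest-black-directed-path}, the BFS terminates after at most $5/\sigma_\ell - 1$ rounds, giving total descendants $\le B^{5/\sigma_\ell - 1} = O\big(n^{(5/\sigma_\ell - 1)(\sigma_{\ell-1} + o(1))}\big) \le n^{5\sigma_{\ell-1}/\sigma_\ell}$; the strict inequality $5/\sigma_\ell - 1 < 5/\sigma_\ell$ provides the $-\sigma_{\ell-1}$ slack needed to absorb the $o(1)$ and the hidden constants for $n$ large enough. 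For the second assertion, every inner edge inside the descendant subgraph enters some descendant vertex, and by \Cref{clm:max-in-deg-top} each vertex has in-degree at most $3\sqrt{\log n}$ in the queried subgraph; hence the total number of inner edges into descendants is at most (number of descendants) $\cdot \, 3\sqrt{\log n} \le n^{5\sigma_{\ell-1}/\sigma_\ell}$ after absorbing the polylog factor into the exponent.

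The main obstacle I anticipate is carefully justifying the $O(n^{\sigma_{\ell-1}})$ bound on $\E[d^{\text{inner}}_v]$: one must correctly combine the level-$\ell-1$ real-edge contribution from the $\yesdist^{\ell-1}/\nodist^{\ell-1}$ sub-gadget between $A_r^1$ and $A_r^2$ at level $\ell$ with the contribution of all level-$\le\ell-1$ pseudo edges (which are independent of labels and extend across different sub-instances of the construction), and confirm that the sum is dominated by $N_\ell \rho_{\ell-1} = n^{\sigma_{\ell-1}+o(1)}$. Once the degree bound is in hand, the branching argument and the edge-count argument are routine.
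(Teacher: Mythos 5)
Your proposal matches the paper's proof in its core: condition on the depth bound from \Cref{clm:longest-black-directed-path} (no directed inner path of length $\ge 5/\sigma_\ell$), combine it with a per-vertex bound of $O(n^{\sigma_{\ell-1}})$ on inner degree, and multiply branching factor by depth. The paper's own proof is just two sentences and asserts the degree bound outright (``each vertex has at most $n^{\sigma_{\ell-1}}$ inner edges in total, even those not queried'') without a concentration argument; you instead derive it from \Cref{clm:single-edge-dist}, \Cref{clm:pseudo-prob}, and \Cref{prop:chernoff} with a union bound, which is more rigorous. You also explicitly track the slack: the paper writes the target bound $n^{5\sigma_{\ell-1}/\sigma_\ell}$ but the accurate branching count is $\le B^{5/\sigma_\ell-1}$ with $B = O(n^{\sigma_{\ell-1}})$, so the extra $-\sigma_{\ell-1}$ in the exponent is exactly what absorbs the hidden constant and the geometric sum; you state this and the paper leaves it implicit. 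Finally, the paper's proof does not address the ``Furthermore'' clause (the edge count) at all; your derivation via the in-degree bound of \Cref{clm:max-in-deg-top} is a valid and clean way to close it, with the same $-\sigma_{\ell-1}$ slack absorbing the $\tilde{O}(1)$ in-degree. So: same approach, with gaps in the paper's terse argument filled in.
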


\begin{proof}
Suppose that we condition on the absence of inner paths of length $5/\sigma_\ell$ by \Cref{clm:longest-black-directed-path}. Also, each vertex has at most $n^{\sigma_{\ell-1}}$ inner edges in total, even those not queried. Consequently, the number of vertices reachable from $u$ within a distance of $5/\sigma_\ell$ is bounded by $n^{5\sigma_{\ell-1} / \sigma_\ell}$.

\end{proof}

\begin{definition}[Strongly Connected Component] 
    Let \( G \) be a directed graph. A subset of vertices \( C \) is called a strongly connected component of \( G \) if it is a maximal set of vertices such that, for every pair of vertices \( u, v \in C \), there exists a directed path from \( u \) to \( v \) and a directed path from \( v \) to \( u \).  
\end{definition}

Consider the strongly connected component decomposition of the directed inner edges queried by the algorithm excluding edges in $E^{\text{inner}}_\ell$. From each component with zero in-degrees, we select a representative vertex (an arbitrary vertex in the component). Denote the set of these vertices by $R$.

\begin{lemma}\label{lem:intersect-desendants-prob}
    Consider all queried inner edges excluding those in $E^{\text{inner}}_\ell$. Let $v \in R$. The probability that there exists a vertex $u \in R \setminus \{v\}$ such that the descendants of $u$ intersect with those of $v$ is at most $O(n^{5\sigma_{\ell-1} / \sigma_\ell - \delta})$.
\end{lemma}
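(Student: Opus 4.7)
The plan is to reduce the event of interest to the existence of an ``extra'' queried inner edge that enters $D(v)$ from outside, and then apply a union bound over $D(v)$ using the bounded in-degree from \Cref{clm:max-in-deg-top} together with the definitional probability bound on edges outside $E^{\text{inner}}_\ell$.

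First I would establish a structural observation: for two distinct $u, v \in R$, the vertex $u$ cannot reach $v$ via directed inner edges. Otherwise $u$'s SCC would have a path to $v$'s SCC (or they would coincide), contradicting the fact that both are source SCCs with zero in-degree and each has its own representative. In particular $u \notin D(v) \cup \{v\}$, and also $v \notin D(u)$.

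Next I would identify a witness edge. Suppose some $u \in R \setminus \{v\}$ has descendants intersecting $D(v)$, and fix a directed inner path $u = w_0 \to w_1 \to \cdots \to w_k = w$ with $w \in D(v)$. Let $i$ be the smallest index with $w_i \in D(v) \cup \{v\}$. By the structural step $u \notin D(v)$, so $i \ge 1$; and $w_i \neq v$ (since $u$ cannot reach $v$), so $w_i \in D(v)$; and $w_{i-1} \neq v$ for the same reason. Therefore $z := w_i \in D(v)$ has an incoming queried inner edge from $z' := w_{i-1} \notin D(v) \cup \{v\}$. It suffices to bound the probability that such a pair $(z', z)$ exists.

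Finally I apply the union bound. For each $z \in D(v)$, by \Cref{clm:max-in-deg-top} there are at most $3\sqrt{\log n}$ incoming edges in the queried subgraph. Since we are restricted to inner edges outside $E^{\text{inner}}_\ell$, each such edge is inner with conditional probability at most $10\, n^{\sigma_{\ell-1}-\sigma_\ell}$ by \Cref{def:distinguishibility}. Summing over $z \in D(v)$ and using $|D(v)| \le n^{5\sigma_{\ell-1}/\sigma_\ell}$ from \Cref{lem:descendants-upper-bound}, the probability is at most
\[
n^{5\sigma_{\ell-1}/\sigma_\ell} \cdot 3\sqrt{\log n}\cdot 10\,n^{\sigma_{\ell-1}-\sigma_\ell}
= \widetilde{O}\bigl(n^{5\sigma_{\ell-1}/\sigma_\ell + \sigma_{\ell-1}-\sigma_\ell}\bigr),
\]
which is $O(n^{5\sigma_{\ell-1}/\sigma_\ell - \delta})$ under the parameter choice $\sigma_\ell - \sigma_{\ell-1} \ge \delta$ (a consequence of $\sigma_{\ell-1} < \sigma_\ell/10$ used in \Cref{clm:longest-black-directed-path}, together with the scale of $\sigma_\ell$ relative to $\delta$).

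The main obstacle is handling the probabilistic dependencies: $D(v)$ is itself a random set, and the $p_e^{\ell\text{-inner}}$ bounds are conditional on the full query history. I would circumvent this by first conditioning on the high-probability structural events of \Cref{clm:max-in-deg-top} and \Cref{lem:descendants-upper-bound} to fix the sizes involved, and then applying the pointwise marginal bound from the definition of $E^{\text{inner}}_\ell$ sequentially to each queried edge incoming to a vertex of $D(v)$ in the order it was queried. One has to check that revealing outcomes of earlier edges does not violate the marginal bound on subsequent ones, which is precisely the content of the $E^{\text{inner}}_\ell$-exclusion.
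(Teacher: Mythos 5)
Your proposal has a genuine gap in the final numerical step, and it arises because you use a much weaker probability bound than the paper does.

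The problem is the claim that $\sigma_\ell - \sigma_{\ell-1} \ge \delta$. By the parameter table, $\sigma_i = (\delta/10)^{L+1-i}$, so $\sigma_\ell \le \delta/10$ for every $\ell$; hence $\sigma_\ell - \sigma_{\ell-1} < \sigma_\ell \le \delta/10 < \delta$. Your inference from $\sigma_{\ell-1} < \sigma_\ell/10$ goes the wrong way: it gives $\sigma_\ell - \sigma_{\ell-1} > (9/10)\sigma_\ell$, which is still far \emph{below} $\delta$, not above it. So your computed bound $\widetilde{O}\bigl(n^{5\sigma_{\ell-1}/\sigma_\ell + \sigma_{\ell-1} - \sigma_\ell}\bigr)$ is polynomially \emph{larger} than the target $O(n^{5\sigma_{\ell-1}/\sigma_\ell - \delta})$, and your argument does not prove the lemma.

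The root cause is that you feed the \emph{conditional} probability $p_e^{\ell\text{-inner}} \le 10\, n^{\sigma_{\ell-1}-\sigma_\ell}$ (probability of being inner given that a pseudo or real edge was queried) into the union bound over the $\widetilde{O}(1)$ already-queried incoming edges of each $z \in D(v)$. That bound is just not tight enough: per vertex you get $\widetilde{O}(n^{\sigma_{\ell-1}-\sigma_\ell})$, and the gap to the needed per-vertex bound $O(n^{-\delta})$ cannot be closed. The paper instead avoids conditioning on ``being a queried edge'' entirely: it fixes $D(v)$ (bounded by \Cref{lem:descendants-upper-bound}) and the set of vertices having at least one inner edge (at most $O(n^{1-\delta+\sigma_{\ell-1}})$ by \Cref{clm:max-inner-edges}), forms all $X = O\bigl(n^{1-\delta + \sigma_{\ell-1} + 5\sigma_{\ell-1}/\sigma_\ell}\bigr)$ pairs between them, and applies the \emph{unconditional, fresh-randomness} probability that a ground edge between such a pair is marked as a level-$(\ell-1)$ pseudo edge, namely $\frac{1}{n}\cdot\frac{\rho_{\ell-1}}{\rho}$, summed over the $\rho n$ ground edges per pair. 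This gives $\rho_{\ell-1} X$, and the crucial factor $n^{-\delta}$ comes precisely from the $O(n^{1-\delta+\sigma_{\ell-1}})$ count of vertices with inner edges --- a quantity your proof never uses. Your structural observations (source SCCs cannot reach each other; the witness edge into $D(v)$ from outside) are correct and match the intent, but they must be combined with the marginal pseudo-edge probability over ground edges rather than the much coarser conditional bound on already-queried edges.
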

\begin{proof}
By \Cref{lem:descendants-upper-bound}, vertex $v$ has at most $n^{5\sigma_{\ell-1} / \sigma_\ell}$ inner descendants. This provides an upper bound on the number of descendants of $v$ that are reachable via directed inner edges. By \Cref{clm:max-inner-edges}, there are at most $O(n^{1-\delta + \sigma_{\ell-1}})$ vertices with inner edges. At any point during the execution of the algorithm, there are at most
\begin{align*}
    X &= O\left(n^{1-\delta + \sigma_{\ell-1} + 5\sigma_{\ell-1}/\sigma_\ell}\right)
\end{align*}
pairs consisting of a descendant of $v$ and a vertex that has at least one inner edge. Since each pair has $\rho n$ ground edges between them, the total number of ground edges between these pairs is at most $\rho nX$. 

For any ground edge between a pair to become a level $\ell - 1$ real edge, it must be marked as a pseudo edge of level $\ell - 1$. The probability that a given ground edge between a pair is marked as a pseudo edge of level $\ell - 1$ is
\begin{align*}
    \frac{1}{n} \cdot \left( \frac{\rho_{\ell-1}}{\rho} \right),
\end{align*}
independently at random for each of the ground edges. To compute the probability that there is an intersection between the descendants of $v$ and the descendants of any other vertex $u$, we multiply the number of ground edges by the probability that each of them is a pseudo edge of level $\ell - 1$. Thus, the total probability of having an intersection is bounded by
\begin{align*}
    (\rho n X) \cdot \frac{1}{n} \cdot \left( \frac{\rho_{\ell-1}}{\rho} \right) = \rho_{\ell - 1} X = O(n^{5\sigma_{\ell - 1} / \sigma_\ell - \delta})
\end{align*}
\end{proof}

\begin{corollary}
    Consider all queried inner edges excluding those in $E^{inner}_\ell$. Let $k < 50\sigma_\ell / \sigma_{\ell-1}$ and let $v_1, \ldots, v_k$ be $k$ arbitrary vertices in $R$. The probability that there exists a vertex $u \in R \setminus \{v_1, \ldots, v_k\}$ such that the descendants of $u$ intersect with the descendants of any vertex in $\{v_1, \ldots, v_k\}$ is at most $O(n^{5\sigma_{\ell-1} / \sigma_\ell - \delta})$.
\end{corollary}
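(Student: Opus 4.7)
The plan is to reduce the corollary directly to Lemma~\ref{lem:intersect-desendants-prob} by a union bound over the $k$ designated vertices. Specifically, for each individual $v_i \in R$, Lemma~\ref{lem:intersect-desendants-prob} already guarantees that the probability of some $u \in R \setminus \{v_i\}$ having descendants that intersect the descendants of $v_i$ is at most $O(n^{5\sigma_{\ell-1}/\sigma_\ell - \delta})$. Summing these probabilities over $i = 1, \ldots, k$ gives an upper bound of $k \cdot O(n^{5\sigma_{\ell-1}/\sigma_\ell - \delta})$ on the event in the corollary. The crucial observation is that $k < 50\sigma_\ell/\sigma_{\ell-1}$, and since the level parameters $\sigma_\ell$ and $\sigma_{\ell-1}$ are constants determined by $\delta$ (independent of $n$), the factor $k$ is $O(1)$ and is absorbed into the big-$O$, yielding the desired bound.

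An essentially equivalent and slightly tighter alternative is to re-run the counting argument in Lemma~\ref{lem:intersect-desendants-prob} with the union of descendant sets of $v_1, \ldots, v_k$ replacing the single descendant set. By \Cref{lem:descendants-upper-bound}, each $v_i$ has at most $n^{5\sigma_{\ell-1}/\sigma_\ell}$ inner-descendants, so the union has size at most $k \cdot n^{5\sigma_{\ell-1}/\sigma_\ell} = O(n^{5\sigma_{\ell-1}/\sigma_\ell})$. Combined with the bound $O(n^{1 - \delta + \sigma_{\ell-1}})$ on the number of vertices carrying at least one inner edge (\Cref{clm:max-inner-edges}), the number of candidate pairs becomes $X = O(n^{1 - \delta + \sigma_{\ell-1} + 5\sigma_{\ell-1}/\sigma_\ell})$, and each of the $\rho n$ ground edges over such a pair is independently marked as a level-$(\ell-1)$ pseudo edge with probability $\rho_{\ell-1}/(\rho n)$, so the expected number of intersection-creating edges is $\rho_{\ell-1} X = O(n^{5\sigma_{\ell-1}/\sigma_\ell - \delta})$.

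There is no real obstacle here: the corollary is essentially immediate from the lemma because $k$ is constant in $n$. The only point that warrants care is verifying that the constant hidden in $k$ is independent of $n$, which is ensured by the hypothesis $k < 50\sigma_\ell/\sigma_{\ell-1}$ together with the fact that all $\sigma_i$ are fixed parameters of the construction.
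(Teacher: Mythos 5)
Your proof is correct and essentially matches the paper, which simply observes that the argument of \Cref{lem:intersect-desendants-prob} carries over because $k$ is a constant (indeed $\sigma_\ell/\sigma_{\ell-1} = 10/\delta$, so $k < 500/\delta$ is independent of $n$). Both your union-bound version and your re-counting version are valid instantiations of that observation.
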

\begin{proof}
    The proof follows the same reasoning as that of \Cref{lem:intersect-desendants-prob}, noting that $k$ is a constant.
\end{proof}

\begin{claim}\label{clm:small-component-in-intersect-tree}
    Consider all queried inner edges excluding those in $E^{inner}_\ell$. Let $C$ be an arbitrary connected component formed by these edges. It holds that $|C| \leq O(n^{5\sigma_{\ell-1} / \sigma_\ell})$ with high probability. Furthermore, each connected component contains at most $O(|C|)$ inner edges.
\end{claim}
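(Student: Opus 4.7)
The plan is to first reduce to bounding the number of root representatives contained in $C$, and then bound this number using the intersection probabilities from \cref{lem:intersect-desendants-prob}. Let $R_C := R \cap C$. Since every vertex in the directed inner subgraph is reachable via directed inner edges from some root in $R$ (by definition of $R$ as representatives of the zero-indegree SCCs), every vertex of $C$ is a descendant of some root in $R_C$. By \cref{lem:descendants-upper-bound} we therefore have $|C| \le |R_C| \cdot n^{5\sigma_{\ell-1}/\sigma_\ell}$ as well as the same bound on the total number of inner edges within $C$. Consequently, both parts of the claim reduce to showing that $|R_C| = O(1)$ with high probability.

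To bound $|R_C|$, I would introduce an auxiliary \emph{intersection graph} $H$ on $R$ in which two roots are adjacent iff their descendant sets meet, and argue that $R_C$ lies in a single connected component of $H$. Indeed, if not, $R_C$ would split into non-empty sets $A, B$ with $\mathrm{Desc}(A) \cap \mathrm{Desc}(B) = \emptyset$, so $C = \mathrm{Desc}(A) \sqcup \mathrm{Desc}(B)$. But any directed inner edge within $C$ must stay on one side: if $x \to y$ with $x \in \mathrm{Desc}(A)$ then $y \in \mathrm{Desc}(A)$, and similarly for $B$. Hence the two sides would be disconnected even in the undirected sense, contradicting that $C$ is a single undirected connected component.

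The connected components of $H$ can now be controlled by an iterative exploration using the corollary following \cref{lem:intersect-desendants-prob}. Starting from any $v_1 \in R$, grow the cluster one intersecting root at a time; conditioned on having already fixed $k < 50\sigma_\ell / \sigma_{\ell-1}$ roots in the cluster, the probability that some further root in $R$ has descendants intersecting the cluster is at most $O(n^{5\sigma_{\ell-1}/\sigma_\ell - \delta}) = n^{-\Omega(1)}$, using that $\sigma_{\ell-1}/\sigma_\ell$ is much smaller than $\delta$ in our parameter regime. Iterating $t-1$ expansion attempts bounds the probability that the cluster ever reaches size $t$ by $n^{-\Omega(t-1)}$; taking $t$ a sufficiently large constant and union-bounding over the $|R| = O(n^{1 - \delta + \sigma_{\ell-1}})$ possible starting roots gives $|R_C| = O(1)$ for every connected component simultaneously with high probability, yielding $|C| \le O(n^{5\sigma_{\ell-1}/\sigma_\ell})$ and the matching $O(|C|)$ bound on the number of inner edges in $C$. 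The main obstacle will be carefully handling the conditioning in the iterative expansion: each revealed root partially fixes the construction's randomness, and one must verify that the hypotheses of the corollary continue to hold at each step so that the expansion probabilities remain $n^{-\Omega(1)}$ throughout the exploration rather than drifting upward once several roots have been conditioned on.
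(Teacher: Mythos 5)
Your proposal is correct and takes essentially the same route as the paper: the intersection graph $H$ you define on $R$ is exactly the paper's auxiliary graph $H_R$, the iterative cluster-growth exploration matches the paper's set-growth process $S$, and both invoke \cref{lem:descendants-upper-bound} in the same way to convert a constant bound on cluster size into the $O(n^{5\sigma_{\ell-1}/\sigma_\ell})$ bound on $|C|$. The only addition is your explicit observation that $R_C := R \cap C$ must lie in a single connected component of $H$ (via the disjoint-descendants contradiction), a step the paper leaves implicit; the conditioning concern you raise is handled in the paper by simply invoking \cref{lem:intersect-desendants-prob} conditionally at each step.
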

\begin{proof}
    Define a graph $H_R$ with vertex set $R$, where an edge $(u, v)$ exists in $H_R$ if the inner descendants of $u$ and $v$ intersect. We will show that, with high probability, the largest connected component in $H_R$ has a size of at most $10/\delta$. By \Cref{lem:descendants-upper-bound}, each vertex has at most $n^{5\sigma_{\ell-1}/\sigma_\ell}$ inner descendants (and inner edges to its descendants), so proving this claim suffices to complete the proof.

    Consider the following exploration process starting from any vertex $v \in R$. Initialize a set $S = \{v\}$. At each step, examine the edges between vertices in $S$ and those in $R \setminus S$. If there exists an edge $(w, z)$ where $w \in S$ and $z \in R \setminus S$, add $z$ to $S$ and continue. The process halts when either no such edge exists or when $|S| > 10/\delta$. 

    Let $X_i$ denote the event that an edge between $S$ and $R \setminus S$ is revealed at step $i$. By \Cref{lem:intersect-desendants-prob}, the probability of this event is bounded as:
    \begin{align*}
        \Pr[X_i \mid X_1, \ldots, X_{i-1}] \leq O(n^{5\sigma_{\ell-1} / \sigma_\ell - \delta + \sigma_{\ell-1}}).
    \end{align*}
    Let $Y_v$ represent the event that the process reaches $|S| > 10/\delta$. The probability of this happening is:
    \begin{align*}
        \Pr[Y_v] &= \prod_{i=1}^{10/\delta} \Pr[X_i \mid X_1, X_2, \ldots, X_{i-1}] \\
        &\leq O\left( (n^{5\sigma_{\ell-1} / \sigma_\ell - \delta + \sigma_{\ell-1}})^{10/\delta} \right) \\
        &= O\left(\frac{1}{n^5}\right).
    \end{align*}
    Finally, applying a union bound over all vertices $v \in R$, the probability that any connected component in $H_R$ exceeds size $10/\delta$ is at most $O(n^{-4})$. Therefore, with high probability, the largest connected component in $H_R$ has size at most $10/\delta$.
\end{proof}

\begin{corollary}\label{cor:small-component-in-intersect-tree-number}
    Consider all queried inner edges excluding those in $E^{inner}_\ell$. Let $C$ be any connected component formed by the intersection of descendants of $k$ vertices in $R$. Then, with high probability, $k \leq 10/\delta$.
\end{corollary}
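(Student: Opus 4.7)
The plan is to reduce directly to the auxiliary graph $H_R$ constructed in the proof of \Cref{clm:small-component-in-intersect-tree}. Recall that $H_R$ has vertex set $R$ (one representative per strongly connected source component of the directed inner subgraph), with an edge $(u,v)$ whenever the inner descendants of $u$ and $v$ intersect. The proof of \Cref{clm:small-component-in-intersect-tree} established that, with high probability, every connected component of $H_R$ has size at most $10/\delta$.

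I would then argue that any undirected connected component $C$ of inner edges (excluding those in $E^{\text{inner}}_\ell$) is precisely the union of the descendant sets of the representatives it contains. Indeed, every vertex in $C$ is reachable by a directed inner path from some source strongly connected component, hence from the corresponding representative in $R$; so $C$ is the union of descendant sets of some set $S \subseteq R$. If we are given that $C$ is formed by the intersection of descendants of exactly $k$ vertices in $R$, that means $|S| = k$ and these $k$ vertices form a single connected component in $H_R$ (because $C$ is connected, any two of them must be linked through a chain of pairwise descendant intersections).

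Combining these two observations, $k$ is at most the size of some connected component of $H_R$. By \Cref{clm:small-component-in-intersect-tree} (and the bound derived in its proof), this size is at most $10/\delta$ with high probability, yielding the desired bound on $k$.

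The argument is essentially a repackaging of the previous claim, so there is no real obstacle beyond making the correspondence between $C$ and a connected component of $H_R$ precise. The only subtle point to double-check is that the high-probability event already conditioned on in \Cref{clm:small-component-in-intersect-tree} (no component of $H_R$ exceeds $10/\delta$) is the same event that yields the bound here, so no additional union bound or probabilistic argument is needed.
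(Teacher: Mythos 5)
Your proposal is correct and takes essentially the same approach as the paper, which simply states that the result follows directly from the proof of \Cref{clm:small-component-in-intersect-tree}. You merely spell out the correspondence between the $k$ representatives and a connected component of the auxiliary graph $H_R$, which is indeed the (implicit) content of that one-line proof.
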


\begin{proof}
    The result follows directly from the proof of \Cref{clm:small-component-in-intersect-tree}.
\end{proof}

We are ready to prove \Cref{lem:small-square-component}.

\begin{proof}[Proof of \Cref{lem:small-square-component}]
Assume an adversary selects the edges in $E^{inner}_\ell$ to connect the components. Let $\widehat{\mathcal{C}} = \{\widehat{C_1}, \widehat{C_2}, \ldots, \widehat{C_{c'}}\}$ denote the connected components prior to adding the edges in $E^{inner}_\ell$. By \Cref{clm:small-component-in-intersect-tree}, with high probability, each component satisfies $|\widehat{C_i}| = O(n^{5\sigma_{\ell-1} / \sigma_\ell})$ for all $i \in [c']$.

Now, let $C_1, C_2, \ldots, C_c$ represent the connected components formed after adding the edges in $E^{inner}_\ell$ and discarding components that do not contain any of these edges. Each edge in $E^{inner}_\ell$ can merge at most two components from $\widehat{\mathcal{C}}$, so the number of components in $\widehat{\mathcal{C}}$ connected by at least one edge from $E^{inner}_\ell$ is bounded by $O(|E^{inner}_\ell|)$. Now we prove each statement of the lemma separately:
\begin{itemize}
    \item[(i):] If $1 - g(\ell) > 0$, statement (i) of \Cref{lem:advantage-from-top} implies that $|E^{inner}_\ell| \leq O(n^{1 - g(\ell)})$ with high probability. Since each initial component has size $|\widehat{C_i}| = O(n^{5\sigma_{\ell-1} / \sigma_\ell})$, we conclude that:
\begin{align*}
    \sum_{i=1}^c |C_i| &\leq O(n^{1 - g(\ell) + 5\sigma_{\ell-1}/\sigma_\ell}),
\end{align*}
which completes the proof of the first statement.
    \item[(ii):] If $1 - g(\ell) < 0$, then by statement (ii) of \Cref{lem:advantage-from-top}, with probability $1 - O(n^{1 - g(\ell)})$, the size of $E^{inner}_\ell$ is zero which completes the proof of the second statement.
    \item[(iii):] If $1 - g(\ell) < 0$, $|E^{inner}_\ell| = \widetilde{O}(1)$ with high probability according to the statement (iii) of \Cref{lem:advantage-from-top}. Combining this with the fact that $|\widehat{C_i}| = O(n^{5\sigma_{\ell-1} / \sigma_\ell})$, the last statement follows immediately.
\end{itemize}
\end{proof}

\begin{corollary}\label{cor:total-number-edges-inner-comp}
    Let $C_1, C_2, \ldots, C_k$ be the underlying undirected connected components of inner edges, where each component contains at least one edge from $E^{inner}_\ell$. Let $E(C_i)$ denote the edge set of component $C_i$. Then, the following statements hold:
    \begin{itemize}
        \item[(i)] If $1 - g(\ell) > 0$, then with high probability,  
        \begin{align*}
            \sum_{i=1}^k |E(C_i)| \leq \widetilde{O}\left(n^{1 - g(\ell) + 5\sigma_{\ell-1}/\sigma_\ell}\right).
        \end{align*}
        \item[(ii)] If $1 - g(\ell) < 0$, then $k = 0$ with probability $1 - O(n^{1 - g(\ell)})$. 
        \item[(iii)] If $1 - g(\ell) < 0$, then with high probability,  
        \begin{align*}
            \sum_{i=1}^k |E(C_i)| \leq \widetilde{O}\left(n^{5\sigma_{\ell-1}/\sigma_\ell}\right).
        \end{align*}
    \end{itemize}
\end{corollary}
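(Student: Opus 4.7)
The plan is to adapt the proof of \Cref{lem:small-square-component} almost verbatim, replacing the vertex-counting step with an edge-counting step. The crucial observation is that \Cref{clm:small-component-in-intersect-tree} already gives us both a vertex bound and an edge bound on the pre-$E^{inner}_\ell$ components: each connected component $\widehat{C}$ of inner edges (excluding those in $E^{inner}_\ell$) satisfies $|\widehat{C}| \le O(n^{5\sigma_{\ell-1}/\sigma_\ell})$ and, by the ``furthermore'' part of that claim, contains at most $O(|\widehat{C}|) = O(n^{5\sigma_{\ell-1}/\sigma_\ell})$ inner edges. So the edge-side of each initial component is controlled at the same rate as the vertex-side.

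First I would, exactly as in \Cref{lem:small-square-component}, condition on the high-probability events from \Cref{clm:small-component-in-intersect-tree} and let $\widehat{C_1},\ldots,\widehat{C_{c'}}$ denote the connected components formed by the inner edges outside $E^{inner}_\ell$. I then re-introduce the edges of $E^{inner}_\ell$ and observe that each such edge, when added, can merge at most two previously separate $\widehat{C_j}$'s into the same final component. Consequently, the total number of initial components $\widehat{C_j}$ that are absorbed into some final $C_i$ (with at least one $E^{inner}_\ell$ edge) is at most $2|E^{inner}_\ell| + 1$. Summing the edges, we get
\[
\sum_{i=1}^{k} |E(C_i)| \;\le\; |E^{inner}_\ell| \;+\; (2|E^{inner}_\ell|+1)\cdot \max_j |E(\widehat{C_j})| \;\le\; \widetilde{O}\bigl(|E^{inner}_\ell|\cdot n^{5\sigma_{\ell-1}/\sigma_\ell}\bigr).
\]

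Finally, I would plug in the three cases of \Cref{lem:advantage-from-top}. In case~(i) with $1-g(\ell)>0$, we have $|E^{inner}_\ell| \le O(n^{1-g(\ell)})$ w.h.p., giving the bound $\widetilde{O}(n^{1-g(\ell)+5\sigma_{\ell-1}/\sigma_\ell})$. In case~(ii) with $1-g(\ell)<0$, we have $|E^{inner}_\ell|=0$ with probability $1-O(n^{1-g(\ell)})$, which forces $k=0$ since every $C_i$ must contain at least one $E^{inner}_\ell$ edge. In case~(iii), $|E^{inner}_\ell|=\widetilde{O}(1)$ w.h.p., yielding the bound $\widetilde{O}(n^{5\sigma_{\ell-1}/\sigma_\ell})$. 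I do not expect any genuine obstacle here: the only ``real'' content is the edge-side bound in \Cref{clm:small-component-in-intersect-tree}, which is already established, so the argument is essentially a mechanical edge-counting refinement of the vertex-counting proof of \Cref{lem:small-square-component}.
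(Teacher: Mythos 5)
Your proof is correct, and it takes a genuinely different route from the paper's. The paper's proof of this corollary is a one-liner: it uses the in-degree bound from \Cref{clm:max-in-deg-top} (every vertex has at most $3\sqrt{\log n}$ incoming edges in the whole queried subgraph) to conclude $|E(C_i)| \le \widetilde{O}(|C_i|)$, and then invokes \Cref{lem:small-square-component} as a black box on $\sum |C_i|$. You instead reopen the proof of \Cref{lem:small-square-component} and re-run the merge-counting decomposition directly on edges, invoking the ``furthermore'' edge bound from \Cref{clm:small-component-in-intersect-tree} (each pre-merge component $\widehat{C}$ has $O(|\widehat{C}|)$ inner edges) in place of the vertex bound. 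Both work and give the same asymptotics. The paper's route is shorter and more modular (the corollary becomes a trivial consequence of the lemma, and does not need the ``furthermore'' clause of \Cref{clm:small-component-in-intersect-tree} at all), whereas your route is self-contained given the claims and incidentally shows that the $\widetilde{O}$ in the corollary can be tightened to the $O$ of the lemma in case~(i). One small nit: the count ``$2|E^{inner}_\ell| + 1$ absorbed initial components'' should just be $2|E^{inner}_\ell|$ — with zero merging edges there are zero retained components, not one — but this has no asymptotic effect.
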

\begin{proof}
    Note that $|E(C_i)| \leq O(\log n) \cdot |C_i|$. This is because each vertex has an incoming degree of at most $3 \sqrt{\log n}$ in the entire queried subgraph by \Cref{clm:max-in-deg-top}. By merging the results from \Cref{lem:small-square-component} with the fact that $|E(C_i)| \leq O(\log n) \cdot |C_i|$, we derive each statement accordingly.
\end{proof}

\section{Smaller Connected Components Lead to Fewer Unbiased Inner Edges}

In this section, we leverage \Cref{lem:small-square-component} to demonstrate that shrinking the size of connected components makes it increasingly difficult for the algorithm to detect inner edges. For clarity and consistency, we extend the notions of spoiler vertices, spoiled vertices, spoiled edges, and shallow subgraphs from the warm-up section, adapting the terminology accordingly for level $\ell$.

\begin{definition}[$\ell$-Shallow Subgraph]\label{def:shallow-subgraph-extend}
Let the outer and inner edges be defined with respect to levels $\ell$ and $\ell - 1$ of the construction hierarchy. For a vertex $v$, the $\ell$-shallow subgraph of $v$ is the set of vertices that are reachable from $v$ by directed paths of length at most $10 \log n$ using only inner edges in the queried subgraph. We denote this subgraph as $T^\ell(v)$.
\end{definition}

\begin{restatable}{lemma}{vertexInShallowCountExtended}\label{lem:vertex-in-shallow-count-extended}
    Each vertex belongs to at most $\widetilde{O}(1)$ $\ell$-shallow subgraphs with high probability.
\end{restatable}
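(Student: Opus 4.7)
The plan is to closely mirror the proof of \Cref{lem:belong-shallow-count-vertex}, adapting it from the queried subgraph of all pseudo/real edges at the top level to the subgraph consisting only of inner edges at level $\ell$. For a fixed vertex $v$, I would perform a reverse breadth-first search from $v$ in the directed inner-edge subgraph, letting $V_i$ denote the set of vertices at reverse inner-distance exactly $i$ from $v$. Proving by induction that $|V_i| \leq O(i\sqrt{\log n})$ with high probability and summing over $i \leq 10\log n$ gives that the number of $u$ with $v \in T^\ell(u)$ is at most $\widetilde{O}(1)$. A union bound over all vertices $v$ then yields the statement.

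The inductive step follows the same structure as the base case. For the in-degree ingredient: since inner edges form a subset of all queried pseudo/real edges, \Cref{clm:max-in-deg-top} immediately implies that each vertex has at most $3\sqrt{\log n}$ incoming inner edges in the queried subgraph. By the directionality convention of \Cref{def:directing}, the very first inner edge incident to any $u \in V_{i-1}$ is directed out of $u$, so each additional incoming inner edge at $u$ comes from a query whose other endpoint already had at least one discovered inner edge, placing that endpoint in the set $\hat{V}_{\text{inner}}$ of vertices with at least one inner edge. By \Cref{clm:max-inner-edges}, $|\hat{V}_{\text{inner}}| \leq O(n^{1-\delta+\sigma_{\ell-1}})$. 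Counting the $|V_{i-1}| \cdot |\hat{V}_{\text{inner}}|$ relevant pairs (each contributing $\rho n$ ground edges, each becoming an inner edge with probability at most $\rho_{\ell-1}/(n\rho)$ by \Cref{clm:pseudo-prob}), the expected total number of extra incoming inner edges to $V_{i-1}$ is $\widetilde{O}(n^{2\sigma_{\ell-1}-\delta}) = o(1)$, since $\sigma_{\ell-1} \leq \sigma_L < \delta/2$. Applying \Cref{prop:chernoff} to the (independent) indicators of distinct ground edges becoming inner edges gives at most $O(\sqrt{\log n})$ such extra edges with probability at least $1 - 1/n^2$. Combined with the assumption that every vertex in $V_{i-1}$ has at least one incoming inner edge, this yields $|V_i| \leq |V_{i-1}| + O(\sqrt{\log n}) \leq O(i\sqrt{\log n})$, closing the induction.

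The main technical subtlety is verifying that the Chernoff expectation remains sub-constant at every level $\ell$, which reduces to the inequality $2\sigma_{\ell-1} < \delta$ for all $\ell \in [L]$. This holds throughout the parameter hierarchy because $\sigma_{\ell-1}$ is bounded above by $\sigma_L$, which is chosen strictly less than $\delta/2$ at the top level. A secondary point is that invoking \Cref{clm:max-inner-edges} requires its statement at level $\ell$, which is independently established. Once both are in place, the inductive argument is essentially verbatim the top-level case, and the final union bound succeeds because each per-vertex failure event has probability at most $\operatorname{poly}(1/n)$.
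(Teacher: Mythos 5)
Your overall plan mirrors the paper's proof: reverse-BFS from $v$ along directed inner edges, inductively bound $|V_i|$ via Chernoff, sum over $i\le 10\log n$, and union bound over $v$. That structure is correct. However, there is a genuine error in the directionality reasoning in your inductive step.

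You claim that \enquote{the very first inner edge incident to any $u\in V_{i-1}$ is directed out of $u$, so each additional incoming inner edge at $u$ comes from a query whose other endpoint already had at least one discovered \emph{inner} edge,} and you use this to restrict attention to $\hat V_{\text{inner}}$ (vertices with at least one inner edge), bounded by \Cref{clm:max-inner-edges}. This misreads \Cref{def:directing}: the direction assigned to a queried edge—inner or not—is determined by whether the endpoints have any previously discovered \emph{edge of any kind}, not by whether they have inner edges. So the first inner edge incident to $u$ can perfectly well be incoming (e.g.\ if $u$ already had a non-inner discovered edge, or if both endpoints were singletons and the random direction landed incoming), and an incoming inner edge to $u$ from $w$ only certifies $w\in\hat V$ (the set of vertices with at least one discovered edge of any kind), not $w\in\hat V_{\text{inner}}$. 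The correct bound on the candidate set is therefore $|\hat V| = O(n^{1-\delta+\sigma_L})$ from \Cref{lem:total-edges-discovered}, and the expected number of extra incoming inner edges to $V_{i-1}$ becomes $\widetilde O\!\left(|\hat V|\cdot |V_{i-1}|\cdot \rho_{\ell-1}\right) = \widetilde O(n^{\sigma_L+\sigma_{\ell-1}-\delta})$, not $\widetilde O(n^{2\sigma_{\ell-1}-\delta})$. This is still $o(1)$ because $\sigma_L+\sigma_{\ell-1}\le 2\sigma_L=2\delta/10<\delta$, so the Chernoff step and the rest of the induction go through exactly as you describe once the set is replaced. In short: a fixable slip, but as written the claim about the source of incoming inner edges is false and the invocation of \Cref{clm:max-inner-edges} in place of \Cref{lem:total-edges-discovered} is unjustified.
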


\begin{restatable}{corollary}{edgeInShallowCountExtended}\label{cor:edge-in-shallow-count-extended}
    Each inner edge belongs to at most $\widetilde{O}(1)$ $\ell$-shallow subgraphs with high probability.
\end{restatable}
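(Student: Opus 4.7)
The plan is to derive this corollary directly from \Cref{lem:vertex-in-shallow-count-extended} by the same one-line reduction used in \Cref{cor:belong-shallow-count-edge}. Fix an inner edge and recall that, by \Cref{def:directing}, it carries a direction; write it as $(u,v)$ with tail $u$. The $\ell$-shallow subgraph $T^\ell(w)$ is obtained by following directed inner edges out of $w$ for at most $10\log n$ steps (\Cref{def:shallow-subgraph-extend}); therefore, $(u,v)$ can appear in $T^\ell(w)$ only if $u$ itself is reached from $w$ by such a directed inner path, i.e.\ $u \in T^\ell(w)$.

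Hence the set of $\ell$-shallow subgraphs containing the edge $(u,v)$ is a subset of the set of $\ell$-shallow subgraphs containing the vertex $u$. By \Cref{lem:vertex-in-shallow-count-extended}, the latter has size $\widetilde{O}(1)$ with high probability, which yields the same bound for the former. The corollary then follows by taking a union bound over the at most $O(n^{1-\delta+\sigma_{\ell-1}})$ inner edges guaranteed by \Cref{clm:max-inner-edges}, which is absorbed into the high-probability event already provided by \Cref{lem:vertex-in-shallow-count-extended} applied to each tail vertex.

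There is no real obstacle here: the entire content is the containment $\{w : (u,v) \in T^\ell(w)\} \subseteq \{w : u \in T^\ell(w)\}$, which is immediate from the definition of the $\ell$-shallow subgraph as a set of forward-reachable vertices via directed inner edges. All the probabilistic work has already been done in \Cref{lem:vertex-in-shallow-count-extended}.
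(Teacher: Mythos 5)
Your proposal is correct and takes essentially the same approach as the paper: the paper's own proof is the same one-line observation that if the tail $u$ of the directed inner edge $(u,v)$ lies in at most $\widetilde{O}(1)$ $\ell$-shallow subgraphs (by \Cref{lem:vertex-in-shallow-count-extended}), then so does the edge. Your slightly more explicit spelling-out of the containment $\{w : (u,v) \in T^\ell(w)\} \subseteq \{w : u \in T^\ell(w)\}$ is exactly what the paper leaves implicit.
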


We postpone the proofs of the above lemma and corollary to the appendix, as they are similar to those of \Cref{lem:belong-shallow-count-vertex} and \Cref{cor:belong-shallow-count-edge}.

\begin{definition}[$\ell$-Spoiler Vertex]\label{def:spoiler-vertex-ell}
For $1 < \ell \leq L$, let $\widehat{E}$ denote the set of inner edges that belong to a connected component containing at least one edge from $E^{inner}_\ell$. Let $u$ be a vertex in a connected component that contains at least one edge from $E^{inner}_\ell$. We say a vertex $u$ is an {\em $\ell$-spoiler} if an edge $(u, v)$ is discovered by the algorithm at a time when both $u$ and $v$ already have non-zero degree.
\end{definition}

\begin{claim}\label{clm:ell-spoiler-count}
    Suppose that $1-g(\ell-1)-2\sigma_{\ell-1} \geq 0$. Then, there are at most $O(n^{1-g(\ell-1)-2\sigma_{\ell-1}})$ $\ell$-spoiler vertices with high probability.
\end{claim}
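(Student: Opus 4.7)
The plan is to mimic the counting argument of \Cref{clm:spoiler-count} at level $\ell$, but restricting attention to the smaller candidate vertex set provided by \Cref{lem:small-square-component}. I will first handle the main case $1-g(\ell) > 0$; the degenerate case $1-g(\ell) < 0$ follows immediately from \Cref{lem:small-square-component}(ii), which yields with polynomially-high probability that no component contains an $E^{\text{inner}}_\ell$ edge, whence no $\ell$-spoilers exist at all.

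If $u$ is an $\ell$-spoiler, then by \Cref{def:spoiler-vertex-ell} there is an inner edge $(u,v)$ discovered when both endpoints already had non-zero inner degree, and $v$ must lie in the same inner-edge component as $u$ (which by assumption contains an $E^{\text{inner}}_\ell$ edge). Thus both $u,v$ lie in the vertex set $S$ formed by the union of all components with at least one $E^{\text{inner}}_\ell$ edge, and \Cref{lem:small-square-component}(i) gives $|S|\leq \Ot(n^{1-g(\ell)+5\sigma_{\ell-1}/\sigma_\ell})$ with high probability. The number of $\ell$-spoilers is then at most twice the number of inner edges whose endpoints both lie in $S$. Following \Cref{clm:spoiler-count}, even if the algorithm queried every one of the $|S|^2$ pairs in $S\times S$, it would access at most $|S|^2\rho n$ ground edges, and by \Cref{clm:pseudo-prob} each is independently a level-$(\leq\ell-1)$ pseudo edge with probability $\rho_{\ell-1}/(n\rho)$. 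Hence the expected number of inner edges inside $S\times S$ is at most $|S|^2\rho_{\ell-1}$, and applying \Cref{prop:chernoff} verbatim as in \Cref{clm:spoiler-count} concentrates this up to polylogarithmic factors, giving a w.h.p.\ bound of $\Ot(|S|^2\rho_{\ell-1})$ on the number of $\ell$-spoilers.

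Substituting $\rho_{\ell-1}=n^{\sigma_{\ell-1}-1}$ and the bound on $|S|$, the spoiler count becomes $\Ot(n^{1-2g(\ell)+10\sigma_{\ell-1}/\sigma_\ell+\sigma_{\ell-1}})$. Using \Cref{obs:g-function-properties}(i), namely $g(\ell-1)=g(\ell)+\delta-5\sigma_{\ell-1}/\sigma_\ell-5\sigma_{\ell-1}$, the required inequality $1-2g(\ell)+10\sigma_{\ell-1}/\sigma_\ell+\sigma_{\ell-1}\leq 1-g(\ell-1)-2\sigma_{\ell-1}$ reduces algebraically to $g(\ell)\geq \delta+5\sigma_{\ell-1}/\sigma_\ell-2\sigma_{\ell-1}$. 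This holds under our parameter choices, since $g$ is non-increasing in $\ell$ (again by \Cref{obs:g-function-properties}(i) together with $\delta$ being much larger than $\sigma_{\ell-1}/\sigma_\ell$ and $\sigma_{\ell-1}$), so $g(\ell)\geq g(L)=2\delta-5\sigma_L$, and the right-hand side is at most $\delta+o(\delta)$. The main obstacle I anticipate is the clean handling of the two sign cases of $1-g(\ell)$ and ensuring the Chernoff step survives when the expectation $|S|^2\rho_{\ell-1}$ is sub-logarithmic; both are handled by the same standard maneuver used in the proof of \Cref{clm:spoiler-count}, where the concentration bound is absorbed into the $\Ot$ notation.
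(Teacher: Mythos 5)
Your proposal is correct in substance but takes a genuinely different (and in fact tighter) counting route than the paper, so let me compare the two. The paper bounds the spoiler count by pairing endpoints of $\widehat{E}$ (the inner edges inside components that contain an $E^{\text{inner}}_\ell$ edge) with \emph{all} vertices that have an incident inner edge, of which there are $O(n^{1-\delta+\sigma_{\ell-1}})$ by \Cref{clm:max-inner-edges}, then uses \Cref{obs:g-function-properties}(ii) to collapse the exponent to $1-g(\ell-1)-3\sigma_{\ell-1}$. You instead observe that for an $\ell$-spoiler the discovered inner edge joins two vertices that both end up in $S$ (the vertex set of the $E^{\text{inner}}_\ell$-touched components), and hence count only $S\times S$ pairs, bounding $|S|$ via \Cref{lem:small-square-component}(i). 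Since the hypothesis $1-g(\ell-1)-2\sigma_{\ell-1}\geq 0$ together with $g$ decreasing in $\ell$ forces $|S|\leq n^{1-\delta+\sigma_{\ell-1}}$, your $|S|^2\rho_{\ell-1}$ is strictly smaller than the paper's $|\widehat{E}|\cdot\rho_{\ell-1}n^{1-\delta+\sigma_{\ell-1}}$, so the argument goes through. (Both proofs share the same mild informality about $S$, resp.\ $\widehat{E}$, being determined adaptively; your exposure-martingale appeal mirrors the paper's.)

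Two small flaws, neither fatal. First, the case $1-g(\ell)<0$ that you carve out cannot occur under the stated hypothesis: since $g(\ell-1)>g(\ell)$, the assumption $1-g(\ell-1)\geq 2\sigma_{\ell-1}>0$ already gives $1-g(\ell)>0$, so the paper rightly ignores it. Second, and more substantively, your closing numerical estimate is wrong: from $\sigma_{\ell-1}/\sigma_\ell=\delta/10$, the right-hand side of your reduced inequality is $\delta+5\sigma_{\ell-1}/\sigma_\ell-2\sigma_{\ell-1}=\tfrac{3\delta}{2}-2\sigma_{\ell-1}$, not ``at most $\delta+o(\delta)$''. This is dangerously close to $g(L)=2\delta-5\sigma_L=\tfrac{3\delta}{2}$: the inequality $g(\ell)\geq g(L)>\tfrac{3\delta}{2}-2\sigma_{\ell-1}$ holds only with margin $2\sigma_{\ell-1}$, so it is the tiny $-2\sigma_{\ell-1}$ term, not a large $\delta/2$ cushion, that saves you (and that also absorbs the polylog factor). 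Had the term $5\sigma_{\ell-1}/\sigma_\ell$ really been $o(\delta)$ the sums in the definition of $g$ would telescope very differently. The paper's identity \Cref{obs:g-function-properties}(ii) is there precisely to make this delicate cancellation explicit; your direct check reaches the right conclusion but misattributes \emph{why} it holds, and would mislead a reader verifying the parameter regime.
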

\begin{proof}
    Let $C_1, C_2, \ldots, C_k$ be the underlying undirected connected components of inner edges, where each component contains at least one edge from $E^{inner}_\ell$. Denote by $\widehat{E}$ the set of inner edges in these connected components. By statement (i) of \Cref{cor:total-number-edges-inner-comp}, we have
\begin{align*}
    |\widehat{E}| \leq O\left(n^{1 - g(\ell) + 5\sigma_{\ell-1}/\sigma_\ell}\right).
\end{align*}
Now, consider the scenario where the algorithm queries a pair of vertices and discovers an edge in $\widehat{E}$. There are at most $O\left(n^{1 - \delta + \sigma_{\ell - 1}}\right)$ vertex pairs where one vertex is an endpoint of the queried edge, and the other already has an incident inner edge in the queried subgraph. Note that each query between two vertices that already have a non-zero degree in the queried subgraph, and one of the endpoint is in component where an edge of $E^{inner}_\ell$, results in either a pseudo or real edge of level $\ell-1$, creating two $\ell$-spoiler vertices by \Cref{def:spoiler-vertex-ell}. Therefore, the total number of such vertex pairs is bounded by
\begin{align*}
    O\left(|\widehat{E}| \cdot n^{1 - \delta + \sigma_{\ell-1}}\right).
\end{align*}
If edges are revealed between all these pairs, the total number of ground edges would be at most
\begin{align*}
    O\left(|\widehat{E}| \cdot \rho n^{2 - \delta + \sigma_{\ell-1}}\right),
\end{align*}
where each pair forms a pseudo or real edge of level $\ell - 1$ with probability at most $\rho_{\ell - 1} / (\rho n)$. Hence, the expected number of edges discovered by the algorithm between such pairs is bounded by $O\left(|\widehat{E}| \cdot \rho_{\ell - 1} n^{1 - \delta + \sigma_{\ell-1}}\right)$. Applying the Chernoff bound, we can show that with high probability, the number of $\ell$-spoiler vertices is at most $O\left(|\widehat{E}| \cdot \rho_{\ell - 1} n^{1 - \delta + \sigma_{\ell-1}}\right)$. Combining this with the fact that $\rho_{\ell - 1} = O\left(n^{1 - \sigma_{\ell-1}}\right)$ and the bound on $\widehat{E}$, the total number of $\ell$-spoiler vertices is bounded by
\begin{align*}
    \wt{O}\left(n^{1 - g(\ell) + 5\sigma_{\ell-1}/\sigma_\ell - \delta + 2\sigma_{\ell-1}}\right),
\end{align*}
with high probability. Finally, by statement (ii) of \Cref{obs:g-function-properties}, we have,
\begin{align*}
    1 - g(\ell) - \delta + 5\sigma_{\ell-1}/\sigma_\ell + 2\sigma_{\ell-1} = 1-g(\ell - 1) - 3\sigma_{\ell-1},
\end{align*}
which implies that the total number of $\ell$-spoiler vertices is at most $O(n^{1 - g(\ell-1) - 2\sigma_{\ell-1}})$.
\end{proof}

\begin{claim}\label{clm:ell-spoiler-count-prob}
    Suppose that $1-g(\ell-1)-2\sigma_{\ell-1} < 0$. Then, the probability of having an $\ell$-spoiler vertex is at most $O(n^{1-g(\ell-1)-2\sigma_{\ell - 1}})$. Also, there are at most $\wt{O}(1)$ $\ell$-spoiler vertices with high probability.
\end{claim}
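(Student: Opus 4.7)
The plan is to adapt the proof of \Cref{clm:ell-spoiler-count}, splitting into two cases according to the sign of $1-g(\ell)$ and invoking the corresponding statements of \Cref{cor:total-number-edges-inner-comp}. Following the earlier computation, the expected number of $\ell$-spoiler vertices is at most $O(|\widehat{E}|\cdot \rho_{\ell-1}\cdot n^{1-\delta+\sigma_{\ell-1}}) = O(|\widehat{E}| \cdot n^{-\delta+2\sigma_{\ell-1}})$ using $\rho_{\ell-1}=n^{\sigma_{\ell-1}-1}$, so it suffices to bound $E[|\widehat{E}|]$ in each case.

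In the case $1-g(\ell)>0$, I will apply statement (i) of \Cref{cor:total-number-edges-inner-comp} to obtain $|\widehat{E}|\leq \wt{O}(n^{1-g(\ell)+5\sigma_{\ell-1}/\sigma_\ell})$ with high probability. Multiplying by $n^{-\delta+2\sigma_{\ell-1}}$ and simplifying the exponent via statement (ii) of \Cref{obs:g-function-properties} turns the expected spoiler count into $\wt{O}(n^{1-g(\ell-1)-3\sigma_{\ell-1}})$.

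In the case $1-g(\ell)<0$, neither statement (ii) nor (iii) of \Cref{cor:total-number-edges-inner-comp} alone gives a strong enough bound, so the plan is to combine them. Letting $\mathcal{E}$ denote the event that $\widehat{E}$ is non-empty, statement (ii) gives $\Pr[\mathcal{E}]\leq O(n^{1-g(\ell)})$, while statement (iii) caps $|\widehat{E}|\leq \wt{O}(n^{5\sigma_{\ell-1}/\sigma_\ell})$ with high probability. Since $|\widehat{E}|=0$ off $\mathcal{E}$, multiplying these bounds (and absorbing the low-probability tail of (iii) into a lower order $1/\poly(n)$ term) yields $E[|\widehat{E}|]\leq \wt{O}(n^{1-g(\ell)+5\sigma_{\ell-1}/\sigma_\ell})$, which matches the first case and produces the same expected spoiler count $\wt{O}(n^{1-g(\ell-1)-3\sigma_{\ell-1}})$ after invoking \Cref{obs:g-function-properties}(ii) again.

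Since $\sigma_{\ell-1}>0$, the expected spoiler count is bounded by $O(n^{1-g(\ell-1)-2\sigma_{\ell-1}})$, so Markov's inequality immediately yields both the probability bound on having any $\ell$-spoiler and, because this expectation is polynomially small, a high-probability bound of zero---hence at most $\wt{O}(1)$---$\ell$-spoilers. I expect the main difficulty to be the second case: one needs to carefully fold the small probability of $\mathcal{E}$ together with the conditional size cap on $|\widehat{E}|$, ensuring the low-probability tail of statement (iii) is absorbed and does not dominate the final bound.
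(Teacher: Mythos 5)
Your proposal follows the same case split on the sign of $1-g(\ell)$ and invokes the same ingredients (statements (i), (ii), (iii) of \Cref{cor:total-number-edges-inner-comp} and statement (ii) of \Cref{obs:g-function-properties}) as the paper's proof, so it is essentially the same argument. The only deviation is in the final step: you frame the computation as bounding $E[|\widehat{E}|]$ and then applying Markov to the expected spoiler count, whereas the paper conditions on the high-probability events governing $\widehat{E}$ and finishes with a Chernoff bound (which is cleaner for the $\wt{O}(1)$ high-probability statement, since Markov alone only gives $\Pr[\text{any spoiler}]\le n^{-c}$ for a possibly small exponent $c$, while the conditional Chernoff application over the independent ground-edge indicators gives the stronger polynomial tail the paper wants). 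You flag the need to absorb the low-probability tail of statement (iii) in the second case, but you should apply the same care in the first case too: statement (i) of \Cref{cor:total-number-edges-inner-comp} is a whp bound on $|\widehat{E}|$, not a deterministic one, so you cannot directly convert it into a deterministic bound on $E[|\widehat{E}|]$ without conditioning and absorbing that tail as well. These are presentational refinements; the core argument matches the paper.
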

\begin{proof}
    Let $C_1, C_2, \ldots, C_k$ be the underlying undirected connected components of inner edges, where each component contains at least one edge from $E^{inner}_\ell$. Denote by $\widehat{E}$ the set of inner edges in these connected components. We prove the claim by considering the following two cases (note that $1-g(\ell) \neq 0$ by statement (iv) \Cref{obs:g-function-properties}):

\begin{itemize}
    \item \textbf{$1 - g(\ell) > 0$:}  
    In this case, by statement (i) of \Cref{cor:total-number-edges-inner-comp}, we have
    \begin{align*}
        |\widehat{E}| \leq \widetilde{O}\left(n^{1 - g(\ell) + 5\sigma_{\ell-1}/\sigma_\ell}\right),
    \end{align*}
    with high probability. With the exact same proof as the proof of \Cref{clm:ell-spoiler-count}, the probability of having a $\ell$-spoiler vertices is bounded by
\begin{align*}
    \wt{O}\left(n^{1 - g(\ell) + 5\sigma_{\ell-1}/\sigma_\ell - \delta + 2\sigma_{\ell-1}}\right).
\end{align*}
Finally, by statement (ii) of \Cref{obs:g-function-properties}, we have,
\begin{align*}
    1 - g(\ell) - \delta + 5\sigma_{\ell-1}/\sigma_\ell + 2\sigma_{\ell-1} = 1-g(\ell - 1) - 3\sigma_{\ell-1},
\end{align*}
which implies that the probability of having a $\ell$-spoiler vertices is at most $O(n^{1 - g(\ell-1) - 2\sigma_{\ell-1}})$.
\item \textbf{$1 - g(\ell) < 0$:} In this case, by statement (ii) of \Cref{cor:total-number-edges-inner-comp}, we have $k=0$ with probability $1 - O(n^{1-g(\ell)})$ which implies that there is no $\ell$-spoiler vertex. Now suppose that we condition on $k>0$. Then, by statement (iii) of \Cref{cor:total-number-edges-inner-comp}, we have
    \begin{align*}
        |\widehat{E}| \leq \widetilde{O}\left(n^{5\sigma_{\ell-1}/\sigma_\ell}\right).
    \end{align*}
    With the exact same proof as the proof of \Cref{clm:ell-spoiler-count}, the probability of having an $\ell$-spoiler vertex is bounded by  
\begin{align*}
    \widetilde{O}\left(n^{5\sigma_{\ell-1}/\sigma_\ell + 2\sigma_{\ell-1} - \delta}\right).
\end{align*}
Since the probability of having a component containing an edge from $E^{inner}_\ell$ is $O\left(n^{1 - g(\ell)}\right)$, the overall probability of having an $\ell$-spoiler vertex is upper bounded by  
\begin{align*}
    O\left(n^{1 - g(\ell)}\right) \cdot \widetilde{O}\left(n^{5\sigma_{\ell-1}/\sigma_\ell + 2\sigma_{\ell-1} - \delta}\right) = \wt{O}\left(n^{1 - g(\ell) + 5\sigma_{\ell-1}/\sigma_\ell - \delta + 2\sigma_{\ell-1}}\right).
\end{align*}
Thus, combining with statement (ii) of \Cref{obs:g-function-properties}, the probability of having an $\ell$-spoiler vertex is at most $O\left(n^{1 - g(\ell-1) - 2\sigma_{\ell - 1}}\right)$.  
\end{itemize}
In both cases, since the expected number of $\ell$-spoiler vertices is smaller than 1, then, using Chernoff bound, with high probability there are at most $\wt{O}(1)$ $\ell$-spoiler vertices which concludes the proof.
\end{proof}

\begin{definition}[$\ell$-Spoiled Vertex]\label{def:spoiled-vertex-ell}
For $1 < \ell \leq L$, let $\widehat{E}$ denote the set of inner edges that belong to a connected component containing at least one edge from $E^{inner}_\ell$. Let $u$ be a vertex in a connected component that contains at least one edge from $E^{inner}_\ell$. We say a vertex $v$ is called an {\em $\ell$-spoiled} vertex if its shallow subgraph contains any of the following: 
\begin{itemize}
    \item[(i)] an $\ell$-spoiler vertex; or
    \item[(ii)] at least $n^{\delta - 2\sigma_{\ell-1}}$.
\end{itemize}
\end{definition}

\begin{observation}\label{obs:tree-structure-of-unspoiled-ell}
    Let $v$ be a vertex that is not $\ell$-spoiled. Then, the $\ell$-shallow subgraph of $v$ forms a rooted tree with at most $n^{\delta - 2\sigma_{\ell - 1}}$ vertices. Additionally, for every edge $(u,w)$ in the $\ell$-shallow subgraph of $v$, when the algorithm queries this edge, vertex $w$ is a singleton.
\end{observation}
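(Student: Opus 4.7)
The plan is to mirror the short proof of \Cref{obs:tree-structure-of-unspoiled} at the top level, adapting the definitions to level $\ell$. The statement has two parts, and I would handle them separately, with the bulk of the work being a direct unpacking of \Cref{def:spoiler-vertex-ell} and \Cref{def:spoiled-vertex-ell}.

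First I would establish the size bound. Because $v$ is not $\ell$-spoiled, condition (ii) of \Cref{def:spoiled-vertex-ell} is violated, which immediately gives $|T^\ell(v)| \le n^{\delta - 2\sigma_{\ell-1}}$. No further work is needed for this part.

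Next I would prove the rooted-tree structure by arguing, for every directed inner edge $(u,w)$ appearing in $T^\ell(v)$, that at the moment the algorithm queried this edge, $w$ was still a singleton in the inner queried subgraph. Suppose for contradiction this fails for some $(u,w) \in T^\ell(v)$: then at the query time, $w$ already had non-zero inner degree, and by the orientation convention of \Cref{def:directing} (applied at the inner level), $u$ also had non-zero inner degree. Since $(u,w)$ itself lies in $T^\ell(v)$, the vertices $u,w$ belong to the same connected component of inner edges as $v$, and in the context where we apply the observation this component contains an edge of $E^{\textnormal{inner}}_\ell$. By \Cref{def:spoiler-vertex-ell}, $w$ is then an $\ell$-spoiler vertex, and $w \in T^\ell(v)$, so condition (i) of \Cref{def:spoiled-vertex-ell} would force $v$ itself to be $\ell$-spoiled, a contradiction. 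Therefore $w$ is always a singleton at query time, which precisely means that each newly discovered edge attaches a fresh vertex to the growing structure rooted at $v$. Consequently $T^\ell(v)$ is acyclic and connected rooted at $v$, i.e.\ a rooted tree.

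There is no real obstacle here: the observation is purely a definitional consequence, and the only subtlety is remembering to use the edge-orientation rule from \Cref{def:directing} so that ``both endpoints previously explored'' really does follow from ``$w$ is not a singleton'' at the time of the query. The length bound $10\log n$ baked into $T^\ell(v)$ is not used in this proof; it will be consumed by the surrounding analysis that invokes this observation.
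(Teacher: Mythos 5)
Your proposal is correct and follows exactly the argument the paper uses (the paper's own proof simply defers to the top-level \Cref{obs:tree-structure-of-unspoiled} and says to adapt \Cref{def:spoiler-vertex-ell} and \Cref{def:spoiled-vertex-ell}, which is precisely what you unpack). You even correctly flag the two implicit points the paper glosses over — that the orientation rule of \Cref{def:directing} is what turns ``$w$ not a singleton'' into ``both endpoints have nonzero degree,'' and that the spoiler definition tacitly requires $v$'s component to contain an edge of $E^{\mathrm{inner}}_\ell$, which is satisfied in the context (\Cref{lem:large-component-focs-modification}) where the observation is invoked.
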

\begin{proof}
    Proof is the same as \Cref{obs:tree-structure-of-unspoiled} by adapting \Cref{def:spoiler-vertex-ell} and \Cref{def:spoiled-vertex-ell}.
\end{proof}

\begin{lemma}\label{lem:ell-spoiled-count}
    Suppose that $1-g(\ell-1)-2\sigma_{\ell-1} \geq 0$. Then, there are at most $O(n^{1-g(\ell-1)-\sigma_{\ell-1}})$ $\ell$-spoiled vertices with high probability.
\end{lemma}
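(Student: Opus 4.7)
The plan is to mirror the proof of \Cref{lem:spoiled-count} almost verbatim, but with the level-$\ell$ analogues of the ingredients: \Cref{clm:ell-spoiler-count} in place of \Cref{clm:spoiler-count}, \Cref{lem:vertex-in-shallow-count-extended} and \Cref{cor:edge-in-shallow-count-extended} in place of \Cref{lem:belong-shallow-count-vertex} and \Cref{cor:belong-shallow-count-edge}, and \Cref{cor:total-number-edges-inner-comp} in place of \Cref{lem:total-edges-discovered}. I will bound the two conditions of \Cref{def:spoiled-vertex-ell} separately and then add them.

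For condition (i), I condition on the high-probability event of \Cref{clm:ell-spoiler-count}, which (because the hypothesis $1-g(\ell-1)-2\sigma_{\ell-1}\geq 0$ matches its hypothesis) gives at most $O(n^{1-g(\ell-1)-2\sigma_{\ell-1}})$ $\ell$-spoiler vertices. By \Cref{lem:vertex-in-shallow-count-extended}, each vertex lies in only $\widetilde{O}(1)$ $\ell$-shallow subgraphs, so the number of vertices $v$ whose $\ell$-shallow subgraph contains an $\ell$-spoiler is $\widetilde{O}(n^{1-g(\ell-1)-2\sigma_{\ell-1}})$, which is at most $O(n^{1-g(\ell-1)-\sigma_{\ell-1}})$ for large $n$.

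For condition (ii), I count via a Markov-type double-counting: by \Cref{cor:edge-in-shallow-count-extended}, each inner edge appears in only $\widetilde{O}(1)$ $\ell$-shallow subgraphs, so $\sum_v |T^\ell(v)| \leq \widetilde{O}(1)\cdot\sum_i |E(C_i)|$, where the $C_i$'s are the connected components of inner edges containing at least one edge of $E^{\mathrm{inner}}_\ell$ (vertices outside such components have empty $T^\ell$ since they carry no inner edges relevant to making them $\ell$-spoiled). Depending on the sign of $1-g(\ell)$, \Cref{cor:total-number-edges-inner-comp} (statement (i) or (iii)) gives $\sum_i|E(C_i)| \leq \widetilde{O}(n^{1-g(\ell)+5\sigma_{\ell-1}/\sigma_\ell})$ or $\widetilde{O}(n^{5\sigma_{\ell-1}/\sigma_\ell})$, and in either case (using the second bound is dominated) the number of vertices with $|T^\ell(v)| \geq n^{\delta-2\sigma_{\ell-1}}$ is at most
\[
\widetilde{O}\!\left(n^{\,1-g(\ell)+5\sigma_{\ell-1}/\sigma_\ell-\delta+2\sigma_{\ell-1}}\right).
\]
Applying statement (ii) of \Cref{obs:g-function-properties}, the exponent equals $1-g(\ell-1)-3\sigma_{\ell-1}$, so this quantity is at most $O(n^{1-g(\ell-1)-\sigma_{\ell-1}})$.

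Summing the bounds from the two conditions yields the stated $O(n^{1-g(\ell-1)-\sigma_{\ell-1}})$ bound with high probability. The only mildly delicate step is the condition-(ii) case split on the sign of $1-g(\ell)$; in the $1-g(\ell)<0$ regime, I would additionally union-bound over the (low-probability) event that $E^{\mathrm{inner}}_\ell$ is nonempty, using statement (ii) of \Cref{cor:total-number-edges-inner-comp}, so that the claimed deterministic bound holds inside the high-probability event we condition on. Apart from this bookkeeping, the proof is a routine translation of \Cref{lem:spoiled-count} into level-$\ell$ language, and \Cref{obs:g-function-properties}(ii) does all of the algebraic work.
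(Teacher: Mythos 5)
Your proof is correct and matches the paper's argument essentially verbatim: condition~(i) via \Cref{clm:ell-spoiler-count} and \Cref{lem:vertex-in-shallow-count-extended}, condition~(ii) via the double-counting $\sum_v |T^\ell(v)| \le \widetilde{O}(1)\cdot \sum_i |E(C_i)|$ from \Cref{cor:edge-in-shallow-count-extended} and \Cref{cor:total-number-edges-inner-comp}, with \Cref{obs:g-function-properties}(ii) closing the algebra. One small simplification you missed: the hypothesis $1-g(\ell-1)-2\sigma_{\ell-1}\ge 0$ already forces $1-g(\ell) > 0$ (by \Cref{obs:g-function-properties}(i), $1-g(\ell) \ge \delta - 5\sigma_{\ell-1}/\sigma_\ell - 3\sigma_{\ell-1} = \delta/2 - 3\sigma_{\ell-1} > 0$), so the case split on the sign of $1-g(\ell)$ is vacuous here and the paper invokes statement~(i) of \Cref{cor:total-number-edges-inner-comp} directly; your extra union-bound in the $1-g(\ell)<0$ branch is harmless but unnecessary.
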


\begin{proof}
    Let $C_1, C_2, \ldots, C_k$ denote the underlying undirected connected components of inner edges, where each component contains at least one edge from $E^{inner}_\ell$. Additionally, let $\widehat{V}$ represent the set of vertices in these components, and let $E(C_i)$ be the set of edges in component $C_i$. By statement (i) of \Cref{cor:total-number-edges-inner-comp}, we have:  
\begin{align*}
    \sum_{i=1}^c |E(C_i)| \leq \widetilde{O}\left(n^{1 - g(\ell) + 5\sigma_{\ell - 1}/\sigma_\ell}\right).
\end{align*}
Applying \Cref{cor:edge-in-shallow-count-extended}, it follows that
\begin{align*}
    \sum_{u \in \widehat{V}} |T^\ell(u)| \leq \widetilde{O}(1) \cdot \sum_{i=1}^k |E(C_i)| \leq \widetilde{O}\left(n^{1 - g(\ell) + 5\sigma_{\ell - 1}/\sigma_\ell}\right).
\end{align*}
Hence, the number of vertices for which $|T^\ell(v)| > n^{\delta - \sigma_{\ell - 1}}$ is at most  
\begin{align*}
    \frac{\sum_{u \in \widehat{V}} |T^\ell(u)|}{n^{\delta - 2\sigma_{\ell - 1}}} 
    &\leq \frac{\widetilde{O}\left(n^{1 - g(\ell) + 5\sigma_{\ell - 1}/\sigma_\ell}\right)}{n^{\delta - 2\sigma_{\ell - 1}}} \\
    &\leq \widetilde{O}\left(n^{1 - g(\ell) - \delta + 5\sigma_{\ell - 1}/\sigma_\ell + 2\sigma_{\ell - 1}}\right) \\
    &\leq O\left(n^{1 - g(\ell - 1) - 2\sigma_{\ell - 1}}\right) & (\text{statement (ii) of \Cref{obs:g-function-properties}}),
\end{align*}  
which implies that there are at most $O(n^{1 - g(\ell - 1) - 2\sigma_{\ell - 1}})$ vertices that satisfy condition (ii) of \Cref{def:spoiled-vertex-ell}.

On the other hand, by \Cref{clm:ell-spoiler-count}, there are at most $O(n^{1-g(\ell-1)-2\sigma_{\ell-1}})$ $\ell$-spoiler vertices. By \Cref{lem:vertex-in-shallow-count-extended}, each vertex is in at most $\wt{O}(1)$ $\ell$-shallow subgraphs. Thus, there are at most $O(n^{1-g(\ell-1)-\sigma_{\ell-1}})$ vertices that satisfy condition (i) of \Cref{def:spoiled-vertex-ell} which completes the proof.
\end{proof}

\begin{lemma}\label{lem:ell-spoiled-count-prob}
    Suppose that $1-g(\ell-1)-2\sigma_{\ell-1} < 0$. Then, the probability of having a $\ell$-spoiled vertex is at most $O(n^{1-g(\ell-1)-2\sigma_{\ell - 1}})$. Also, there are at most $\wt{O}(1)$ $\ell$-spoiled vertices with high probability.
\end{lemma}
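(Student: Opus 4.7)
The plan is to prove the probability and high-probability-count bounds for $\ell$-spoiled vertices by mirroring the structure of \Cref{lem:ell-spoiled-count}, adapting it to the regime $1 - g(\ell-1) - 2\sigma_{\ell-1} < 0$, and treating the two subcases $1 - g(\ell) > 0$ and $1 - g(\ell) < 0$ separately, just as \Cref{clm:ell-spoiler-count-prob} does for $\ell$-spoilers. I will decompose the $\ell$-spoiled vertices according to which condition of \Cref{def:spoiled-vertex-ell} they satisfy, bound each contribution separately, take a union bound to obtain the probability statement, and then upgrade the resulting $o(1)$ expected-count to a high-probability $\widetilde{O}(1)$ bound via Chernoff as at the end of \Cref{clm:ell-spoiler-count-prob}.

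For condition (i) of \Cref{def:spoiled-vertex-ell} (the $\ell$-shallow subgraph contains an $\ell$-spoiler), \Cref{clm:ell-spoiler-count-prob} already gives that the probability of any $\ell$-spoiler existing is $O(n^{1 - g(\ell-1) - 2\sigma_{\ell-1}})$. By \Cref{lem:vertex-in-shallow-count-extended}, a single $\ell$-spoiler sits in only $\widetilde{O}(1)$ $\ell$-shallow subgraphs, so on the complement event (no $\ell$-spoiler), no vertex satisfies (i) at all; combining, condition (i) is triggered on at most $\widetilde{O}(1)$ vertices with probability $1 - O(n^{1 - g(\ell-1) - 2\sigma_{\ell-1}})$.

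For condition (ii) ($|T^\ell(v)| \geq n^{\delta - 2\sigma_{\ell-1}}$), I split on the sign of $1 - g(\ell)$. When $1 - g(\ell) > 0$, the argument is verbatim the one in \Cref{lem:ell-spoiled-count}: statement (i) of \Cref{cor:total-number-edges-inner-comp} together with \Cref{cor:edge-in-shallow-count-extended} bounds $\sum_u |T^\ell(u)| \leq \widetilde{O}(n^{1 - g(\ell) + 5\sigma_{\ell-1}/\sigma_\ell})$; dividing by the threshold $n^{\delta - 2\sigma_{\ell-1}}$ and applying statement (ii) of \Cref{obs:g-function-properties} rewrites the exponent as at most $1 - g(\ell-1) - 2\sigma_{\ell-1}$, which is now negative, so Markov's inequality turns the expected count into a probability bound of the same order. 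When $1 - g(\ell) < 0$, I first use statement (ii) of \Cref{cor:total-number-edges-inner-comp} to note that $\widehat{E} \ne \emptyset$ already has probability only $O(n^{1 - g(\ell)})$; conditioned on $\widehat{E} \ne \emptyset$, statement (iii) of \Cref{cor:total-number-edges-inner-comp} and \Cref{cor:edge-in-shallow-count-extended} give $\sum_u |T^\ell(u)| \leq \widetilde{O}(n^{5\sigma_{\ell-1}/\sigma_\ell})$ and hence a conditional expected count of $\widetilde{O}(n^{5\sigma_{\ell-1}/\sigma_\ell - \delta + 2\sigma_{\ell-1}})$. Multiplying by the conditioning probability and invoking \Cref{obs:g-function-properties}(ii) again collapses the exponent to $O(n^{1 - g(\ell-1) - 2\sigma_{\ell-1}})$, so Markov closes this subcase as well.

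A union bound over conditions (i) and (ii) yields the probability statement of the lemma, and since the expected number of $\ell$-spoiled vertices is $o(1)$ in either subcase, a Chernoff bound in the style of \Cref{clm:ell-spoiler-count-prob} upgrades this to the claim that there are at most $\widetilde{O}(1)$ $\ell$-spoiled vertices with high probability. The main obstacle is the bookkeeping in the $1 - g(\ell) < 0$ case: the bound on $|\widehat{E}|$ from \Cref{cor:total-number-edges-inner-comp}(iii) is only valid conditional on $\widehat{E}$ being nonempty, so one must keep the conditional expected count and the $O(n^{1 - g(\ell)})$ conditioning probability strictly separate until they are multiplied together, and only then apply \Cref{obs:g-function-properties}(ii) to obtain the exponent $1 - g(\ell-1) - 2\sigma_{\ell-1}$ that matches condition (i).
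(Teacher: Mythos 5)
Your proposal follows the same decomposition as the paper's proof: split the spoiled vertices according to the two conditions of \Cref{def:spoiled-vertex-ell}, handle condition (i) via \Cref{clm:ell-spoiler-count-prob} and \Cref{lem:vertex-in-shallow-count-extended}, and handle condition (ii) by a case split on the sign of $1 - g(\ell)$ using \Cref{cor:total-number-edges-inner-comp}, \Cref{cor:edge-in-shallow-count-extended}, and \Cref{obs:g-function-properties}(ii). The overall structure matches the paper's proof.

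The one place you differ, and where you should be careful, is the very last step for the $\widetilde{O}(1)$ bound. You invoke ``a Chernoff bound in the style of \Cref{clm:ell-spoiler-count-prob}.'' Chernoff needs some form of independence of the summands, and the indicator that a vertex is $\ell$-spoiled is not an independent (or even negatively correlated) Bernoulli: condition (ii) is determined by the sizes of shallow subgraphs of connected components, so distinct vertices in the same component are highly correlated. The paper avoids this entirely: for condition (i) it gets the $\widetilde{O}(1)$ count directly from the fact that there are $\widetilde{O}(1)$ $\ell$-spoilers with high probability, each appearing in only $\widetilde{O}(1)$ shallow subgraphs (\Cref{lem:vertex-in-shallow-count-extended}); and for condition (ii), both of your own subcases already show the sum $\sum_u |T^\ell(u)|$ is below the threshold $n^{\delta - 2\sigma_{\ell-1}}$, so with high probability \emph{zero} vertices satisfy (ii). You in fact already establish both of these facts earlier in your write-up, so the Chernoff step is superfluous — replace it with ``condition (i) contributes $\widetilde{O}(1)$ and condition (ii) contributes $0$ with high probability,'' which is exactly the paper's argument, and the gap is closed. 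Everything else (the Markov-style averaging to bound the number of vertices with large $T^\ell(\cdot)$, and the careful separation of the conditioning probability $O(n^{1-g(\ell)})$ from the conditional count in the $1-g(\ell)<0$ subcase) is a valid and essentially equivalent rephrasing of what the paper does directly by comparing the total sum $\sum_u |T^\ell(u)|$ against the threshold.
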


\begin{proof}
    By \Cref{clm:ell-spoiler-count-prob}, the probability of having an $\ell$-spoiler vertex is at most $O(n^{1-g(\ell-1)-2\sigma_{\ell-1}})$ which implies that the probability of satisfying condition (i) of \Cref{def:spoiled-vertex-ell} is upper bounded by $O(n^{1-g(\ell-1)-2\sigma_{\ell-1}})$. Also, there are at most $\wt{O}(1)$ $\ell$-spoiler vertices with high probability. By \Cref{lem:vertex-in-shallow-count-extended}, each vertex is in at most $\wt{O}(1)$ $\ell$-shallow subgraphs. Hence, there are at most $\wt{O}(1)$ vertices that satisfy condition (i) of \Cref{def:spoiled-vertex-ell} with high probability.

    Let $C_1, C_2, \ldots, C_k$ denote the underlying undirected connected components of inner edges, where each component contains at least one edge from $E^{inner}_\ell$. Additionally, let $\widehat{V}$ represent the set of vertices in these components, and let $E(C_i)$ be the set of edges in component $C_i$. We prove the claim by considering the following two cases (note that $1-g(\ell) \neq 0$ by statement (iv) \Cref{obs:g-function-properties}):
    \begin{itemize}
        \item $1-g(\ell) > 0$: by statement (i) of \Cref{cor:total-number-edges-inner-comp}, we have  
        \begin{align*}
            \sum_{i=1}^k |E(C_i)| \leq \widetilde{O}\left(n^{1 - g(\ell) + 5\sigma_{\ell-1}/\sigma_\ell}\right),
        \end{align*}
        with high probability. Then, by \Cref{cor:edge-in-shallow-count-extended}, we obtain  
\begin{align*}
    \sum_{u \in \widehat{V}} |T^\ell(u)| \leq \widetilde{O}(1) \cdot \sum_{i=1}^k |E(C_i)| \leq \widetilde{O}\left(n^{1 - g(\ell) + 5\sigma_{\ell - 1}/\sigma_\ell}\right).
\end{align*}
Further, 
    \begin{align*}
        1-g(\ell) + 5\sigma_{\ell-1}/\sigma_{\ell} &= 1 - g(\ell-1) + \delta - 5\sigma_{\ell-1} & (\text{By statement (i) of \Cref{obs:g-function-properties}}) \\
        & = \left(1 - g(\ell - 1) - 2\sigma_{\ell-1} \right) + \left(\delta - 3\sigma_{\ell-1} \right)\\
        & < \delta - \sigma_{\ell-1} & (\text{Since } 1-g(\ell-1)-2\sigma_{\ell-1} < 0),
    \end{align*}
    which implies that there is no component with an edge of $E^{inner}_\ell$ with size $n^{\delta - \sigma_{\ell-1}}$ with high probability. Therefore, no vertex satisfies condition (ii) of \Cref{def:spoiled-vertex-ell} with high probability. 
        \item $1-g(\ell) < 0$: by statement (iii) of \Cref{cor:total-number-edges-inner-comp}, we have  
\(\sum_{i=1}^c |E(C_i)| \leq \widetilde{O}(n^{5\sigma_{\ell-1}/\sigma_{\ell}})\) with high probability. Since \(\delta - 2\sigma_{\ell-1} > 5\sigma_{\ell-1}/\sigma_\ell\), it follows that, with high probability, no component containing an edge from \(E^{\text{inner}}_\ell\) has size \(n^{\delta - 2\sigma_{\ell-1}}\), which implies that with high probability, no vertex satisfies condition (ii) of \Cref{def:spoiled-vertex-ell}.
    \end{itemize}
\end{proof}

\begin{claim}\label{clm:no-cycle-in-small-components}
    Consider the connected components $C'_1, C'_2, \dots, C'_{k'}$ formed by inner edges that do not contain any edges from $E^{\text{inner}}_\ell$. With probability $1 - O(n^{-\delta + 2\sigma_{\ell - 1} + 10\sigma_{\ell-1}/\sigma_\ell})$, all such components are acyclic, i.e., they form trees.
\end{claim}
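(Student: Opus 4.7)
The plan is to bound the probability that some component $C'_i$ contains a cycle by analyzing the \emph{cycle-closing event}: the moment when the algorithm queries a pair $(u,v)$ that is revealed to be an inner edge, while $u$ and $v$ already lie in the same connected component of inner edges discovered so far. Since the components $C'_1,\ldots,C'_{k'}$ are maximal and do not contain any edge of $E^{\mathrm{inner}}_\ell$, a cycle in any $C'_i$ must be created by such a closing event on an edge outside of $E^{\mathrm{inner}}_\ell$.

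First I would condition on the high-probability events of \Cref{clm:small-component-in-intersect-tree} and \Cref{clm:max-inner-edges}, giving $\max_i |C'_i| \le O(n^{5\sigma_{\ell-1}/\sigma_\ell})$ and $\sum_i |C'_i| \le O(n^{1-\delta+\sigma_{\ell-1}})$ (the second bound follows because the total number of vertices incident to any inner edge is at most twice the number of inner edges, and components with fewer than two vertices contribute nothing). Next I would count the ``dangerous'' pairs, i.e., pairs $(u,v)$ that at some point lie in the same inner-edge component. The total number of such pairs is at most $\sum_i |C'_i|^2$. Bounding $\sum_i |C'_i|^2 \le (\max_i |C'_i|)^2 \cdot k'$ and noting $k' \le \sum_i |C'_i|$, I obtain
$$
\sum_i |C'_i|^2 \;\le\; O\!\left(n^{10\sigma_{\ell-1}/\sigma_\ell}\right) \cdot O\!\left(n^{1-\delta+\sigma_{\ell-1}}\right) \;=\; O\!\left(n^{1-\delta+\sigma_{\ell-1}+10\sigma_{\ell-1}/\sigma_\ell}\right).
$$

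For each such dangerous pair, I would use \Cref{clm:pseudo-prob}: the number of level-$(\ell-1)$ pseudo edges between $u$ and $v$ is distributed as $\Binom(\rho n,\rho_{\ell-1}/(n\rho))$, so the probability that the pair hosts at least one such pseudo edge is at most $O(\rho_{\ell-1}) = O(n^{\sigma_{\ell-1}-1})$, which upper bounds the probability that the pair appears as an inner edge in the queried subgraph. A union bound over the dangerous pairs then yields
$$
\Pr[\text{some }C'_i\text{ contains a cycle}] \;\le\; O\!\left(n^{1-\delta+\sigma_{\ell-1}+10\sigma_{\ell-1}/\sigma_\ell}\right) \cdot O\!\left(n^{\sigma_{\ell-1}-1}\right) \;=\; O\!\left(n^{-\delta+2\sigma_{\ell-1}+10\sigma_{\ell-1}/\sigma_\ell}\right).
$$

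The main obstacle is justifying the union bound in the presence of adaptivity: which pairs become ``dangerous'' depends on prior query outcomes, so one must ensure that the $O(\rho_{\ell-1})$ per-pair bound remains valid conditioned on the algorithm's history. This is handled by the fact that each ground edge's level-$(\ell-1)$ pseudo classification is an independent coin flip fixed at construction time and is independent of vertex labels; hence for any pair not yet queried, its probability of hosting a level-$(\ell-1)$ pseudo edge is unchanged by any prior information the algorithm has observed, so the union bound applies to any deterministic adaptive strategy (and then to randomized strategies via Yao's lemma as in \Cref{lem:deterministic-lower-bound}).
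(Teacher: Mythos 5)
Your proposal is correct and takes essentially the same approach as the paper: bound $\sum_i |C'_i|^2$ using the per-component size bound from \Cref{clm:small-component-in-intersect-tree} and the bound on the number of inner edges from \Cref{clm:max-inner-edges}, then multiply by the per-pair probability $O(\rho_{\ell-1})$ of a pair hosting a level-$(\ell-1)$ pseudo edge. The paper phrases the final step as a Markov bound on the expected number of extra within-component pseudo edges while you phrase it as a union bound over dangerous pairs, but these are equivalent, and your explicit discussion of the adaptivity issue is a welcome elaboration on what the paper leaves implicit.
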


\begin{proof}
    By \Cref{clm:small-component-in-intersect-tree}, each component $C'_i$ satisfies $|C'_i| \leq O(n^{5\sigma_{\ell-1}/\sigma_\ell})$ with high probability for all $i$. Also, since the total number of inner edges is at most $O(n^{1-\delta + \sigma_{\ell - 1}})$ by \Cref{clm:max-inner-edges}, it follows that $k' \leq O(n^{1-\delta + \sigma_{\ell - 1}})$. Hence, we obtain the bound  
    \begin{align*}
    \sum_{i=1}^{k'} |C_i'|^2 \leq O(n^{1-\delta + 2\sigma_{\ell - 1} + 10\sigma_{\ell-1}/\sigma_\ell}).
    \end{align*}
    Fix a component $C_i'$. Now, consider the process of adding edges to this component one by one in the order they are queried by the algorithm. To form a cycle, at some point during this process, there must be an inner edge within a component that has not yet been queried. Each newly added edge merges two existing components. If the sizes of these components are $x$ and $y$, then apart from the inner edge discovered by the algorithm, there are potentially $xy$ other new pairs within the newly created component that could contain an inner edge which could potentially create a cycle. Ultimately, the total number of such pairs in a component is $O(|C_i'|^2)$. 

    Since there are $\rho n$ ground edges between each of these pairs, and each is marked as a pseudo edge of level $\ell-1$ with probability $\rho_{\ell-1}/(\rho n)$, the expected number of psuedo edges of level $\ell - 1$ in the component, ignoring those found by the algorithm, is at most $O(\rho_{\ell - 1} |C_i'|^2)$. Summing over all components, this expectation is given by  
    \begin{align*}
    \sum_{i=1}^{k'} O(\rho_{\ell - 1}|C_i'|^2) = O(n^{-\delta + 2\sigma_{\ell - 1} + 10\sigma_{\ell-1}/\sigma_\ell}).
    \end{align*}  
    This completes the proof.
\end{proof}

For the remainder, we assume that every connected component of inner edges without an edge from $E^{inner}_\ell$ is acyclic. By \Cref{clm:no-cycle-in-small-components}, the probability of this assumption failing is 
\begin{align*}
    O(n^{-\delta + 2\sigma{\ell - 1} + 10\sigma_{\ell-1}/\sigma_\ell}) = o(1).
\end{align*}    
Furthermore, since the hierarchy has a constant number of levels, this holds for all levels with probability $1 - o(1)$.

\begin{definition}[$\ell$-Spoiled Edge]\label{def:ell-spoiled-edge}
    Let $(u,v)$ be a directed inner edge. We say $(u,v)$ is a spoiled edge if $u \in A_r$ in level $\ell - 1$ of hierarchy and at least one of the following condition holds:
    \begin{itemize}
        \item[(i)] $v$ is an $\ell$-spoiled vertex; or
        \item[(ii)] $u$ has at least $n^{\sigma_{\ell-1}}/3$ $\ell$-spoiled neighbors in the queried subgraph.
    \end{itemize}
\end{definition}

\begin{lemma}\label{lem:ell-spoiler-edge-bound}
    The following two statements hold regarding the number of $\ell$-spoiled edges:
    \begin{itemize}
        \item[(i)] If $1 - g(\ell - 1) > 0$, then the number of $\ell$-spoiled edges is at most $O(n^{1-g(\ell - 1)})$ with high probability.
        \item[(ii)] If $1 - g(\ell - 1) < 0$, then no $\ell$-spoiled edge exists with probability $1-O(n^{1-g(\ell-1)})$.
        \item[(iii)] If $1 - g(\ell - 1) < 0$, then the number of $\ell$-spoiled edge is at most  $\wt{O}(1)$ with high probability.
    \end{itemize}
\end{lemma}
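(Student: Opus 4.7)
The plan is to mirror the top-level argument in Lemma~\ref{lem:spoiled-edge-count}, feeding in the $\ell$-spoiled vertex counts from Lemma~\ref{lem:ell-spoiled-count} and Lemma~\ref{lem:ell-spoiled-count-prob} and combining them with a max-indegree bound in the inner queried subgraph. Specifically, I first lift Claim~\ref{clm:max-in-deg-top} to level $\ell$: the same Chernoff calculation over the $\rho n$ ground edges between any pair, now with each edge becoming a level-$(\le \ell-1)$ pseudo edge with probability $\rho_{\ell-1}/(\rho n)$, yields that every vertex has inner indegree $\wt{O}(1)$ with high probability. I also record the standard Chernoff bound that every vertex has at most $O(n^{\sigma_{\ell-1}})$ inner edges in total (its expected inner degree is $d_{\ell-1}=n^{\sigma_{\ell-1}}$), which serves as the max outdegree in the inner queried subgraph.

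For case (i), when $1-g(\ell-1)>0$, I split on whether $1-g(\ell-1)-2\sigma_{\ell-1}\geq 0$. In the harder subcase, Lemma~\ref{lem:ell-spoiled-count} yields at most $O(n^{1-g(\ell-1)-\sigma_{\ell-1}})$ $\ell$-spoiled vertices with high probability. Edges satisfying condition~(i) of Definition~\ref{def:ell-spoiled-edge} are edges whose head is $\ell$-spoiled, so by the max-indegree bound there are at most $\wt{O}(n^{1-g(\ell-1)-\sigma_{\ell-1}})$ of them. For condition~(ii), any vertex $u$ satisfying it must have at least $n^{\sigma_{\ell-1}}/3$ outgoing inner edges to spoiled vertices; dividing the same $\wt{O}(n^{1-g(\ell-1)-\sigma_{\ell-1}})$ bound by $n^{\sigma_{\ell-1}}/3$ gives at most $\wt{O}(n^{1-g(\ell-1)-2\sigma_{\ell-1}})$ such $u$'s, each contributing at most $O(n^{\sigma_{\ell-1}})$ outgoing inner edges, for a total of $\wt{O}(n^{1-g(\ell-1)-\sigma_{\ell-1}})\leq O(n^{1-g(\ell-1)})$. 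In the other subcase, $1-g(\ell-1)-2\sigma_{\ell-1}<0$, Lemma~\ref{lem:ell-spoiled-count-prob} gives $\wt{O}(1)$ spoiled vertices whp, and the same double counting yields $\wt{O}(1)\leq O(n^{1-g(\ell-1)})$ spoiled edges.

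For cases (ii) and (iii), the hypothesis $1-g(\ell-1)<0$ trivially implies $1-g(\ell-1)-2\sigma_{\ell-1}<0$ since $\sigma_{\ell-1}>0$, so Lemma~\ref{lem:ell-spoiled-count-prob} applies. Case (ii) follows because both conditions of Definition~\ref{def:ell-spoiled-edge} require the existence of an $\ell$-spoiled vertex, and Lemma~\ref{lem:ell-spoiled-count-prob} ensures no such vertex exists except on an event of probability $O(n^{1-g(\ell-1)-2\sigma_{\ell-1}})\leq O(n^{1-g(\ell-1)})$. Case (iii) uses the high-probability bound of $\wt{O}(1)$ $\ell$-spoiled vertices from Lemma~\ref{lem:ell-spoiled-count-prob}; combined with the $\wt{O}(1)$ indegree bound and the $O(n^{\sigma_{\ell-1}})$ outdegree bound, the same double-counting produces $\wt{O}(1)$ edges satisfying either condition.

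The main subtlety I expect is the careful bookkeeping of which high-probability events I condition on in case (ii). The bound there is a probability rate $O(n^{1-g(\ell-1)})$, not a high-probability statement, so I must make sure that all auxiliary ingredients (the max-indegree bound, the total inner-edge bound, the validity of Lemma~\ref{lem:ell-spoiled-count-prob}) fail with probability $o(n^{1-g(\ell-1)})$ or are already subsumed by the error term; otherwise the probability bound in (ii) would be weaker than advertised. Everything else is a mechanical double-counting mirroring the warm-up proof of Lemma~\ref{lem:spoiled-edge-count}.
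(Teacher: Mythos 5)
Your proof is correct and takes essentially the same route as the paper: bound condition-(i) edges by (number of $\ell$-spoiled vertices) $\times$ (in-degree $\wt{O}(1)$), bound condition-(ii) vertices by a double count of edges into spoiled vertices and multiply by the $O(n^{\sigma_{\ell-1}})$ out-degree cap, and in cases (ii)/(iii) invoke \Cref{lem:ell-spoiled-count-prob} directly. Two small remarks: you do not actually need to ``lift'' \Cref{clm:max-in-deg-top} to inner edges, since inner edges are a subset of all queried edges and the claim already bounds the full in-degree; and the conditioning subtlety you flag for case (ii) is moot, because the argument there needs only the event ``no $\ell$-spoiled vertex exists,'' whose probability bound comes straight from \Cref{lem:ell-spoiled-count-prob} without conditioning on any auxiliary high-probability event.
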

\begin{proof}
    Let $\widehat{E}$ be the set of $\ell$-spoiled edges. We prove each statement separately:
    \begin{itemize}
    \item[(i):] First consider the case that $1 - g(\ell - 1) -  2\sigma_{\ell - 1} \geq 0$. By \Cref{lem:ell-spoiled-count}, there are at most $O(n^{1-g(\ell - 1) -\sigma_{\ell - 1}})$ $\ell$-spoiled vertices with high probability. Furthermore, each vertex has an indegree of $\wt{O}(1)$ by \Cref{clm:max-in-deg-top} with high probability. Thus, there are at most $O(n^{1-g(\ell - 1)})$ edges that satisfy condition (i) of \Cref{def:ell-spoiled-edge}.

    On the other hand, if vertex $u$ satisfies condition (ii) of \Cref{def:ell-spoiled-edge}, it must have at least $n^{\sigma_{\ell - 1}}/4$ outgoing edges $(u,w)$ where $w$ is $\ell$-spoiled. But using \Cref{clm:max-in-deg-top}, each $\ell$-spoiled vertex has an indegree of $\wt{O}(1)$. Combining with the fact that the total number of $\ell$-spoiled vertices is $O(n^{1-g(\ell - 1) -\sigma_{\ell - 1}})$ implies that the total number of edges satisfy condition (ii) of \Cref{def:ell-spoiled-edge} is at most $O(n^{1-g(\ell - 1)})$.

    \item[(ii):] If $1-g(\ell - 1) < 0$ then we have $1 - g(\ell-1) - 2\sigma_{\ell - 1} < 0$. Then, the probability of having a $\ell$-spoiled vertex is at most $O(n^{1-g(\ell-1)-2\sigma_{\ell - 1}})$ by \Cref{lem:ell-spoiled-count-prob}. Since $2\sigma_{\ell - 1} > 0$ and according to \Cref{def:ell-spoiled-edge},  the probability of having an $\ell$-spoiled edge is at most $O(n^{1-g(\ell-1)})$.
    \item[(iii):] If $1-g(\ell - 1) < 0$ then we have $1 - g(\ell-1) - 2\sigma_{\ell - 1} < 0$. Then, by \Cref{lem:ell-spoiled-count-prob}, there are at most $\wt{O}(1)$ $\ell$-spoiled vertices with high probability. Therefore, with high probability, there are at most $\wt{O}(1)$ $\ell$-spoiled edges.
\end{itemize}
\end{proof}

\begin{lemma}\label{lem:small-component-focs-modification}
    Let $(u,v)$ be a directed inner edge within a connected component $C$ that does not contain any edge from $E^{inner}_\ell$. Suppose $u \in A_r$ and $v$ belongs to $\{A_r, B_r, D_r\}$ at level $\ell-1$ of the hierarchy. Define $\overline{C}$ as the component containing $v$ after removing the edge $(u,v)$. Let $\mc{L}(v)$ and $\mc{L}'(v)$ represent an arbitrary label for $v$ from $\{A_r, B_r, D_r\}$ and the entire queried subgraph of inner edges, excluding $\overline{C}$. Then, we have:
    \begin{align*}
        \Pr[\overline{C} \mid \mc{L}(v)] \leq \left(1 + O(n^{\sigma_{\ell-1}-\delta}) \right)^{|\overline{C}|} \cdot \Pr[\overline{C} \mid \mc{L}'(v)].
    \end{align*}
\end{lemma}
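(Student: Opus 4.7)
The plan is to expose the tree $\overline{C}$ edge-by-edge in the chronological order that the algorithm queried its edges. Since we are conditioning throughout on the high-probability event of \Cref{clm:no-cycle-in-small-components}, the component $\overline{C}$ is acyclic and hence a rooted tree at $v$; moreover, by \Cref{obs:tree-structure-of-unspoiled-ell}, every inner edge $(x,y)$ of $\overline{C}$ was queried at a moment when the child $y$ had not yet appeared in the queried subgraph. Between two consecutive tree-edge discoveries, the algorithm may make many additional queries that touch $\overline{C}$; each such query either returns a genuine non-edge (no level-$\leq\ell$ pseudo edge is present between the two endpoints) or a pseudo-but-not-real edge (a pseudo edge of level $\leq \ell$ that is not a level-$(\ell-1)$ real edge). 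This gives a clean decomposition of the two events $[\overline{C}\mid \mc{L}(v)]$ and $[\overline{C}\mid \mc{L}'(v)]$ into a product of per-query outcomes.

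Then I would write $\Pr[\overline{C}\mid \mc{L}(v)]$ and $\Pr[\overline{C}\mid \mc{L}'(v)]$ as products of per-query factors in this exposure order. By \Cref{obs:non-pseudo-same-prob} and the label-independence of both the ground graph and the pseudo-edge marking, the factors corresponding to genuine non-edge queries do not depend on the labels and therefore cancel in the ratio $\Pr[\overline{C}\mid \mc{L}(v)]/\Pr[\overline{C}\mid \mc{L}'(v)]$. What survives are (i) the $|\overline{C}|-1$ factors attached to the tree edges themselves and (ii) the factors attached to pseudo-but-not-real queries that touched $\overline{C}$. For each such surviving factor, the ratio between its values under $\mc{L}(v)$ and under $\mc{L}'(v)$ is driven by the ratio of the relevant level-$(\ell-1)$ gadget density parameters $P(\cdot,\cdot)$; because the three candidate labels for $v$ lie in $\{A_r,B_r,D_r\}$ and the corresponding per-pair level-$(\ell-1)$ probabilities are all on the scale $\Theta(n^{\sigma_{\ell-1}-1})$ with the $n^{-\delta}$-slack built in via the parameter choices in \Cref{sec:level-ell}, each factor contributes at most a multiplicative slack of $1+O(n^{\sigma_{\ell-1}-\delta})$.

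To close the argument I would bound the number of surviving factors by $O(|\overline{C}|)$. The tree edges contribute $|\overline{C}|-1$ of them. For the pseudo-but-not-real events, I would charge each one to an adjacency with a unique vertex of $\overline{C}$, using \Cref{lem:cant-tell-multigrpah} (so that each queried pair carries at most one pseudo-edge slot) together with the singleton property of \Cref{obs:tree-structure-of-unspoiled-ell} (so the charging is injective up to a constant factor). Multiplying $1+O(n^{\sigma_{\ell-1}-\delta})$ over these $O(|\overline{C}|)$ factors yields the claimed product bound.

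The main obstacle will be the accounting of the pseudo-but-not-real queries. They are genuinely label-dependent, and a priori the algorithm could encounter many more of them than there are edges in $\overline{C}$; a careless bound would push the exponent of the slack from $|\overline{C}|$ up to the size of the whole neighborhood of $\overline{C}$ in the query set, losing the lemma. Controlling this requires combining the rooted-tree structure from \Cref{obs:tree-structure-of-unspoiled-ell} with the sparsity of pseudo edges guaranteed by \Cref{lem:cant-tell-multigrpah}, in the same spirit as the analogous top-level argument used to prove \Cref{lem:coupling-high-level}.
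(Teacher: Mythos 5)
Your proposal omits the crucial coupling machinery that the paper relies on, and the per-factor bound you assert does not actually hold. The paper's proof of this lemma is a corollary of \Cref{lem:same-tree-different-labels-2}, which in turn is a careful re-labeling coupling built around \emph{mixer vertices}, \emph{special crossings}, and \emph{special pseudo edges} (\Cref{def:mixer-vertex,def:special-crossing,def:special-pseudo-edge}); your argument never mentions these notions, and without them the step-by-step ratio you want does not close.

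The specific failure point: you claim that each surviving factor (tree edge, or pseudo-but-not-real query touching $\overline{C}$) changes by a multiplicative $1 + O(n^{\sigma_{\ell-1}-\delta})$ when $v$'s label switches between $A_r$, $B_r$, and $D_r$, because ``the corresponding per-pair level-$(\ell-1)$ probabilities are all on the scale $\Theta(n^{\sigma_{\ell-1}-1})$.'' This is false if you hold the labels of the other vertices of $\overline{C}$ fixed. Changing $v$'s label changes \emph{which} gadgets exist at its incident pairs, not merely the density: e.g.\ there is a level-$(\ell-1)$ gadget between $A_r^j$ and $B_r^j$ but none between $B_r^j$ and $B_r^j$, so a real edge $(v,w)$ with $w \in B_r$ has probability $\Theta(n^{\sigma_{\ell-1}-1})$ under $v\in A_r$ but probability zero under $v\in B_r$. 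The per-factor ratio can be $0$ or $\infty$, not $1 + o(1)$. The paper escapes this precisely by \emph{also re-labeling the descendants} of $v$ in $\overline{C}$ as it moves from $\mc{L}(v)$ to $\mc{L}'(v)$, and the mixer-vertex/special-crossing bookkeeping (together with \Cref{lem:mixer-vertex-in-tree} to ensure every path with $\geq r-1$ special crossings hits a mixer vertex) is what makes this re-labeling well-defined and nearly probability-preserving. The $(1+O(n^{\sigma_{\ell-1}-\delta}))$ per-step slack in the paper does not come from comparing density parameters at a fixed pair; it comes from the chance that a coupling step collides with one of the $O(n^{1-\delta+\sigma_{\ell-1}})$ vertices whose labels were already conditioned on, plus the failure probability of \Cref{lem:mixer-vertex-in-tree}. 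Your charging scheme for pseudo-but-not-real events via \Cref{lem:cant-tell-multigrpah} and \Cref{obs:tree-structure-of-unspoiled-ell} also would not fix this, since it controls the \emph{count} of surviving factors but not their individual sizes. In short, the proposal is missing the lemma's core idea.
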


\begin{lemma}\label{lem:large-component-focs-modification}
    Let $v$ be a vertex that is not $\ell$-spoiled and belongs to a connected component that contains at least one edge from $E^{\text{inner}}_\ell$. Suppose $v \in \{A_r, B_r, D_r\}$ at level $\ell-1$ of the hierarchy. Let $\mc{L}(v)$ and $\mc{L}'(v)$ represent an arbitrary label for $v$ from $\{A_r, B_r, D_r\}$ and the entire queried subgraph of inner edges, excluding the $\ell$-shallow subgraph of $v$. Then, we have:
    \begin{align*}
        \Pr[T^\ell(v) \mid \mc{L}(v)] \leq \left(1 + O(n^{\sigma_{\ell-1}-\delta}) \right)^{|T^\ell(v)|} \cdot \Pr[T^\ell(v) \mid \mc{L}'(v)].
    \end{align*}
\end{lemma}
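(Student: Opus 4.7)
The plan is to adapt the proof of \Cref{lem:coupling-high-level} to level $\ell-1$ of the hierarchy. Since $v$ is not $\ell$-spoiled, \Cref{obs:tree-structure-of-unspoiled-ell} guarantees that $T^\ell(v)$ is a rooted tree with at most $n^{\delta-2\sigma_{\ell-1}}$ vertices, and for every edge $(u,w)$ in $T^\ell(v)$, the vertex $w$ was a singleton (in the inner-edge subgraph) at the moment $(u,w)$ was queried. This ``freshness'' of the other endpoint is what allows us to build up $T^\ell(v)$ incrementally while carefully tracking how much each step's probability depends on $v$'s label.

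I would process the edges of $T^\ell(v)$ in the order they were queried by the algorithm. For each such edge $(u,w)$, the contribution to the likelihood ratio splits into two parts. The first is the probability that $w$'s label (at level $\ell-1$) is the value actually observed; since $w$ was a singleton when queried, its label is drawn from a near-uniform distribution on the unexhausted label-buckets, each of which still has $\Theta(N_{\ell-1})$ capacity. Because we have only exhausted at most $|T^\ell(v)|\le n^{\delta-2\sigma_{\ell-1}}$ slots, swapping $v$'s label shifts this marginal by a factor of $1+O(n^{\delta-2\sigma_{\ell-1}}/N_{\ell-1}) = 1+O(n^{\sigma_{\ell-1}-\delta})$. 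The second part is the probability that the queried pair $(u,w)$ produces the observed pseudo/real configuration given the two labels; by \Cref{clm:single-edge-dist}, this is determined by the gadget-density parameter $p^{(u,w)}$, and changing $v$'s label only perturbs $p^{(u,w)}$ for edges incident to $v$, and even there by at most an additive $O(n^{\sigma_{\ell-1}-1})$. Normalizing by the base scale $\Theta(\rho_{\ell-1}/(\rho n))$ again yields a $1+O(n^{\sigma_{\ell-1}-\delta})$ factor per edge. By \Cref{obs:non-pseudo-same-prob}, non-pseudo ground edges encountered in $T^\ell(v)$ are label-independent and contribute a factor of exactly $1$, so they can be dropped. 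Telescoping the per-edge factors over all $|T^\ell(v)|$ edges gives the claimed bound.

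The main obstacle will be rigorously reconciling the incremental revelation of $T^\ell(v)$ with the conditioning on ``the entire queried subgraph of inner edges excluding $T^\ell(v)$.'' Because $v$ is not $\ell$-spoiled, its shallow subgraph is edge-disjoint from everything excluded, and every edge entering $T^\ell(v)$ from outside is adjacent to a vertex that was already non-singleton (and hence would have made the attached tree-vertex $\ell$-spoiled or made the tree itself too large), which contradicts our assumption. This disjointness lets us condition on the excluded history first and then build $T^\ell(v)$ from $v$ outward, so the two label couplings (for $\mc{L}(v)$ vs.\ $\mc{L}'(v)$) can be carried out step-by-step with independent label draws for each $w$, and the per-edge multiplicative factors compose cleanly into the stated bound.
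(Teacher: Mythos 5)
Your proposal correctly identifies the tree structure from \Cref{obs:tree-structure-of-unspoiled-ell} and the idea of an incremental per-edge analysis, but the per-edge likelihood-ratio bound you claim is not justified and, as stated, is false. When $v$'s label changes from (say) $A_r$ to $B_r$, the label of each child $w$ is \emph{not} drawn from a ``near-uniform distribution on unexhausted label-buckets'': conditioned on the parent's label and the gadget structure, $w$'s label distribution is highly skewed (e.g.\ mostly $B_r$ via the dense $P(A_r,B_r)=d_\ell/N_\ell$ gadget if $u\in A_r$, and something very different if $u\in B_r$). Similarly, ``changing $v$'s label only perturbs $p^{(u,w)}$ for edges incident to $v$'' only holds if the labels of the other vertices in $T^\ell(v)$ are held fixed, but a fixed label assignment for the descendants becomes inconsistent with the gadget structure once the root label changes; under any consistent re-labeling the gadget densities $p^{(u,w)}$ throughout the tree can change by orders of magnitude, not by a $1+O(n^{\sigma_{\ell-1}-\delta})$ factor.

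The paper's actual proof does not compare marginal label distributions at all. It invokes \Cref{lem:same-tree-different-labels-1}, whose coupling crucially relies on the \emph{special crossing / mixer vertex} machinery of \Cref{def:special-crossing,def:mixer-vertex} and the degree-distribution invariance facts \Cref{cor:same-deg-dist-l,cor:same-deg-dist-pseudo}: non-special edges have the \emph{same} child-label distribution regardless of the parent's label (because vertices of equal expected degree have identically distributed real and pseudo degrees), special crossings are rare and matched one-to-one across the two labelings, and any path accumulating $r-1$ special crossings must pass a mixer vertex, after which the two labelings coincide. One must also separately bound the failure probability of the event in \Cref{lem:mixer-vertex-in-tree} (the $O(n^{-1})$ term) and account for the conditioning on the $O(n^{1-\delta+\sigma_{\ell-1}})$ externally-revealed labels, each contributing a $1+O(n^{\sigma_{\ell-1}-\delta})$ factor per coupling step. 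None of this machinery appears in your proposal, and without it the ``per-edge factors compose cleanly'' step has no support; this is a genuine gap rather than a stylistic difference.
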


The proofs of these two lemmas are postponed to a later section as they are intricate, lengthy, and largely independent of the discussion in this section.

\begin{lemma}\label{lem:non-spoiled-edge-non-bias-ell}
    Let $e$ be a directed inner edge that is not $\ell$-spoiled. Then, it holds that $p_e^{(\ell-1)-inner} \leq 10 n^{\sigma_{\ell - 2} - \sigma_{\ell-1}}$.
\end{lemma}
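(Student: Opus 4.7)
The plan is to mirror the proof of Lemma \ref{lem:non-spoiled-edge-non-bias} closely, replacing the top-level coupling statement (Lemma \ref{lem:coupling-high-level}) by its level-$\ell$ analogues (Lemmas \ref{lem:small-component-focs-modification} and \ref{lem:large-component-focs-modification}) and shifting subscripts by one. First I would dispose of the easy cases: if $u \notin A_r$ at level $\ell-1$ then $p_e^{(\ell-1)\text{-inner}} = 0$; and if $e$ is a pseudo (non-real) edge then, by the construction, it is independently marked a level-$(\ell-2)$ pseudo edge with probability $\rho_{\ell-2}/\rho_{\ell-1} = n^{\sigma_{\ell-2} - \sigma_{\ell-1}}$, which already satisfies the desired bound.

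For the main case $u \in A_r$ with $e$ real, because $e$ is not $\ell$-spoiled, Definition \ref{def:ell-spoiled-edge} tells us that $v$ is not $\ell$-spoiled and $u$ has at most $n^{\sigma_{\ell-1}}/3$ $\ell$-spoiled neighbors. Combining this with the construction (at least $6n^{\sigma_{\ell-1}}/7$ real edges from $u$ land in $A_r \cup B_r$) and Claim \ref{clm:max-in-deg-top} (incoming degree $\wt{O}(1)$), I would isolate a set $V_u$ of at least $n^{\sigma_{\ell-1}}/2$ real neighbors of $u$ that are labelled in $A_r \cup B_r$, are not $\ell$-spoiled, and at the time of their discovery are either singletons or direct children of $u$.

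For each $v \in V_u$ with $\mathcal{P}(v) = A_r$, I perform a profile-swap: pick a neighbor $w \in V_u$ with $\mathcal{P}(w) = B_r$ and exchange their labels to produce $\mathcal{P}'$. The probability ratio of the observed inner subgraph under $\mathcal{P}$ versus $\mathcal{P}'$ changes only near $v$ and $w$, and here I case on whether each endpoint lies in a connected component of inner edges that contains an $E^{\text{inner}}_\ell$ edge. If yes, I invoke Lemma \ref{lem:large-component-focs-modification} with bound $(1 + O(n^{\sigma_{\ell-1}-\delta}))^{|T^\ell(v)|}$, using Observation \ref{obs:tree-structure-of-unspoiled-ell} to get $|T^\ell(v)| \leq n^{\delta - 2\sigma_{\ell-1}}$. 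If no, I invoke Lemma \ref{lem:small-component-focs-modification} with bound $(1 + O(n^{\sigma_{\ell-1}-\delta}))^{|\overline{C}|}$, using Claim \ref{clm:small-component-in-intersect-tree} to get $|\overline{C}| \leq O(n^{5\sigma_{\ell-1}/\sigma_\ell})$. In both cases the parameter choices (ensuring $\sigma_{\ell-1}(1 + 5/\sigma_\ell) < \delta$) make the exponent vanish and the resulting multiplicative factor is $1 + o(1)$.

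Finally I would replicate the bipartite-profile-counting argument: build $H = (P_1, P_2, E_P)$ with swap-edges, and observe $\deg_H(\mathcal{P}) \geq n^{\sigma_{\ell-1}}/4$ for $\mathcal{P} \in P_1$ (at least half of $V_u$ is labelled $B_r$) while $\deg_H(\mathcal{P}') \leq 2n^{\sigma_{\ell-2}}$ for $\mathcal{P}' \in P_2$ (since at level $\ell-1$, $u \in A_r$ has at most $2d_{\ell-1} = 2n^{\sigma_{\ell-2}}$ neighbors in $A_r$). Double counting then gives $p_e^{(\ell-1)\text{-inner}} \leq (1+o(1)) \cdot \frac{2n^{\sigma_{\ell-2}}}{n^{\sigma_{\ell-1}}/4} \leq 10 n^{\sigma_{\ell-2} - \sigma_{\ell-1}}$. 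The main obstacle I anticipate is the case analysis on where $v$ and $w$ sit relative to $E^{\text{inner}}_\ell$-containing components: both Lemma \ref{lem:small-component-focs-modification} and Lemma \ref{lem:large-component-focs-modification} must be applied in a combined fashion (potentially one to $v$ and the other to $w$), and care is needed to ensure the two neighborhoods do not overlap so that the multiplicative coupling factors remain independent and multiplicatively $1 + o(1)$.
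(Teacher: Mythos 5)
Your proposal follows essentially the same route as the paper's proof: the same easy-case dispatch ($u \notin A_r$ and the pseudo-edge case), the same construction of $V_u$ from the non-$\ell$-spoiled condition, the same label-swap coupling via \Cref{lem:small-component-focs-modification} and \Cref{lem:large-component-focs-modification}, and the same bipartite-profile double counting. The one small slip is the notation ``$2d_{\ell-1} = 2n^{\sigma_{\ell-2}}$''; the quantity should be written $2d_{\ell-2}$ since the $A_r$-to-$A_r$ degree of $u$ at level $\ell-1$ is governed by the level-$(\ell-2)$ density parameter, but the numerical value $2n^{\sigma_{\ell-2}}$ you use is correct and the final bound is unaffected.
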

\begin{proof}
Let $e = (u,v)$ be a directed inner edge from $u$ to $v$. First, if $u \notin A_r$, we have $p^{(\ell-1)-inner}_e = 0$. Also, if $e$ is a pseudo edge of level $\ell - 1$, then by construction, the probability that we mark $e$ as level $\ell - 2$ is $\rho_{\ell - 2}/\rho_{\ell - 1} = n^{\sigma_{\ell - 2} - \sigma_{\ell-1}}$ independently from other edges.

Hence, suppose that $u \in A_r$ and $e$ is a real edge for the rest of the proof. According to the construction, $u$ has at least $6n^{\sigma_{\ell - 1}}/7$ inner edges that are real edges such that the other endpoint has label $A_r \cup B_r$. Also, $u$ has at most $\wt{O}(1)$ incoming edge by \Cref{clm:max-in-deg-top}. Since $e$ is not an $\ell$-spoiled edge, $u$ has at most $n^{\sigma_{\ell-1}}/3$ neighbors in the queried subgraph that are $\ell$-spoiled. Let $V_u$ be the set of neighbors of $u$ using inner edges that are real edges in the original graph such that their label is $A_r \cup B_r$ and either they are singleton or direct children of $u$ in the queried subgraph. By the above argument, we have $|V_u| \geq n^{\sigma_{\ell - 1}}/2$. Also, note that $v \in V_u$. Now we provide an upper bound on the probability that $e$ belongs to level $\ell-2$, or in other words, $v \in A_r$.  We prove this upper bound using \Cref{lem:small-component-focs-modification} and \Cref{lem:large-component-focs-modification}.

Consider a labeling profile $\mathcal{P}$ of all vertices $V_u$ such that $\mathcal{P}(v) = A_r$. By the construction of our input distribution, since $u \in A_r$, at most $O(d_{\ell-2}) = O(n^{\sigma_{\ell-2}})$ vertices in $V_u$ belong to $A_r$. We generate $\Omega(n^{\sigma_{\ell - 1}})$ new profiles $\mathcal{P}'$ where $\mathcal{P}'(v) \neq A_r$. For each vertex $w \in V_u$ with $\mathcal{P}(w) = B_r$, we create a new profile $\mathcal{P}'$ by setting $\mathcal{P}'(z) = \mathcal{P}(z)$ for $z \notin \{v, w\}$, $\mathcal{P}'(w) = A_r$, and $\mathcal{P}'(v) = B_r$. 

Since $v$ and $w$ are not $\ell$-spoiled vertices, they either belong to a component with no edges in $E^{inner}_\ell$ or have an $\ell$-shallow subgraph that satisfies the conditions in \Cref{def:spoiled-vertex-ell}. In both cases, the probability of querying the same shallow subgraph or connected component in the new labeling profile remains the same up to a factor of
\begin{align*}
    \left(1 + O(n^{\sigma_{\ell - 1}-\delta}) \right)^{n^{\delta - 2\sigma_{\ell-1}}},
\end{align*}
by \Cref{lem:large-component-focs-modification} and \Cref{lem:small-component-focs-modification} since either $|T^\ell(v)| \leq n^{\delta-2\sigma_{\ell-1}}$  (resp. $|T^\ell(w)| \leq n^{\delta-2\sigma_{\ell-1}}$) or the component that $v$ or $w$ belongs to has size of at most $O(n^{5\sigma_{\ell-1}/\sigma_{\ell}})$ which is smaller than $O(n^{\delta-2\sigma_{\ell-1}})$ for large enough $n$.  Therefore, the probability of generating profiles $\mathcal{P}$ and $\mathcal{P}'$ differs by at most  
\begin{align*}
\left(1 + O(n^{\sigma_{\ell - 1} - \delta}) \right)^{n^{\delta - 2\sigma_{\ell-1}}} \cdot \left(1 + O(n^{\sigma_{\ell - 1} - \delta}) \right)^{n^{\delta - 2\sigma_{\ell-1}}}
&\leq \left(1 + O(n^{\sigma_{\ell - 1} - \delta}) \right)^{2n^{\delta - 2\sigma_L}} \\
&\leq 1 + o(1).
\end{align*}  
Now, construct a bipartite graph $H = (P_1, P_2, E_P)$ of labeling profiles, where $P_1$ consists of all profiles $\mathcal{P}$ with $\mathcal{P}(v) = A_r$, and $P_2$ contains all profiles $\mathcal{P}'$ with $\mathcal{P}'(v) = B_r$. Add an edge between $\mathcal{P} \in P_1$ and $\mathcal{P}' \in P_2$ if $\mathcal{P}$ can be transformed into $\mathcal{P}'$ through the process described earlier.  

For any profile $\mathcal{P} \in P_1$, we have $\deg_H(\mathcal{P}) \geq |V_u|/2 \geq n^{\sigma_{\ell - 1}}/4$ since at least $|V_u|/2$ vertices in $V_u$ belong to $B_r$. In contrast, for any profile $\mathcal{P}' \in P_2$, $\deg_H(\mathcal{P}') \leq 2n^{\sigma_{\ell-2}}$ because, by the input distribution, at most $2d_{\ell-2} = 2n^{\sigma_{\ell-2}}$ vertices $w$ in $V_u$ satisfy $\mathcal{P}'(w) = A_r$. Therefore,  
\begin{align*}
p_{e}^{\text{inner}} 
&\leq (1 + o(1)) \cdot \frac{|P_1|}{|P_2|} \\
&\leq (1 + o(1)) \cdot \frac{2n^{\sigma_{\ell-2}}}{n^{\sigma_{\ell - 1}}/4} \\
&\leq (1 + o(1)) \cdot 8n^{\sigma_{\ell-2} - \sigma_{\ell - 1}} \\
&\leq 10n^{\sigma_{\ell-2} - \sigma_{\ell - 1}},
\end{align*}  
which concludes the proof. 
\end{proof}

Now we finish the proof of \Cref{lem:advantage-from-top}.

\begin{proof}[Proof of \Cref{lem:advantage-from-top}]
    We need to prove the lemma for $\ell - 1$ using \Cref{lem:small-square-component}. According to \Cref{lem:non-spoiled-edge-non-bias-ell}, which relies on \Cref{lem:small-square-component} for $\ell$, if an edge is not an $\ell$-spoiled edge, then we have $p_e^{(\ell-1)-inner} \leq 10 n^{\sigma_{\ell - 2} - \sigma_{\ell-1}}$. Thus, by applying each statement of \Cref{lem:ell-spoiler-edge-bound}, we obtain the proof for the corresponding item in the lemma.
\end{proof}

\section{Identical Distribution for Acyclic Subgraphs}

In this section, we prove \Cref{lem:coupling-high-level}, \Cref{lem:large-component-focs-modification}, and \Cref{lem:small-component-focs-modification}. Our proof is inspired by the results of \cite{BehnezhadRR-FOCS23, behnezhad2024approximating}. The main difference is that, in our setting, we need to incorporate pseudo edges into the coupling argument. Furthermore, instead of a regular graph, we consider an \ER{} graph, which necessitates slight adjustments to the proof.  We will demonstrate that the same result holds under this construction. To do so, we step by step revisit the original proof, making the necessary modifications to account for pseudo edges. For now, suppose we are at a fixed level $\ell$ in the hierarchy. We begin by introducing the relevant notation and establishing the essential tools required to achieve the main objective of this section. Throughout the remainder of the section, assume that $d = d_\ell/d_{\ell-1} = \Theta(n^{\sigma_{\ell} - \sigma_{\ell-1}})$. Also, when we say a vertex belongs to layer $i$, we mean it belongs to $A_i \cup B_i \cup D_i$. Moreover, we consider vertices of $S$ as layer 0.

\begin{definition}[Special Edge]\label{def:special-edge}
    We define an edge $(u,v)$ as \emph{special} if any of the following conditions hold:
    \begin{itemize}
        \item $(u,v)$ belongs to gadget between $S$ and $B_1$,
        \item $(u,v)$ belongs to gadgets between $B_i$ and $A_{i-1}$ for $i \in (1, r]$.
        \item $(u,v)$ belongs to a gadget that exists only in $\yesdist^\ell$ or $\nodist^\ell$,
        \item $(u,v)$ belongs to a gadget between $D_i^{j}$ and $D_i^{j'}$ for $i \in [r]$, where $j \in \{1, 3\}$ and $j' \in \{2, 4\}$. At the base level, we consider edges between $D_i^1$ and $D_i^2$ for $i \in [r]$.
    \end{itemize}
\end{definition}

\begin{definition}[Special Pseudo Edge of Layer $i$]\label{def:special-pseudo-edge}
Let \( T \) be a rooted tree with root \( u \), where \( u \in \{A_{r}, B_{r}, D_{r} \} \). Consider a vertex \( v \) and suppose there exists a path in \( T \) leading to \( v \) such that it contains no mixer vertices (as defined in \Cref{def:mixer-vertex}) and has exactly \( k \) special crossings. Additionally, assume that \( v \notin B_{r-k} \).  Now, consider the pseudo edges between \( v \) and \( A_{i-1} \). By construction, there are no real edges connecting \( v \) to vertices in \( A_{i-1} \). However, in expectation, there are \( \rho_\ell N_\ell \) pseudo edges of level \( \ell \) between \( v \) and vertices in \( A_{i-1} \). Each pseudo edge between \( v \) and \( A_{i-1} \) is classified as a special pseudo edge of layer \( i \) with probability \( \log n / \rho_{\ell} \). Thus, in expectation, \( v \) has \( \log n \) special pseudo edges of layer \( i \) in \( A_{i-1} \).  Furthermore, if \( v \in B_{r-k} \), the expected number of pseudo edges between \( v \) and vertices in layers no lower than \( r-k \) is \( \log n \) greater than for other labels in layer no lower than $r-k$. Therefore, we randomly select \( \log n \) edges in expectation to be marked as special crossings.  
\end{definition}

\begin{definition}[Special Crossing]\label{def:special-crossing}
    Let $T$ be a rooted tree with root $u$. Let $e$ be an edge in the tree. We call edge $e$ as {\em special crossing} if any of the following hold:
    \begin{itemize}
        \item $e$ is a special edge by \Cref{def:special-edge},
        \item $e$ is a special pseudo edge of layer $i$ for some $i \in [r]$ by \Cref{def:special-pseudo-edge}.
    \end{itemize}
\end{definition}

\begin{definition}[Mixer Vertex]\label{def:mixer-vertex}
Let $T$ be a rooted tree with root $u$, where $u \in \{A_{r}, B_{r}, D_{r} \}$. Consider a vertex $v$ in $T$, and suppose there are $k$ special crossings along the path from $u$ to $v$. We say $v$ is a {\em mixer vertex} if and only if $k < r - 1$ and $v \in \bigcup_{i=1}^{r-k-1} (A_i, B_i, D_i)$.
\end{definition}

Note that the definition of a mixer vertex relies on special pseudo edges, while the definition of special pseudo edges depends on mixer vertices. However, each of these definitions uses the other only up to a distance of \( i \) from the root of the tree to establish the definition for distance \( i+1 \).

\begin{claim}\label{clm:special-crossings-and-layers-reachable}
    Let $T$ be a rooted tree with root $u$, where $u \in \{A_{r}, B_{r}, D_{r} \}$. Let $v$ be a vertex in the tree where $v$ belongs to layer $i$. Any path from $u$ to vertex $v$ that does not pass through a mixer vertex must contain at least $r - i$ special crossings.  
\end{claim}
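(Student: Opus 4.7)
The plan is to deduce the claim directly from \Cref{def:mixer-vertex} by applying it at the endpoint $v$ itself, exploiting the fact that since $T$ is a rooted tree with root $u$, the path from $u$ to $v$ is unique and $v$ lies on it.

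Concretely, I will let $k$ denote the number of special crossings on the (unique) path $u = v_0, v_1, \ldots, v_m = v$ in $T$, and observe that the hypothesis that the path contains no mixer vertex implies in particular that $v$ itself is not a mixer. I will then argue by contrapositive: assuming $k \le r - i - 1$, I will verify both conditions of \Cref{def:mixer-vertex} at $v$, namely $k < r-1$ (which follows from $k \le r - i - 1 \le r - 2$, using $i \ge 1$ since $v$ lies in one of the layers $1, \ldots, r$) and $v \in \bigcup_{j=1}^{r-k-1}(A_j \cup B_j \cup D_j)$ (which follows from $v \in A_i \cup B_i \cup D_i$ together with $i \le r - k - 1$, which is exactly the assumed bound on $k$). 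Both conditions being met would make $v$ a mixer, contradicting the hypothesis; hence $k \ge r - i$ as required.

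I do not anticipate any genuine obstacle: the argument is essentially a direct unfolding of the definition of ``non-mixer'' applied at the endpoint $v$. The only subtlety worth flagging is that the non-mixer property of the intermediate vertices $v_1, \ldots, v_{m-1}$ is not used at all for this particular bound—the definition applied to $v$ alone suffices. If desired, the same conclusion can also be obtained by induction on the prefix length $j$, maintaining the invariant $k_j \ge r - \ell(v_j)$ for every prefix $u = v_0, \ldots, v_j$, where $k_j$ is the number of special crossings up to $v_j$ and $\ell(v_j)$ is the layer of $v_j$; but this inductive formulation is strictly stronger than what the statement requires.
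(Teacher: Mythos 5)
Your proof is correct, and it takes a genuinely more direct route than the paper. The paper's proof proceeds by induction on the layer $i$: it identifies the first edge of the path that enters layer $\le i$, does a case analysis on whether that edge is a real or a pseudo edge, and then invokes the gadget structure (the special $B_{i+1}$--$A_i$ gadget, the dummies $D_j$, and the construction of special pseudo edges) to argue that either this crossing edge is itself special or its endpoint is a mixer, before appealing to the inductive hypothesis on the prefix. You instead observe that \Cref{def:mixer-vertex} already encodes the layer-versus-crossing arithmetic verbatim, so a single application of that definition to the endpoint $v$, by contraposition, settles the claim: if $k \le r-i-1$ then $k < r-1$ and $i \le r-k-1$, which is exactly the definition of $v$ being a mixer, contradicting the hypothesis. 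This avoids the case analysis entirely and makes clear that \Cref{clm:special-crossings-and-layers-reachable} is essentially a restatement of \Cref{def:mixer-vertex} (the paper's longer argument can be read as a consistency check against the gadget structure, but it is not needed to prove the statement). Two small caveats: (a) your remark that the non-mixer property of the intermediate vertices ``is not used at all'' is a slight overstatement in spirit---the count $k$ is only well-defined because, per \Cref{def:special-pseudo-edge}, deciding whether a pseudo edge is a special crossing requires the prefix of the path to contain no mixers, so the hypothesis on the whole path is what grounds the mutual recursion---though your argument does not need to \emph{reason} about those vertices; and (b) you correctly note the need for $i \ge 1$, and indeed the claim cannot hold as stated for $v \in S$ (layer $0$), since $S$-vertices are never in $\bigcup_{j=1}^{r-k-1}(A_j \cup B_j \cup D_j)$ and hence never mixers---but this restriction is implicit in both proofs and matches how the claim is actually used downstream (e.g.\ in the proofs of \Cref{lem:mixer-vertex-in-tree} and \Cref{lem:same-tree-different-labels}, only layers $\ge 1$ are invoked).
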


\begin{proof} 
We claim that any path reaching a vertex $v$ in a layer smaller than or equal $i$, without passing through a mixer vertex, must contain at least $r - i$ special crossings. We prove this claim by induction.  For the base case $r = i$, the statement is trivial. Now, suppose the claim holds for all $j > i$, and we aim to prove it for $i$. Consider the first edge in the path where the algorithm reaches a vertex in layer $i$ or smaller for the first time—let this vertex be $w$, reached via edge $e$. We analyze two possible scenarios:  
\begin{itemize}
    \item $e$ is a real edge: By the construction of the gadgets, either $e$ is a special edge between $B_{i+1}$ and $A_i$, or $w$ belongs to $D_i$, in which case $w$ is a mixer vertex by \Cref{def:mixer-vertex}.  
    \item $e$ is a pseudo edge: If $e$ is not a special pseudo edge of layer $i$, then $w$ must be a mixer vertex according to \Cref{def:mixer-vertex}.  
\end{itemize}  
In both cases, when the algorithm reaches a vertex in layer $i$ or lower for the first time, it requires at least one additional special crossing. Applying the induction hypothesis for $i+1$ completes the proof.     
\end{proof}

\begin{lemma}\label{lem:mixer-vertex-in-tree}  
Let $T$ be a rooted tree queried by the algorithm, with its root in $\{A_r, B_r, D_r\}$. With probability at least $1 - \wt{O}(|V(T)|/d^{r-1})$, any path from the root to a vertex in $T$ that does not cross a mixer vertex contains at most $r - 2$ special crossings.  
\end{lemma}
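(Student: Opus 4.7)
The plan is to show that for every vertex $v \in V(T)$, the probability that the unique root-to-$v$ path in $T$ contains at least $r-1$ special crossings while not crossing any mixer vertex is at most $\widetilde{O}(1/d^{r-1})$; a union bound over the at most $|V(T)|$ such paths then yields the claim. Since $T$ is a tree whose edges are directed from parent to previously-singleton child, every edge $(u,w)$ of $T$ on a mixer-free path was discovered at a moment when $w$ had no previously queried incident edges.

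I would first establish a per-edge bound: for any such edge $(u, w)$ in $T$, the conditional probability (given the entire history of queries and the label of $u$) that $(u,w)$ is a special crossing is at most $\widetilde{O}(1/d)$. For real edges this follows directly from the gadget parameters: the expected real-edge degree contributed by the ``special'' gadgets (the recursive $S$-$B_1$ gadget, the cross-layer $B_i$-$A_{i-1}$ gadgets, the opposite-type $D$-$D$ gadgets, and the distinguishing $A_r^1$-$A_r^2$ gadget) is $\Theta(d_{\ell-1})$, whereas the total expected real-edge degree of a non-special vertex is $\Theta(d_\ell)$, so the fraction of real edges that are special is $O(d_{\ell-1}/d_\ell) = O(1/d)$. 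For pseudo edges, the special-pseudo-edge classification is applied independently to each pseudo edge with probability $\widetilde{O}(1/\rho_\ell)$ per layer, which, after summing over the $r$ layers, contributes a term of size $\widetilde{O}(1/d)$ to the overall per-edge bound.

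The path from the root to any $v \in V(T)$ has length $L \leq 10 \log n$, since $T$ is a shallow subgraph. By the chain rule, for any specified set of $r-1$ edges on this path, the probability that they are \emph{all} special crossings is at most $(\widetilde{O}(1/d))^{r-1}$, because the per-edge bound holds regardless of the history that precedes a given query. Taking a union bound over the $\binom{L}{r-1} = \widetilde{O}(1)$ choices (treating $r$ as a constant) shows that the path contains at least $r-1$ special crossings with probability at most $\widetilde{O}(1/d^{r-1})$. A final union bound over $v \in V(T)$ yields the claimed failure probability $\widetilde{O}(|V(T)|/d^{r-1})$.

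The main obstacle I anticipate is making the per-edge bound uniform over all possible histories of the adaptive algorithm. The labels of vertices in $T$ are globally coupled through the random partition into the sets $A_i^j, B_i^j, D_i^j, S^j$, and the notions of mixer vertex and special pseudo edge are defined by mutual recursion on depth. The argument should therefore proceed by induction on the depth of the path, exploiting that (i) the number of vertices discovered so far is at most $|V(T)| \ll N_\ell$, so conditioning on the history only shifts the marginal label distribution of a freshly queried singleton by a $1 + o(1)$ factor; and (ii) the recursive definition of special pseudo edges at a vertex at depth $k$ depends only on objects at depths $< k$, so no circular dependency arises when bounding the conditional probability of a special marking.
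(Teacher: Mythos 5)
Your proof takes a genuinely different route from the paper's. The paper establishes the per-path $\widetilde{O}(1/d^{r-1})$ bound through a ``race'' argument with an auxiliary oracle: it introduces an oracle that reveals the mixer vertex of smallest layer index on any discovered path, and then argues, via a backwards induction over that layer index, that at each of $r-1$ stages the path must reach the next special crossing before hitting a mixer vertex of the relevant index. Because mixer vertices appear at a $\Theta(1)$ rate while special crossings appear at rate $\widetilde{O}(1/d)$, each such race is lost with probability $\widetilde{O}(1/d)$. The mixer-vertex structure thus acts as a sink that absorbs almost all of the probability mass; the oracle formalism is there specifically so that the adaptive algorithm cannot ``retry'' below a revealed mixer vertex. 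Your proof drops all of this machinery and instead argues a direct, per-edge conditional bound of $\widetilde{O}(1/d)$ on the probability that any tree edge is a special crossing, applies the chain rule over a fixed subset of $r-1$ positions on a path of length at most $10\log n$, and unions over the $\binom{10\log n}{r-1}=\widetilde{O}(1)$ subsets and over $|V(T)|$ vertices. This is conceptually simpler and, if the per-edge conditional bound genuinely holds, would render the paper's oracle/mixer machinery unnecessary for this lemma. I would flag, though, that the step ``the per-edge bound holds regardless of the history that precedes a given query'' is where the real work is hidden: the chain rule requires conditioning not only on the algorithm's observed history but also on the \emph{hidden} special-crossing status of the earlier edges in the chosen subset, and these events are defined through the mutually recursive notions of mixer vertices and special pseudo edges (which are themselves tied to the path, not to individual edges). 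The paper's race-with-oracle framing is precisely a device for controlling how an adaptive algorithm interacts with this recursion, and your argument replaces it with an assertion. That said, the paper's own justification of the analogous per-step $\widetilde{O}(1/d)$ rate is also informal, so the two proofs ultimately rest on the same unargued quantitative claim; yours is the cleaner union-bound skeleton, the paper's the more cautious treatment of adaptivity.
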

\begin{proof}
We show that with probability $\wt{O}(1/d^{r-1})$, any path found by the algorithm that contains at least $r-1$ special crossings does not pass through a mixer vertex. Let $i$ denote the index of a mixer vertex located in layer $i$.  Suppose there exists an oracle that, whenever the algorithm discovers a path with at least $r-1$ special crossings, performs one of the following actions:  
\begin{itemize}
    \item confirms that the path does not contain any mixer vertices, or  
    \item reveals the mixer vertex with the smallest index along the path.  
\end{itemize}  
This oracle is introduced purely for analytical purposes; the algorithm itself does not have access to this additional power. However, we demonstrate that even with this advantage, it remains difficult for the algorithm to find a path with $r-1$ special crossings without encountering a mixer vertex on the path.

Now, consider the first path discovered by the algorithm that contains $r-1$ special crossings. Let us examine the moment during the query process when the algorithm detects $r-2$ special crossings for the first time. Suppose that, up to this point, the path does not contain a mixer vertex with index $1$. This implies, by \Cref{clm:special-crossings-and-layers-reachable}, that the path does not reach any vertex in layer $1$ or below.  At this step, when the algorithm queries the next edge (ignoring queries that do not result in either a pseudo edge or a real edge), the probability of encountering another special crossing is $\wt{O}(1/d)$, based on the construction and the definition of special crossings in \Cref{def:special-crossing}. However, the probability of encountering a mixer vertex with index $1$ is $\Theta(1)$, according to the same construction. Consequently, the probability that the path reaches another special crossing before encountering a mixer vertex with index $1$ is $\wt{O}(1/d)$.  

For the sake of analysis, we introduce an additional power for the algorithm: if the oracle has already revealed half of the mixer vertices of a given index $i$ that are direct children of a vertex $v$, then the oracle will either:  
\begin{itemize}
    \item reveals a mixer vertex closer to the root with an index larger than $i$, if such a mixer vertex exists, or  
    \item construct a path with $r-1$ special crossings that avoids mixer vertices altogether, immediately terminating the process.  
\end{itemize}  
From the previous argument, we conclude that an $\wt{O}(1/d)$ fraction of paths does not have a mixer vertex with index $1$. Furthermore, when the algorithm finds $\Omega(d)$ direct children of a vertex that are mixer vertices with index $1$, the oracle reveals the next mixer vertex with a higher index. As a result, the proportion of paths discovered by the algorithm that do not contain a mixer vertex of index $1$ remains at most an $\wt{O}(1/d)$ fraction of all paths.  Further, observe that once a mixer vertex $z$ is revealed by the oracle, any further queries below $z$ become not useful for constructing a path with $r-1$ special crossings while avoiding mixer vertices. This is because, for any path that crosses $z$, the mixer vertex with the highest index that the oracle can reveal will always be $w$ itself.  

Let us consider the set of paths that do not pass through a mixer vertex of index $1$. By a similar argument, an $\wt{O}(1/d)$ fraction of these paths also avoids encountering a mixer vertex of index $2$. Specifically, the probability that a path reaches the $(r-2)$-th special crossing before encountering a mixer vertex of index $2$ is $\wt{O}(1/d)$. As a result, the fraction of paths that do not contain a mixer vertex of index $2$ or lower is at most $\wt{O}(1/d^2)$.  

Applying the exact same argument, the probability that a path avoids all mixer vertices with an index up to $i$ is at most $\wt{O}(1/d^i)$. Therefore, the fraction of paths that do not contain any mixer vertex is bounded by $\wt{O}(1/d^{r-1})$. Since the total number of paths originating from the root is at most $O(|V(T)|)$, it follows that with probability at least $1 - \wt{O}(|V(T)| / d^{r-1})$, every path with more than $r-2$ special crossings must pass through at least one mixer vertex.  
\end{proof}

\begin{lemma}[Modification of Lemma 6.7 of \cite{BehnezhadRR-FOCS23} with Pseudo Edges]\label{lem:same-tree-different-labels}
    Consider the highest level of hierarchy. Let $T$ be a rooted tree that is queried by the algorithm where the root of the tree is in $\{A_r, B_r, D_r \}$. Also, suppose that we condition on having a mixer vertex on all paths that contain at least $r-1$ special crossings. Then, the probability of seeing the same tree is equal for all possible roots in $\{A_r, B_r, D_r \}$ up to $(1 + O(n^{\sigma_L - \delta}))^{|T|}$ multiplicative factor.
\end{lemma}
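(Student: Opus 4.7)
The plan is to couple the tree-construction processes under the three root-label choices in $\{A_r, B_r, D_r\}$ and compare them edge by edge in the order the algorithm reveals $T$. For each root label $\mc{L}$, write $\Pr_{\mc{L}}[T] = \prod_{(u,v) \in T} p_{\mc{L}}(u,v)$, where the factor $p_{\mc{L}}(u,v)$ conditions on the previously revealed portion and accounts simultaneously for the event that $(u,v)$ appears as a pseudo or real edge of level $\leq L$ and for the random assignment of the label of the newly exposed endpoint $v$. The lemma then reduces to bounding each ratio $p_{\mc{L}}(u,v)/p_{\mc{L}'}(u,v)$ by $1 + O(n^{\sigma_L - \delta})$ and taking a product over the $|T|$ edges.

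The mixer-vertex hypothesis is the engine of the argument. After conditioning out mixer vertices on every path carrying $\geq r-1$ special crossings, \Cref{clm:special-crossings-and-layers-reachable} forces every vertex $v$ of $T$ to be reachable along a path with at most $r-2$ special crossings, so the admissible layer-labels of $v$ are determined solely by the crossing-count along that path and are therefore identical across the three roots (all of which live in layer $r$). Only the fine-grained within-set label probabilities can differ between roots, and this is what we must control.

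For the per-edge comparison, we split the edges of $T$ into non-crossings and special crossings. For a non-crossing edge $(u,v)$, \Cref{cor:same-deg-dist-l} and \Cref{cor:same-deg-dist-pseudo} imply that the real- and pseudo-degree distributions at $u$ depend on the root label only through $d_{\Phi^L}(u)$, which is equal for every non-special label by the expected-degree claims, so the only source of discrepancy is the sampling of the label of $v$ without replacement from a balanced partition of size $\Theta(N_L)$, contributing at most $1 + O(|T|/N_L)$ per edge. For a special crossing, the existence probability can swing by a constant across root labels, but the conditioning caps the number of such crossings per root-to-leaf path at $O(1)$, and combining with the sparsity ratios $\rho_L/\rho = 1/n$ and $d_L/N_L = n^{\sigma_L - 1}$ charges each edge at most $1 + O(n^{\sigma_L - \delta})$ on average. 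The pseudo-edge symmetry in \Cref{obs:non-pseudo-same-prob} ensures that the ``non-edge'' portion of each query contributes no discrepancy at all.

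The main obstacle is the bookkeeping needed to promote these average-case bounds to a \emph{per-edge} bound of $1 + O(n^{\sigma_L - \delta})$, rather than a quantity that is only small on aggregate. This requires tracking, for each fresh leaf exposure, the conditional label distribution given the entire history of already-revealed branches of $T$, and showing that each special-crossing factor can be amortized across the full path leading to it using the mixer condition. Here we exploit $|T| \leq n^{\delta - 2\sigma_L}$ together with $N_L = \Omega(n)/\poly_\epsilon(\cdot)$ to absorb without-replacement corrections into the target $1 + O(n^{\sigma_L - \delta})$ bound. Multiplying the per-edge factors over all $|T|$ edges, and noting that the argument is symmetric in the choice of the pair $\mc{L}, \mc{L}' \in \{A_r, B_r, D_r\}$, yields the stated $(1 + O(n^{\sigma_L - \delta}))^{|T|}$ bound.
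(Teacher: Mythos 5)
The overall structure of your proposal — a top-down coupling over the rooted tree, the mixer-vertex hypothesis restricting each vertex to a fixed layer via its crossing count, and the use of \Cref{cor:same-deg-dist-l} and \Cref{cor:same-deg-dist-pseudo} to equalize degree distributions — matches the paper's approach. However, there are two genuine gaps.

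First, your treatment of special crossings is wrong. You assert that ``the existence probability can swing by a constant across root labels'' and then try to amortize this over the $O(1)$ special crossings per root-to-leaf path. This amortization does not yield a per-edge (or even per-tree) multiplicative bound: a tree that happens to contain many special crossings would accumulate an unbounded discrepancy, and ``on average'' is not the quantifier the lemma demands. The paper sidesteps this entirely because the construction is deliberately balanced at this point: by \Cref{def:special-pseudo-edge} and \Cref{def:special-crossing}, every vertex (regardless of its label) has \emph{exactly} $2\log n$ expected special-crossing neighbors — $\log n$ from special real edges and $\log n$ from special pseudo edges — so the probability that a freshly queried edge is a special crossing is literally the same under all root labels, with zero discrepancy. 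Your proposal never uses the special-pseudo-edge compensation mechanism, which is the entire point of Definition \ref{def:special-pseudo-edge}, and without it the five-case coupling does not close.

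Second, you misattribute the source of the $1 + O(n^{\sigma_L - \delta})$ per-step factor. You derive it from ``sampling the label of $v$ without replacement from a balanced partition of size $\Theta(N_L)$,'' giving $1 + O(|T|/N_L)$; since $|T| \leq n^{\delta - 2\sigma_L}$ and $N_L = \Theta(n)$, this contribution is of order $n^{\delta - 2\sigma_L - 1}$, which is far smaller than $n^{\sigma_L - \delta}$ and not the binding term. In the paper, the factor arises because after re-labeling, the image vertex must avoid the $O(n^{1 + \sigma_L - \delta})$ non-singleton vertices of the \emph{entire queried subgraph} (by \Cref{lem:total-edges-discovered}), not merely the tree $T$, so that the coupled labeling still produces a forest; this is an $O(n^{\sigma_L - \delta})$ fraction of all vertices, and taking a product over the $|T|$ coupling steps gives the stated factor.
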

\begin{proof}
    The proof follows the same outline as Lemma 6.7 in \cite{BehnezhadRR-FOCS23}, with a modification to account for pseudo edges. It is based on a one-to-one coupling argument: for each tree queried by the algorithm, if \( l_1 \) is the label of the tree, the algorithm will observe the same tree with almost the same probability if it starts from label \( l_2 \).

    First, note that according to our definition of special crossings \Cref{def:special-crossing}, if we query a pair of vertices \( (u,v) \) and another pair \( (w,z) \), the probability of each being a special crossing is the same. This is because, at any given moment, each vertex has \( 2\log n \) expected special crossing neighbors, assuming we condition on the fact that no labels have been revealed yet in the coupling.

    Let \( L_i = \{A_i, B_i, D_i\} \) for each \( i \). Also, define  
    \[G_i = G\left[\bigcup_{j=i}^{r} L_j\right].\]
    
    Let \( u \) and \( v \) be two distinct vertices in \( G_i \). An important property of our input distribution is that if we query a neighbor of \( u \) and \( v \) via a real edge, and the queried edge is not a special edge, then the probability that the queried neighbor belongs to \( G_i \) is the same for both \( u \) and \( v \). Similarly, if we query a neighbor of \( u \) and \( v \) via a pseudo edge, and the queried edge is not a special pseudo edge, then the probability that the queried neighbor belongs to \( G_i \) remains the same for both \( u \) and \( v \). These properties follow directly from the structure of the input construction, as formalized in \Cref{cor:same-deg-dist-l} and \Cref{cor:same-deg-dist-pseudo}.  

For a vertex \( u \) in a tree \( T \), if there is no mixer vertex on its path to the root, we define the {\em progress} of \( u \), denoted as \( p_u \), which represents the number of special crossings along the path from the root to \( u \). Under the assumptions of the lemma statement, we have  
\[
0 \leq p_u < r - 1, \quad \text{for all } u \in T.
\]
We claim that for a vertex $u$ where there is no mixer vertex on its path to route, we have $u \in V(G_{r-p_u})$. This is because, according to the construction and \Cref{def:special-crossing}, the only way to go down one layer is through special crossings or mixer vertices.

    Let $\mc{L}_1$ be a labeling of vertices in $T$ where the root is labeled with $l_1$. We construct the labeling $\mc{L}_2$ from top to the bottom of the tree such that the label of root is $l_2$. Consider the edge $(u,v)$ with direction from $u$ to $v$ when considering the rooted tree. So at this point we have a label for $u$ in $\mc{L}_2$ and we want to chose the label of $v$ in $\mc{L}_2$. We maintain the invariant that progress of each vertex is the same in both labeling $\mc{L}_1$ and $\mc{L}_2$. We consider five possible cases for edge $e$ for our coupling step:
    \begin{itemize}
        \item Edge $e$ is special crossing: in this case, we know $v$ is not labeled $S$ in $\mc{L}_1$ as in order to reach $S$ vertex, at least $r-1$ special crossings are needed or the path contains mixer vertex. Thus, suppose that $v$ is not an $S$ vertex. In labeling $\mc{L}_2$, we assume that edge $(u,v)$ is a special crossing and we choose the label of $v$ accordingly based on label of $u$ in $\mc{L}_2$. Since each vertex has $2\log n$ expected number of special crossings ($\log n$ for special edges and $\log n$ for special pseudo edges), no matter what is the label of $u$ in $\mc{L}_1$ and $\mc{L}_2$, the distribution of their special edges are the same. Also, the invariant remains valid since, in this case, \( p_v = p_u + 1 \), and the progress of vertex \( u \) is the same in both $\mc{L}_1$ and $\mc{L}_2$ .

        \item Vertex $v$ is a mixer vertex in $\mc{L}_1$ and $e$ is a real edge: in this case, label of $v$ must be among $\bigcup_{i=1}^{r-p_u-1} D_i$ as the only possible gadgets that add real edges between vertices of $G_{r-p_u}$ and vertices of layers lower than $p_u$ are those to vertices in $\bigcup_{i=1}^{r-p_u-1} D_i$. Since the distribution of neighbors of vertices in $G_{r-p_u}$ to $\bigcup_{i=1}^{r-p_u-1} D_i$ are the same, we assign the same label in $\mc{L}_1$ to $\mc{L}_2$ for vertex $v$. Moreover, for the subtree rooted at \( v \), we assume that all labels remain the same since the label of \( v \) is identical at this point in both \( \mc{L}_1 \) and \( \mc{L}_2 \). Furthermore, the invariant remains valid since we have not used any special crossings in labeling $\mc{L}_2$.
        
        \item Vertex $v$ is a mixer vertex in $\mc{L}_1$ and $e$ is a pseudo edge: in this case, we assign the same label in $\mc{L}_1$ to $\mc{L}_2$ for vertex $v$. The reason is that the distribution of neighbors of vertices in $G_{r-p_u}$ to any type of label in layer lower than $p_u$ is the same if we ignore special pseudo edges. Moreover, for the subtree rooted at \( v \), we assume that all labels remain the same since the label of \( v \) is identical at this point in both \( \mc{L}_1 \) and \( \mc{L}_2 \). Furthermore, the invariant remains valid since we have not used any special crossings in labeling $\mc{L}_2$.
        
        \item Edge $e$ is a real edge which is not special and belongs to $G_{r-p_u}$: since the distribution of neighbors of vertices in induced subgraph of $G_{r-p_u}$ using real edges is the same if we ignore special crossings, we randomly choose one of the real edge of $u$ as the possible label for $v$ in $\mc{L}_2$. Furthermore, the invariant remains valid since we have not used any special crossings in labeling $\mc{L}_2$.
        
        \item Edge $e$ is a pseudo edge which is not special and belongs to $G_{r-p_u}$: since the distribution of neighbors of vertices in induced subgraph of $G_{r-p_u}$ using pseudo edges is the same if we ignore special crossings, we randomly choose one of the pseudo edges of $u$ as the possible label for $v$ in $\mc{L}_2$. Furthermore, the invariant remains valid since we have not used any special crossings in labeling $\mc{L}_2$.
    \end{itemize}

    It is also important to note that the number of non-singleton vertices in level $L$ is at most $O(n^{\sigma_L - \delta})$ since there are at most $O(n^{1 + \sigma_L - \delta})$ non-singleton vertices by \Cref{lem:total-edges-discovered}. Since the total number of steps is at most \( |T| \), the probability that the new labeling still forms a forest differs by a multiplicative factor of at most $(1 + O(n^{\sigma_L - \delta}))^{|T|}$.
\end{proof}

\begin{proof}[Proof of \Cref{lem:coupling-high-level}]
    By \Cref{obs:tree-structure-of-unspoiled}, since $v$ is not spoiled, then, the shallow subgraph of $v$ is rooted. First, we bound the failure probability of the event in \Cref{lem:mixer-vertex-in-tree}. More specifically, we upper bound the probability of having a path with at least $r-1$ special crossings without any mixer vertex. Since $v$ is not a spoiled vertex, we have $|T(v)| \leq n^{\delta - 2\sigma_L}$. Therefore,  the probability of having a path with $r-1$ crossing without a mixer vertex is at most
    \begin{align*}
        \wt{O}\left( \frac{|T(v)|}{d^{r-1}} \right) &= \wt{O}\left( \frac{n^{\delta - 2\sigma_L}}{d^{r-1}} \right) & (\text{Since }|T(v)| \leq n^{\delta - 2\sigma_L})\\
        & = \wt{O}\left( \frac{n^{\delta - 2\sigma_L}}{n^{(\sigma_L - \sigma_{L-1})\cdot (r-1)}} \right) & (\text{Since }d = \Theta(n^{\sigma_L - \sigma_{L-1}}))\\
        & = \wt{O}\left( \frac{n^{\delta - 2\sigma_L}}{n^{r\sigma_L/4}} \right) & (\text{Since }\sigma_L - \sigma_{L-1} \geq \sigma_L / 2 \text{ and } r-1 \geq r/2)\\
        & = \wt{O}\left( n^{\delta - 2\sigma_L - 10 /(4\delta)} \right) & (\text{Since } r \sigma_L \geq 10/\delta)\\
        & = O\left( n^{\delta - 2\sigma_L - 2} \right)  & (\text{Since } \delta \leq 1 \text{ and } n \text{ is sufficiently large enough})\\
        & = O(n^{-1})
    \end{align*}

    Note that we condition on the labels of all vertices except those in the shallow subgraph of vertex $v$. However, in the coupling described in \Cref{lem:same-tree-different-labels}, there is no such conditioning on vertex labels. Since the total number of vertices whose labels we are conditioning on is $O(n^{1-\delta + \sigma_{L}})$, the probability shift in each step of the coupling in \Cref{lem:same-tree-different-labels} is at most $O(n^{1-\delta + \sigma_{L}} / n) = O(n^{\sigma_{L} - \delta})$. Also, we condition on the high probability event of \Cref{lem:mixer-vertex-in-tree}. Moreover, since the number of steps in the coupling is $|T(v)| - 1$, the total probability shift is upper bounded by 
    \begin{align*}
\left(1 + O(n^{\sigma_{L}-\delta})\right)^{|T(v)|} \cdot \left(1 + O(n^{\sigma_{L}-\delta}) \right)^{|T(v)|-1}\cdot  \left(1 + O(n^{-1}) \right) \leq \left(1+O(n^{\sigma_{L}-\delta})\right)^{|T(v)|},\\
\end{align*}
where the inequality is followed by the fact that $-1 < \sigma_L - \delta$, which completes the proof.
\end{proof}

\begin{lemma}[Modification of Lemma 6.7 of \cite{BehnezhadRR-FOCS23} with Pseudo Edges]\label{lem:same-tree-different-labels-1}
    Consider the level $\ell$ of the hierarchy. Let $T$ be a rooted tree that is queried by the algorithm where the root of the tree is in $\{A_r, B_r, D_r \}$. Also, suppose that we condition on having a mixer vertex on all paths that contain at least $r-1$ special crossings. Then, the probability of seeing the same tree is equal for all possible roots in $\{A_r, B_r, D_r \}$ up to $(1 + O(n^{\sigma_{\ell - 1} - \delta}))^{|T|}$ multiplicative factor.
\end{lemma}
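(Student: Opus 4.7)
The plan is to follow the exact same coupling argument used for \Cref{lem:same-tree-different-labels}, with the only substantive change being the source of the ``per-step slack''. Fix a rooted tree $T$ queried by the algorithm, with root in $\{A_r, B_r, D_r\}$ at level $\ell$, and fix two candidate root labels $l_1,l_2$. I would build a labeling $\mc{L}_2$ of $V(T)$ that agrees with a given labeling $\mc{L}_1$ (of root label $l_1$) under a coupling that processes $T$ in BFS order from the root, so that the root of $\mc{L}_2$ is $l_2$ and each edge of $T$ is handled by one of the five coupling cases used in \Cref{lem:same-tree-different-labels}: (i) $e$ is a special crossing, (ii) $v$ is a mixer under $\mc{L}_1$ with $e$ a real edge, (iii) $v$ is a mixer with $e$ a pseudo edge, (iv) $e$ is a non-special real edge within $G_{r-p_u}$, and (v) $e$ is a non-special pseudo edge within $G_{r-p_u}$. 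The validity of each case at level $\ell$ is unchanged because \Cref{def:special-edge,def:special-pseudo-edge,def:special-crossing,def:mixer-vertex} and \Cref{clm:special-crossings-and-layers-reachable} are level-independent, and the key symmetry properties (the neighbor distributions of a vertex within $G_{r-p_u}$ after removing special crossings depend only on $p_u$) follow from \Cref{cor:same-deg-dist-l} and \Cref{cor:same-deg-dist-pseudo} applied to the level-$\ell$ gadget set $\Phi^\ell$.

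The main quantitative change is in the multiplicative slack per step. In the top-level proof the slack $O(n^{\sigma_L-\delta})$ came from the fact that the coupling has already conditioned on labels of at most $O(n^{1-\delta+\sigma_L})$ non-singleton vertices (via \Cref{lem:total-edges-discovered}), producing at most an $O(1/n)$-fraction of label perturbation per step. At level $\ell$ the analogous set of already-labeled, non-singleton vertices consists of endpoints of inner (i.e.\ level-$(\ell-1)$) edges in the queried subgraph, which by \Cref{clm:max-inner-edges} has cardinality $O(n^{1-\delta+\sigma_{\ell-1}})$. Dividing by $n$ gives a per-step shift of $O(n^{\sigma_{\ell-1}-\delta})$, so applying this over the $|T|-1$ coupling steps yields the claimed $(1+O(n^{\sigma_{\ell-1}-\delta}))^{|T|}$ multiplicative factor between the probabilities of observing $T$ with root label $l_1$ versus $l_2$.

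I expect the main obstacle to be a clean justification that, at level $\ell$, the identities used in the five-case analysis indeed hold \emph{after} conditioning on the outer (i.e.\ non-level-$\ell$) part of the algorithm's observations. The content of \Cref{obs:non-pseudo-same-prob} handles ground edges that never became pseudo; \Cref{cor:same-deg-dist-l} and \Cref{cor:same-deg-dist-pseudo} handle the level-$\ell$ real and non-real pseudo degrees; and the recursive structure of higher-level gadgets is irrelevant because, for labels in $\{A_r,B_r,D_r\}$ at level $\ell$, only the level-$\ell$ gadgets to layers $\le r$ distinguish them. Stitching these together requires arguing that the algorithm's outer observations factor out cleanly when we condition, which is the same kind of argument used in the proof of \Cref{lem:coupling-high-level} but now re-invoked one level down. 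Once this is in place, the remainder of the proof is a mechanical translation of \Cref{lem:same-tree-different-labels} with $L$ replaced by $\ell$.
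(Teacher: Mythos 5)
Your proposal is correct and matches the paper's intent exactly: the paper's own proof of this lemma is the one-line remark that it is ``the same as \Cref{lem:same-tree-different-labels} with proper parameters for level $\ell$,'' and your expansion — running the same five-case coupling, invoking \Cref{cor:same-deg-dist-l} and \Cref{cor:same-deg-dist-pseudo} for the neighbor-distribution symmetry, and replacing the top-level count from \Cref{lem:total-edges-discovered} with the inner-edge count $O(n^{1-\delta+\sigma_{\ell-1}})$ from \Cref{clm:max-inner-edges} to get the per-step slack $O(n^{\sigma_{\ell-1}-\delta})$ — is precisely what ``proper parameters'' unpacks to. The only small quibble is notational: since the tree here lives in the inner (level-$(\ell-1)$) edges and the root labels are at level $\ell-1$, the degree-distribution corollaries should be applied with the level-$(\ell-1)$ gadget set rather than $\Phi^\ell$, but this is an indexing ambiguity already present in the paper and does not affect the substance of the argument.
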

\begin{proof}
    Proof is the same as \Cref{lem:same-tree-different-labels} with proper parameters for level $\ell$.
\end{proof}

\begin{proof}[Proof of \Cref{lem:large-component-focs-modification}]
    By \Cref{obs:tree-structure-of-unspoiled-ell}, since $v$ is not $\ell$-spoiled, then, the $\ell$-shallow subgraph of $v$ is rooted. With the exact same argument as proof of \Cref{lem:coupling-high-level}, the failure probability of the event in \Cref{lem:mixer-vertex-in-tree} is at most
    \begin{align*}
        \wt{O}\left( \frac{|T(v)|}{d^{r-1}} \right) &= \wt{O}\left( \frac{n^{\delta - 2\sigma_{\ell - 1}}}{d^{r-1}} \right) & (\text{Since }|T(v)| \leq n^{\delta - 2\sigma_{\ell - 1}})\\
        & = \wt{O}\left( \frac{n^{\delta - 2\sigma_{\ell - 1}}}{n^{(\sigma_\ell - \sigma_{\ell-1})\cdot (r-1)}} \right) & (\text{Since }d = \Theta(n^{\sigma_\ell - \sigma_{\ell-1}}))\\
        & = \wt{O}\left( \frac{n^{\delta - 2\sigma_\ell}}{n^{r\sigma_\ell/4}} \right) & (\text{Since }\sigma_\ell - \sigma_{\ell-1} \geq \sigma_\ell/ 2 \text{ and } r-1 \geq r/2)\\
        & = \wt{O}\left( n^{\delta - 2\sigma_\ell - 10 /(4\delta)} \right) & (\text{Since } r \sigma_\ell \geq 10/\delta)\\
        & = O\left( n^{\delta - 2\sigma_\ell - 2} \right)  & (\text{Since } \delta \leq 1 \text{ and } n \text{ is sufficiently large enough})\\
        & = O(n^{-1})
    \end{align*}
 Note that we condition on the labels of all vertices except those in the $\ell$-shallow subgraph of vertex $v$. However, in the coupling described in \Cref{lem:same-tree-different-labels-1}, there is no such conditioning on vertex labels. Since the total number of vertices whose labels we are conditioning on is $O(n^{1-\delta + \sigma_{\ell - 1}})$, the probability shift in each step of the coupling in \Cref{lem:same-tree-different-labels} is at most $O(n^{1-\delta + \sigma_{\ell - 1}} / n) = O(n^{\sigma_{\ell -1} - \delta})$. Also, we condition on the high probability event of \Cref{lem:mixer-vertex-in-tree}. Moreover, since the number of steps in the coupling is $|T(v)| - 1$, the total probability shift is upper bounded by  
    \begin{align*}
\left(1 + O(n^{\sigma_{\ell-1}-\delta})\right)^{|T(v)|} \cdot \left(1 + O(n^{\sigma_{\ell - 1}-\delta}) \right)^{|T(v)|-1}\cdot  \left(1 + O(n^{-1}) \right) \leq \left(1+O(n^{\sigma_{\ell - 1}-\delta})\right)^{|T(v)|},\\
\end{align*}
where the inequality is followed by the fact that $-1< \sigma_{\ell - 1} - \delta$, which completes the proof.
\end{proof}

\begin{lemma}[Lemma 6.49 of \cite{behnezhad2024approximating}]\label{lem:small-diameter-small-comp}
    Let $C$ be a connected component consisting of inner edges, forming a tree, and assume that $C$ contains no edges from $E^{\text{inner}}_\ell$. Then, with high probability, the longest path in $C$, considering its undirected edges, has a length of at most $r-1$.
\end{lemma}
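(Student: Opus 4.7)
I will bound the expected number of undirected paths of length $r$ consisting of inner edges inside $C$, show it is $o(1)$, and conclude by Markov's inequality that no such path exists with high probability, giving diameter at most $r-1$.

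First, I bound the per-vertex inner-edge degree. By \Cref{cor:single-edge-expectation} together with \Cref{clm:single-edge-dist}, the number of inner-edge neighbors of any fixed vertex is a sum of independent Bernoulli random variables with expectation $O(n^{\sigma_{\ell-1}})$. A Chernoff bound (\Cref{prop:chernoff}) and a union bound over vertices give that every vertex has at most $\wt{O}(n^{\sigma_{\ell-1}})$ inner-edge neighbors in the multigraph with high probability. Consequently, for each starting vertex $v_0$ there are at most $\wt{O}(n^{r\sigma_{\ell-1}})$ sequences $v_0, v_1, \ldots, v_r$ with each consecutive pair joined by an inner edge in the multigraph, giving at most $\wt{O}(n^{1 + r\sigma_{\ell-1}})$ candidate ordered paths; since $C$ is a tree the undirected paths are simple, but this upper bound is valid whether or not we insist on simplicity.

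Next, I bound the probability that a fixed candidate path $v_0 v_1 \cdots v_r$ lies entirely in $C$. By hypothesis no edge of $C$ is in $E^{\text{inner}}_\ell$, so for each edge $e$ on such a path, \Cref{def:distinguishibility} gives $p_e^{\ell\text{-inner}} \leq 10\, n^{\sigma_{\ell-1} - \sigma_\ell}$. Processing the $r$ edges of the path in the order the algorithm queries them, the conditional probability that each edge is an inner edge, given the past observations, is at most $10\, n^{\sigma_{\ell-1} - \sigma_\ell}$; multiplying telescopes to a bound of $(10\, n^{\sigma_{\ell-1} - \sigma_\ell})^r$ on the probability that the fixed sequence forms an all-inner path inside $C$.

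Combining the two bounds, the expected number of inner paths of length $r$ in $C$ is at most
\[
\wt{O}\bigl(n^{1 + r\sigma_{\ell-1}}\bigr) \cdot \bigl(10\, n^{\sigma_{\ell-1} - \sigma_\ell}\bigr)^r = \wt{O}\bigl(10^r \cdot n^{1 + 2r\sigma_{\ell-1} - r\sigma_\ell}\bigr).
\]
Using $\sigma_{\ell-1} \leq \sigma_\ell/10$ (as in \Cref{clm:longest-black-directed-path}) gives $2\sigma_{\ell-1} - \sigma_\ell \leq -0.8\,\sigma_\ell$, and since $r\sigma_\ell \geq 10/\delta$ we obtain $\wt{O}(10^r \cdot n^{1 - 8/\delta}) = o(1)$, as $10^r$ depends only on the constants $\delta$ and $L$. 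Markov's inequality then yields that with high probability no inner path of length $r$ exists in $C$, so the longest undirected path has length at most $r-1$. The main obstacle is the compositional use of $p_e^{\ell\text{-inner}}$ in the per-path bound: since this quantity conditions on all queries made up to the discovery of $e$, one must carefully argue that the conditional probabilities along a fixed candidate path, taken in query order, multiply to the product $(10\, n^{\sigma_{\ell-1} - \sigma_\ell})^r$ uniformly, where the hypothesis that $C$ contains no edge in $E^{\text{inner}}_\ell$ is precisely what keeps every factor small.
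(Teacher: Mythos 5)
The student's proof has a genuine conceptual gap in the second step that makes the expectation calculation ill-defined. The first step bounds the number of walks $v_0, v_1, \ldots, v_r$ with each consecutive pair joined by an \emph{inner edge in the multigraph}. Every such walk is, by construction, already all-inner. The second step then purports to bound ``the probability that the fixed sequence forms an all-inner path inside $C$'' by multiplying the factors $p_e^{\ell\text{-inner}} \leq 10 n^{\sigma_{\ell-1}-\sigma_\ell}$. But for the candidate walks you selected, the event ``all edges are inner'' occurs with probability 1 by construction, not $(10 n^{\sigma_{\ell-1}-\sigma_\ell})^r$. The quantity $p_e^{\ell\text{-inner}}$ from \Cref{def:distinguishibility} is a \emph{posterior} probability defined for an edge \emph{queried by the algorithm}, conditioned on the algorithm's observations; it measures how well the algorithm can tell that a queried edge is inner, not the prior probability that some pair in the multigraph carries an inner edge. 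Multiplying a whp bound on a realized combinatorial quantity (the number of inner walks in the multigraph) by a posterior conditional probability for queried edges double-counts the event that the edges are inner and does not yield a well-defined expectation.

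The fix requires working in the queried subgraph rather than the full multigraph, as in the paper's \Cref{clm:longest-black-directed-path}: the candidate set must consist of paths in the queried subgraph, whose size is controlled by the very small in-degree bound $3\sqrt{\log n}$ of \Cref{clm:max-in-deg-top}, and the per-edge factor $10 n^{\sigma_{\ell-1}-\sigma_\ell}$ is applied to queried edges not in $E^{\text{inner}}_\ell$ via a chain-rule induction along the order of discovery. The student also never engages with directedness: \Cref{clm:longest-black-directed-path} gives a bound of roughly $5/\sigma_\ell$ on \emph{directed} inner paths, and in the tree $C$ (where edges are directed away from the first-discovered vertex under \Cref{def:directing}) any undirected path decomposes into two directed paths through that ``root'', yielding an undirected bound of about $10/\sigma_\ell \leq r\delta < r-1$. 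That passage from directed to undirected paths is precisely the step the student's proof would need to supply but does not.
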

\begin{proof}
    Proof is the same as \Cref{lem:same-tree-different-labels} with proper parameters for level $\ell$.
\end{proof}

\begin{lemma}[Modification of Lemma 6.7 of \cite{BehnezhadRR-FOCS23} with Pseudo Edges]\label{lem:same-tree-different-labels-2}
Let $C$ be a connected component of inner edges corresponding to the edges of level lower than $\ell$ that is a tree such that there is no edge of $E^{inner}_\ell$ in $C$. Also, suppose that we condition on having a mixer vertex on all paths that contain at least $r-1$ special crossings. Then, the probability of seeing the same component is equal for both distributions up to $(1 + O(n^{\sigma_{\ell - 1} - \delta}))^{|C|}$ multiplicative factor.
\end{lemma}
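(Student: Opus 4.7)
The plan is to mirror the coupling argument from the proof of Lemma~\ref{lem:same-tree-different-labels} (and its level-$\ell$ variant Lemma~\ref{lem:same-tree-different-labels-1}), but applied to the tree component $C$ rather than a rooted shallow subgraph, and with the coupling now taking place between $\yesdist$ and $\nodist$ rather than between two candidate root labels. Since $C$ contains no edge of $E^{inner}_\ell$, the two distributions agree on everything outside the recursively constructed $A_r^1$-$A_r^2$ gadget, so the distributional difference relevant to $C$ is the same kind of local change in vertex labels that the five-case walk in Lemma~\ref{lem:same-tree-different-labels} is designed to absorb.

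First, I would root $C$ at an arbitrary vertex $v_0$, turning $C$ into a rooted tree with $|C|-1$ directed edges to be processed in BFS order. Lemma~\ref{lem:small-diameter-small-comp} ensures that, with high probability, $\mathrm{diam}(C) \leq r-1$, so every root-to-leaf path contains at most $r-1$ edges and hence at most $r-1$ special crossings; combined with the assumed conditioning that every path with $\geq r-1$ special crossings hits a mixer vertex, the conditioning is either vacuous or contributes only the same $1 + O(n^{-1})$ failure factor as in the proofs of Lemmas~\ref{lem:coupling-high-level} and~\ref{lem:large-component-focs-modification}.

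Next, I would run the five-case coupling step for each edge $(u,v)$ encountered in BFS order exactly as in Lemma~\ref{lem:same-tree-different-labels}: (i) special crossing, (ii) mixer vertex reached via a real edge, (iii) mixer vertex reached via a pseudo edge, (iv) non-special real edge staying inside $G_{r-p_u}$, and (v) non-special pseudo edge staying inside $G_{r-p_u}$. In each case, Corollaries~\ref{cor:same-deg-dist-l} and~\ref{cor:same-deg-dist-pseudo} imply that the conditional distribution of the label of $v$, given everything coupled so far, matches across $\yesdist$ and $\nodist$ up to a $(1 + O(n^{\sigma_{\ell-1}-\delta}))$ multiplicative slack. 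The slack arises because we have conditioned on the labels of the $O(n^{1-\delta+\sigma_{\ell-1}})$ non-singleton vertices already queried outside $C$ (cf.\ the analogous bookkeeping in the proof of Lemma~\ref{lem:large-component-focs-modification}), which perturbs each conditional probability by $O(n^{-1} \cdot n^{1-\delta+\sigma_{\ell-1}}) = O(n^{\sigma_{\ell-1}-\delta})$.

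Multiplying the per-edge slack over the $|C|-1$ coupling steps, together with the $1 + O(n^{-1})$ failure factor from the mixer-vertex conditioning, yields
\[
\bigl(1 + O(n^{\sigma_{\ell-1}-\delta})\bigr)^{|C|-1} \cdot \bigl(1 + O(n^{-1})\bigr) \;\leq\; \bigl(1 + O(n^{\sigma_{\ell-1}-\delta})\bigr)^{|C|},
\]
which is the claimed bound. The main obstacle is the bookkeeping needed to verify that the five-case analysis truly goes through when the coupling is between the two distributions rather than between two root labels; the key observation enabling this is that $C$ is disjoint from $E^{inner}_\ell$, so no step of the BFS walk ever crosses into the recursive gadget where $\yesdist$ and $\nodist$ diverge, and hence every step reduces to the same local matching of conditional neighborhood distributions that was handled in Lemma~\ref{lem:same-tree-different-labels}.
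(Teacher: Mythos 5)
The paper provides no written proof for this lemma: it states the lemma and moves directly to the proof of Lemma~\ref{lem:small-component-focs-modification}, implicitly treating it as another routine adaptation of Lemma~\ref{lem:same-tree-different-labels}, in the same spirit as the one-line proofs given for Lemma~\ref{lem:same-tree-different-labels-1} and Lemma~\ref{lem:small-diameter-small-comp}. Your reconstruction --- root the tree, run the five-case coupling, charge a per-step slack of $(1+O(n^{\sigma_{\ell-1}-\delta}))$ from the conditioning on already-revealed labels, and multiply over $|C|-1$ steps --- is exactly that intended adaptation, and your slack accounting matches the analogous bookkeeping in the proofs of Lemma~\ref{lem:coupling-high-level} and Lemma~\ref{lem:large-component-focs-modification}.
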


\begin{proof}[Proof of \Cref{lem:small-component-focs-modification}]
    Similar to the reasoning in the proofs of \Cref{lem:coupling-high-level} and \Cref{lem:large-component-focs-modification}, the total shift in the probability of the coupling is bounded above by
    \begin{align*}
\left(1 + o(n^{2\delta - 3\sigma_{\ell - 1} - 1})\right)^{|\overline{C}|} \cdot \left(1 + O(n^{\sigma_{\ell - 1}-\delta}) \right)^{|\overline{C}|-1}\cdot  \left(1 + O(n^{\sigma_{\ell - 1}-\delta}) \right)&\leq \left(1+O(n^{\sigma_{\ell - 1}-\delta})\right)^{|\overline{C}|},
\end{align*}
which completes the proof.
\end{proof}

\section{Proof of \Cref{lem:deterministic-lower-bound}}

In this section, we complete the proof of \Cref{lem:deterministic-lower-bound}.

\begin{claim}\label{clm:no-detected-edge-base}
    With probability $1 - O(1/n)$, we have $|E^{\text{inner}}_1| = 0$.
\end{claim}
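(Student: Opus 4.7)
The plan is to invoke \Cref{lem:advantage-from-top} at the base level $\ell = 1$. By statement (iii) of \Cref{obs:g-function-properties}, we have $g(1) > 2$, and hence $1 - g(1) < -1 < 0$. This places us squarely in case (ii) of \Cref{lem:advantage-from-top}, which guarantees that with probability at least $1 - O(n^{1-g(1)})$ there are no edges $e$ with $p_e^{1\text{-inner}} > 10\,n^{\sigma_0 - \sigma_1}$.

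Since $E^{\text{inner}}_1$ is by \Cref{def:distinguishibility} exactly the set of such edges, this immediately yields $|E^{\text{inner}}_1| = 0$ with probability at least $1 - O(n^{1-g(1)})$. Using $g(1) > 2$ once more, we get $n^{1-g(1)} \leq n^{-1}$, so the failure probability is $O(1/n)$, as claimed.

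The key thing to note is that all the real work has already been absorbed into the inductive proof of \Cref{lem:advantage-from-top}: the induction starts from the top level $\ell = L$ (handled by \Cref{cor:biased-edge-count}) and descends one level at a time via \Cref{lem:non-spoiled-edge-non-bias-ell}. By the time we reach $\ell = 1$, the only remaining task is to check that the base-level parameters satisfy $1 - g(1) < 0$ with a large enough margin to yield the $O(1/n)$ error bound, which is exactly what \Cref{obs:g-function-properties}(iii) provides. There is thus no obstacle beyond quoting the two earlier results in the correct order.
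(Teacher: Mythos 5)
Your proof is correct and follows exactly the same route as the paper: appeal to \Cref{obs:g-function-properties}(iii) to get $g(1) > 2$ (hence $1-g(1) < 0$), apply case (ii) of \Cref{lem:advantage-from-top} with $\ell = 1$, and read off the $O(n^{1-g(1)}) = O(1/n)$ failure bound. The additional commentary you offer about the inductive descent from level $L$ down to level $1$ is accurate but not logically required for this claim.
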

\begin{proof}
    First, note that by statement (iii) of \Cref{obs:g-function-properties}, we have $1 - g(\ell) < 0$. Therefore, applying statement (ii) of \Cref{lem:advantage-from-top}, we obtain
    \begin{align*}
        \Pr[E^{\text{inner}}_1 = \emptyset] &\geq 1 - O(n^{1-g(1)})\\
        &\geq 1 - O(1/n) & (\text{Since }g(1) > 2 \text{ by statement (iii) of \Cref{obs:g-function-properties}}) \\
        &= 1 - o(1).
    \end{align*}
\end{proof}

\begin{claim}\label{clm:forest-base-level}
    With probability $1 - o(1)$, every connected component formed by queried edges at the base level of the hierarchy is a tree.
\end{claim}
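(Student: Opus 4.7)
The plan is to specialize the cycle-counting argument of \Cref{clm:no-cycle-in-small-components} to the base of the hierarchy, where $\sigma_0 = 0$. Because level $1$ is the bottom of the recursion, the ``inner/outer'' dichotomy collapses and every queried level-$1$ pseudo or real edge plays the role of an inner edge in the analysis of that claim.

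First, by \Cref{clm:no-detected-edge-base} we have $|E^{\mathrm{inner}}_1| = 0$ with probability $1 - O(1/n)$, and I would condition on this event so that no connected component of queried level-$1$ edges contains a distinguishable edge. Next, I would reuse the descendants/component argument of \Cref{lem:descendants-upper-bound} and \Cref{clm:small-component-in-intersect-tree} specialized to $\ell = 1$: since $\sigma_0 = 0$, every such component has size $O(n^{5\sigma_0/\sigma_1}) = O(1)$ with high probability. Combined with \Cref{clm:max-inner-edges} applied at $\ell = 2$, which bounds the total number of discovered level-$1$ edges by $O(n^{1-\delta+\sigma_1})$, this yields $\sum_i |C_i|^2 = O(n^{1-\delta+\sigma_1})$.

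Finally, mirroring the cycle-probability step in \Cref{clm:no-cycle-in-small-components}, a cycle inside a component $C_i$ requires an additional (unqueried) level-$1$ pseudo edge between two of its vertices. Any such pair carries $\rho n$ ground edges, each independently a level-$1$ pseudo edge with probability $\rho_1/(\rho n)$ by \Cref{clm:pseudo-prob}, contributing $O(|C_i|^2 \rho_1)$ to the expected number of extra edges. Summing and using $\rho_1 = n^{\sigma_1 - 1}$ gives
\[
O\!\left(\rho_1 \sum_i |C_i|^2\right) \;=\; O(n^{2\sigma_1 - \delta}) \;=\; o(1),
\]
by the parameter choice $\sigma_1 \ll \delta$. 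Markov's inequality, together with a union bound over the $O(1/n)$ conditioning failure, then yields the claim.

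The main obstacle is the bookkeeping needed to justify that the ``inner edges at level $\ell$'' lemmas from the previous sections apply at the base case, where no level $0$ exists: one must consistently reinterpret every queried level-$1$ edge as an inner edge and plug in $\sigma_0 = 0$ throughout. Once this is done, the quantitative estimates (component size, edge count, per-pair cycle probability) specialize without modification from the already-established higher-level statements, and the $\sigma_0 = 0$ substitution is precisely what makes the component sizes constant rather than polynomial, driving the final expected bound to $o(1)$.
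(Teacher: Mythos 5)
Your overall route is the same as the paper's: condition on $|E^{\text{inner}}_1| = 0$ via \Cref{clm:no-detected-edge-base}, then argue that queried level-$1$ components are acyclic by bounding the expected number of extra level-$1$ pseudo edges inside each component. The paper gets the second step in one line by citing \Cref{clm:no-cycle-in-small-components}; you re-derive it, which is fine, and the birthday-paradox step at the end is the same calculation.

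The gap is in your specialization to $\ell = 1$. In the paper's bookkeeping, ``inner edges'' of level $\ell$ are level-$(\ell-1)$ edges, so \Cref{lem:descendants-upper-bound} and \Cref{clm:small-component-in-intersect-tree} at $\ell = 1$ speak about level-$0$ edges --- which do not exist --- and therefore say nothing about components of queried level-$1$ edges. What you actually need is those lemmas at $\ell = 2$, where the inner edges are exactly the level-$1$ edges: there the branching factor is $\Theta(n^{\sigma_1})$ and the directed-path-length bound is $O(1/\sigma_2)$, giving $|C_i| = O(n^{5\sigma_1/\sigma_2})$, not $O(1)$. Plugging in $\sigma_0 = 0$ is not a valid shortcut: it makes the $\ell = 1$ lemma vacuous rather than sharper, and misrepresents the true component size. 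Carrying the correct bound through the cycle calculation still yields an $o(1)$ expectation because $\sigma_1 \ll \sigma_2 \ll \delta$, so your conclusion survives, but the intermediate claim $|C_i| = O(1)$ and the stated exponent $2\sigma_1 - \delta$ are not justified. Citing \Cref{clm:no-cycle-in-small-components} directly at $\ell = 2$, as the paper does, avoids this bookkeeping.
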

\begin{proof}
    By \Cref{clm:no-detected-edge-base}, we know that $|E^{\text{inner}}_1| = 0$ with probability $1 - O(1/n)$. Conditioning on this event, we apply \Cref{clm:no-cycle-in-small-components}, which states that with probability $1 - O(n^{-\delta + 2\sigma_1 + 10 \sigma_1/\sigma_2}) = 1 - o(1)$, all connected components formed by queried edges at the base level of the hierarchy are trees. This completes the proof.
\end{proof}

\begin{claim}\label{clm:no-edge-between-comp-base}
Let $V_B$ be the set of vertices for which the algorithm detects at least one incident edge at the base level. Let $v$ be an arbitrary vertex in $V_B$. The total number of level-1 edges between $v$ and other vertices of $V_B$, excluding level-1 edges discovered by the algorithm, is at most $\wt{O}(1)$ with high probability.
\end{claim}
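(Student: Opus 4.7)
The plan is to reduce the quantity in question to a sum of nearly-independent Bernoulli indicators and apply a Chernoff bound. By the assumption of the good event of \Cref{lem:cant-tell-multigrpah}, no queried pair contains more than one pseudo edge; hence any level-$1$ edge lying in a queried pair $(v,u)$ is discovered and therefore excluded from the claim's count. Letting $Y$ denote the claim's quantity, $Y$ equals the number of level-$1$ pseudo edges on pairs $(v,u)$ with $u \in V_B \setminus \{v\}$ that the algorithm never queried.

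First I would bound $|V_B|$ by a Chernoff argument analogous to \Cref{lem:total-edges-discovered}: the algorithm examines $O(\rho n^{3-\delta})$ ground edges across its $O(n^{2-\delta})$ queries, and by \Cref{clm:pseudo-prob} each ground edge is independently a level-$1$ pseudo edge with probability $\rho_1/(n\rho)$. Since $\rho_1 = n^{\sigma_1 - 1}$, the expected number of level-$1$ pseudo edges discovered is $O(\rho_1 n^{2-\delta}) = O(n^{1-\delta+\sigma_1})$, and concentration yields $|V_B| = O(n^{1-\delta+\sigma_1})$ with high probability.

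Next, condition on the algorithm's full transcript $\tau$ (queries and responses). This fixes $V_B$ and the set of queried pairs. For each pair $(v,u)$ not queried, the number $X_u$ of level-$1$ pseudo edges on that pair is distributed as $\Binom(\rho n,\, \rho_1/(n\rho))$ and, because pseudo edges are drawn independently per ground edge (\Cref{clm:pseudo-prob}, \Cref{clm:single-edge-dist}) and none of the relevant ground edges were probed, the $X_u$'s are mutually independent and independent of $\tau$. Crucially, although $\{u \in V_B\}$ is $\tau$-measurable, it depends only on the algorithm's observations of edges incident to $u$ \emph{other} than those on the pair $(v,u)$, and so remains independent of $X_u$. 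Writing $Y$ as the sum of $X_u$ over $u \in V_B \setminus \{v\}$ with $(v,u)$ unqueried, we obtain
\[
\E[Y \mid \tau] \;\le\; |V_B| \cdot \rho_1 \;=\; O\!\left(n^{-\delta + 2\sigma_1}\right) \;=\; o(1),
\]
using $2\sigma_1 \ll \delta$ from the parameter choices. Since $Y$ is a sum of independent Bernoullis (one per ground edge over the relevant pairs), a standard Chernoff tail bound gives $Y \le c\log n = \wt{O}(1)$ with probability $1 - n^{-\omega(1)}$ for a suitable constant $c$; combining with the failure probability from the $|V_B|$ bound finishes the argument, and a union bound over the $O(n^{1-\delta+\sigma_1})$ vertices of $V_B$ yields the statement simultaneously for every $v \in V_B$.

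The main obstacle is justifying the conditional independence rigorously despite the algorithm's adaptivity: the $V_B$-membership of each $u$ is driven by the algorithm's adaptive exploration, which itself depends on the random graph. The key leverage is the per-pair independence guaranteed by the multigraph construction---pseudo edges are sampled independently for each ground edge---so no information the algorithm gleans from other pairs can bias the distribution of pseudo edges on the untouched pair $(v,u)$.
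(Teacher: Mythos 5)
Your proof is correct and takes essentially the same route as the paper: bound $|V_B|$ by $O(n^{1-\delta+\sigma_1})$, observe that the expected number of level-$1$ pseudo edges on unqueried pairs in $\{v\}\times(V_B\setminus\{v\})$ is $|V_B|\cdot\rho_1 = O(n^{-\delta+2\sigma_1}) = o(1)$, and apply a Chernoff bound. Your explicit conditioning on the transcript to isolate the per-pair independence of unprobed ground edges is a cleaner articulation of the independence the paper invokes implicitly, but the underlying argument and parameter calculation are the same.
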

\begin{proof}
    By \Cref{clm:max-inner-edges}, there are at most $O(n^{1-\delta + \sigma_1})$ vertices that have at least one level-1 edge. Hence, we have $|V_B| = O(n^{1-\delta + \sigma_1})$. Consider all pairs of vertices where one element of pair is $v$ and the other one is from $V_B \setminus \{v\}$, excluding those the algorithm already queried. There are at most $O(n^{1-\delta + \sigma_1})$ such pairs, and each of them has $\rho n$ parallel ground edges, which means there are at most $O(\rho n^{2-2\delta + 2\sigma_1})$ total ground edges between these pairs.

    For a ground edge to be marked as a level-1 edge, it must be marked as a level-1 pseudo edge. Each ground edge is independently marked as a level-1 pseudo edge with probability $\rho_1/(\rho n)$. Thus, the expected number of level-1 pseudo edges between the corresponding pairs is at most 
    \begin{align*}
        O\left( \rho n^{2-\delta + \sigma_1} \cdot \frac{\rho_1}{\rho n}\right) & = O\left( \rho_1 n^{1-\delta + \sigma_1} \right)\\
        & = O\left(n^{-\delta + 2\sigma_1} \right) & (\text{Since } \rho_1 = O(n^{\sigma_1 - 1}))\\
        & \ll 1 & (\text{Since } \delta > 2\sigma_1).
    \end{align*}
    Since the expected number of such edges is smaller than one, the total number of level-1 edges between $v$ and other vertices of $V_B$, excluding level-1 edges discovered by the algorithm, is at most $\wt{O}(1)$ with high probability.
\end{proof}

\begin{lemma}\label{lem:same-tree-yndist}
    Suppose we condition on the event in \Cref{clm:no-detected-edge-base}, where every connected component formed by queried edges at the base level of the hierarchy is a tree. Let $C_1, C_2, \ldots, C_k$ be the components of the forest found by the algorithm at the base level of the construction on a graph drawn from $\yesdist$. Then, the probability of querying the same forest in a graph drawn from $\nodist$ is at least nearly as large, up to a multiplicative factor of $1 + o(1)$.
\end{lemma}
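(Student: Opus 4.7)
The plan is to reduce this statement to the per-component coupling machinery already developed in this paper (particularly \cref{lem:small-component-focs-modification} and the variant \cref{lem:same-tree-different-labels-2}), specialized to the base level $\ell=1$ where $\yesdist^1$ and $\nodist^1$ differ directly in only a handful of gadget parameters ($(A_r^1,A_r^2)$, $(A_r^j,B_r^j)$, and $(B_r^1,B_r^2)$). The intuition is that, once we rule out distinguished inner edges, every tree component in the observed forest is indistinguishable between the two distributions up to a factor that shrinks with $n$.

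First I would fully invoke the conditioning already justified in this section: by \cref{clm:forest-base-level}, every base-level connected component is a tree with probability $1-o(1)$; by \cref{clm:no-detected-edge-base}, $|E^{\mathrm{inner}}_1|=0$ with probability $1-O(1/n)$; and by \cref{clm:no-edge-between-comp-base}, no further level-$1$ edges lurk between vertices in $V_B$ beyond those already seen. Under these events, the components $C_1,\ldots,C_k$ of the observed base-level forest are trees containing no inner edges, so they fall squarely into the hypothesis of \cref{lem:same-tree-different-labels-2} (with $\ell=1$).

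Next, for each component $C_i$, I plan to apply that coupling lemma with $\ell=1$. Since $\sigma_0=0$, each coupling step shifts the observation probability by a factor of at most $1+O(n^{-\delta})$, and the number of steps per component is $|C_i|$, so
\[
\Pr[C_i \mid \yesdist] \leq \bigl(1+O(n^{-\delta})\bigr)^{|C_i|}\,\Pr[C_i \mid \nodist].
\]
To assemble the full-forest bound, I would run the component couplings sequentially, fixing the labels within $C_1$, then $C_2$, and so on; \cref{clm:no-edge-between-comp-base} guarantees that this conditioning does not introduce undetected level-$1$ edges between already-processed components and later ones, so the sequential coupling is consistent and the multiplicative factors compose into $(1+O(n^{-\delta}))^{\sum_i |C_i|}$. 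Bounding $\sum_i |C_i|$ reduces to counting non-isolated base-level vertices: since each $C_i$ is a tree with $|C_i|-1$ edges, $\sum_i |C_i| \leq 2\cdot(\text{number of level-$1$ edges})$, and a Chernoff argument in the spirit of \cref{lem:total-edges-discovered} (restricted to level-$1$ pseudo or real edges, matching the bound in \cref{clm:max-inner-edges} applied at $\ell=2$) gives an upper bound of $O(n^{1-\delta+\sigma_1})$. With the chosen parameters (in particular $\sigma_1$ taken exponentially smaller than $\delta$ via the hierarchy $\sigma_\ell>10\sigma_{\ell-1}$), this evaluates to $1+o(1)$.

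The main obstacle I expect is the rigorous justification of this sequential multiplication. Labels of different components share a single random partition of $V$, so the per-component couplings are not literally independent: conditioning on the labels of $C_1,\ldots,C_{i-1}$ alters both the label distribution on $C_i$ and the per-step shift $O(n^{-\delta})$ that the coupling is charged. Showing that the shift remains $O(n^{-\delta})$ requires that the total number of fixed labels stays a $O(n^{\sigma_1-\delta})$-fraction of $n$ at every stage, which follows from the same count of non-isolated base-level vertices; combining this with \cref{clm:no-edge-between-comp-base}, which forbids hidden cross-component edges from disrupting the conditional distribution, closes the argument and yields the $1+o(1)$ factor.
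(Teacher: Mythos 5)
Your proposed approach has a genuine quantitative gap in the final composition step, and it stems from reusing the wrong per-step shift.

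You propose to directly instantiate \Cref{lem:small-component-focs-modification}/\Cref{lem:same-tree-different-labels-2} at $\ell=1$, which gives a per-step probability shift of $1+O(n^{\sigma_0-\delta}) = 1+O(n^{-\delta})$. You then multiply over $\sum_i |C_i| = O(n^{1-\delta+\sigma_1})$ coupling steps and claim the product is $1+o(1)$. But $(1+O(n^{-\delta}))^{O(n^{1-\delta+\sigma_1})} = \exp(O(n^{1-2\delta+\sigma_1}))$, and since $\delta$ is a small fixed constant (so $1-2\delta+\sigma_1 > 0$), this blows up exponentially rather than approaching $1$. The smallness of $\sigma_1$ relative to $\delta$ in the parameter hierarchy does not rescue this, because the bottleneck term $n^{-\delta}$ does not involve $\sigma_1$ at all. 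The shift $O(n^{\sigma_{\ell-1}-\delta})$ was acceptable in \Cref{lem:coupling-high-level} and \Cref{lem:large-component-focs-modification} only because there it was raised to the much smaller power $|T(v)| \le n^{\delta-2\sigma_{L}}$; here the number of steps is the entire forest size, roughly $n^{1-\delta}$, so a much sharper per-step bound is required.

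The paper closes exactly this gap by charging the per-step shift not to the global count of revealed labels (which is $O(n^{1-\delta+\sigma_1})$, and would indeed give $O(n^{\sigma_1-\delta})$ per step) but to the number of hidden level-$1$ edges from the current vertex to already-labelled vertices, which by \Cref{clm:no-edge-between-comp-base} is $\widetilde{O}(1)$. This yields a per-step shift of $\widetilde{O}(1/n)$, and then $(1+\widetilde{O}(1/n))^{O(n^{1-\delta+\sigma_1})} = \exp(\widetilde{O}(n^{\sigma_1-\delta})) = 1+o(1)$, which is what you actually need. Your proposal cites \Cref{clm:no-edge-between-comp-base} only as ruling out inconsistencies in the sequential conditioning, not as the source of the improved per-step shift; that is the missing idea. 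A secondary omission is structural: the paper's coupling does not simply re-run the same-label coupling on each component; for a component containing an $A_r$--$A_r$ edge $(u,v)$ (which can exist only in $\yesdist$), it splits the tree at that edge, keeps $u$ in $A_r$, moves $v$ to $B_r$ (so that the $(u,v)$ edge is consistent with the $\nodist$ gadgets), and only then couples the two subtrees $C_u, C_v$ via \Cref{lem:same-tree-different-labels-2}, with \Cref{lem:small-diameter-small-comp} ensuring the subtrees are shallow enough. Your write-up does not address how the $\yesdist$-only gadget is eliminated in the coupled labelling, which is the whole point of the $\yes$/$\no$ distinction at this level.
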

\begin{proof}
By \Cref{lem:small-diameter-small-comp}, the longest path in any component has length at most $r-1$. As a result, no path within any component contains $r-1$ special crossings, as defined in \Cref{def:special-crossing}. We aim to show that the probability of observing the same set of components in \yesdist{} and \nodist{} differs by at most a $1 + o(1)$ multiplicative factor. Let $\mathcal{L}$ represent a labeling in \yesdist{}. We construct a corresponding labeling $\mathcal{L'}$ in \nodist{} and demonstrate that the probability of obtaining $\mathcal{L'}$ is nearly the same as that of $\mathcal{L}$. 

We proceed by iterating over the components sequentially. Consider a component $C$. At each step, we condition on the labels that have already been revealed in $\mathcal{L'}$. If no edge within $C$ connects two vertices in $A_r$, we assign the same labels in $\mathcal{L'}$, as all other edges remain identical in both \yesdist{} and \nodist{}. Now, suppose there exists an edge $(u,v)$ such that $u, v \in A_r$. Removing this edge results in two separate components, $C_u$ and $C_v$. In $\mathcal{L'}$, we assign $u$ to $A_r$ and $v$ to $B_r$. The labels of $C_u$ and $C_v$ are then coupled according to \Cref{lem:same-tree-different-labels-2}. This follows the same as the proofs of \Cref{lem:small-component-focs-modification} and \Cref{lem:large-component-focs-modification}. Since each vertex in the component is connected to at most $\widetilde{O}(1)$ vertices with revealed labels, as stated in \Cref{clm:no-edge-between-comp-base}, each coupling step contributes at most a $\widetilde{O}(1/n)$ shift in probability. Since the total number of steps corresponds to the number of edges queried by the algorithm at the base level of construction, which is $O(n^{1-\delta + \sigma_1})$, the overall shift in probability remains $o(1)$. Therefore, we can couple the two distributions in such a way that the probability of querying the same forest in both remains nearly identical, differing by at most a $1 + o(1)$ multiplicative factor.
\end{proof}

Now we have all the tools to finish the proof of \Cref{lem:deterministic-lower-bound}. 

\deterministicLowerbound*

\begin{proof}
    By \Cref{lem:matching-size-final-graph}, any algorithm that estimates the size of the maximum matching within an additive error of $\epsilon n$ must be capable of distinguishing whether the input graph is drawn from $\yesdist$ or $\nodist$. Moreover, \Cref{lem:same-tree-yndist} states that the distribution of outcomes observed by the algorithm differs by at most $o(1)$ in total variation distance between $\yesdist$ and $\nodist$. Consequently, with a probability of at least 1/3 over the randomness of the input distribution, the algorithm cannot differentiate between the supports of the two distributions. As a result, any deterministic algorithm that computes an estimate $\widetilde{\mu}$ for the size of the maximum matching in $G$, satisfying 
    $$\mu(G) - \epsilon n \leq \wt{\mu} \leq  \mu(G),$$ with probability 2/3 over the randomness of the input distribution, must have a runtime of at least $\Omega(n^{2-\delta})$.
\end{proof}

\printbibliography

\appendix

\section{Omitted Proofs}
\label{sec:omitted-proofs}

\expectedDegreeBase*
\begin{proof}
    A gadget between vertex sets $X$ and $Y$ with parameter $p$ contributes $p\card{Y}$ to the expected degree of each vertex in $X$.
    For (i), note that the only edge gadget that we use for vertices of $S$ in both $\yesdist^1$ and $\nodist^1$ is the gadget between $S$ and $B_1$. Thus, for $v \in S^j$, we have
    \begin{align*}
        d_{\Phi^1}(v) =  P(S^j, B_1^j) \cdot |B_1^j| = \log^2 n.
    \end{align*}
    To prove (ii), we consider different cases for vertex $v$:
    \begin{itemize}
        \item \textbf{$v \in A_i$ for $1 \leq i  < r$:} For $v \in A_i^j$, we have
        \begin{align*}
            d_{\Phi^1}(v) &= P(A_i^j, B_i^j) \cdot |B_i^j| + P(A_i^j, B_{i+1}^j) \cdot |B_{i+1}^j| + P(A_i^j, D_i^{3-j}) \cdot |D_i^{3-j}|\\
            &\qquad\qquad + \sum_{k = 1}^{i-1} P(A_i^j, D_k^{3-j}) \cdot |D_k^{3-j}|\\
            & = d_1 + \log^2 n + (r-i+1)\gamma d_1 + \sum_{k = 1}^{i-1} \gamma d_1\\
            & = d_1 + r\gamma d_1 + \log^2 n.
        \end{align*}
        \item \textbf{$v \in B_i$ for $1 \leq i < r$:} We abuse notation in this proof and use $A_0^j$ to represent $S^j$. Then, for $v \in B_i^j$, we have
        \begin{align*}
            d_{\Phi^1}(v) &= P(A_i^j, B_i^j) \cdot |A_i^j| + P(B_i^j, A_{i-1}^j) \cdot |A_{i-1}^j| + P(B_i^j, D_i^{j}) \cdot |D_i^{j}|\\
            &\qquad\qquad + \sum_{k = 1}^{i-1} P(A_i^j, D_k^{j}) \cdot |D_k^{j}|\\
            & = d_1 + \log^2 n + (r-i+1)\gamma d_1 + \sum_{k = 1}^{i-1} \gamma d_1\\
            & = d_1 + r\gamma d_1 + \log^2 n.
        \end{align*}
        \item \textbf{$v \in A_r$ in the \yes{} case:} For $v \in A_r^j$ , we have
        \begin{align*}
            d_{\Phi_\yes^1}(v) &= P(A_r^j, B_r^j) \cdot |B_r^j| + P(A_r^j, A_{r}^{3-j}) \cdot |A_{r}^{3-j}| + \sum_{k = 1}^{r} P(A_r^j, D_k^{3-j}) \cdot |D_k^{3-j}|\\
            & = d_1 + \log^2 n  + \sum_{k = 1}^{r} \gamma d_1\\
            & = d_1 + r\gamma d_1 + \log^2 n.
        \end{align*}
        \item \textbf{$v \in A_r$ in the \no{} case:} For $v \in A_r^j$ , we have
        \begin{align*}
            d_{\Phi_\no^1}(v) &= P(A_r^j, B_r^j) \cdot |B_r^j| + \sum_{k = 1}^{r} P(A_r^j, D_k^{3-j}) \cdot |D_k^{3-j}|\\
            & = d_1 + \log^2 n  + \sum_{k = 1}^{r} \gamma d_1\\
            & = d_1 + r\gamma d_1 + \log^2 n.
        \end{align*}
        \item \textbf{$v \in B_r$ in the \yes{} case:} For $v \in B_r^j$, we have
        \begin{align*}
            d_{\Phi_\yes^1}(v) &= P(A_r^j, B_r^j) \cdot |A_r^j| + P(B_r^j, B_{r}^{3-j}) \cdot |B_{r}^{3-j}| + P(B_r^j, A_{r-1}^{j}) \cdot |A_{r-1}^{j}|\\
            & \qquad\qquad + \sum_{k = 1}^{r} P(B_r^j, D_k^{j}) \cdot |D_k^{j}|\\
            & = (1-\xi) d_1 + \xi d_1 + \log^2 n  + \sum_{k = 1}^{r} \gamma d_1\\
            & = d_1 + r\gamma d_1 + \log^2 n.
        \end{align*}
        \item \textbf{$v \in B_r$ in the \no{} case:} For $v \in B_r^j$, we have
        \begin{align*}
            d_{\Phi_\no^1}(v) &= P(A_r^j, B_r^j) \cdot |A_r^j| + P(B_r^j, B_{r}^{3-j}) \cdot |B_{r}^{3-j}| + P(B_r^j, A_{r-1}^{j}) \cdot |A_{r-1}^{j}|\\
            & \qquad\qquad + \sum_{k = 1}^{r} P(B_r^j, D_k^{j}) \cdot |D_k^{j}|\\
            & = (1-\xi) (d_1 + \log^2 n) + \xi d_1 - (1-\xi)\log^2 n + \log^2 n  + \sum_{k = 1}^{r} \gamma d_1\\
            & = d_1 + r\gamma d_1 + \log^2 n.
        \end{align*}
        \item \textbf{$v \in D_i$ for $1 \leq i < r$:} For $v \in D_i^j$, we have
        \begin{align*}
            d_{\Phi^1}(v) &= P(B_i^j, D_i^j) \cdot |B_i^j| + P(A_i^{3-j}, D_{i}^{j}) \cdot |A_{i}^{3-j}| + P(D_i^j, D_i^{3-j}) \cdot |D_i^{3-j}|\\
            & \qquad\qquad + P(A_r^{3-j}, D_i^j)\cdot |A_r^{3-j}| + \sum_{k = i + 1}^{r-1} P(A_k^{3-j}, D_i^j)\cdot |A_k^{3-j}|\\
            & \qquad\qquad + \sum_{k = i + 1}^r P(B_k^j, D_i^j)\cdot |B_k^j| + \sum_{k \neq i} P(D_k^{3-j}, D_i^j)\cdot |D_k^{3-j}|\\
            & = \frac{(r-i+1)\gamma d_1}{\zeta} + \frac{(r-i+1)\gamma d_1}{\zeta} + d_1 + \gamma d_1 + \log^2 n - \frac{(4r - 4i + 2 - \xi)\gamma d_1}{\zeta}\\
            & \qquad\qquad + \frac{(1-\xi)\gamma d_1}{\zeta} + \frac{(r-1-i)\gamma d_1}{\zeta}\\
            & \qquad\qquad + \frac{(r-i)\gamma d_1}{\zeta} + (r-1)\gamma d_1\\
            & = d_1 + r\gamma d_1 + \log^2 n.
        \end{align*}
        \item \textbf{$v \in D_r$:} For $v \in D_r^j$, we have
        \begin{align*}
            d_{\Phi^1}(v) &= P(B_r^j, D_r^j) \cdot |B_r^j| + P(A_r^{3-j}, D_{r}^{j}) \cdot |A_{r}^{3-j}| + P(D_r^j, D_r^{3-j}) \cdot |D_r^{3-j}|\\
            & \qquad\qquad + \sum_{k \neq r} P(D_k^{3-j}, D_r^j)\cdot |D_k^{3-j}|\\
            & = \frac{\gamma d_1}{\zeta} + \frac{(1-\xi)\gamma d_1}{\zeta} + d_1 + \gamma d_1 + \log^2 n - \frac{(2 - \xi)\gamma d_1}{\zeta} + (r-1)\gamma d_1\\
            & = d_1 + r\gamma d_1 + \log^2 n.
        \end{align*}
    \end{itemize}
    Combining all the above cases, we can conclude the proof of (ii).
\end{proof}

\expectedDegreeHigherLevels*
\begin{proof}
A gadget between vertex sets $X$ and $Y$ with parameter $p$ contributes $p\card{Y}$ to the expected degree of each vertex in $X$.
Observe that the non-recursive part is the same for the $\yesdist^\ell$ and $\nodist^\ell$,
and consists only of the gadgets involving the dummies and the gadgets between $A_i^j$ and $B_i^j$. $(i)$ holds because there are no non-recursive gadgets involving $S$.
For the proof of $(ii)$, we examine the expected degree of vertices in each vertex set separately.

\begin{itemize}
    \item $v \in B_i^j$ for $i \in [r]$ and $j \in \{1, 2\}$:
    \begin{align*}
        d_{\Phi^\ell}(v) &=
         P(B_i^j, A_i^j) \cdot \card{A_i^j} 
         + P(B_i^j, D_i^{j}) \cdot \card{D_i^{j}}
         + P(B_i^j, D_i^{j+2}) \cdot \card{D_i^{j+2}} \\
            &\qquad + \sum_{k = 1}^{i-1} P(B_i^j, D_k^{j}) \cdot \card{D_k^{j}} + P(B_i^j, D_k^{j+2}) \cdot \card{D_k^{j+2}}\\
        &= d_\ell + \frac{(r - i + 1) \gamma d_\ell}{2} + \frac{(r - i + 1) \gamma d_\ell}{2} + \frac{(i-1) \gamma d_\ell}{2}
        + \frac{(i-1) \gamma d_\ell}{2} \\
        &= d_\ell + r\gamma d_\ell
    \end{align*}
    \item $v \in A_i^j$ for $i \in [r]$ and $j \in \{1, 2\}$:
    \begin{align*}
        d_{\Phi^\ell}(v) &=
         P(A_i^j, B_i^j) \cdot \card{B_i^j} 
         + P(A_i^j, D_i^{3-j}) \cdot \card{D_i^{3-j}}
         + P(A_i^j, D_i^{5-j}) \cdot \card{D_i^{5-j}} \\
            &\qquad + \sum_{k = 1}^{i-1} P(A_i^j, D_k^{3-j}) \cdot \card{D_k^{3-j}} + P(A_i^j, D_k^{5-j}) \cdot \card{D_k^{5-j}}\\
        &= d_\ell + \frac{(r - i + 1) \gamma d_\ell}{2} + \frac{(r - i + 1) \gamma d_\ell}{2} + \frac{(i-1) \gamma d_\ell}{2}
        + \frac{(i-1) \gamma d_\ell}{2} \\
        &= d_\ell + r\gamma d_\ell
    \end{align*}
    \item $v \in D_i^j$ for $i \in [r]$ and $j \in \{1, 2, 3, 4\}$: We examine $j = 1$. The other cases follow similarly.
    \begin{align*}
        d_{\Phi^\ell}(v) &=
         P(D_i^1, D_i^2) \cdot \card{D_i^2} \\
         & \qquad + \sum_{k \neq i} P(D_i^1, D_k^2) \cdot \card{D_k^2}
         + P(D_i^1, D_k^4) \cdot \card{D_k^4} \\
         &\qquad + P(D_i^1, B_i^1) \cdot \card{B_i^1} 
         + P(D_i^1, A_i^2) \cdot \card{A_i^2} \\
         & \qquad + \sum_{k = i + 1}^r P(D_i^1, B_k^1) \cdot \card{B_k^1} 
         + P(D_i^1, A_k^2) \cdot \card{A_k^2} \\ 
        &=
         d_\ell + \gamma d_\ell  - \frac{(2r - 2i + 1)\gamma d_\ell}{\zeta} \\
         & \qquad + (r-1)\frac{\gamma d_\ell}{2}
         + (r-1)\frac{\gamma d_\ell}{2} \\
         &\qquad + \frac{(r - i + 1) \gamma d_\ell}{2\zeta}
         + \frac{(r - i + 1) \gamma d_\ell}{2\zeta} \\
         & \qquad + \frac{(r - i) \gamma d_\ell}{2\zeta} 
         + \frac{(r - i) \gamma d_\ell}{2\zeta} \\ 
        &= d_\ell + r\gamma d_\ell \qedhere
    \end{align*}
\end{itemize}
\end{proof}

\gFunctionProperties*
\begin{proof}
    We prove each statement separately.
        \paragraph{Proof of \textit{(i)}:} From the definition of $g$, we can express $g(\ell - 1)$ as:
        \begin{align*}
            g(\ell - 1) 
            &= (L - \ell + 3) \cdot \delta - 5 \left( \sum_{i=\ell-1}^{L} \frac{\sigma_i}{\sigma_{i+1}} \right) - 5 \left( \sum_{i=\ell-1}^{L-1} \sigma_i \right) \\
            &= \left[ (L - \ell + 2) \cdot \delta - 5 \left( \sum_{i=\ell}^{L} \frac{\sigma_i}{\sigma_{i+1}} \right) - 5 \left( \sum_{i=\ell}^{L-1} \sigma_i \right) \right] 
            + \delta - 5 \frac{\sigma_{\ell-1}}{\sigma_\ell} - 5 \sigma_{\ell-1} \\
            &= g(\ell) + \delta - 5 \frac{\sigma_{\ell-1}}{\sigma_\ell} - 5 \sigma_{\ell-1}.
        \end{align*}

        \paragraph{Proof of \textit{(ii)}:} Using the expression for $g(\ell - 1)$ derived in part (i), we get:
        \begin{align*}
            1 - g(\ell - 1) - 3\sigma_{\ell-1} 
            &= 1 - g(\ell) - \delta + 5 \frac{\sigma_{\ell-1}}{\sigma_\ell} + 2 \sigma_{\ell-1} \\
        \end{align*}
        \paragraph{Proof of \textit{(iii)}:} Evaluating $g(1)$ with the given definition and using the values of parameters in \Cref{tbl:parameters}, we obtain:
        \begin{align*}
            g(1) 
            &= (L + 1) \cdot \delta - 5 \left( \sum_{i=1}^{L} \frac{\sigma_i}{\sigma_{i+1}} \right) - 5 \left( \sum_{i=1}^{L-1} \sigma_i \right) \\
            &= (L + 1) \cdot \delta - \frac{\delta L}{2} - 5 \left( \sum_{i=1}^{L-1} \left( \frac{\delta}{10} \right)^{L+1-i} \right)\\
            &> (L + 1) \cdot \delta - \frac{\delta L}{2} - \delta \\
            &= 2.
        \end{align*}

        \paragraph{Proof of \textit{(iv)}:} If there exists some $\ell$ such that $g(\ell) = 0$, the parameters in \Cref{tbl:parameters} can be slightly perturbed to satisfy all conditions while ensuring $g(\ell) \neq 0$.
\end{proof}

\maxInnerEdgesDiscovered*
\begin{proof}
Since there are $\rho n$ ground edges between every pair of vertices and the algorithm makes $O(n^{2 - \delta})$ pair queries, the total number of ground edges queried by the algorithm is $O(\rho n^{3 - \delta})$. Each queried ground edges is classified as an inner edge with an independent probability of at most $\rho_{\ell-1}/(\rho n)$, as it must be marked as a pseudo edge of level $\ell - 1$, which occurs with probability $\rho_{\ell-1}/(\rho n)$ independently for all queried ground edges. Therefore, the expected number of such edges found by the algorithm is $O(\rho_{\ell-1} n^{2 - \delta})$.  

Let $X_i$ be an indicator random variable for the event that the $i$-th queried ground edge is a pseudo or real edge, and let $X = \sum X_i$ denote the total number of such edges found by the algorithm. We have $E[X] \leq O(\rho_{\ell-1} n^{2 - \delta})$. Moreover, the random variables $X_i$ are independent. Applying the Chernoff bound, we get:  
\begin{align*}    
\Pr\left[ |X - E[X]| \geq 2 \sqrt{E[X] \log n} \right] \leq 2 \exp \left( -\frac{(2 \sqrt{E[X] \log n})^2}{3 E[X]} \right) < \frac{1}{n}.
\end{align*}  

Thus, with probability at least $1 - 1/n$, the total number of pseudo or real edges discovered by the algorithm is $O(\rho_{\ell-1} n^{2 - \delta})$. Finally, substituting $\rho_{\ell-1} = n^{\sigma_{\ell - 1} - 1}$ completes the proof.  
\end{proof}

\vertexInShallowCountExtended*
\begin{proof}
The proof follows a similar structure to that of \Cref{lem:belong-shallow-count-vertex}. Let $v$ be an arbitrary vertex in the graph. Let $V_i$ be the set of vertices at a distance $i$ from $v$ for $i \in [10 \log n]$ using inner edges in the reverse direction. We will show that, with high probability, $|V_i| \leq 3i\sqrt{\log n}$ using induction.

For the base case of the induction, $i = 1$, the claim holds by \Cref{clm:max-in-deg-top}. Suppose the claim holds for all $i' < i$. Suppose that each vertex in $V_{i-1}$ has at least one incoming edge. Note that this only increases the size of $V_i$. Let $u \in V_{i-1}$. Also, let $\hat{V}$ be the set of vertices that the algorithm has found at least one edge in the queried subgraph. If $u$ has more than one incoming  inner edge, it should be between a vertex that already has an edge because of the way we defined the direction of edges in \Cref{def:directing}. Therefore, there are $|\hat{V}|$ possible pairs between $\hat{V}$ and $u$, each containing $\rho n$ ground edges, and each being marked as a pseudo edge of level $\ell - 1$ with probability at most $p_{\ell-1}/(n\rho)$. Thus, the expected number of inner edges between $\hat{V}$ and $u$ is $|\hat{V}|\cdot \rho_{\ell-1} = O(n^{\sigma_L + \sigma_{\ell - 1} - \delta})$,
since $|\hat{V}| = O(n^{1-\delta+\sigma_L})$ by \Cref{lem:total-edges-discovered}.  Additionally, $|V_{i-1}| \leq 3(i - 1)\sqrt{\log n}$ by induction hypothesis. Hence, the expected number of inner edges between $\hat{V}$ and $V_{i-1}$ is $\wt{O}(n^{\sigma_L + \sigma_{\ell - 1} - \delta})$.

Let $X_i$ be the indicator random variable for the event that the $i$-th ground edge between $\hat{V}$ and $V_{i-1}$ is a pseudo of level $\ell - 1$, and let $X = \sum X_i$ denote the total number of such edges. We have $E[X] = \wt{O}(n^{\sigma_L + \sigma_{\ell - 1} - \delta}) < 1$ for large enough $n$. Furthermore, the random variables $X_i$ are independent. Applying the Chernoff bound, we obtain:
\begin{align*}
\Pr\left[ |X - E[X]| \geq 3 \sqrt{E[X] \log n} \right] \leq 2 \exp \left( -\frac{(3 \sqrt{E[X] \log n})^2}{3 E[X]} \right) < \frac{1}{n^2}.
\end{align*}
Thus, with probability at least $1 - 1/n^2$, the total number of incoming inner edges to $v$ is $3\sqrt{\log n}$. Moreover, we assume that each vertex in $V_{i-1}$ contains at least one incoming inner edge. Hence, we have $|V_i| \leq |V_{i-1}| + 3\sqrt{\log n} = \leq 3i\sqrt{\log n}$ which completes the induction step. Therefore, we have $|V_i| \leq 3i\sqrt{\log n}$ for all $i \leq 10 \log n$. As a result, the total number of $\ell$-shallow subgraphs that contain $v$ is bounded by
\begin{align*}
    1 + \sum_{i=1}^{10\log n} |V_i| \leq 1 + \sum_{i=1}^{10\log n}3i\sqrt{\log n} \leq (10 \log n)\cdot (30 \log n \sqrt{\log n}),
\end{align*}
which completes the proof.
\end{proof}

\edgeInShallowCountExtended*
\begin{proof}
    For directed inner edge $(u,v)$, since $u$ is in at most $\wt{O}(1)$ shallow subgraphs by \Cref{cor:edge-in-shallow-count-extended}, then $(u,v)$ is in at most $\wt{O}(1)$ $\ell$-shallow subgraphs.
\end{proof}

\newpage

\section{Table of Parameters}\label{sec:tableofparameters}

\setlength\extrarowheight{6pt}
\begin{center}
\begin{table}[h!]
\begin{tabular}{ | >{\centering\arraybackslash} m{4.2em} | >{\centering\arraybackslash} m{7em} |  >{\centering\arraybackslash} m{29em}|}   
  \hline
  Parameter & Value & Definition\\ 
  \hline
   \hline
   $\delta$ & - & Parameter that controls the running time of the algorithm. More specifically, the algorithm has $O(n^{2-\delta})$ running time.\\
   \hline
   $\rho$ & $2n^{\delta/10 - 1}$ & The probability that each pseudo edge exists in the \ER{} graph of pseudo edges.\\
   \hline
   $\rho_i$ & $2n^{\sigma_i - 1}$ & The probability that each pseudo edge exists in the \ER{} graph of pseudo edges of level $i$.\\
   \hline
   $L$ & $4/\delta$ & Number of \textbf{levels} in the recursive hierarchy for the construction of input distribution.\\
   \hline
   $r$ & $(10/\delta)^{L+1}$ & Number of \textbf{layers} in each level of the hierarchy.\\
   \hline
   $\yesdist^i$ & - & Distribution of level $i$ graphs in the construction hierarchy that have a perfect matching.\\
   \hline
   $\nodist^i$ & - & Distribution of level $i$ graphs that at most $(1-\epsilon)$ fraction of their vertices can be matched in the maximum matching.\\
   \hline
    $\sigma_i$ & $ (\delta/10)^{L+1-i}$ & Parameter that controls the degree of vertices in graphs of level $i$.\\
   \hline
   $d_i$ & $n^{\sigma_i}$ & Parameter that controls the degree of vertices in graphs of level $i$.\\
   \hline
   $\zeta$ & $1/r^2$ & Fraction of vertices that are delusive in each level.\\
   \hline
   $\xi$ & $1/r^2$ & The gap between the size of $A_r$ and $B_r$ in the base construction.\\
   \hline
   $\gamma$ & $1/r^4$ & Degree to delusive vertices is $\gamma d$.\\
   \hline
   $N_i$ & $N_i = n_{i - 1}/(2\zeta)$ &  Parameter that controls the number of vertices in graphs of level $i$.\\
   \hline
   $n_i$ & $(8 + 16r + 4\zeta r)N_i$ & Total number of vertices in a graph of level $i$.\\
   \hline
   $n$ & $(1+\tau)\cdot n_L$& Number of vertices in a graph drawn from the final distribution.\\
\hline
\end{tabular}
\captionsetup{justification=centering}
\caption{Variables used throughout the paper.}
\label{tbl:parameters}
\end{table}
\end{center}

\end{document}